\documentclass[twoside,leqno]{article}

\usepackage[letterpaper]{geometry}

\usepackage{ltexpprt}

\usepackage[utf8]{inputenc}
\usepackage[dvipsnames]{xcolor}
\usepackage{import,todonotes,xspace,mathtools,amsfonts,soul,paralist,complexity}
\usepackage[hidelinks]{hyperref}
\usepackage[capitalise]{cleveref}
\DeclareFontShape{OT1}{cmr}{bx}{sc}{<-> cmbcsc10}{} % for bold \textsc

\newcommand{\classC}{\mathcal{C}}
\newcommand{\classD}{\mathcal{D}}
\newcommand{\Bc}{\mathsf{B}}
\newcommand{\Cc}{\mathsf{C}}
\newcommand{\Dc}{\mathsf{D}}
\newcommand{\Tc}{\mathsf{T}}
\newcommand{\Uc}{\mathsf{U}}
\newcommand{\Vc}{\mathsf{V}}
\newcommand{\Zc}{\mathsf{Z}}
\newcommand{\Ll}{\mathcal{L}}
\newcommand{\Pp}{\mathcal{P}}
\newcommand{\Rr}{\mathcal{R}}
\newcommand{\N}{\mathbb{N}}
\newcommand{\Nat}{\mathbb{N}}
\newcommand{\set}[1]{\{#1\}}
\renewcommand{\class}[1]{[#1]_\approx}
\newcommand{\classk}[2]{[#1]_{#2}}
\newcommand{\interval}[1]{[#1]}
\newcommand{\quotient}[1]{[#1]}
\newcommand{\eats}[1]{\mathord{|#1\rangle}}
\newcommand{\Act}{\mathbb{A}}
\newcommand{\dist}{\textup{\textsc{Dist}}}
\newcommand{\sgrow}{\textup{\textsc{StackGrowth}}}
\newcommand{\Below}{\mathit{Below}}
\newcommand{\tpeq}{\stackrel{\mathsf{tp}}{=}}
\newcommand{\decomp}[1]{\langle\!|#1|\!\rangle}
\newcommand{\rea}[2]{#1^{[#2]}}
\newcommand{\Near}{\textup{\textsc{Near}}}
\newcommand{\Cons}{\textup{\textsc{Cons}}}

\newcommand{\init}{\mathsf{init}}
\newcommand{\initstate}{q_\init X_\init}
\newcommand{\tuple}{\boldsymbol{\tau}}
\newcommand{\tupleS}{\boldsymbol{\sigma}}
\newcommand{\eqlev}{\textup{\textsc{EqLev}}}
\newcommand{\EXPTIME}{\textup{\textsc{ExpTime}}\xspace}
\newcommand{\PTIME}{\textup{\textsc{PTime}}\xspace}
\newcommand{\ACKERMANN}{\textup{\textsc{Ackermann}}\xspace}
\renewcommand{\PSPACE}{\textup{\textsc{PSpace}}\xspace}
\renewcommand{\EXPSPACE}{\textup{\textsc{ExpSpace}}\xspace}

\newcommand{\PDS}{$\varepsilon$-PDS\xspace}
\newcommand{\weak}[1]{\stackrel{#1}{\Longrightarrow}}
\newcommand{\rewrite}[1]{\xhookrightarrow{\phantom{a\!\!}#1\phantom{a\!\!}}}
\newcommand{\pos}{\textup{\textsc{Pos}}}
\newcommand{\Pos}{\pos}
\renewcommand{\M}{\mathcal{M}}
\newcommand{\Configurations}{\mathsf{Conf}}
\newcommand{\ppoly}{\ell}
\newcommand{\MSB}{\mathsf{MSB}}
\newcommand{\Bin}{\alpha}
\newcommand{\Binsucc}{\textup{\textsc{Inc}}}
\newcommand{\Enc}{\beta}
\newcommand{\Source}{\textup{\textsc{Source}}}
\newcommand{\Defender}{\textup{\textsc{Def}}}
\newcommand{\Target}{\textup{\textsc{Target}}}
\newcommand{\Choose}{\textup{\textsc{Choose}}}
\newcommand{\Wait}{\textup{\textsc{Wait}}}
\newcommand{\Succ}{\textup{\textsc{Succ}}}
\newcommand{\Desc}{\textup{\textsc{Desc}}}
\newcommand{\Final}{\textup{\textsc{Final}}}
\newcommand{\Play}{\textup{\textsc{Play}}}
\newcommand{\Temp}{\textup{\textsc{Temp}}}
\newcommand{\Pop}{\textup{\textsc{Pop}}}
\newcommand{\Push}{\textup{\textsc{Push}}}
\newcommand{\Done}{\textup{\textsc{Done}}}
\newcommand{\Init}{\textup{\textsc{Init}}}
\newcommand{\IniChk}{\textup{\textsc{IniChk}}}

\makeatletter
\newcommand{\qed}{\@endproof}
\newenvironment{proofof}[1]{\@beginproof{Proof of #1}}{\@endproof}
\newenvironment{proofnoqed}{\@beginproof{Proof}}{}
\newenvironment{proofnoqedof}[1]{\@beginproof{Proof of #1}}{}
\makeatother
\newcommand{\qedhere}{\qquad\vbox{\hrule height0.6pt\hbox{%
   \vrule height1.3ex width0.6pt\hskip0.8ex
   \vrule width0.6pt}\hrule height0.6pt}}

\newcommand{\problem}{{\sc Weak Bisimulation Finiteness for \PDS}\xspace}

\renewcommand{\epsilon}{\varepsilon}
\renewcommand{\rho}{\varrho}
\renewcommand{\hat}{\widehat}
\newcommand{\gs}{\mathsf{gs}}
\newcommand{\up}{\mathsf{up}}

\newcommand{\problemx}[3]{
\par\noindent\underline{\sc#1}\par\nobreak\vskip.2\baselineskip
\begingroup\clubpenalty10000\widowpenalty10000
\setbox0\hbox{\bf INPUT:\ }\setbox1\hbox{\bf QUESTION:\ }
\dimen0=\wd0\ifnum\wd1>\dimen0\dimen0=\wd1\fi
\vskip-\parskip\noindent
\hbox to\dimen0{\box0\hfil}\hangindent\dimen0\hangafter1\ignorespaces#2\par
\vskip-\parskip\noindent
\hbox to\dimen0{\box1\hfil}\hangindent\dimen0\hangafter1\ignorespaces#3\par
\endgroup}

\newtheorem{claim}{Claim}[section]
\Crefname{@theorem}{Theorem}{Theorems}
\Crefname{Definition}{Definition}{Definitions}
\Crefname{lemma}{Lemma}{Lemmata}
\Crefname{equation}{Line}{Lines}\crefrangelabelformat{equation}{(#3#1#4)--(#5#2#6)}
\Crefname{inequality}{Inequality}{Inequalities}\creflabelformat{inequality}{(#2#1#3)}
\Crefname{equivalence}{Equivalence}{Equivalences}\creflabelformat{equivalence}{(#2#1#3)}
\Crefname{property}{Property}{Properties}\creflabelformat{property}{(#2#1#3)}
\NewEnvironmentCopy{inequalities}{align}\AddToHook{env/inequalities/begin}{\crefalias{equation}{inequality}}
\NewEnvironmentCopy{equivalences}{align}\AddToHook{env/equivalences/begin}{\crefalias{equation}{equivalence}}
\NewEnvironmentCopy{properties}{align}\AddToHook{env/properties/begin}{\crefalias{equation}{property}}

\begin{document}

\newcommand\relatedversion{}

\title{Weak Bisimulation Finiteness of Pushdown Systems
With Deterministic $\varepsilon$-Transitions Is 2-\EXPTIME-Complete}
\author{Stefan G\"oller\thanks{University of Kassel, School of Electrical Engineering and Computer Science.
The author was supported by Agence nationale de la recherche (grant no.\@ ANR-17-CE40-0010).
}
\and Paweł Parys\thanks{University of Warsaw, Institute of Informatics. 
The author was supported by the National Science Centre, Poland (grant no.\@ 2021/41/B/ST6/00535).}}

\date{}

\maketitle

% Copyright Statement
% When submitting your final paper to a SIAM proceedings, it is requested that you include
% the appropriate copyright in the footer of the paper.  The copyright added should be
% consistent with the copyright selected on the copyright form submitted with the paper.
% Please note that "20XX" should be changed to the year of the meeting.

% Default Copyright Statement
%\fancyfoot[R]{\scriptsize{Copyright \textcopyright\ 2023 by SIAM\\
%Unauthorized reproduction of this article is prohibited}}

% Depending on which copyright you agree to when you sign the copyright form, the copyright
% can be changed to one of the following after commenting out the default copyright statement
% above.

%\fancyfoot[R]{\scriptsize{Copyright \textcopyright\ 20XX\\
%Copyright for this paper is retained by authors}}

%\fancyfoot[R]{\scriptsize{Copyright \textcopyright\ 20XX\\
%Copyright retained by principal author's organization}}

%\pagenumbering{arabic}
%\setcounter{page}{1}%Leave this line commented out.

\begin{abstract} \small\baselineskip=9pt 
	We consider the problem of deciding whether a given pushdown system all of
	whose $\epsilon$-transitions are deterministic is weakly bisimulation finite, that is,
	whether it is weakly bisimulation equivalent to a finite system.
	We prove that this problem is 2-\EXPTIME-complete.
	This consists of three elements:
	First, we prove that the smallest finite system that is weakly bisimulation equivalent to a
	fixed pushdown system, if exists,
	has size at most doubly exponential in the description size of the pushdown system.
	Second, we propose a fast algorithm deciding whether a given pushdown system is weakly
	bisimulation equivalent to a finite system of a given size.
	Third, we prove 2-\EXPTIME-hardness of the problem.
	The problem was known to be decidable, but the previous algorithm had Ackermannian complexity
	(6-\EXPSPACE in the easier case of pushdown systems without $\epsilon$-transitions);
	concerning lower bounds, only \EXPTIME-hardness was known.
\end{abstract}

\section{Introduction}

An important decision problem in computer science
is to decide whether a given infinite system is semantically finite, that is,
whether it is semantically equivalent to some finite system.
If so, particular techniques and properties
of finite systems can be exploited in order to verify
such systems.
There are different important notions of equivalences that have been studied
by the computer science community.

In the area of verification {\em bisimulation} equivalence~\cite{Stirling98}
can be seen as the central one.
It can be seen as a two-player game between Attacker and Defender:
given a pair of configurations $(c,d)$ of a system,
Attacker chooses a
transition $c\rightarrow_a c'$ (resp.\@ $d\rightarrow_a d'$) and Defender
must find a reply
$d\rightarrow_a d'$ (resp.\@ $c\rightarrow_a c'$), hereby leading the game to a new
pair of configurations $(c',d')$ ---
Attacker wins in case Defender cannot answer, whereas Defender
wins every infinite
play and every play terminating in a pair of dead ends.
It is worth pointing out that trace equivalence coincides with
bisimulation equivalence (bisimilarity for short) in case the underlying systems are deterministic.
Several central verification logics like modal logic, the modal $\mu$-calculus,
CTL$^*$, and Propositional Dynamic Logic can all be characterized as the bisimulation-invariant
fragment of well-established logics like first-order logic~\cite{JvB1976}, monadic second-order
logic~\cite{JW96}, monadic chain logic~\cite{MR03}, and
weak monadic chain logic~\cite{Carreiro15}, respectively.
In presence of possible $\epsilon$-transitions {\em weak bisimulation equivalence}
generalizes bisimulation equivalence~\cite{vanGlabbeek90} in that Attacker
can make moves of the
form $c\rightarrow_{\varepsilon}^*\circ\rightarrow_a\circ
\rightarrow_{\varepsilon}^*c'$
(or $c\rightarrow_\varepsilon^*c'$) that
can be answered by Defender by moves of the form
$d\rightarrow_\varepsilon^*\circ\rightarrow_a\circ\rightarrow_\varepsilon^*d'$
(resp.\@ $d\rightarrow_\varepsilon^*d'$).
In the context of (weak) bisimulation equivalence, the semantic finiteness question becomes the
{\em (weak) bisimulation finiteness} problem: given an infinite system, does there exist a
{\em finite} system that is (weakly) bisimilar to it:
it is important to emphasize that it is asked whether {\em there exists} a finite system that
is (weakly) bisimilar to the infinite system, so the finite system is not specified in the input to the problem.
Indeed, in case both the infinite system and the finite system are part of the input, their equivalence
can be reduced to the model checking problem of CTL's fragment EF~\cite{KM10}. The (complexity of the) model checking problem for most classes of infinite systems is well-understood.

To date, it is fair to say that decidability --- and in particular the complexity ---
of this principal and easily-stated problem
of (weak) bisimulation finiteness is not well understood.
We refer to Srba's survey~\cite{Srba04}, where it becomes clear that for many classes
of infinite state systems decidability is unknown, and if it is known, oftentimes huge complexity gaps exist.
A central such class is the class of pushdown systems, that is, systems that can be generated
by pushdown automata. Model checking monadic second-order logic is decidable for them~\cite{MS85},
whereas its bisimulation-invariant fragment can be model-checked in $\EXPTIME$~\cite{Walukiewicz01}.
Only a few years ago Jan\v{c}ar proved that weak bisimulation finiteness of pushdown systems
all of whose $\varepsilon$-transitions are deterministic, is decidable with Ackermannian complexity~\cite{Jancar20}
via a reduction to the \ACKERMANN-complete
(weak) bisimulation equivalence problem~\cite{JS19,YL020}.
Recently, G\"oller and Parys~\cite{LICS20} have shown that the Ackermannian bottleneck can avoided
for $\varepsilon$-free pushdown systems: bisimulation finiteness of ($\varepsilon$-free) pushdown
systems is in $6$-\EXPSPACE.
The problem is known to be $\EXPTIME$-hard~\cite{KM02,Srba02}.

\paragraph{Our contribution.}

We prove that weak bisimulation finiteness of pushdown systems
all of whose $\varepsilon$-transitions are deterministic is 2-\EXPTIME-complete.
The problem was known to be decidable~\cite{Jancar20} with Ackermannian complexity
and in 6-\EXPSPACE in the easier case of pushdown systems without $\epsilon$-transitions.
Concerning lower bounds, only \EXPTIME-hardness was known.
Our upper bound also generalizes Valiant's 2-\EXPTIME upper bound~\cite{Valiant75} for the
regularity problem of deterministic pushdown automata from 1975: regularity of deterministic
pushdown automata is (modulo simple adaptations)
the same problem as weak bisimulation finiteness of
pushdown systems
all of whose transitions (not only $\varepsilon$-transitions) are deterministic.
Our contribution consists of three elements.
First, we prove that the smallest finite system that is weakly bisimulation equivalent to a
fixed pushdown system, if exists,
has size at most doubly exponential in the description size of the pushdown system.
Second, we propose a fast algorithm deciding whether a given pushdown system is weakly
bisimulation equivalent to a finite system of a given size.
Third, we prove 2-\EXPTIME-hardness of the problem.

\paragraph{Related and future work.}

As mentioned above, (weak) bisimulation finiteness is
a problem that is not well understood~\cite{Srba04}.
Let us mention a few exceptions.
Over Petri net's subclass of basic parallel processes
bisimulation finiteness is \PSPACE-complete~\cite{Srba02b,Kot05}.
Bisimulation finiteness of one-counter systems (which
are pushdown systems where there is, apart
from the bottom symbol, only one stack symbol)
is \PTIME-complete~\cite{BGJ14}.
For weak bisimulation finiteness a relevant result is
undecidability for Petri nets~\cite{JE96},
whereas decidability seems to be open for most
other central infinite systems~\cite{Srba04}.
We hope that our results can pave the way to eventually
determining the decidability/complexity status
of weak bisimulation finiteness
of pushdown systems whose $\varepsilon$-transitions
are not restricted to be deterministic.

\paragraph{Organization of the paper.}

We introduce basic notation and state our main result in \cref{sec:preliminaries}.
Our 2-\EXPTIME upper bound is sketched in \cref{sec:overview}.
Basics on pushdown systems are subject of \cref{sec:basics}.
In \cref{sec:decompositions} we discuss decompositions of stacks.
In \cref{S Runs down} we analyze runs in pushdown systems
that mainly decrease the stack height.
The core arguments of the upper bound proof are content of \cref{sec:core}.
In \cref{sec:algorithm} our algorithm running in double exponential time 
is presented.
A matching 2-\EXPTIME lower bound is given in \cref{sec:lower-full}.
We conclude in \cref{sec:conclusion}.

\section{Preliminaries}\label{sec:preliminaries}

If $X$ is a set we denote by $2^X$ {\em power set of $X$}, that is, the set of all subsets of $X$.
By $\N=\{0,1,2,\dots,\}$ we denote the set of non-negative integers.
For all finite alphabets $\Sigma$ we denote
by $\Sigma^*$ the set of finite words over $\Sigma$
and, for all $n\in\N$, we denote by $\Sigma^{\leq n}=\{w\in\Sigma^*\mid|w|\leq n\}$ the set of
finite words in $\Sigma^*$ of length at most $n$.
The {\em empty word} is denoted by $\varepsilon$.
By $\Sigma_\varepsilon$ we denote the (disjoint) union $\Sigma\cup\{\varepsilon\}$.

A {\em labeled transition system with $\varepsilon$-transitions ($\varepsilon$-LTS)} is a tuple
$\Ll=(S,\Act,\allowbreak(\rightarrow_a\nobreak)_{a\in\Act_\varepsilon})$,
where $S$ is a (possibly infinite) set of {\em configurations},
$\Act$ is a finite set of {\em action symbols},
$(\rightarrow_a)\subseteq S\times S$ is a binary relation for
all $a\in\Act_\varepsilon$.
We say $\Ll$ is {\em finite} if $S$ is finite.
We define its {\em size} as $|\Ll|=|S|$, thus $|\Ll|\in\Nat$ if $\Ll$ is finite
and $|\Ll|=\omega$ if not (we only consider countable labeled transition systems in this paper).
A {\em pointed $\varepsilon$-LTS} is a pair $(\Ll,c)$ such that $c$ is a configuration of $\Ll$.
We define the relation $(\weak{\varepsilon})=(\rightarrow_\varepsilon^*)$.
The relation is extended to words in $\Act^+$ as follows:
for $a\in\Act$ and $w\in\Act^*$
we define $(\weak{aw})=(\rightarrow_\varepsilon^*)\circ(\rightarrow_a)\circ(\weak{w})$.
Thus, note that $(\weak{a})=(\rightarrow_\varepsilon^*)\circ(\rightarrow_a)\circ(\rightarrow^*_\epsilon)$.
We define the binary relation $(\rightarrow)=\bigcup_{a\in\Act_\varepsilon}(\rightarrow_a)$.
For all $c,d\in S$ we define
$\dist(c,d)=\min\set{|w|\mid c\weak{w}d}\in\Nat\cup\set{\omega}$,
the length of the shortest word of action symbols allowing to reach $d$ from $c$ in $\Ll$.

For an $\varepsilon$-LTS $\Ll$ we say a binary relation $R_2\subseteq S\times S$
{\em weakly covers} a relation $R_1\subseteq S\times S$ if for all
$c,d,c'$ such that $(c,d)\in R_1$ and $c\weak{a}c'$ with $a\in\Act_\epsilon$
there exists $d'$ such that
$d\weak{a}d'$ and $(c',d')\in R_2$.
A {\em weak bisimulation} is a relation $R\subseteq S\times S$ that is symmetric and covers itself.
Observe that the union of two weak bisimulations is again a weak bisimulation.
We write $c\approx d$ if $(c,d)\in R$ for some weak bisimulation relation $R$;
note that $(\approx)\subseteq S\times S$ is the largest weak bisimulation on $S$.
If $c\approx d$, we say that $c$ and $d$ are {\em weakly bisimilar}.
For every configuration $c\in S$ we denote by $\class{c}=\set{d\in S\mid c\approx d}$
the {\em weak bisimulation class of $c$}.
The {\em weak bisimulation quotient} $\quotient{\Ll}$ is the $\varepsilon$-LTS
$\quotient{\Ll}=(\set{\class{c}\mid c\in S},\Act_\varepsilon,(\rightarrow_a')_{a\in\Act_\varepsilon})$,
where for $a\in\Act_\varepsilon$ we have $\classC\rightarrow_a'\classD$ if $c\weak{a} d$ for some $c\in\classC$, $d\in\classD$
(note that then for every $c\in\classC$ there exists some $d\in\classD$ such that $c\weak{a}d$).
When talking about weak bisimulation equivalence of two configurations, we must not necessarily require that
the configurations are from the same $\varepsilon$-LTS: generally
we can write $(\Ll_1,c_1)\approx(\Ll,c_2)$ if
$c_1\approx c_2$ holds in the disjoint union of $\Ll_1$ and $\Ll_2$.
We say that $(\Ll,c)$ (or just $c$ when $\Ll$ is clear from the context) is
{\em weakly bisimulation finite} if
$(\Ll,c)\approx(\Ll',c')$ for some finite pointed $\varepsilon$-LTS
$(\Ll',c')$.

We also define relations $(\approx_k)\subseteq S\times S$ for $k\in\Nat$, by induction:
$\approx_0$ is the full relation $S\times S$, and $\approx_{k+1}$ is the largest symmetric relation that is covered by $\approx_k$.
For configurations $c,c'\in S$ we define
$\classk{c}{k}=\set{d\in S\mid c\approx_k d}$, and
$\eqlev(c,c')=\sup\set{k\in\Nat\mid c\approx_k c'}\in\Nat\cup\set{\omega}$.

If for every configuration $c\in S$ and every $a\in\Act$ there are finitely many configurations $d\in S$ such that $c\weak{a}d$,
then $(\approx)=\bigcap_{k\in\Nat}(\approx_k)$, and $\eqlev(c,c')=\omega$ implies $c\approx c'$
(below we restrict ourselves to $\epsilon$-LTSs that have this property).

A sequence $c_0\rightarrow_{a_1}c_1\rightarrow_{a_2}\dots\rightarrow_{a_n}c_n$ is called a \emph{run} from $c_0$ to $c_n$.
For such a run $\rho$ we write $|\rho|$ for $n$, and $\rho(i)$ for $c_i$.
A composition $\rho\circ\rho'$ of two runs is defined in the expected way, assuming $\rho'(0)=\rho(|\rho|)$.
We say that runs $\rho=(c_0\rightarrow_{a_1}c_1\rightarrow_{a_2}\dots\rightarrow_{a_m}c_m)$
and $\rho'=(d_0\rightarrow_{b_1}d_1\rightarrow_{b_2}\dots\rightarrow_{b_n}d_n)$ are \emph{parallel}
if we can find indices $0=i_0<i_1<\dots<i_k=m$ and $0=i'_0<i'_1<\dots<i'_k=n$
such that for all $j\in\interval{0,k}$ we have $c_{i_j}\approx d_{i'_j}$
and for all $j\in\interval{1,k}$ the two words $a_{i_{j-1}+1}a_{i_{j-1}+2}\dots a_{i_j}$ and $b_{i'_{j-1}+1}b_{i'_{j-1}+2}\dots b_{i'_j}$
(after dropping all $\epsilon$'s) are equal and have length at most $1$.

A {\em pushdown system with deterministic $\varepsilon$-transitions}
(\PDS\ for short) is a tuple
$\Pp=(Q,\Gamma,\allowbreak \Act,\allowbreak\Delta)$, where
$Q$ is a finite set of {\em control states},
$\Gamma$ is a finite {\em stack alphabet},
$\Act$ is finite set of {\em action symbols},
$\Delta\subseteq Q\times\Gamma\times \Act_\varepsilon \times Q\times\Gamma^*$ is a finite {\em rewrite relation}
such that whenever $(p_1,X_1)=(p_2,X_2)$ and $a_1=\varepsilon$ for two tuples $(p_1,X_1,a_1,q_1,\beta_1),(p_2,\allowbreak X_2,\allowbreak a_2,q_2,\beta_2)\in\Delta$,
then $(p_1,X_1,a_1,q_1,\beta_1)=(p_2,X_2,a_2,q_2,\beta_2)$.

In case $(p,X,a,q,\beta)\in\Delta$ we simply write $qX\rewrite{a}q'\beta$ and refer to it
as a {\em rule}: in case $a\in\Act$ we call it a {\em reading rule} in case $a=\varepsilon$
we call it an {\em $\varepsilon$-rule}.
We say a pair $(p,X)\in Q\times\Gamma$ {\em is in $\varepsilon$-mode} if for some $(q,\beta)\in Q\times\Gamma^*$ we have
$(p,X,\varepsilon,q,\beta)\in\Delta$; we remark that there can be at most one such pair $(q,\beta)$.
If $(p,X)$ is in $\varepsilon$-mode, then there is no $(a,q',\beta')\in\Act\times Q\times\Gamma^*$ with
$(p,X,a,q',\beta')\in\Delta$.
We say that an \PDS\ is {\em $\varepsilon$-popping} if for all $(p,X,\varepsilon,q,\beta)\in\Delta$ we have $\beta=\varepsilon$.

The {\em size} of $\Pp$ is defined as $|\Pp|=|Q|+|\Gamma|+|\Act|+|\Delta|$.
A {\em configuration} of an \PDS\ is an element from $Q\Gamma^*$.
An \PDS\ induces a (potentially infinite) $\varepsilon$-LTS $\Ll(\Pp)=(S,
\Act_\varepsilon,\allowbreak(\rightarrow_a\nobreak)_{a\in\Act_\varepsilon})$,
where $S=Q\Gamma^*$ and for all $a\in\Act_\varepsilon$ we have
$(\rightarrow_a)=\{(pX\gamma,q\beta\gamma)\mid \gamma\in\Gamma^*,(p,X,a,q,\beta)\in\Delta\}$.

In this paper, we are interested in the following decision problem:\medskip
\problemx{\problem}{A \PDS $\Pp=(Q,\Gamma,\Act,\Delta)$ and a configuration $\initstate\in Q\Gamma$.}
{Is $(\Ll(\Pp),\initstate)$ weakly bisimulation finite?}
\medskip

We remark that a seemingly more general problem when having an initial configuration from $Q\Gamma^*$ as input can
always be reduced in polynomial time to our restriction, namely where it is from $Q\Gamma$.
It is folklore that, given an \PDS\ $\Pp$ and a configuration $q_\init\alpha_\init\in Q\Gamma^*$,
	one can compute in polynomial time
	an \PDS\ $\Pp'$ that is $\varepsilon$-popping and a configuration $\initstate\in Q\Gamma$ such that $(\Ll(\Pp),q_\init\alpha_\init)\approx(\Ll(\Pp'),\initstate)$.
A formal proof of this reduction can be found
in~\cite[Proposition 11]{JS19}.
Additionally, and this is also folklore, one can restrict the
resulting
$\Pp'$ in such a way that all rules
$(p,X,a,q,\beta)$ that appear in it satisfy
$|\beta|\leq 2$.

Let us state the main result of this paper:

\begin{theorem}
	\problem is 2-\EXPTIME-complete.
\end{theorem}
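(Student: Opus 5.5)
The theorem combines a 2-\EXPTIME{} upper bound with a matching lower bound, and I would prove the two directions separately, following the three elements listed in the introduction.

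\emph{Upper bound.} First assume, via the folklore reduction cited above, that $\Pp$ is $\varepsilon$-popping and every rule $qX\rewrite{a}p\beta$ has $|\beta|\le 2$. The central step is a \emph{small model property}: if $(\Ll(\Pp),\initstate)$ is weakly bisimulation finite, then its quotient $\quotient{\Ll}$ restricted to the configurations reachable from $\initstate$ has at most $N=2^{2^{\mathrm{poly}(|\Pp|)}}$ classes. Since $\varepsilon$-transitions are deterministic and popping, every configuration $pX\gamma$ in $\varepsilon$-mode is weakly bisimilar to the unique configuration reached by its maximal $\varepsilon$-run. So it suffices to study configurations in reading mode.

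To bound the number of classes, I would use a decomposition of stacks in the style of \cite{LICS20}. The weak bisimulation class of $p\alpha\gamma$ depends on $\gamma$ only through the function mapping each state $q\in Q$ to the class of $q\gamma$; this holds because $\varepsilon$-popping runs only ever expose $\gamma$ through such states. This suggests the following pumping argument:
\begin{itemize}
\item If the system is finite and some reachable stack is very high, then along a run that pushes many symbols, two prefixes induce the same ``tuple of classes''.
\item Repeating the pushing segment then yields infinitely many pairwise non-bisimilar configurations, contradicting finiteness.
\end{itemize}
Quantitatively, the number of distinct tuples is $|\text{classes}|^{|Q|}$. This gives a recursion that bounds the height at which new classes can appear by an exponential in $|Q|\cdot|\Gamma|$, and hence the number of classes by a double exponential. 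The delicate point is runs that mostly decrease the stack. There, $\eqlev$ between configurations can drop gradually, so I would need a lemma showing that if $\eqlev(c,c')$ is large but finite, then a witness pair with bounded stack growth exists. I expect this step to be the main obstacle: controlling how $\approx_k$ classes behave under popping, that is, proving that finite $\eqlev$ values among the relevant reachable configurations are bounded by a double exponential.

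Second, I would give an algorithm that, given $N$, decides whether $\initstate$ is weakly bisimilar to some finite system with at most $N$ states, in time polynomial in $N$ and exponential in $|\Pp|$. The algorithm guesses nothing. Instead it computes the greatest fixpoint of a candidate relation between pairs $(p,X)$ with ``continuation tuples'' in $\{1..N\}^Q$ and states of a canonical finite system. Concretely, it iteratively computes the classes $\approx_k$ restricted to configurations summarized by such tuples, using saturation for pushdown reachability (summaries of pop-runs $pX\weak{w}q$ with $|w|\le1$). This runs until stabilization, and finiteness holds iff stabilization happens with at most $N$ classes reachable from $\initstate$. Plugging in $N$ from the first step gives a doubly exponential running time.

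\emph{Lower bound.} I would reduce from the acceptance problem of alternating Turing machines with exponential space, which is 2-\EXPTIME-hard. The stack stores a configuration as a sequence of exponentially many cells, with addresses encoded by binary counters. Attacker and Defender choices are simulated by Defender-forcing gadgets. Deterministic popping $\varepsilon$-transitions let the system verify a cell against the corresponding cell of the previous configuration: it pops to a chosen address, and the address is checked by comparing $n$-bit counters, which is what the macros $\Bin$ and $\Binsucc$ suggest. The system is built so that acceptance corresponds to the stack being able to grow unboundedly with pairwise distinguishable configurations, and rejection to weak bisimilarity with a fixed finite system, or the reverse.
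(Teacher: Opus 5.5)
Your algorithm and your lower-bound plan are broadly in line with the paper. The algorithm iterates $\approx_k$ over summaries indexed by tuples of continuation classes; the lower bound uses exponential-space alternating Turing machines, binary-addressed cells, Defender's forcing, and silent popping to locate a cell. The upper bound, however, has a genuine gap at its core. The doubly exponential bound on the number of classes of a weakly bisimulation finite system is not proved, and the argument you sketch for it cannot work.

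\textbf{The sketched argument fails.} Your pumping bullet is inverted. If two stack suffixes $\gamma$ and $\beta\gamma$ induce the same tuple $(\class{q\gamma})_{q\in Q}$, then by \cref{equiv-because-equiv-below} repeating $\beta$ reproduces the same classes; it does not create new ones. Moreover, a finite system may have unboundedly high reachable stacks: with the single rule $qA\rewrite{a}qAA$, all configurations $qA^n$ are bisimilar. So ``finite and high stack'' yields no contradiction.

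More fundamentally, your counting is circular. Any count based on there being at most $F^{|Q|}$ tuples, where $F$ is the unknown number of classes, bounds stack heights and classes in terms of $F$. The whole difficulty is a bound depending on $|\Pp|$ alone; Jan\v{c}ar's approach only gives an Ackermannian bound on the number of repetitions. Your claimed exponential bound on the height at which new classes appear is unsupported. The step you call ``the main obstacle'', bounding the finite $\eqlev$ values, is not a technicality but the content of the theorem.

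\textbf{What is missing: bounding the number of repetitions independently of $F$.} The paper does this as follows.
\begin{itemize}
\item Every stack is decomposed (\cref{decomp-exists}) into nested well-formed pumping triples. Their types range over a set of doubly exponential size, independent of $F$.
\item $\omega$-stacks are introduced, so the only unknown is the number $e$ of repetitions of $\beta$ needed to reach the class of $\beta^\omega$.
\item The quantities $M_i$ strictly increase in $i$ (\cref{eqlev-grows-while-pumping}).
\item In \cref{main lemma}, each $i<e$ gets a description $\tuple$ from a set of size $K+\Tc$, with $\Tc$ doubly exponential. The description comes from the popping run from $q\alpha\beta^e\gamma\eta$ to $r\beta^i\gamma\eta$. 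Either that run is $\Dc$-almost-popping, so the class lies in a $\Below$ set, or some run parallel to a suffix of it pushes $\Dc'$ symbols, which yields a short witness $(t,\mu,(\tau_u)_u)$.
\item Equal descriptions at two indices with the same minimising state would force $M_i>M_i$. Hence $e\leq|Q|(K+\Tc)$.
\item A self-referential argument then shows that the small tuple $(\tuple_{r,\gamma,\eta,0},j_{r,\gamma,\eta,0})_r$ determines the classes of $r\gamma\eta$. The $\Below$ counts are propagated through the decomposition.
\end{itemize}

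\textbf{The $\varepsilon$-specific difficulty is not addressed.} Arbitrarily many copies of $\beta$ may be popped silently. So $\gamma$ becomes visible after reading few symbols, regardless of the number of repetitions. This is what well-formedness condition 3, the special $\omega$-popping transition, and case (B2) of the main lemma handle; your plan has nothing corresponding to them.

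\textbf{A smaller point on the lower bound.} You must also show that every gadget generates only finitely many classes. Otherwise the non-accepting case need not yield a finite system.
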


\section{Overview of the upper-bound proof}\label{sec:overview}

In this section we present overall ideas
for proving that \problem is in $2$-\EXPTIME.
As already mentioned in the introduction, the proof consists of two main parts.
First, in \cref{thm:main}, we prove that if an \PDS $\Pp$ is weakly bisimulation finite, that is, is weakly bisimilar to some finite $\epsilon$-LTS (of an arbitrary, unknown size),
then it is weakly bisimilar to some finite $\epsilon$-LTS of at most doubly exponential size.
This already gives a $2$-\EXPSPACE algorithm for deciding weak bisimulation finiteness by a result by Ku\v{c}era and Mayr~\cite{KM10},
who proved that checking whether a given
\PDS is weakly bisimilar to a given finite
$\varepsilon$-LTS is \PSPACE-complete:
it is enough to enumerate all finite $\epsilon$-LTSs of at most doubly exponential size, and for each of them check whether it is weakly bisimilar to the given \PDS $\Pp$.
We improve this in \cref{thm:algorithm}, where we show how to compute the weak bisimulation quotient of $\Pp$ (i.e., the smallest $\epsilon$-LTS weakly bisimilar to $\Pp$)
in time polynomial in its size (and exponential in $|\Pp|$).

In this description we concentrate on the proof of \cref{thm:main}, which is the essential part.
The second part (the algorithm) is much shorter, and is described in \cref{sec:algorithm}.

For simplicity, in this overview we concentrate on the case of an $\epsilon$-free PDS $\Pp$ (i.e., an \PDS without any $\epsilon$-transition),
which is already non-trivial.
We then briefly explain how ideas from the $\epsilon$-free case can be extended to the general case, when deterministic $\epsilon$-transitions are present.
In particular, when there are no $\epsilon$-transitions, the notions of bisimulation and of weak bisimulation coincide, so we drop the word ``weak'' in this description.
Moreover, we formulate some statements here in a loose way, neglecting some details. Full details can be found in
the following sections.

Let us assume that our initial configuration
$\initstate$ of our $\epsilon$-free PDS $\Pp=(Q,\Gamma,\Act,\Delta)$ is bisimulation finite.
Consider an arbitrary configuration $q\delta$ reachable from
$\initstate$.
Our first step is to locate some possibilities of pumping in the stack content $\delta$.
Namely, we represent $\delta$ as $\alpha\beta\gamma$, so that configurations $q\alpha\beta^i\gamma$, obtained by repeating the
$\beta$ part (we require $\beta\not=\varepsilon$),
are ``similar'' to the original configuration $q\delta$
in the following sense:
first, all these configurations are also reachable
(from $\initstate$) and, second, the set of
control states reachable after popping
the topmost part $\alpha\beta^j$ is the same for
every $j$.
A standard application of the pigeonhole
principle allows us to find such $\alpha$ and $\beta$ of at most exponential size.
This gives us at most a doubly exponential number of different pairs $\alpha,\beta$.
It is thus enough to show that for each fixed pair $\alpha,\beta$, configurations of the form $q\alpha\beta\gamma$ constitute only a doubly exponential number of bisimulation classes.

Next, we shift the study of configurations of the form
$q\alpha\beta\gamma$ to the study of configurations of the form $r\gamma$,
but where we still assume that $r\gamma$ can be reached from reachable configurations of the form $q\alpha\beta^i\gamma$, for all $i\geq 1$.
It is enough to prove that the number of classes of these configurations $r\gamma$ is small,
because the class of $q\alpha\beta\gamma$ is determined by the classes of configurations $r\gamma$, obtainable by popping the fixed stack prefix $\alpha\beta$ (cf.~\cref{equiv-because-equiv-below}).

We now exploit our assumption that our initial
configuration $\initstate$ is bisimulation finite;
it can reach $F$ classes, for
some $F\in\Nat$ (at this point we do not have yet any bound on $F$, we only know that it is finite).
An interesting property of finite $\epsilon$-LTSs is that the $\approx_F$ relation coincides with $\approx$.
In other words, if two configurations are not bisimilar, then this can be detected in the first $F$ steps, that is,
while reading at most $F$ action symbols.
In the $\epsilon$-free case, reading at most $F$ action symbols is equivalent to performing at most $F$ transitions.
Bearing in mind $\beta\not=\varepsilon$, it follows that configurations of the form $q\alpha\beta^i\gamma$ for $i\geq F$ are all bisimilar
(here $q$, $\alpha$, and $\beta$ are fixed, but the number $i\geq F$ and the stack content $\gamma$ are arbitrary):
during the first $F$ transitions, $\Pp$ can pop at most $F$ topmost stack symbols, and they are identical in all these configurations.
If we take an even larger bound, say $F+k$, then not only $q\alpha\beta^i\gamma\approx q\alpha\beta^{i'}\gamma'$ for all $i,i'\geq F+k$ and $\gamma,\gamma'\in\Gamma^*$,
but also after executing a run of length $k$ from $q\alpha\beta^i\gamma$ and an analogous (i.e., performing the same transitions) run from $q\alpha\beta^{i'}\gamma'$,
the resulting configurations remain bisimilar.
We utilize this observation by introducing stack contents of the form $\alpha\beta^\omega\gamma$.
The $\omega$ exponent can be formalized in three ways.
First, we may assume that $\omega$ denotes some ``very large'' finite number (large enough for purposes of the proof).
A second formalization is that the $\beta$ part of the stack is repeated infinitely many times.
In our proofs we have chosen yet another formalization, where we understand $\beta^\omega$ as a formal expression such that after popping $\beta$ from $\beta^\omega$ we again have $\beta^\omega$.
Nevertheless, no matter which formalization one chooses,
it is only important that the finite (although possibly very long) prefix of the stack that may be analyzed by the PDS consists of one $\alpha$ and repetitions of $\beta$.

\paragraph{Bounding the number of repetitions.}

Next, given some stack contents $\alpha,\beta$, and $\gamma$
we are interested in a possibly small $n\in\N$
such that
$q\alpha\beta^n\gamma\approx q\alpha\beta^\omega\gamma$.
We fix $e\in\Nat$ to be the smallest number such that $q\alpha\beta^e\gamma$ becomes equivalent to $q\alpha\beta^\omega\gamma$
(strictly speaking, we require that $r\beta^e\gamma\approx r\beta^\omega\gamma$ for all control states $r$ reachable after popping $\alpha$;
this implies that $q\alpha\beta^e\gamma\approx q\alpha\beta^\omega\gamma$, but it is a slightly stronger condition).
We remark that Jančar~\cite{Jancar20} has shown an Ackermannian bound for $e$.
For us, an important step is to prove that $e$ can be at most doubly exponential.
To this end, for every $i\in\interval{0,e}$, we consider the
control state $r_i$ for which $\eqlev(r_i\beta^i\gamma,r_i\beta^\omega\gamma)$ is minimal
(i.e., the control state from which the stack contents $\beta^i\gamma$ and $\beta^\omega\gamma$ can be distinguished in the least possible number of transitions),
and for this control state we take $M_i=\eqlev(r_i\beta^i\gamma,r_i\beta^\omega\gamma)$.
It is rather easy to prove that $M_i<M_{i+1}$ for all $i<e$:
the similarity between $\beta^i\gamma$ and $\beta^\omega\gamma$ grows when we increase $i$, until it reaches level $\omega$ for $i=e$.
In particular (because $\eqlev(\cdot,\cdot)$ depends only on the class of a configuration) this means that if $r_i=r_{i'}$ for $i\neq i'$,
then the classes $\class{r_i\beta^i\gamma}$ and $\class{r_i\beta^{i'}\gamma}$ are different.

Given some stack contents $\alpha,\beta$, and $\gamma$
and some $i\in\interval{0,e-1}$,
we now want to provide a succinct description of the class of $r_i\beta^i\gamma$.
To this end, we consider two runs.
The first of them, $\pi$, is the shortest possible run from $q\alpha\beta^e\gamma$ to $r_i\beta^i\gamma$.
The second of them, $\pi'$, is created as a run parallel to $\pi$ that starts in the ``much larger'' configuration $q\alpha\beta^\omega\gamma$
(it exists because $q\alpha\beta^\omega\gamma\approx q\alpha\beta^e\gamma$).
Recall that then $\pi$ and $\pi'$ are required to visit bisimilar configurations,
but they are \emph{not} required to execute the same transitions
(in fact, they cannot execute the same transitions: if
$\pi'$, starting from $q\alpha\beta^\omega$, were to pop
$\alpha\beta^{e-i}$ in the same way as $\pi$
did,
then $\pi'$ would end in $r_i\beta^\omega\gamma$, a configuration that is not bisimilar to
$r_i\beta^i\gamma$ by assumption).
Depending on the shape of $\pi'$, we have two cases.

First, it is possible that no suffix of
$\pi'$ pushes $\Dc$ stack symbols, for an appropriately chosen exponential constant $\Dc$;
so, roughly speaking, $\pi'$ concentrates on popping.
We then observe that no matter how much we pop from $q\alpha\beta^\omega\gamma$,
we can
only pop to a stack content of the form $\beta'\beta^\omega\gamma$, where $\beta'$ is a suffix of either $\alpha$ or $\beta$.
Even taking into account the fact that exponentially many symbols may be pushed at the very end, on top of $\beta'$,
in this case we have only doubly exponentially
many possibilities for the configuration $\pi'(|\pi'|)$,
hence also for its class, which is simultaneously the class of $r_i\beta^i\gamma$.
Recalling that for different values of $i$ (having the same state $r_i$) the classes of $r_i\beta^i\gamma$ are all different,
this means that only doubly exponentially many values of $i$ may be handled by this case.
We can thus concentrate on the opposite case,
which we discuss next.

This opposite case, being significantly more
complicated, is that $\pi'$ has a suffix $\pi'_3$ that pushes $\Dc$ stack symbols
(note that if more than $\Dc$ symbols are pushed, then we can consider a shorter suffix, pushing exactly $\Dc$ symbols).
The suffix $\pi'_3$ leads from $t_1X\theta$ to $t\mu\theta$ for some control states $t_1,t$, stack contents $\mu,\theta$,
and stack symbol $X$, where $|\mu|=\Dc+1$.
We now use the simple observation
that the class of $t\mu\theta$ is determined by the control state $t$,
by the small stack content $\mu$,
and by the classes of $u\theta$ for all control
states $u$ reachable from $t\mu\theta$ after popping $\mu$ (as formalized in \cref{equiv-because-equiv-below}).

Next, we would somehow like to describe the classes of $u\theta$.
To this end, we observe that since
$u\theta$ is reachable from $t\mu\theta$, it
is also reachable from $t_1X\theta$
(due to the existence of the run $\pi'_3$ from
$t_1X\theta$ to $t\mu\theta$).
Moreover, a simple pumping argument (namely, \cref{can pop fast}) allows us to shorten the run from
$t_1X\theta$ to $u\theta$
(observe that, in total, only one stack symbol
needs to be popped)
into a very short run; namely, shorter than some constant $\Bc$ (being exponential in $|\Pp|$).
Coming back to the analysis of $\pi$,
note that the parallel counterpart of
$\pi'_3(0)=t_1X\theta$ in $\pi'$ is, say, the configuration
$q_1\chi\beta^j\gamma$ in $\pi$: it is visited
while going from $q\alpha\beta^e\gamma$ to $r_i\beta^i\gamma$.
The $\chi$ part is obtained from $\beta$ (or $\alpha$) by popping
its prefix as quickly as possible by definition of
$\pi$. Thus, standard pumping arguments imply that
the length of $\chi$ is short.
Possibly decreasing $j$ and appending a few copies of $\beta$ into $\chi$, we may artificially redefine $|\chi|$
to be larger than $\Bc$, but not much larger.
If the value of the constant $\Dc$ is appropriate
so that $\pi_3'$ is long enough, then also the
subrun of $\pi$ from $q_1\chi\beta^j\gamma$ to
$r_i\beta^i\gamma$ is long enough to ensure that $j>i$.

We have thus two bisimilar configurations, $q_1\chi\beta^j\gamma$ and $t_1X\theta$, and we have a run of length $\Bc$ from $t_1X\theta$ to $u\theta$.
Then a run of the same length exists also from $q_1\chi\beta^j\gamma$ to a configuration bisimilar to $u\theta$;
this configuration is necessarily of the form $v_u\xi_u\beta^j\gamma$ (we cannot pop the whole $\chi$ with $|\chi|>\Bc$ in only $\Bc$ steps),
where the size of $\xi_u$ is again bounded exponentially.
We can find such $v_u$ and $\xi_u$ for all considered states $u$ (note that $j$ does not depend on $u$).

The classes of $v_u\xi_u\beta^j\gamma$ in turn
are determined by the control states $v_u$, by the small stack contents $\xi_u$, and by the classes $r\beta^j\gamma$
for all control states $r$ reachable from $v_u\xi_u\beta^j\gamma$ after popping $\xi_u$ (hence also reachable from $q\alpha\beta^e\gamma$ after popping $\alpha\beta^{e-j}$).

Summing up this part, we have shown that the class of $r_i\beta^i\gamma$ can
be determined by the control state/stack pairs $t\mu$ and
$v_u\xi_u$ (indexed by control states $u$ from an appropriate set $U$)
and by the classes of $r\beta^j\gamma$ for some $j>i$.
All the stack contents $\mu$ and $\xi_u$ have
at most exponential size.
The number of possibilities for $(t\mu,(v_u\xi_u)_{u\in U})$ is then
doubly exponential.

Is it possible that the same tuple $(t\mu,(v_u\xi_u)_{u\in U})$ is assigned to two distinct indices $i,i'<e$, such that moreover $r_i=r_{i'}$?
We prove that this is not possible, which immediately implies a
doubly exponential upper bound for $e$, being the number of considered indices $i\in\interval{0,e-1}$.
Suppose thus, to the contrary, we had two such indices
$i,i'$, where $i<i'$.
Then the class of $r_i\beta^i\gamma$ is determined by the classes of $r\beta^j\gamma$, for some $j>i$,
in the same way as the class of $r_i\beta^{i'}\gamma$ is determined by the classes of $r\beta^{j'}\gamma$, for some $j'>i'>i$ (due to equality of the descriptions assigned to $i$ and $i'$).
Recalling the inequalities $M_i+1\leq M_j$ and
$M_i+1\leq M_{i'}< M_{j'}$, and the definition of $M_j$ and $M_{j'}$,
we have $r\beta^j\gamma\approx_{M_i+1}r\beta^\omega\gamma\approx_{M_i+1}r\beta^{j'}\gamma$ for all control states $r$ under
consideration.
It follows easily by the equality of their descriptions,
that then also $r_i\beta^i\gamma\approx_{M_i+1}r_i\beta^{i'}\gamma$
(we depend in the same way on $\approx_{M_i+1}$-equivalent configurations, so we remain $\approx_{M_i+1}$-equivalent).
Due to $r_i\beta^{i'}\gamma\approx_{M_i+1}r_i\beta^\omega\gamma$ this implies $r_i\beta^i\gamma\approx_{M_i+1}r_i\beta^\omega\gamma$,
which contradicts with $\eqlev(r_i\beta^i\gamma,r_i\beta^\omega\gamma)=M_i$.
This finishes the proof of the doubly exponential
bound on $e$.

\paragraph{Bounding the number of classes.}

We now come back to our initial goal, namely bounding
the number of classes of $r\gamma$, where $r\gamma$ is reachable from the initial configuration via $q\alpha\beta^k\gamma$ for all $k\geq 1$,
where $\alpha$ and $\beta$ are fixed and short.
We employ here exactly the same characterization of the classes as in the previous part, where we were bounding $e$.
This time we only take $i=0$ (i.e., we consider configurations $r\beta^0\gamma=r\gamma$),
but we allow an arbitrary control state $r$ in place of $r_i$,
and do not assume that $\gamma$ is fixed.
As previously, we have two possibilities.

First, it may happen that $r\gamma$ is bisimilar to a configuration of the form $t\beta'\beta^\omega\gamma$
that is reachable from $q\alpha\beta^\omega\gamma$ by a run $\pi'$ that ``focused on popping'',
that is, belonging to the first case we have considered.
As $\beta'$ is small (at most exponential), and
since the class of $t\beta'\beta^\omega\gamma$ does not depend on $\gamma$
(as already explained, $(\approx)=(\approx_F)$, while $\gamma$ is too deep in the stack to be seen in the first $F$ transitions),
we only have a doubly exponential number of classes
of $r\gamma$ in this case.

The second case is that the class of $r\gamma$ is determined by a small (exponential) information $(t\mu,(v_u,\xi_u)_{u\in U})$ and by the classes of $r'\beta^j\gamma$ for some $j>i=0$.
The latter classes, in turn, are determined by the stack content $\beta^j$ (hence by $j$, because $\beta$ is fixed) and by the classes of $r''\gamma$.
Thus, if we add $j$ to the remembered information, we can say that the class of $r\gamma$ is determined by $(t\mu,(v_u,\xi_u)_{u\in U},j)$ and by the classes of $r''\gamma$.
The previously shown doubly exponential bound on $e$ implies that the number of possibilities for $(t\mu,(v_u,\xi_u)_{u\in U},j)$ is
doubly exponential.
A minor detail is that the tuple $(t\mu,(v_u,\xi_u)_{u\in U},j)$ describes the class of $r\gamma$ for a single control
state $r$;
we should rather, given a stack content $\gamma$, consider a tuple of such tuples, indexed by states $r$ from an appropriate set $R$
(control states reachable from $q\alpha\beta^e\gamma$ after popping $\alpha\beta^e$),
hence describing the classes of $r\gamma$ simultaneously for all $r\in R$.
We now have a cyclic situation:
the classes of $r\gamma$ are determined by the small
a small tuple (indexed by control states)
of the above-mentioned information and by the classes of
$r\gamma$ themselves.
One can show that in such a situation, the small information is enough to determine the classes of $r\gamma$.
We thus have a doubly exponential
bound on the number of these classes.

\paragraph{Adding $\epsilon$-transitions.}

We now briefly discuss on adapting the above proof idea to the general case, where $\epsilon$-transitions may be present.
Recall that every pushing transition reads some action symbol;
$\epsilon$-transitions are allowed only for popping
and need to be deterministic.
It turns out that if, while popping $\beta$,
at least one action symbol is necessarily read, then the proof sketch presented above still works, up to adjusting some details.
The main difficulty comes with the fact that it may be possible to pop arbitrarily many copies of $\beta$ without reading any action symbols.
In particular, starting from a configuration of the form $q\alpha\beta^k\gamma$,
the $\gamma$ part of the stack may be possibly reached after reading very few action symbols, no matter how large $k$ is.

In order to deal with this difficulty, we proceed as follows:
instead of splitting the whole stack content into a
single \emph{pumping triple} $\alpha\beta\gamma$,
we now rather try to find pumping triples $\alpha\beta\gamma$ with slightly stronger properties, but being only infixes of the whole stack.
If some infix of the stack content is represented as a pumping triple $\alpha\beta\gamma$, being located on top of a stack content $\eta$,
then we still have the previous requirement saying, roughly, that the $\beta$ infix may be pumped.
Beside that, we have a new requirement: if $r\beta^i\gamma\eta$ is reached by an $\epsilon$-run popping $\beta$ from some larger configuration $r'\beta^{i+1}\gamma\eta$,
then further $\epsilon$-transitions allow to pop the whole $\beta^i\gamma$, and reach a configuration of the form $s\eta$.

As already said, the previous proof works correctly when the considered runs from $q\alpha\beta^e\gamma\eta$ (and likewise from $q\alpha\beta^\omega\gamma\eta$)
read at least one action symbol while popping every copy of $\beta$.
There are also runs that at some moment pop a copy of $\beta$ using only $\epsilon$-transitions.
Then these runs (possibly after prolonging them) continue with popping $\epsilon$-transitions until the stack content $\eta$ is uncovered.
In the proof we need to add special cases for such runs.
It turns out, however, that these runs do not introduce many new possibilities.
An intuitive reason for this is that if a run, while being in some configuration $r\beta^i\gamma\eta$, starts performing $\epsilon$-transitions until $s\eta$ is reached,
then the number $i$ and the stack content $\gamma$ are ``forgotten''; neither the target configuration nor the action symbols read on the way (we do not read anything) depend on $i$ or on $\gamma$.

In \cref{sec:decompositions} we introduce decompositions of stacks.
They give a way to decompose the whole stack content into many pumping triples $\alpha\beta\gamma$, located possibly in a nested way.
In this decomposition, most of the stack content, except for exponentially many symbols, is put into the $\gamma$ parts of pumping triples.
The proof presented above, dealing with a single pumping triple, is made formal and complete in \cref{main lemma}, which is our main technical lemma.
Then, an appropriate induction gives us the final
upper-bound result, \cref{thm:main}.

\section{Some basics on pushdown systems}\label{sec:basics}

In this section we present some definitions and known facts about pushdown systems, useful in our proofs.
For purposes of the whole section let $\Pp=(Q,\Gamma,\Act,\Delta)$ be an \PDS.

Our first \lcnamecref{can pop fast} says that if there is a run between two similar configurations, then there is a short run between them.
Essentially, this boils down to the standard pumping lemma for pushdown automata.

\begin{lemma}[{\cite[Lemma 3.3]{LICS20}}]\label{can pop fast}
	There exists a constant\footnote{%
		Constants depending on $\Pp$ are denoted by capital letters in Sans Serif font;
		for constants singly exponential in the size of $\Pp$ we use initial letters of the alphabet ($\Bc,\Cc,\Dc$),
		while for doubly exponential constants we use letters near the end of the alphabet $(\Tc,\Uc,\Vc,\Zc$).
	} $\Bc\in 2^{|\Pp|^{O(1)}}$ such
	that whenever $p\alpha\to^*q\beta$ for two configurations $p\alpha, q\beta\in Q\Gamma^*$, then $\dist(p\alpha,q\beta)\leq(|\alpha|+|\beta|)\cdot\Bc$.
\qed\end{lemma}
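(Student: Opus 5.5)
We prove \cref{can pop fast} by the standard summary (pumping) argument for pushdown automata. Note that $\dist$ counts only action symbols, and $\dist(p\alpha,q\beta)\le|\rho|$ for any run $\rho$ from $p\alpha$ to $q\beta$. It therefore suffices to bound the \emph{length} of some run from $p\alpha$ to $q\beta$ by $(|\alpha|+|\beta|)\cdot\Bc$. One caveat must be handled first: because $\epsilon$-transitions are deterministic, the unique run from some configuration may consist of $\epsilon$-transitions only. Such a run still uses finitely many steps whenever it reaches $q\beta$, so working with run length is sound.

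Take any run $\rho$ from $p\alpha$ to $q\beta$. Let $\gamma$ be the longest common suffix of $\alpha$ and of all stack contents along $\rho$, i.e.\ the part of the stack that is never touched. Writing $\alpha=\alpha'\gamma$ and $\beta=\beta'\gamma$, the run $\rho$ splits canonically into two phases.
\begin{itemize}
\item A popping phase consisting of $|\alpha'|$ segments, the $k$-th of which removes the $k$-th symbol of $\alpha'$ for the first time. Each such segment is a run of the form $rX\delta\to^*s\delta$ that never touches $\delta$ before its last step.
\item A pushing phase consisting of $|\beta'|$ segments, each of which establishes one symbol of $\beta'$ permanently, i.e.\ that symbol stays untouched afterwards. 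Each such segment is a run $rY\delta\to^* s' Z Y'\delta$, or more precisely a one-rule step $r'Y\delta\to r''\mu\delta$ followed by pop-segments that pop the extra symbols of $\mu$ that do not survive.
\end{itemize}

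Everything therefore reduces to bounding the length of a minimal \emph{pop-run} $rX\to^* s$ (a run from $rX$ to $s$ in which the stack stays nonempty until the end) by a constant $\Bc_0\in 2^{|\Pp|^{O(1)}}$. We take a shortest such run and argue by pumping on stack height. First, if its maximal stack height exceeded $|Q|^2\cdot|\Gamma|$ (roughly), the standard pigeonhole on pairs (state when a given stack level is pushed, state when it is later popped) would let us cut out a nested portion and obtain a shorter run, a contradiction. Second, with the height bounded by $h=|Q|^2|\Gamma|$, no configuration repeats in a shortest run, so its length is at most $|Q|\cdot|\Gamma|^{h+1}$. This is $2^{|\Pp|^{O(1)}}$ because the rules have polynomial-size right-hand sides.

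Each pushing-phase segment is then one rule application followed by at most $|\mu|\le|\Pp|$ pop-runs, each of which we shorten to length at most $\Bc_0$. The total length is thus at most $|\alpha'|\Bc_0+|\beta'|(1+|\Pp|\Bc_0)$, and setting $\Bc=(1+|\Pp|)\Bc_0$ gives the claim. The main obstacle is keeping the decomposition correct: the pushing segments must be defined via ``last time the level is rewritten,'' so that replacing each pop-subrun by a shortest one with the same endpoints remains a valid run. This holds because pop-runs depend only on the top symbol and the state, so substitutions are context-independent.
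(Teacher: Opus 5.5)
Your overall plan is the standard pumping argument for pushdown automata that the paper alludes to (it cites this lemma without proof): bound shortest pop-runs $rX\to^* s$ by height pumping, then cut and paste. Your bound on pop-runs is fine. The gap is in the claimed canonical decomposition of $\rho$ into $|\alpha'|$ pop-runs followed by $|\beta'|$ segments, each consisting of one rule followed by pop-runs.

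\textbf{Where the decomposition fails.} At a fixed height $L$ the run may rewrite the top symbol many times before the final write to level $L$. These rewrites can include excursions above $L$ that come back to height $L$. Moreover, the lowest symbol of $\alpha'$ is typically rewritten, not popped. Indeed, if the stack ever became exactly $\gamma$, the next transition would have to rewrite the top of $\gamma$, since nothing is enabled from an empty stack. That contradicts your description of $\gamma$ as untouched.

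Take the run $pX\to_a p'X'\to_b qY$, so $\gamma=\epsilon$, $\alpha'=X$ and $\beta'=Y$. It has no segment of the form $rX\delta\to^* s\delta$. Its first transition is neither part of a pop-run nor the single rule of the segment establishing $Y$. Same-level stretches of this kind belong to neither of your phases. So the count $|\alpha'|\Bc_0+|\beta'|(1+|\Pp|\Bc_0)$ does not bound any run you have actually constructed.

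\textbf{The missing ingredient} is a bound on \emph{same-level runs}: runs $rZ\delta\to^* r'Z'\delta$ that never go below height $|Z\delta|$. Such a run is a sequence of level steps. Each step is one transition applied at that height, followed, if it pushed, by a pop-run back to that height. All level configurations share the untouched $\delta$, and a shortest run repeats no configuration. Hence there are at most $|Q|\cdot|\Gamma|$ level steps, each of length at most $1+\Bc_0$.

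Now assume right-hand sides of length at most $2$, as in the paper's normalization. (Note that $|\Pp|$ as defined counts rules, not their lengths, so your $|\mu|\le|\Pp|$ is not automatic.) Take $\rho$ shortest, let $h$ be its minimal height, and let $g_L$ be the last time $\rho$ is at height $L$. Then $\rho$ consists of three kinds of pieces:
\begin{enumerate}
\item $|\alpha|-h$ pop-runs bringing the height down to $h$;
\item same-level runs at levels $h,h+1,\dots,|\beta|$, ending at $g_h<g_{h+1}<\dots$;
\item between consecutive same-level runs, the single pushing transition taken at time $g_L$.
\end{enumerate}
This gives length at most $(|\alpha|+|\beta|)\cdot\bigl(|Q||\Gamma|(1+\Bc_0)+\Bc_0+1\bigr)$, which has the required form.
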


For every stack content $\alpha\in\Gamma^*$, we define two sets, describing possible ways of pushing $\alpha$:
\begin{align*}
	\up(\alpha)&=\set{(pX,qY)\in(Q\Gamma)^2\mid pX\to^*qY\alpha}&&\mbox{for $\alpha\in\Gamma^*$, and}\\
	\up_0(\alpha)&=\set{(pX,q)\in(Q\Gamma)\times Q\mid pX\to^*q\alpha}&&\mbox{for $\alpha\in\Gamma^*\setminus\set{\epsilon}$.}
\end{align*}
It is easy to see that for all $\alpha,\beta,\gamma\in\Gamma^*$ with $\alpha\neq\epsilon$,
\begin{align*}
	\up(\beta\gamma)=\up(\gamma)\circ\up(\beta)&&\mbox{and}&&
	\up_0(\alpha\beta)&=\up(\beta)\circ\up_0(\alpha)\,,
\end{align*}
where ``$\circ$'' denotes the composition of relations, understood in the usual way.

As explained in \cref{sec:overview}, it is convenient to extend the set of configurations into expressions containing $\omega$, which we do as follows.
A \emph{generalized stack} is defined by induction: it is a
(possibly empty) sequence
$\alpha_1\alpha_2\dots\alpha_k$, where every $\alpha_i$ is either
\begin{compactitem}
\item	a stack symbol from $\Gamma$, or
\item	an expression of the form $(\beta_1\beta_2\dots\beta_\ell)^\omega$, where $\beta_1\beta_2\dots\beta_\ell$ is a nonempty (i.e., satisfying $\ell\geq 1$) generalized stack.
\end{compactitem}
The set of generalized stacks is denoted $\Gamma^\gs$.
A \emph{generalized configuration} is an element of $Q\Gamma^\gs$.
From now on, whenever we talk about a configuration, we usually mean a generalized one.
Configurations that are not generalized, are called \emph{standard}.
Note that \cref{can pop fast} and the definition of $\up(\cdot)$ were only given for standard configurations.

Let us emphasize that, formally, generalized stacks are expressions (terms);
we do not assign any concrete meaning to the $(\cdot)^\omega$ operation.
However, the intuition staying behind is that $(\beta_1\beta_2\dots\beta_\ell)^\omega$ describes a stack in which the sequence $\beta_1\beta_2\dots\beta_\ell$ is repeated very many times.

We allow ourselves to write $\alpha^e$ when $e\in\Nat\cup\set{\omega}$, mixing two different semantics of the same notation:
for $e\in\N$ we mean the result of repeating $e$ times the sequence $\alpha$, while for $e=\omega$ we mean the expression $\alpha^\omega$.

For $a\in\Act_\epsilon$ we define $\to_a$ to be the unique relation on $Q\Gamma^\gs$
that holds only in the following cases:
\begin{compactenum}
\item	if $(p,X,a,q,\beta)\in\Delta$, then $pX\gamma\to_a q\beta\gamma$ for all $\gamma\in\Gamma^\gs$,
\item	if $p\alpha\to_a q\beta$ and $(p\alpha,p)\not\in(\to_\epsilon^*)$, then $p\alpha^\omega\gamma\to_a q\beta\alpha^\omega\gamma$, and
\item	if $p\alpha\to_\epsilon^*p$, then $p\alpha^\omega\gamma\to_\epsilon p\gamma$.
\end{compactenum}
Note that in order to know whether $p\alpha^\omega\gamma\to_a q\beta\alpha^\omega\gamma$ or $p\alpha^\omega\gamma\to_\epsilon p\gamma$,
one has to check whether $p\alpha\to_a q\beta$ and whether $p\alpha\to_\epsilon^*p$.
It is important here that the nesting depth of the $(\cdot)^\omega$ operation in $\alpha$ is strictly smaller than in $\alpha^\omega\gamma$,
so the whole definition is well-formed.

Based on that, we then define the $\weak{w}$ relation over generalized configurations, as previously.
Note that, when restricted to standard configurations, the above definition coincides with the previous one.
Moreover, only standard configurations may be reached from $\initstate$.

Item 2 of the above definition treats $p\alpha^\omega\gamma$ almost as $p\alpha\alpha^\omega\gamma$:
we can go to $q\beta\alpha^\omega\gamma$ when there is a transition from $p\alpha$ to $q\beta$.
There is one exception, though: if $p\alpha\to_\epsilon^*p$, then we rather pop the whole $\alpha^\omega$, going directly to $p\gamma$ (cf.~Item 3).
The intuition is that if $\alpha$ may be popped using $\epsilon$-transitions,
then also arbitrarily many copies of $\alpha$, embodied in the $\alpha^\omega$ expression, may be popped using $\epsilon$-transitions.
Transitions from Item 2 are disallowed in this case, so that $\epsilon$-transitions remain deterministic.
In Item 3 it is required that the state $p$ before and after popping $\alpha$ is the same;
then it makes sense to say that after popping very many copies of $\alpha$ the state will be again $p$.
If we had $p\alpha\to_\epsilon^*q$ and $q\alpha\to_\epsilon^*p$, then it would be unclear in which state (in $p$ or in $q$) we should finish popping $\alpha^\omega$.
For this reason we do not add any special transition in such a case.

The following property is a direct consequence of determinism of $\epsilon$-transitions:
\begin{quote}
	if $c\to^*_\epsilon d$, then $c\approx d$ (and $c\approx_k d$ for all $k\in\Nat)$.
\end{quote}
Yet another useful property of generalized configurations is the following:
\begin{quote}\label{expand-equiv}
	for all $p\in Q$ and $\alpha,\gamma\in\Gamma^\gs$ we have $p\alpha^\omega\gamma\approx p\alpha\alpha^\omega\gamma$.
\end{quote}
Indeed, if $p\alpha\to^*_\epsilon p$ (hence also $p\alpha\alpha^\omega\gamma\to_\epsilon^* p\alpha^\omega\gamma$), equivalence follows from the previous property;
otherwise, exactly the same transitions can be performed from $p\alpha^\omega\gamma$ and from $p\alpha\alpha^\omega\gamma$, leading to exactly the same configurations.

\medskip

We say that a run $\rho'$ is a \emph{shift} of a run $\rho$ if $\rho'$ is obtained by appending or removing the same suffix to every configuration in $\rho$.

For $\alpha\in\Gamma^\gs$ and $P\subseteq Q$ we define two functions, saying how states can change while popping $\alpha$:
\begin{align*}
	\eats{\alpha}(P)=\set{r\in Q\mid \exists q\in P.\,q\alpha\to^*r}&&\mbox{and}&&
	\eats{\alpha}_\epsilon(P)=\set{r\in Q\mid \exists q\in P.\,q\alpha\to_\epsilon^*r}\,.
\end{align*}
For singleton sets we abbreviate $\eats{\alpha}(\set{q})$ as $\eats{\alpha}(q)$,
and likewise we abbreviate $\eats{\alpha}_\epsilon(\set{q})$ as $\eats{\alpha}_\epsilon(q)$.
Observe that
\begin{align*}
	\eats{\beta}(\eats{\alpha}(P))=\eats{\alpha\beta}(P),&&
	\eats{\beta}_\epsilon(\eats{\alpha}_\epsilon(P))=\eats{\alpha\beta}_\epsilon(P),&&\mbox{and}&&
	\eats{\alpha}_\epsilon(P)\subseteq\eats{\alpha}(P)
\end{align*}
for all $\alpha,\beta\in\Gamma^\gs$ and $P\subseteq Q$.
The following \lcnamecref{run2eats} is a direct consequence of the definition:

\begin{lemma}\label{run2eats}
	Let $p,q\in Q$ and $\alpha,\beta\in\Gamma^\gs$.
	If $p\alpha\to^*q\beta$, then $\eats{\alpha}(p)\supseteq\eats{\beta}(q)$.
\qed\end{lemma}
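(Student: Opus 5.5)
The claim follows by composing runs, after checking one fact about the generalized semantics. Fix $r\in\eats{\beta}(q)$, so there is a run $q\beta\to^*r$. The plan is to concatenate the given run $p\alpha\to^*q\beta$ with this run and obtain $p\alpha\to^*r$, which says exactly that $r\in\eats{\alpha}(p)$. Since $r$ was arbitrary, this gives $\eats{\alpha}(p)\supseteq\eats{\beta}(q)$.

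The only subtlety is the meaning of the expression $q\beta\to^*r$ for generalized stacks. By definition of $\eats{\cdot}$, the stack $\beta$ is consumed entirely and nothing lies below it. The run from $p\alpha$, by contrast, also acts on a configuration whose whole stack is $\beta$, with nothing below. So the endpoint of the first run and the start of the second are literally the same generalized configuration $q\beta$. Composition is then immediate from the definition of $\to^*$ as the reflexive–transitive closure of $\bigcup_a\to_a$ on $Q\Gamma^\gs$, and no shifting of runs is required. This is the step I would state carefully. With that check done, transitivity of $\to^*$ finishes the proof.

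I expect no real obstacle here. The one point to stress is that we never need the relations $\to_a$ to be compatible with appending a suffix, a compatibility that can fail for $\omega$-expressions because of the case distinction in Item 3. We never need it because both runs involve exactly the same configuration $q\beta$, with empty context below.
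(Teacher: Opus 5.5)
Your argument is correct and is exactly the ``direct consequence of the definition'' the paper has in mind: you concatenate the given run $p\alpha\to^*q\beta$ with a run $q\beta\to^*r$ witnessing $r\in\eats{\beta}(q)$, which gives $r\in\eats{\alpha}(p)$. Your closing aside is inaccurate, though harmless since your proof does not use it: all three clauses defining $\to_a$ on generalized configurations hold uniformly for every suffix $\gamma\in\Gamma^\gs$, so $c\to_a d$ always implies $c\eta\to_a d\eta$, and appending a suffix never fails, even with $\omega$-expressions.
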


The next property follows from determinism of $\epsilon$-transitions:

\begin{lemma}\label{epsilon-det}
	Let $q\in Q$ and $\alpha\in\Gamma^\gs$.
	If $\eats{\alpha}_\epsilon(q)\neq\emptyset$, then $\eats{\alpha}(q)=\eats{\alpha}_\epsilon(q)$, and this set has size $1$.
\end{lemma}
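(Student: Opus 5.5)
Fix $q\in Q$ and $\alpha\in\Gamma^\gs$ with $\eats{\alpha}_\epsilon(q)\neq\emptyset$. The plan is to show that the $\epsilon$-transition relation on generalized configurations is a partial function, and that any configuration admitting an $\epsilon$-transition admits no reading transition. Then every run from $q\alpha$ that pops $\alpha$ is forced to be the unique $\epsilon$-run, and it stops at the first moment $\alpha$ is popped.

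\textbf{Step 1 (determinism on generalized configurations).} For every generalized configuration $c$ we prove, by induction on the nesting depth of $(\cdot)^\omega$ in the topmost element of the stack, two facts: (i) $c$ has at most one $\to_\epsilon$-successor; (ii) if $c$ has an $\to_\epsilon$-successor, then it has no $\to_a$-successor with $a\in\Act$.
\begin{compactitem}
\item If the top of $c$ is a symbol $X$, both facts follow directly from the restriction on $\Delta$, since the pair $(p,X)$ is in $\epsilon$-mode.
\item If $c=p\beta^\omega\gamma$ and $p\beta\to^*_\epsilon p$, then Item 2 is disabled, so the only transition is the single Item-3 $\epsilon$-transition.
\item Otherwise only Item 2 applies, and it mirrors the transitions of $p\beta$. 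The induction hypothesis for $p\beta$ (of smaller depth) gives the claim.
\end{compactitem}
The expected obstacle is that Item 2 copies transitions $p\beta\to_a q\beta'$ whose source is itself a generalized configuration. This is why the induction must be on depth and must cover both (i) and (ii) simultaneously.

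\textbf{Step 2 (the popping run is forced).} Take $r\in\eats{\alpha}_\epsilon(q)$ via an $\epsilon$-run $\rho$ from $q\alpha$ to $r$. Let $s\in\eats{\alpha}(q)$ via a run $\sigma$ from $q\alpha$ to $s$. We show by induction along $\sigma$ that $\sigma$ is a prefix of $\rho$.
\begin{compactitem}
\item At each configuration of $\rho$ other than its endpoint, an $\epsilon$-transition exists, namely the next step of $\rho$. By Step 1, that step is the only transition available, so $\sigma$ must take it as well.
\item The endpoint $r$ has empty stack and hence no transitions. So $\sigma$ cannot continue past it.
\end{compactitem}
Hence $\sigma$ is a prefix of $\rho$.

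\textbf{Step 3 (only the endpoint pops $\alpha$).} The remaining subtlety is to check that $\rho$ visits no configuration with empty stack before its end. The stack becomes empty only when all of $\alpha$ is popped, and after that no move is possible. So the only configuration in $Q$ (empty stack) reached along $\rho$ is its endpoint $r$. Since $s$ has empty stack and lies on $\rho$, we get $s=r$.

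Therefore $\eats{\alpha}(q)\subseteq\set{r}$. Combined with $\eats{\alpha}_\epsilon(q)\subseteq\eats{\alpha}(q)$ and $r\in\eats{\alpha}_\epsilon(q)$, this gives $\eats{\alpha}(q)=\eats{\alpha}_\epsilon(q)=\set{r}$.
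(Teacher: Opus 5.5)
Your proof is correct and follows the same idea as the paper, whose proof is a one-line appeal to determinism of $\epsilon$-transitions: an $\epsilon$-run from $q\alpha$ to $r$ leaves no room for any run from $q\alpha$ to pop $\alpha$ into a different state. Your Step~1 makes explicit what the paper takes for granted, namely that determinism lifts to generalized configurations through Items~2 and~3 of the transition definition, including the fact that a configuration with an $\epsilon$-successor has no reading successor; your induction on the nesting depth of the topmost stack element handles this correctly.
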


\begin{proof}
	If $q\alpha\to_\epsilon^*r$, then we cannot have $q\alpha\to^*s$ for any other state $s\neq r$.
\end{proof}

Recall that $\dist$ counts the number of action symbols needed to reach some configuration.
Because $\epsilon$-transitions are not counted, there may be arbitrarily many configurations in a given distance.
We fix this by defining $\Near$:
for $p\alpha\in Q\Gamma^\gs$ and $k\in\Nat$
let $\Near(p\alpha,k)$ be the set of configurations that can be reached from $p\alpha$ by a run reading at most $k$ action symbols and not ending with an $\epsilon$-transition.

\begin{lemma}\label{determ short run}
	For all $p\alpha\in Q\Gamma^\gs$ and $k\in\Nat$ we have
	\begin{align*}
		|\set{\classC\mid\dist(\class{p\alpha},\classC)\leq k}|\leq|\Near(p\alpha,k)|\leq|\Pp|^k\,.
	\end{align*}
\end{lemma}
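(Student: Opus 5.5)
We prove the two inequalities separately.

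\emph{First inequality.} We show that every class $\classC$ with $\dist(\class{p\alpha},\classC)\leq k$ contains an element of $\Near(p\alpha,k)$; distinct classes then give distinct elements of $\Near(p\alpha,k)$. By the definition of the quotient, $\dist(\class{p\alpha},\classC)\leq k$ means there is a word $w$ with $|w|\leq k$ such that $\class{p\alpha}\weak{w}\classC$ in $\quotient{\Ll}$. As remarked after the definition of the quotient, this lifts to a run $p\alpha\weak{w}d$ with $d\in\classC$. If the run ends with $\epsilon$-transitions, we truncate it right after its last reading transition, or at $p\alpha$ itself if $w=\epsilon$. The truncated endpoint $d'$ satisfies $d'\to_\epsilon^* d$, so $d'\approx d$ by the property following from determinism of $\epsilon$-transitions. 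Hence $d'\in\classC$ and $d'\in\Near(p\alpha,k)$.

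\emph{Second inequality.} Here determinism of $\epsilon$-transitions is used in the generalized setting. We first check that every configuration has at most one outgoing $\epsilon$-transition. For rule-based transitions (Item 1) this is the definition of an \PDS. For $p\beta^\omega\gamma$, Item 3 applies exactly when $p\beta\to_\epsilon^*p$, and then Item 2 is disabled. Otherwise Item 2 inherits uniqueness inductively from the uniqueness of $\epsilon$-transitions of $p\beta$. The two items are mutually exclusive by their side condition.

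We also check that a configuration with an outgoing $\epsilon$-transition has no reading transition. For Item 1 this is $\epsilon$-mode. For Item 2, by induction on nesting depth, the head of $p\beta$ is either in $\epsilon$-mode or reading, and the transitions of $p\beta^\omega\gamma$ are those of $p\beta$. The case of Item 3 needs care: there $p\beta\to_\epsilon^*p$, so $p\beta$ is in $\epsilon$-mode, or $\beta$ is empty-like, and Item 2 is disabled, so only the Item 3 $\epsilon$-transition exists. This case analysis is the main technical point, though it is routine.

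With these facts, a run that reads at most $k$ symbols and does not end with $\epsilon$ is determined by the sequence of reading transitions it takes. Before each reading step, the $\epsilon$-prefix is forced: the run must follow the unique $\epsilon$-path until it reaches a configuration with no $\epsilon$-transition, because a configuration that has an $\epsilon$-transition has no reading transition. The final segment must not end with $\epsilon$, so no trailing $\epsilon$-steps are taken. Each reading step is chosen among transitions induced by rules in $\Delta$: for Item 2, transitions of $p\beta^\omega\gamma$ correspond injectively to transitions of $p\beta$, and inductively to rules. So there are at most $|\Delta|\leq|\Pp|$ choices per step.

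Counting over the number $j\leq k$ of reading steps gives $\sum_{j\leq k}|\Delta|^j$ configurations. This is at most $|\Pp|^k$, using $|\Pp|\geq|\Delta|+1$ when $|\Delta|\geq1$, which follows from $|\Pp|\geq|Q|+|\Delta|$. If $\Delta=\emptyset$, then $|\Near|=1\leq|\Pp|^k$.
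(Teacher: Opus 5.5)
Your proof is correct and follows the paper's argument. For the first inequality you truncate a witnessing run right after its last reading transition, since the dropped $\epsilon$-suffix does not change the class; for the second you note that an element of $\Near(p\alpha,k)$ is determined by the sequence of at most $k$ rules used in reading steps, because the $\epsilon$-segments in between are forced, which gives $\sum_{j\le k}|\Delta|^j\le(|\Delta|+1)^k\le|\Pp|^k$. Your explicit check that $\epsilon$-transitions remain deterministic and exclude reading transitions on generalized configurations (the Item 2/Item 3 case split) is sound; the paper takes it for granted.
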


\begin{proofnoqed}
	If $\dist(\class{p\alpha},\classC)\leq k$, then there is a run $\rho$ from $p\alpha$ to a configuration $c\in\classC$,
	reading at most $k$ action symbols.
	Let $c'$ be the first configuration reached by $\rho$ after reading all these action symbols;
	after $c'$ we have only $\epsilon$-transitions.
	Then $c'\in\Near(p\alpha,k)$, and $c'\approx c$, so $c'\in\classC$.
	Thus in every class from the set $\set{\classC\mid\dist(\class{p\alpha},\classC)\leq k}$ there is an element of the set $\Near(p\alpha,k)$,
	so the former set cannot be larger;
	we obtain the first inequality.
	
	To obtain the second inequality, we observe that every configuration in $\Near(p\alpha,k)$ is fully characterized by the list of at most $k$ transitions
	performed while reading action symbols by a run not ending with an $\epsilon$-transition.
	Indeed, from every configuration $c$ we can go by $\epsilon$-transitions to at most one configuration $d$ such that action symbols can be read from $d$.
	When we are in $d$, we perform a transition from our list.
	Then again we go by $\epsilon$-transitions to the unique configuration from which action symbols can be read, and so on.
	At the very end, we have to stop immediately after performing the last transition from the list.
	The number of lists of at most $k$ transitions is
	\begin{align*}
		\sum\nolimits_{i=0}^k|\Delta|^i\leq(|\Delta|+1)^k\leq|\Pp|^k\,.
	\qedhere\end{align*}
\end{proofnoqed}

For a length-$1$ run $\rho$ let
\begin{align*}
	\sgrow(\rho)=\left\{\begin{array}{ll}
		|\beta|-1&\mbox{if }\rho=(pX\gamma\to_a q\beta\gamma),\\
		\sgrow(p\alpha\to_a q\beta)&\mbox{if }\rho=(p\alpha^\omega\gamma\to_a q\beta\alpha^\omega\gamma),\mbox{ and}\\
		-\omega&\mbox{if }\rho=(p\alpha^\omega\gamma\to_\epsilon p\gamma),
	\end{array}\right.
\end{align*}
and for a run $\rho=\rho_1\circ\dots\circ\rho_n$ of length $n$ let
\begin{align*}
	\sgrow(\rho)=\sum_{i=1}^n\sgrow(\rho_i)\,.
\end{align*}
Here we assume that $-\omega+n=n-\omega=-\omega-\omega=-\omega$ for every $n\in\N$.
Note that if $\rho$ is a run between standard configurations $p\alpha,q\beta\in Q\Gamma^*$, then $\sgrow(\rho)=|\beta|-|\alpha|$.

Finally, we state two simple but important \lcnamecrefs{equiv-because-equiv-below}, which are useful while proving that two configurations are
weakly bisimilar
(similar lemmata appeared already in prior work~\cite{LICS20,Jancar-arxiv,Jancar20}):

\begin{lemma}\label{equiv-because-equiv-below}
	For all $q\in Q$ and $\alpha,\eta,\eta'\in\Gamma^\gs$ we have
	\begin{align*}
		\eqlev(q\alpha\eta,q\alpha\eta')\geq\min\set{\eqlev(r\eta,r\eta')\mid r\in\eats{\alpha}(q)}\,.
	\end{align*}
	In other words, if $r\eta\approx r\eta'$ for all $r\in\eats{\alpha}(q)$ then $q\alpha\eta\approx q\alpha\eta'$,
	and if $r\eta\approx_k r\eta'$ for all $r\in\eats{\alpha}(q)$ then $q\alpha\eta\approx_k q\alpha\eta'$.
\end{lemma}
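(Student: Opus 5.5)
We prove the quantitative version: for every $k\in\Nat$, if $r\eta\approx_k r\eta'$ for all $r\in\eats{\alpha}(q)$, then $q\alpha\eta\approx_k q\alpha\eta'$. Applying this with $k$ equal to the minimum on the right-hand side gives the inequality; the case of $\approx$ follows by taking the minimum to be $\omega$, or directly by the same construction with $\approx$ in place of $\approx_k$. The natural device is the relation
\begin{align*}
R_k=\set{(q\alpha\eta,q\alpha\eta'),(q\alpha\eta',q\alpha\eta)\mid q\in Q,\ \alpha\in\Gamma^\gs,\ \forall r\in\eats{\alpha}(q).\ r\eta\approx_k r\eta'}\,,
\end{align*}
where $\eta,\eta'$ are fixed and $\alpha$ ranges over all generalized stacks. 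We show by induction on $k$ that $R_k\subseteq(\approx_k)$. The case $k=0$ is trivial. For the step, we must show that $R_{k+1}$ is covered by $(\approx_k)$, using $R_k\subseteq(\approx_k)$.

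For the covering step, take $(q\alpha\eta,q\alpha\eta')\in R_{k+1}$ and a weak move $q\alpha\eta\weak{a}c'$; by symmetry the other direction is analogous. Decompose the underlying run into single transitions and track the first moment, if any, at which the $\alpha$ part is entirely popped. As long as this has not happened, each transition acts only on the top part, by Items 1--2 of the definition of $\to_a$, or pops an $\omega$-block inside $\alpha$ by Item 3. In either case the same transition can be performed with $\eta'$ underneath, so the run prefix is a shift.

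If the run never uncovers $\eta$, it ends in $s\beta\eta$ with $q\alpha\to^* s\beta$. The shifted run ends in $s\beta\eta'$ reading the same word. By \cref{run2eats}, $\eats{\beta}(s)\subseteq\eats{\alpha}(q)$, so the hypothesis for all $r\in\eats{\beta}(s)$ holds at level $k+1$ and hence at level $k$. Thus $(s\beta\eta,s\beta\eta')\in R_k\subseteq(\approx_k)$.

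If the run reaches $r\eta$ at some point, then $q\alpha\to^* r$, so $r\in\eats{\alpha}(q)$. The prefix has read some word $u$ (a prefix of $a$, of length at most $1$), and the shifted prefix reaches $r\eta'$ reading the same $u$. The remainder goes from $r\eta$ to $c'$ reading the rest $v$ of $a$. Since $r\eta\approx_{k+1} r\eta'$, the covering property gives $d'$ with $r\eta'\weak{v}d'$ and $c'\approx_k d'$. Concatenating yields $q\alpha\eta'\weak{a}d'$. When $u=a$ and $v$ is empty, we use the $\epsilon$-case of the covering.

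The main subtlety is the claim that transitions from $q\alpha\eta$ before $\alpha$ is exhausted depend only on $\alpha$. This holds for generalized stacks because each of Items 1--3 inspects only the topmost symbol or $\omega$-block together with its contents. One must check that an $\omega$-block never straddles the boundary between $\alpha$ and $\eta$. This is true since $\alpha$ is itself a well-formed generalized stack, so its blocks are complete terms. A second, minor point is that a weak move $\weak{a}$ with $a\in\Act$ may cross the boundary during its leading or trailing $\epsilon$-parts. This is handled by splitting the run at the crossing point as above, so that $u\in\set{\epsilon,a}$.
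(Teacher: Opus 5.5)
Your proof is correct, but it takes a different route from the paper's.

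The paper's proof splits on whether $\eats{\alpha}_\epsilon(q)$ is empty.
\begin{itemize}
\item If some $r\in\eats{\alpha}_\epsilon(q)$ exists, determinism of $\epsilon$-transitions gives $q\alpha\eta\approx r\eta$ and $q\alpha\eta'\approx r\eta'$, so the two levels coincide.
\item Otherwise it appeals to \cref{more-equiv-if-read}, the strict version with the extra $+1$. That lemma's covering argument may assume $\eta$ is never uncovered before the action symbol is read. It absorbs the leading and trailing $\epsilon$-moves via $c\to_\epsilon^* d\Rightarrow c\approx d$, and composes $\approx$ with $\approx_k$.
\end{itemize}

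You instead give a single uniform covering argument. You cut a weak move at the first moment the stack becomes exactly $\eta$, and hand the remainder of the move to the covering property of $r\eta\approx_{k+1}r\eta'$. This is the standard congruence argument for pushdown systems. It never uses determinism of $\epsilon$-transitions, so it holds for arbitrary pushdown $\epsilon$-LTSs. It also never needs $\approx_k$ to be closed under composition with $\approx$. For the $\approx$ case, your direct weak-bisimulation variant also avoids relying on $\bigcap_k(\approx_k)=(\approx)$.

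What the paper's organization buys is the stronger \cref{more-equiv-if-read}: it gains one level whenever $\alpha$ cannot be popped silently from $q$. That lemma is needed anyway later, for instance in \cref{eqlev-grows-while-pumping}, \cref{only type important}, and \cref{main lemma}. So deriving the present statement from it costs the paper only a few lines, whereas your argument yields only the non-strict inequality.
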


\begin{proof}
	It should be clear that the second part is equivalent to the first part.
	To see the first part, we consider two cases:
	If there is some $r\in\eats{\alpha}_\epsilon(q)\subseteq\eats{\alpha}(q)$, then $q\alpha\eta\to_\epsilon^*r\eta$ and $q\alpha\eta'\to_\epsilon^*r\eta'$;
	we have $q\alpha\eta\approx r\eta$ and $q\alpha\eta'\approx r\eta'$, so $\eqlev(q\alpha\eta,q\alpha\eta')=\eqlev(r\eta,r\eta')$; the inequality follows.
	Otherwise, if $\eats{\alpha}_\epsilon(q)=\emptyset$, we just use \cref{more-equiv-if-read}, stated below.
\end{proof}

\begin{lemma}\label{more-equiv-if-read}
	Let $q\in Q$ and $\alpha,\eta,\eta'\in\Gamma^\gs$.
	If $\eats{\alpha}_\epsilon(q)=\emptyset$, then
	\begin{align*}
		\eqlev(q\alpha\eta,q\alpha\eta')\geq 1+\min\set{\eqlev(r\eta,r\eta')\mid r\in\eats{\alpha}(q)}\,.
	\end{align*}
\end{lemma}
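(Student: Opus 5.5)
Set $m=\min\set{\eqlev(r\eta,r\eta')\mid r\in\eats{\alpha}(q)}$; we show $q\alpha\eta\approx_{k+1}q\alpha\eta'$ for every $k\le m$ (with $k<\omega$). The plan is to exhibit the symmetric relation
\begin{align*}
	R=\set{(c,d),(d,c)\mid c=p\beta\eta,\ d=p\beta\eta',\ \eats{\beta}(p)\subseteq\eats{\alpha}(q),\ \eats{\beta}_\epsilon(p)=\emptyset}\cup{\approx_k},
\end{align*}
and to prove, by induction on $k$, that every pair $(p\beta\eta,p\beta\eta')$ with $\eats{\beta}(p)\subseteq\eats{\alpha}(q)$, $\eats{\beta}_\epsilon(p)=\emptyset$ and $\min_{r\in\eats{\beta}(p)}\eqlev(r\eta,r\eta')\ge k$ satisfies $\approx_{k+1}$. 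Taking $\beta=\alpha$ and $p=q$ then gives the statement.

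\textbf{Tracking moves from $p\beta\eta$.} Take a weak move $p\beta\eta\weak{a}c'$ with $a\in\Act_\epsilon$. Follow the run transition by transition while the $\beta$ part is not yet entirely consumed. Each such transition acts only on a prefix that lies inside $\beta$. This holds also for $\omega$-items: the transition rules for $\gamma^\omega$ only look at $\gamma$ and the state, never at what lies below. So the same transitions can be applied to $p\beta\eta'$, giving shifted runs that keep the $\beta$-derived top part identical. There are two cases.

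\emph{Case 1: the run never fully pops $\beta$.} Then $c'=s\beta'\eta$ with $p\beta\to^* s\beta'$, and the shifted run gives $p\beta\eta'\weak{a}s\beta'\eta'$. We need $(s\beta'\eta,s\beta'\eta')\in{\approx_k}$. By \cref{run2eats}, $\eats{\beta'}(s)\subseteq\eats{\beta}(p)\subseteq\eats{\alpha}(q)$, so $\eqlev(r\eta,r\eta')\ge k$ for all $r\in\eats{\beta'}(s)$. If $\eats{\beta'}_\epsilon(s)\neq\emptyset$, the argument in the proof of \cref{equiv-because-equiv-below} ($\epsilon$-popping to $r\eta$, $r\eta'$) yields $\approx_k$. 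Otherwise the induction hypothesis at level $k-1$ yields $\approx_k$ (and for $k=0$ the claim is trivial).

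\emph{Case 2: the run pops $\beta$ at some point.} Split it as $p\beta\eta\to^* r\eta\to^* c'$, where $r\in\eats{\beta}(p)$. The popping segment must contain the action $a$ itself, so $a\ne\epsilon$. Indeed, if the segment consisted only of $\epsilon$-transitions, then $r\in\eats{\beta}_\epsilon(p)=\emptyset$, which is impossible. Hence the remainder $r\eta\to^*c'$ is an $\epsilon$-run, and $r\eta\approx c'$ by determinism of $\epsilon$-transitions. Mirroring the popping segment gives $p\beta\eta'\weak{a}r\eta'$, and $r\eta'\approx_k r\eta\approx c'$. Since $\approx$ refines $\approx_k$ and $\approx_k$ is an equivalence, we get $c'\approx_k r\eta'$. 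Symmetric moves from $p\beta\eta'$ are handled in the same way.

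\textbf{Main obstacle.} The delicate point is justifying that transitions on generalized stacks can be copied when the suffix changes. In particular, the Item 3 transition $p\gamma^\omega\delta\to_\epsilon p\delta$ is enabled exactly when $p\gamma\to_\epsilon^*p$, which depends only on $\gamma$; Item 2 likewise depends only on $\gamma$. So shifting the suffix $\eta\leftrightarrow\eta'$ is sound, and the counting of popped symbols must treat an $\omega$-block as popped only by Item 3. A second point needing care is that $\approx_k$ is an equivalence, obtained by induction from its definition, which Case 2 uses.
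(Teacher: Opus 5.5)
Your proof is correct and is essentially the paper's argument. Both use the same induction on $k$, generalized over all pairs $(p,\beta)$ with $\eats{\beta}_\epsilon(p)=\emptyset$, and your Case~1 is exactly the paper's Property~($\star$).

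The only difference is in how moves are answered. The paper answers an $\epsilon$-move by staying put, and an $a$-move by the configuration reached right after the $a$-transition, using that the trailing $\epsilon$-transitions preserve the class. This way it never needs your separate Case~2 for runs that reach $\eta$.

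One cosmetic point: the relation $R$ in your first paragraph is never used, and since it contains $\approx_k$ and lacks the $\eqlev$ condition it would not be covered by $\approx_k$. What you actually show, and what suffices, is that each two-element symmetric relation $\set{(p\beta\eta,p\beta\eta'),(p\beta\eta',p\beta\eta)}$ is covered by $\approx_k$.
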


\begin{proof}
	\newcommand{\labelStar}{($\star$)}
	\newcommand{\PropertyStar}{Property \labelStar\xspace}
	We prove for every $k\in\Nat$ that if $\eats{\alpha}_\epsilon(q)=\emptyset$ and $r\eta\approx_k r\eta'$ for all $r\in\eats{\alpha}(q)$, then $q\alpha\eta\approx_{k+1}q\alpha\eta'$;
	the thesis of the lemma follows easily from this statement.
	The proof is by induction on $k$.
	Consider two configurations $q\alpha\eta$ and $q\alpha\eta'$ such that $\eats{\alpha}_\epsilon(q)=\emptyset$ and $r\eta\approx_k r\eta'$ for all $r\in\eats{\alpha}(q)$;
	we want to prove that $q\alpha\eta\approx_{k+1}q\alpha\eta'$.
	Before starting observe that
	\begin{quote}
		if $q\alpha\to^*s\beta$ then $s\beta\eta\approx_k s\beta\eta'$.
		\hfill\labelStar
	\end{quote}
	Indeed, if $k=0$, this holds trivially (all configurations are in the $\approx_0$ relation).
	If $k\geq 1$
	and there is some $r\in\eats{\beta}_\epsilon(s)\subseteq\eats{\alpha}(q)$ (inclusion by \cref{run2eats}),
	then $s\beta\eta\to_\epsilon^*r\eta$ and $s\beta\eta'\to_\epsilon^*r\eta'$ imply that
	$s\beta\eta\approx r\eta\approx_k r\eta'\approx s\beta\eta'$.
	If $k\geq 1$ and $\eats{\beta}_\epsilon(s)=\emptyset$,
	we observe that $r\eta\approx_{k-1} r\eta'$ for all $r\in\eats{\beta}(s)$ (due to $(\approx_k)\subseteq(\approx_{k-1})$ and $\eats{\beta}(s)\subseteq\eats{\alpha}(q)$),
	so we have $s\beta\eta\approx_k s\beta\eta'$ by the induction hypothesis.
	This finishes the proof of \PropertyStar.
	
	In order to obtain $q\alpha\eta\approx_{k+1}q\alpha\eta'$,
	it is enough to show that the two-element symmetric relation $\set{(q\alpha\eta,q\alpha\eta'),(q\alpha\eta',q\alpha\eta)}$ is covered by $\approx_k$;
	the relation will be then contained in $\approx_{k+1}$, the largest symmetric relation covered by $\approx_k$.
	Expanding the definition of being covered,
	we should thus prove that if $q\alpha\eta\weak{a}c$ with $a\in\Act_\epsilon$, then there exists $c'$ such that $c\approx_k c'$ and $q\alpha\eta'\weak{a}c'$
	(and that if $q\alpha\eta'\weak{a}c'$, then there exists $c$ such that $c'\approx_k c$ and $q\alpha\eta\weak{a}c$;
	this part is symmetric, so we do not need a separate proof).
	Suppose thus that $q\alpha\eta\weak{a}c$.
	If $a=\epsilon$, we can take $c'=q\alpha\eta'$; then $c\approx q\alpha\eta\approx_k q\alpha\eta'=c'$ by \PropertyStar and obviously $q\alpha\eta'\weak{\epsilon}c'$.
	If $a\in\Act$, then $q\alpha\eta\to_\epsilon^* c_1\to_a c_2\to_\epsilon^* c$ for some $c_1,c_2$.
	Because $\eats{\alpha}_\epsilon(q)=\emptyset$, no configuration on a run from $q\alpha\eta$ to $c_1$ may have stack $\eta$;
	we have $c_1=s_1\beta\eta$ for some $s\in Q$ and $\beta\in\Gamma^\gs$ with $\beta\neq\epsilon$.
	Then also $q\alpha\eta'\to_\epsilon^*s_1\beta\eta'$.
	A single transition cannot pop more than the whole $\beta$,
	so $c_2=s_2\gamma\eta$ for some $s_2\in Q$ and $\gamma\in\Gamma^\gs$;
	then also $s_1\beta\eta'\to_a s_2\gamma\eta'$.
	We take $c'=s_2\gamma\eta'$.
	We have $q\alpha\eta'\weak{a}c'$, and we obtain $c\approx c_2=s_2\gamma\eta\approx_k s_2\gamma\eta=c'$ by \PropertyStar.
\end{proof}

\section{Decompositions of stacks}\label{sec:decompositions}

In this section we define decompositions of stacks, prove that they exist, and show their basic properties.
In essence, decompositions specify how to split a stack content in a way suitable for pumping.
For the whole section we fix an \PDS $\Pp=(Q,\Gamma,\Act,\Delta)$ together with an initial configuration $\initstate$.

A \emph{decomposition} $\overline\delta$ is defined by induction, and can be in one of three forms:
\begin{compactitem}
\item	a single symbol (base case): $\overline\delta=X\in\Gamma$,
\item	a nonempty sequence of decompositions: $\overline\delta=\overline\delta_1\dots\overline\delta_k$, where $\overline\delta_1,\dots,\overline\delta_k$ are decompositions and $k\geq 1$, or
\item	a \emph{pumping triple}: $\overline\delta=\decomp{\overline\alpha,\overline\beta,\gamma}$,
	where $\overline\alpha$ and $\overline\beta$ are decompositions and $\gamma\in\Gamma^*$ is a stack.
\end{compactitem}
The {\em height} (resp.\@ {\em degree}) of a decomposition $\overline\delta$ is inductively defined
as follows:
\begin{compactitem}
	\item the height and degree of $\overline{\delta}=X\in\Gamma$ is $0$,
	\item the height (resp.\@ degree) of 
		$\overline{\delta}=\overline{\delta}_1\dots\overline{\delta}_k$
		is $1$ plus the maximum of heights of any of the $\overline{\delta}_i$
		(resp.\@ the maximum of $k$ and the degree of any
		of the $\overline{\delta}_i$), and
	\item the height (resp.\@ degree) of 
		$\overline{\delta}=\decomp{\overline\alpha,\overline\beta,\gamma}$
		is $1$ plus the maximum of the height of $\overline{\alpha}$
		and the height of $\overline{\beta}$ (resp.\@ the maximum of $2$,
		the degree of $\overline{\alpha}$, and the
		degree of $\overline{\beta}$).
\end{compactitem}
In other words, a decomposition can be seen as a tree for which, as expected, 
the height is the length of some longest path
from the root to some leaf and the degree is the maximal number of children 
of some node in the tree.

For a decomposition $\overline\delta$ and for $e\in\Nat\cup\set{\omega}$ we define a stack $\rea{\overline\delta}{e}\in\Gamma^*$, called the \emph{$e$-th realization} of $\overline\delta$:
\begin{compactitem}
\item	$\rea{X}{e}=X$;
\item	$\rea{(\overline\delta_1\dots\overline\delta_k)}{e}=\rea{\overline\delta}{e}_1\dots\rea{\overline\delta}{e}_k$;
\item	$\rea{\decomp{\overline\alpha,\overline\beta,\gamma}}{e}=\rea{\overline\alpha}{e}(\rea{\overline\beta}{e})^e\gamma$ (i.e., we repeat the middle part $e$ times).
\end{compactitem}
If $\delta=\rea{\overline\delta}{1}$, we say that $\overline\delta$ is a \emph{decomposition of $\delta$}.
Note that $\rea{\overline\delta}{1}$ is obtained by simply concatenating all stack symbols that literally appear in $\overline\delta$,
and that $\rea{\overline\delta}{e}$ for greater numbers $e$ is obtained by repeating $e$ times some fragments of $\rea{\overline\delta}{1}$, in a nested way.
For $e=\omega$ the intuition is that we repeat infinitely many times appropriate fragments of the stack, but formally we just write $\omega$ in the exponent, without repeating anything.

Having the above definition, we can say when a decomposition is \emph{well-formed}:
A decomposition consisting of a single stack symbol is always well-formed,
a decomposition $\overline\delta_1\dots\overline\delta_k$ is a well-formed when all $\overline\delta_i$ are well-formed,
and a decomposition $\overline\delta=\decomp{\overline\alpha,\overline\beta,\gamma}$ is well-formed when both $\overline\alpha$ and $\overline\beta$ are well-formed and additionally
\begin{compactenum}
\item	$\up(\gamma)=\up(\rea{\overline\beta}{1}\gamma)$,
\item	$\eats{\rea{\overline\alpha}{\omega}}=\eats{\rea{\overline\alpha}{\omega}\rea{\overline\beta}{\omega}}$, and
\item	$\eats{\rea{\overline\beta}{\omega}}_\epsilon(r)=\set{r}$ and $\eats{\gamma}_\epsilon(r)\neq\emptyset$ for all $r\in\eats{\rea{\overline\beta}{\omega}}_\epsilon(Q)$.
\end{compactenum}
Below, whenever talking about a decomposition, we assume that it is well-formed.

Let us explain the above conditions.
Item 1 implies that $\up(\gamma)=\up(\rea{(\overline\beta}{1})^e\gamma)$ for every $e\in\Nat$,
so it ensures that when we pump a stack content of a reachable configuration according to a decomposition, then the obtained configuration is again reachable.
Item 2 implies that $\eats{\rea{\overline\alpha}{\omega}}=\eats{\rea{\overline\alpha}{\omega}(\rea{\overline\beta}{\omega})^e}$ for every $e\in\Nat$:
no matter how many copies of $\rea{\overline\beta}{\omega}$ we have, the set of states that can be reached by popping the stack is the same.
Item 3 says that there are only two ways of popping $\rea{\overline\beta}{\omega}$:
either we read at least one action symbol, or
we end in a state from which there is an $\epsilon$-run popping arbitrarily many copies of $\rea{\overline\beta}{\omega}$ and then $\gamma$;
in particular, the only way of accessing the $\gamma$ part of $\rea{\overline\delta}{\omega}$ is by an $\epsilon$-run popping the whole $\gamma$.

There are some similarities between our pumping triples $\decomp{\overline\alpha,\overline\beta,\gamma}$ and notions from Valiant's paper~\cite{Valiant75}.
Namely, the $\overline\beta$ part of such a triple, which can be pumped, corresponds to Valiant's notion of a null-transparent segment,
and the $\gamma$ part corresponds to Valiant's notion of an invisible segment
(it is ``invisible'' after repeating $\overline\beta$ many times).
There is one important difference, though: in Valiant's proof it is enough to find just one invisible segment and one null-transparent segment;
here, we rather need to decompose the whole stack into such segments.

We now give two simple lemmata, being immediate consequences of well-formedness:

\begin{lemma}\label{easy-eats-decomp}
	For every well-formed pumping triple $\overline\delta=\decomp{\overline\alpha,\overline\beta,\gamma}$ and every set $P\subseteq Q$ we have
	\begin{compactenum}
	\item	$\eats{(\rea{\overline\beta}{\omega})^\omega}_\epsilon(P)=\eats{\rea{\overline\beta}{\omega}}_\epsilon(P)$,
	\item	$\eats{\rea{\overline\delta}{\omega}}_\epsilon(P)=\eats{\rea{\overline\alpha}{\omega}\rea{\overline\beta}{\omega}\gamma}_\epsilon(P)$, and
	\item	$\eats{\rea{\overline\delta}{\omega}}(P)=\eats{\rea{\overline\beta}{\omega}\gamma}_\epsilon(\eats{\rea{\overline\alpha}{\omega}}(P))$ for all $P\subseteq Q$.
	\end{compactenum}
\end{lemma}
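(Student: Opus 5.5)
Throughout, abbreviate $\alpha=\rea{\overline\alpha}{\omega}$ and $\beta=\rea{\overline\beta}{\omega}$, so that $\rea{\overline\delta}{\omega}=\alpha\beta^\omega\gamma$. The plan is to first understand runs that pop a segment $\beta^\omega$, and then derive the three items by composing the $\eats{\cdot}$ functions via $\eats{\beta'}(\eats{\alpha'}(P))=\eats{\alpha'\beta'}(P)$ and its $\epsilon$-analogue. It suffices to argue for singletons $P=\set{p}$, since all the functions are unions over elements of $P$.

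\textbf{Auxiliary observation.} Consider a run from $p\beta^\omega\eta$ that ends in a configuration whose stack is $\eta$. By the definition of $\to_a$ on generalized stacks, the segment $\beta^\omega$ can disappear only through an Item~3 transition $r\beta^\omega\eta\to_\epsilon r\eta$, where $r\beta\to_\epsilon^*r$. Before that transition, the run uses only Item~2 transitions on the topmost copy, together with ordinary transitions on the finite part above it. Such a prefix is a shift of a standard run popping $\beta^k$ for some $k\ge 0$: every Item~2 step from $s\beta^\omega\eta$ is exactly a step of $s\beta\beta^\omega\eta$. Hence the prefix witnesses $r\in\eats{\beta^k}(p)$, and if the prefix is an $\epsilon$-run, also $r\in\eats{\beta^k}_\epsilon(p)$. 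Conversely, any standard run popping $\beta$ from $s\beta$ lifts to a run from $s\beta^\omega\eta$ reaching $r'\beta^\omega\eta$, provided Item~3 is not applicable at $s$. If Item~3 is applicable, then $s\beta\to_\epsilon^*s$ and the $\epsilon$-step pops $\beta^\omega$ directly.

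\textbf{Item 1.} Suppose $p\beta^\omega\to_\epsilon^*r$. By the observation, the $\epsilon$-run uncovers $r\beta^\omega$ after popping $k$ copies of $\beta$ and then uses Item~3, so $r\in\eats{\beta^k}_\epsilon(p)$ and $r\beta\to_\epsilon^*r$. If $k=0$, then $r=p\in\eats{\beta}_\epsilon(p)$. If $k\ge1$, the state $r_1$ after the first copy lies in $\eats{\beta}_\epsilon(Q)$, so condition~3 gives $\eats{\beta}_\epsilon(r_1)=\set{r_1}$ and hence $r=r_1\in\eats{\beta}_\epsilon(p)$. Conversely, let $p\beta\to_\epsilon^*r$. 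Then $r\in\eats\beta_\epsilon(Q)$, so $r\beta\to_\epsilon^*r$ by condition~3. Either $r=p$ and Item~3 fires immediately, or the lifted $\epsilon$-run reaches $r\beta^\omega$ and then Item~3 fires.

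\textbf{Item 2.} Compose $\eats{\alpha\beta^\omega\gamma}_\epsilon=\eats\gamma_\epsilon\circ\eats{\beta^\omega}_\epsilon\circ\eats\alpha_\epsilon$ and replace the middle factor using Item~1.

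\textbf{Item 3.} Let $S=\eats\alpha(p)$. By the composition rule it suffices to show $\eats{\gamma}(\eats{\beta^\omega}(S))=\eats{\beta\gamma}_\epsilon(S)$. Condition~2 iterated gives $\eats{\alpha\beta^k}(p)=S$ for all $k$.

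For ``$\subseteq$'': take $r\in\eats{\beta^\omega}(s)$ with $s\in S$. By the observation, $r\in\eats{\beta^k}(s)\subseteq\eats{\alpha\beta^k}(p)=S$ and $r\beta\to_\epsilon^*r$. Then $r\in\eats\beta_\epsilon(Q)$, so condition~3 gives $\eats\gamma_\epsilon(r)\neq\emptyset$. \Cref{epsilon-det} then yields $\eats\gamma(r)=\eats\gamma_\epsilon(r)\subseteq\eats{\beta\gamma}_\epsilon(r)\subseteq\eats{\beta\gamma}_\epsilon(S)$.

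For ``$\supseteq$'': take $s\in S$ with $s\beta\to_\epsilon^*r$ and $r\gamma\to_\epsilon^*t$. By Item~1, $s\beta^\omega\to_\epsilon^*r$, so $r\in\eats{\beta^\omega}(S)$ and $t\in\eats\gamma(r)$.

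I expect the main obstacle to be the auxiliary observation. It requires a careful formal correspondence between runs over $\beta^\omega$ and shifts of standard runs over finitely many copies of $\beta$. In particular, one must check that Item~2 steps are blocked exactly when $s\beta\to_\epsilon^*s$, and that this does not break the lifting in the converse directions. That is handled by the case split on whether Item~3 applies at the current state.
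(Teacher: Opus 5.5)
Your proof is correct and follows essentially the same route as the paper. The shared ingredients are these: an $\epsilon$-run popping $(\rea{\overline\beta}{\omega})^\omega$ pops finitely many copies and then collapses via the third transition rule, condition~3 of well-formedness forces the state to stabilise, Item~2 follows by composition, and Item~3 uses condition~2 together with \cref{epsilon-det}. The differences are minor: you derive the ``$\supseteq$'' half of Item~3 from Item~1 instead of building the run directly, and you explicitly treat when the second transition rule is blocked during lifting (your split $r=p$ versus $r\neq p$ is, by \cref{epsilon-det}, exactly the split on whether the third rule applies), a point the paper leaves implicit.
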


\begin{proof}
	For the left-to-right inclusion in Item 1,
	suppose that $r\in\eats{(\rea{\overline\beta}{\omega})^\omega}_\epsilon(r_0)$ for some $r_0\in P$,
	which by definition means that $r_0(\rea{\overline\beta}{\omega})^\omega\to_\epsilon^*r$.
	We have to prove that $r\in\eats{\rea{\overline\beta}{\omega}}_\epsilon(r_0)$.
	The only way of popping $(\rea{\overline\beta}{\omega})^\omega$ is to pop $\rea{\overline\beta}{\omega}$ some number of times,
	and then to pop the whole $(\rea{\overline\beta}{\omega})^\omega$ in a single step, without changing the state;
	we thus have states $r_1,\dots,r_k$ for some $k\in\Nat$ such that $r_i\in\eats{\rea{\overline\beta}{\omega}}_\epsilon(r_{i-1})$ for $i\in\interval{1,k}$,
	and $r_k=r\in\eats{\rea{\overline\beta}{\omega}}_\epsilon(r_k)$.
	If $k=0$, the latter gives us the thesis.
	If $k\geq 1$, for $i\in\interval{1,k-1}$ we have $r_i\in\eats{\rea{\overline\beta}{\omega}}_\epsilon(r_{i-1})\subseteq\eats{\rea{\overline\beta}{\omega}}_\epsilon(Q)$,
	so well-formedness of $\overline\delta$ implies that $\eats{\rea{\overline\beta}{\omega}}_\epsilon(r_i)=\set{r_i}$;
	because $r_{i+1}\in\eats{\rea{\overline\beta}{\omega}}_\epsilon(r_i)$, necessarily $r_{i+1}=r_i$.
	We thus have $r_1=r_2=\dots=r_k=r$, hence $r_1\in\eats{\rea{\overline\beta}{\omega}}_\epsilon(r_0)$ gives us the thesis.
	
	Conversely, if $r\in\eats{\rea{\overline\beta}{\omega}}_\epsilon(r_0)$ for some $r_0\in P$,
	then by well-formedness of $\overline\delta$ we also have $r\in\eats{\rea{\overline\beta}{\omega}}_\epsilon(r)$.
	We thus have $r_0(\rea{\overline\beta}{\omega})^\omega\to_\epsilon^* r(\rea{\overline\beta}{\omega})^\omega\to_\epsilon^* r$, that is,
	$r\in\eats{(\rea{\overline\beta}{\omega})^\omega}_\epsilon(r_0)$.
	This finishes the proof of Item 1.

	We have
	$\eats{\rea{\overline\delta}{\omega}}_\epsilon(P)
	=\eats{\gamma}_\epsilon(\eats{(\rea{\overline\beta}{\omega})^\omega}_\epsilon(\eats{\rea{\overline\alpha}{\omega}}_\epsilon(P)))$
	and
	$\eats{\rea{\overline\alpha}{\omega}\rea{\overline\beta}{\omega}\gamma}_\epsilon(P)
	=\eats{\gamma}_\epsilon(\eats{\rea{\overline\beta}{\omega}}_\epsilon(\eats{\rea{\overline\alpha}{\omega}}_\epsilon(P)))$,
	so Item 2 is an immediate consequence of Item 1.
	
	In order to prove that $\eats{\rea{\overline\delta}{\omega}}(P)\subseteq\eats{\rea{\overline\beta}{\omega}\gamma}_\epsilon(\eats{\rea{\overline\alpha}{\omega}}(P))$
	consider a run that pops $\rea{\overline\delta}{\omega}$.
	Such a run pops $\rea{\overline\alpha}{\omega}$ and $k$ copies of $\rea{\overline\beta}{\omega}$, for some $k\in\Nat$,
	then it pops the whole $(\rea{\overline\beta}{\omega})^\omega$ while staying in some state $r$, and then pops $\gamma$ going to some state $s$.
	Then $r\in\eats{\rea{\overline\beta}{\omega}}_\epsilon(r)\subseteq\eats{\rea{\overline\beta}{\omega}}_\epsilon(Q)$ and $s\in\eats{\gamma}(r)$.
	By well-formedness of $\overline\delta$ we then have $r\in\eats{\rea{\overline\alpha}{\omega}(\rea{\overline\beta}{\omega})^k}(P)=\eats{\rea{\overline\alpha}{\omega}}(P)$
	and $\eats{\gamma}_\epsilon(r)\neq\emptyset$, which implies $s\in\eats{\gamma}_\epsilon(r)$ by \cref{epsilon-det}.
	In consequence $s\in \eats{\rea{\overline\beta}{\omega}\gamma}_\epsilon(\eats{\rea{\overline\alpha}{\omega}}(P))$, as needed.
	
	Conversely, consider some $s\in\eats{\rea{\overline\beta}{\omega}\gamma}_\epsilon(\eats{\rea{\overline\alpha}{\omega}}(P))$.
	Then there is a state $r\in\eats{\rea{\overline\beta}{\omega}}_\epsilon(\eats{\rea{\overline\alpha}{\omega}}(P))\subseteq\eats{\rea{\overline\beta}{\omega}}_\epsilon(Q)$
	for which $s\in\eats{\gamma}_\epsilon(r)$.
	By well-formedness of $\overline\delta$ also $r\in\eats{\rea{\overline\beta}{\omega}}_\epsilon(r)$.
	We can thus pop $\rea{\overline\alpha}{\omega}$ and one copy of $\rea{\overline\beta}{\omega}$ while going to state $r$,
	then pop the whole $(\rea{\overline\beta}{\omega})^\omega$ remaining in state $r$, and then pop $\gamma$ going to state $s$;
	we have $s\in\eats{\rea{\overline\delta}{\omega}}(P)$, as needed.	
\end{proof}

\begin{lemma}\label{epsilon-preserved}
	For every well-formed decomposition $\overline\delta$, every set $P\subseteq Q$, and every $e\in\Nat$ we have
	\begin{compactenum}
	\item	$\eats{\rea{\overline\delta}{\omega}}_\epsilon(P)=\eats{\rea{\overline\delta}{1}}_\epsilon(P)$,
	\item	$\eats{\rea{\overline\delta}{\omega}}(P)\subseteq\eats{\rea{\overline\delta}{e}}(P)$, and
	\item	$\up(\rea{\overline\delta}{e})=\up(\rea{\overline\delta}{1})$ and $\up_0(\rea{\overline\delta}{e})=\up_0(\rea{\overline\delta}{1})$.
	\end{compactenum}
\end{lemma}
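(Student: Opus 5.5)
We prove all three items simultaneously by structural induction on the well-formed decomposition $\overline\delta$. For $\overline\delta=X\in\Gamma$ all realizations equal $X$, so there is nothing to show. For a sequence $\overline\delta=\overline\delta_1\dots\overline\delta_k$ we use the composition laws $\eats{\beta}_\epsilon\circ\eats{\alpha}_\epsilon=\eats{\alpha\beta}_\epsilon$ and $\eats{\beta}\circ\eats{\alpha}=\eats{\alpha\beta}$, together with $\up(\beta\gamma)=\up(\gamma)\circ\up(\beta)$ and $\up_0(\alpha\beta)=\up(\beta)\circ\up_0(\alpha)$. Items 1 and 3 then follow componentwise from the induction hypothesis, and for $\up_0$ we apply $\up_0$ to $\overline\delta_1$ and $\up$ to the rest. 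Item 2 follows by monotonicity of $\eats{\cdot}$ in its argument $P$: if $\eats{\rea{\overline\delta}{\omega}_i}(P')\subseteq\eats{\rea{\overline\delta}{e}_i}(P')$ for every $P'$, then chaining gives the inclusion for the concatenation.

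The main case is a pumping triple $\overline\delta=\decomp{\overline\alpha,\overline\beta,\gamma}$. For Item 3, Item 1 of well-formedness gives $\up(\gamma)=\up(\rea{\overline\beta}{1}\gamma)$. By the induction hypothesis $\up(\rea{\overline\beta}{e})=\up(\rea{\overline\beta}{1})$, hence
\begin{align*}
\up((\rea{\overline\beta}{e})^e\gamma)=\up(\gamma)\circ\up(\rea{\overline\beta}{1})^e=\up(\gamma),
\end{align*}
by iterating $\up(\gamma)\circ\up(\rea{\overline\beta}{1})=\up(\gamma)$. Prepending $\rea{\overline\alpha}{e}$ and using the induction hypothesis for $\overline\alpha$ (for both $\up$ and $\up_0$) gives Item 3 for $e$, and the same computation with $e=1$ matches. (For $e=0$ one would need $\rea{\overline\alpha}{0}$; the decomposition $\overline\alpha$ is nonempty, so $\up_0$ is still defined.)

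For Item 1, \cref{easy-eats-decomp}(2) rewrites the left side as $\eats{\rea{\overline\alpha}{\omega}\rea{\overline\beta}{\omega}\gamma}_\epsilon(P)$, which by the induction hypothesis equals $\eats{\rea{\overline\alpha}{1}\rea{\overline\beta}{1}\gamma}_\epsilon(P)=\eats{\rea{\overline\delta}{1}}_\epsilon(P)$.

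For Item 2, \cref{easy-eats-decomp}(3) gives $\eats{\rea{\overline\delta}{\omega}}(P)=\eats{\rea{\overline\beta}{\omega}\gamma}_\epsilon(\eats{\rea{\overline\alpha}{\omega}}(P))$. Take $s$ in this set, via a state $r\in\eats{\rea{\overline\alpha}{\omega}}(P)$ with $r\rea{\overline\beta}{\omega}\to_\epsilon^* r'$ and $r'\gamma\to_\epsilon^*s$. By well-formedness Item 3, $r'\in\eats{\rea{\overline\beta}{\omega}}_\epsilon(Q)$ satisfies $\eats{\rea{\overline\beta}{\omega}}_\epsilon(r')=\set{r'}$. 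By the induction hypothesis (Item 1 for $\overline\beta$) the $\epsilon$-popping behaviour of $\rea{\overline\beta}{\omega}$ and $\rea{\overline\beta}{e}$ coincides (both equal that of $\rea{\overline\beta}{1}$). Thus from $r$ we can $\epsilon$-pop one copy of $\rea{\overline\beta}{e}$ reaching $r'$, then the remaining $e-1$ copies staying at $r'$, then $\gamma$ reaching $s$. Finally $r\in\eats{\rea{\overline\alpha}{e}}(P)$ by the induction hypothesis (Item 2 for $\overline\alpha$), so $s\in\eats{\rea{\overline\delta}{e}}(P)$.

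The delicate point is $e=0$ in Item 2: there we cannot pop a copy of $\beta$ to reach $r'$. We handle it by instead popping from $r$ directly. Well-formedness Item 3 applied to $r'$ only tells us about $r'$, so the cleaner route is to note that $r'\in\eats{\rea{\overline\alpha}{\omega}\rea{\overline\beta}{\omega}}(P)=\eats{\rea{\overline\alpha}{\omega}}(P)$ by well-formedness Item 2. We therefore replace $r$ by $r'$, which lies in $\eats{\rea{\overline\alpha}{e}}(P)$ by the induction hypothesis. From $r'$ we $\epsilon$-pop $e$ copies of $\rea{\overline\beta}{e}$ (any $e\ge0$), staying at $r'$, and then pop $\gamma$ to $s$. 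This version works uniformly for all $e\in\Nat$, and it is the step I expect to need the most care.
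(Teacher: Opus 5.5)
Your structure is the same as the paper's: simultaneous structural induction, the sequence case by composition and monotonicity, Item 3 via $\up(\gamma)\circ\up(\rea{\overline\beta}{1})=\up(\gamma)$, and Item 1 via \cref{easy-eats-decomp}. There is one unjustified step, in Item 2 for a pumping triple.

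You assert that, by the induction hypothesis (Item 1 for $\overline\beta$), $\epsilon$-popping of $\rea{\overline\beta}{e}$ behaves like $\epsilon$-popping of $\rea{\overline\beta}{\omega}$, and hence that $r'\rea{\overline\beta}{e}\to_\epsilon^* r'$. But Item 1 relates only the realizations $\omega$ and $1$; it says nothing about $\rea{\overline\beta}{e}$. The equality $\eats{\rea{\overline\beta}{e}}_\epsilon=\eats{\rea{\overline\beta}{1}}_\epsilon$ is false for $e=0$ in general. For $e\geq 1$ it is true, but it is not among your hypotheses. You would have to add it to the induction and prove it, collapsing iterates of $\eats{\rea{\overline\beta'}{1}}_\epsilon$ for nested triples via well-formedness condition 3. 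As written, both of your versions of Item 2 rest on this unproved claim.

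You do not need $\epsilon$ here at all, because Item 2 is about $\eats{\cdot}$, not $\eats{\cdot}_\epsilon$. Note that $r'\in\eats{\rea{\overline\beta}{\omega}}_\epsilon(r')\subseteq\eats{\rea{\overline\beta}{\omega}}(r')$. Applying Item 2 of the induction hypothesis to $\overline\beta$ with $P=\set{r'}$ gives $r'\in\eats{\rea{\overline\beta}{e}}(r')$. Hence $r'(\rea{\overline\beta}{e})^e\to^* r'$ for every $e\in\Nat$. Combine this with $r'\in\eats{\rea{\overline\alpha}{e}}(P)$, which you obtained correctly via well-formedness condition 2 and the induction hypothesis, and with $r'\gamma\to_\epsilon^* s$. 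This yields $s\in\eats{\rea{\overline\delta}{e}}(P)$.

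With this repair your second version is essentially the paper's argument. The paper works at the level of sets: it weakens $\eats{\cdot}_\epsilon$ to $\eats{\cdot}$ immediately, uses well-formedness condition 2 to rewrite $\eats{\rea{\overline\alpha}{\omega}}(P)$ as $\eats{\rea{\overline\alpha}{\omega}(\rea{\overline\beta}{\omega})^e}(P)$, and then applies Item 2 of the induction hypothesis to $\overline\alpha$ and to each copy of $\overline\beta$. That handles $e=0$ uniformly and needs no fixed-point state $r'$.
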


\begin{proof}
	Induction on the structure of $\overline\delta$.
	The base case of $\overline\delta=X\in\Gamma$ is trivial, because $\rea{\overline\delta}{\omega}=\rea{\overline\delta}{e}=\rea{\overline\delta}{1}=X$.
	In the case of $\overline\delta=\overline\delta_1\dots\overline\beta_k$ the thesis follows directly from the induction hypothesis,
	because we can write
	$\eats{\rea{\overline\delta}{\omega}}(P)=\eats{\rea{\overline\delta_k}{\omega}}(\eats{\rea{\overline\delta_{k-1}}{\omega}}(\dots(\eats{\rea{\overline\delta_1}{\omega}}(P))\dots))$,
	and likewise for the other functions and relations.
	For $\overline\delta=\decomp{\overline\alpha,\overline\beta,\gamma}$ we have, using \cref{easy-eats-decomp}, the induction hypothesis, and well-formedness of $\overline\delta$:
	\begin{align*}
		\eats{\rea{\overline\delta}{\omega}}_\epsilon(P)
		&=\eats{\gamma}_\epsilon(\eats{\rea{\overline\beta}{\omega}}_\epsilon(\eats{\rea{\overline\alpha}{\omega}}_\epsilon(P)))
		=\eats{\gamma}_\epsilon(\eats{\rea{\overline\beta}{1}}_\epsilon(\eats{\rea{\overline\alpha}{1}}_\epsilon(P)))
		=\eats{\rea{\overline\delta}{1}}_\epsilon(P)\,,\displaybreak[0]\\
		\eats{\rea{\overline\delta}{\omega}}(P)
		&=\eats{\gamma}_\epsilon(\eats{\rea{\overline\beta}{\omega}}_\epsilon(\eats{\rea{\overline\alpha}{\omega}}(P)))
		\subseteq\eats{\gamma}(\eats{\rea{\overline\beta}{\omega}}(\eats{\rea{\overline\alpha}{\omega}}(P)))\\
		&=\eats{\gamma}(\underbrace{\eats{\rea{\overline\beta}{\omega}}(\dots(\eats{\rea{\overline\beta}{\omega}}}_e(\eats{\rea{\overline\alpha}{\omega}}(P)))\dots))\\
		&\subseteq\eats{\gamma}(\underbrace{\eats{\rea{\overline\beta}{e}}(\dots(\eats{\rea{\overline\beta}{e}}}_e(\eats{\rea{\overline\alpha}{e}}(P)))\dots))
		=\eats{\rea{\overline\delta}{e}}(P)\,,\displaybreak[0]\\
		\up(\rea{\overline\delta}{e})
		&=\up(\gamma)\circ\underbrace{\up(\rea{\overline\beta}{e})\circ\dots\circ\up(\rea{\overline\beta}{e})}_e{}\circ\up(\rea{\overline\alpha}{e})\\
		&=\up(\gamma)\circ\underbrace{\up(\rea{\overline\beta}{1})\circ\dots\circ\up(\rea{\overline\beta}{1})}_e{}\circ\up(\rea{\overline\alpha}{1})
		=\up(\gamma)\circ\up(\rea{\overline\beta}{1})\circ\up(\rea{\overline\alpha}{1})=\up(\rea{\overline\delta}{1})\,,
	\end{align*}
	and likewise for $\up_0(\cdot)$.
\end{proof}

The next \lcnamecref{decomp-exists} is very important.
It allows us to find a decomposition of bounded degree and height for an arbitrarily large stack content $\delta$.
This means that arbitrarily large segments of $\delta$ have to fit into the $\gamma$ parts of pumping triples in the decomposition.

\begin{lemma}\label{decomp-exists}
	Every nonempty stack content $\delta\in\Gamma^*$ has a well-formed decomposition $\overline\delta$ of height at most $2|Q|+2$
	and degree at most $\Cc$, for some $\Cc\in 2^{|\Pp|^{O(1)}}$.
\end{lemma}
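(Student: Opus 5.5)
We would prove a stronger, parametrized statement by induction and then instantiate it. For a stack $\delta$ let $D(\delta)=\set{q\in Q\mid\eats{\delta}_\epsilon(q)\neq\emptyset}$ be the set of states from which $\delta$ can be popped by an $\epsilon$-run. We aim to show: every nonempty $\delta$ has a well-formed decomposition of degree at most $\Cc$ and height at most $2|D'|+2$. Here $D'$ is a suitable set of states attached to $\delta$. The most natural candidate is $\eats{\delta'}_\epsilon(Q)$ for the relevant factors $\delta'$; we would fix the exact choice once the case analysis below shows which set strictly shrinks when passing into the components of a pumping triple. Since $|D'|\leq|Q|$, this gives the height bound $2|Q|+2$. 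The constant $\Cc$ would be taken exponential: a bound on the number of possible values of the pair $\bigl(\up(\cdot),\ \eats{\cdot}\bigr)$, namely at most $2^{|Q\Gamma|^2}\cdot 2^{|Q|^2}$, times a small factor for the $\epsilon$-information.

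\textbf{Step 1: flat chopping with pigeonhole.} Write $\delta=X_1\cdots X_n$. For every cut position $i$ consider the suffix invariant $u_i=\up(X_{i+1}\cdots X_n)$ and the prefix invariant $e_i=\eats{X_1\cdots X_i}$ (as a function on $2^Q$). If $n$ is at most about $\Cc$, we take the flat sequence $X_1\dots X_n$ (height $1$). Otherwise pigeonhole yields $i<j$ with $u_i=u_j$ and $e_i=e_j$. With $\alpha=X_1\cdots X_i$, $\beta=X_{i+1}\cdots X_j$ and $\gamma=X_{j+1}\cdots X_n$, this gives Item 1 of well-formedness directly. Choosing $j$ maximal, the remainder $\gamma$ is absorbed in one triple, and $\alpha,\beta$ are shorter. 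We would then recurse on $\alpha$ and $\beta$, and arrange via a greedy left-to-right scan that each level of the tree has at most $\Cc$ children.

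\textbf{Step 2: transferring invariants to $\omega$-realizations.} Item 2 is stated for $\rea{\overline\alpha}{\omega}$ and $\rea{\overline\beta}{\omega}$, not for the literal words. Here we would use \cref{epsilon-preserved}: the $\epsilon$-part of $\eats{\cdot}$ agrees between $1$- and $\omega$-realizations, and the full $\eats{\cdot}$ of the $\omega$-realization is contained in that of the $1$-realization. To make the pigeonhole match the $\omega$-versions, we would strengthen the prefix invariant: instead of $e_i$ we use $\eats{\rea{\overline\alpha_i}{\omega}}$, where $\overline\alpha_i$ is the decomposition already built for the prefix. This is possible because the decomposition is built inductively on the prefix, and the number of values is still exponential.

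\textbf{Step 3: Item 3 and the height bound (main obstacle).} Item 3 requires every $\epsilon$-reachable state after $\rea{\overline\beta}{\omega}$ to be an $\epsilon$-fixed point of $\beta$ from which $\gamma$ is $\epsilon$-poppable. We would first iterate: by determinism of $\epsilon$-transitions, each state $r$ has a unique $\epsilon$-successor under $\beta$. So replacing $\beta$ by $\beta^{m}$ with $m=|Q|!$ (while keeping the same cut invariants) makes every $\epsilon$-image an $\epsilon$-fixed point. In the decomposition this means choosing the later cut among the pigeonhole repetitions, so that the middle segment spans $m$ consecutive equal blocks; this again costs only an exponential factor in $\Cc$.

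For the requirement $\eats{\gamma}_\epsilon(r)\neq\emptyset$, we would cut $\gamma$ only where this holds. Where it fails for some fixed point $r$, we do not put the rest into $\gamma$. Instead we start a new pumping triple lower in the stack, in which the offending state no longer contributes to the relevant $\epsilon$-domain. This is exactly where the induction measure $|D'|$ must drop by one. Each drop costs two levels of height: one for the triple and one for the enclosing sequence. This explains the bound $2|Q|+2$, with the final $+2$ for the base flat sequences.

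Making this descent precise is the step we expect to be hardest. Concretely, one must show that inside $\overline\alpha$ and $\overline\beta$ the chosen set of states is strictly smaller, or that the top-level sequence handles the failures with boundedly many pieces, while keeping all pigeonhole invariants consistent with the $\omega$-semantics.
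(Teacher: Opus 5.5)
Your Step 2 is right and matches the paper: pigeonhole on $\eats{\rea{\cdot}{\omega}}$ of the already-built prefix decomposition together with $\up$ of the remaining suffix. But there is a genuine gap exactly where you flag it, and Steps 1 and 3 as written would not close it.

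In Step 1, taking $j$ maximal and recursing on $\alpha$ and $\beta$ only makes them shorter, so nothing bounds the recursion depth by a function of $|Q|$. The paper does the opposite. It takes the \emph{earliest} repetition among the first $2^{|Q|^2(|\Gamma|^2+1)}+1$ cut points. Then $\overline\alpha$ and $\overline\beta$ are flat sequences of at most $\Cc$ already-constructed pieces, and the whole arbitrary-length remainder goes into the literal $\gamma$, which is never decomposed.

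In Step 3, the $|Q|!$ trick does not apply. Segments between cut points with equal invariants are different words, so the $\epsilon$-behaviour of their concatenation is a composition of different partial functions, not a power of one. Moreover, Item 3 quantifies over all $r\in\eats{\rea{\overline\beta}{\omega}}_\epsilon(Q)$. With cuts at arbitrary letters this set can be strictly larger than $\eats{\rea{\overline\alpha}{\omega}\rea{\overline\beta}{\omega}}_\epsilon(Q)$. Nothing in your invariants then gives the fixed-point property for those extra $r$, nor $\eats{\gamma}_\epsilon(r)\neq\emptyset$ for any $r$.

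The missing idea is how to exploit $h(\delta)=|Q\setminus\eats{\delta}_\epsilon(Q)|$. Note that it is the image $\eats{\delta}_\epsilon(Q)$, not your domain $D(\delta)$, whose size cannot grow when the word is extended on either side. Factor greedily $\delta=\delta_1X_1\cdots\delta_nX_n\delta_{n+1}$, where:
\begin{itemize}
\item each $\delta_i$ is a maximal segment with $h(\delta_i)<h(\delta)$, decomposed by induction with height at most $2h(\delta)$;
\item each $X_i$ is a letter with $h(\delta_iX_i)\ge h(\delta)$;
\item cuts are made only at block boundaries.
\end{itemize}

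Then $\beta$ ends with a saturating block $\delta_\ell X_\ell$, so $|\eats{\alpha\beta}_\epsilon(Q)|\le|\eats{\beta}_\epsilon(Q)|\le|\eats{\delta}_\epsilon(Q)|$. At the same time, $\eats{\gamma}_\epsilon$ is a partial function (by \cref{epsilon-det}) mapping $\eats{\alpha\beta}_\epsilon(Q)$ onto $\eats{\delta}_\epsilon(Q)$. Hence it is a total bijection, which gives both $\eats{\gamma}_\epsilon(r)\neq\emptyset$ and $\eats{\alpha\beta}_\epsilon(Q)=\eats{\beta}_\epsilon(Q)$.

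The fixed-point property then needs no iteration. By \cref{epsilon-preserved}, every $r\in\eats{\rea{\overline\beta}{\omega}}_\epsilon(Q)=\eats{\rea{\overline\alpha}{\omega}\rea{\overline\beta}{\omega}}_\epsilon(Q)$ satisfies $r\in\eats{\rea{\overline\beta}{\omega}}_\epsilon(r')$ for some $r'\in\eats{\rea{\overline\alpha}{\omega}}_\epsilon(q)$. Then $\set{r'}=\eats{\rea{\overline\alpha}{\omega}}(q)=\eats{\rea{\overline\alpha}{\omega}\rea{\overline\beta}{\omega}}(q)\ni r$ by \cref{epsilon-det} and Item 2, so $r=r'$.

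Each increase of $h$ costs one sequence node and one triple node, which yields height at most $2h(\delta)+2\le 2|Q|+2$.
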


\begin{proof}
	Take $\Cc=2^{|Q|^2(|\Gamma|^2+1)+1}+1$, and denote $h(\alpha)=|Q\setminus\eats{\alpha}_\epsilon(Q)|$ for $\alpha\in\Gamma^*$.
	We strengthen the thesis of the \lcnamecref{decomp-exists}, showing that the height of the resulting decomposition is at most $2h(\delta)+2$ (which is not greater than $2|Q|+2$).
	The proof is by induction on $h(\delta)$.
	
	We represent the stack content $\delta$ as $\delta=\delta_1 X_1 \delta_2 X_2 \dots \delta_n X_n \delta_{n+1}$ (where $\delta_i\in\Gamma^*$ and $X_i\in\Gamma$)
	in such a way that either $h(\delta_i)<h(\delta)$ or $\delta_i=\epsilon$ for all $i\in\interval{1,n+1}$,
	and $h(\delta_i X_i)\geq h(\delta)$ for all $i\in\interval{1,n}$.
	To obtain such a representation, as $\delta_1$ we take the longest prefix of $\delta$ satisfying $h(\delta_1)<h(\delta)$, or just $\delta_1=\epsilon$ if there is no such prefix;
	as $X_1$ we take the next letter for which by maximality of $\delta_1$ we have $h(\delta_1 X_1)\geq h(\delta)$;
	then as $\delta_2$ we take the longest prefix of the remaining part of $\delta$ satisfying..., and so on.
	
	By the induction hypothesis, for every nonempty fragment $\delta_i$ we have a well-formed decomposition $\overline\delta_i$ of height at most $2h(\delta_i)+2\leq 2h(\delta)$
	and degree at most $\Cc$.
	Note that when $h(\delta)=0$, then all $\delta_i$ are empty (we cannot have $h(\delta_i)<0$), hence the induction hypothesis is not needed in this base case.
	
	If $n\leq 2^{|Q|^2(|\Gamma|^2+1)}$, as $\overline\delta$ we just take the sequence consisting of the decompositions $\overline\delta_i$ for nonempty $\delta_i$,
	and of single-node decompositions for symbols $X_i$;
	this sequence has length at most $2n+1\leq\Cc$.
	Well-formedness of $\overline\delta$ follows immediately from well-formedness of the decompositions $\overline\delta_i$.
	
	Suppose that $n>2^{|Q|^2(|\Gamma|^2+1)}$.
	For $k\leq\ell\leq n$ let $\overline\delta_{k,\ell}$ be the well-formed decomposition of $\delta_k X_k\dots\delta_\ell X_\ell$,
	obtained as the sequence consisting of the decompositions $\overline\delta_i$ for nonempty $\delta_i$,
	and of single-node decompositions for symbols $X_i$, where $i\in\interval{k,\ell}$.
	To every $i\in\interval{1,n}$ we assign the tuple
	\begin{align*}
		S_i=\big((\eats{\rea{\overline\delta_{1,i}}{\omega}}(q))_{q\in Q},\up(\delta_{i+1}X_{i+1}\dots\delta_n X_n\delta_{n+1})\big)\,.
	\end{align*}
	The first $|Q|$ components are subsets of $Q$, and the last component is a subset of $(Q\times\Gamma)^2$, so there are at most $2^{|Q|^2}\cdot 2^{|Q|^2\cdot|\Gamma|^2}$
	possible tuples $S_i$.
	Thus, there are some $k,\ell$ with $k<\ell\leq 2^{|Q|^2(|\Gamma|^2+1)}+1\leq n$ for which $S_k=S_\ell$.
	Let
	\begin{align*}
		\alpha=\delta_1 X_1 \dots \delta_k X_k\,,&&
		\beta=\delta_{k+1} X_{k+1} \dots \delta_\ell X_\ell\,,&&\mbox{and}&&
		\gamma=\delta_{\ell+1} X_{\ell+1} \dots \delta_n X_n\delta_{n+1}\,.
	\end{align*}
	Then $\overline\alpha=\overline\delta_{1,k}$ and $\overline\beta=\overline\delta_{k+1,\ell}$ are well-formed decompositions for $\alpha$ and $\beta$, respectively.
	These decompositions are sequences of length at most $2k$ and $2(\ell-k)$, respectively, which is smaller than $\Cc$.
	Moreover, they both have height at most $2h(\delta)+1$.
	Then $\overline\delta=\decomp{\overline\alpha,\overline\beta,\gamma}$ is a decomposition of $\delta$, has degree at most $\Cc$, and height a most $2h(\delta)+2$,
	so it remains to see that $\overline\delta$ is well-formed.
	
	Observe that
	\begin{align*}
		S_k=\big((\eats{\rea{\overline\alpha}{\omega}}(q))_{q\in Q},\up(\rea{\overline\beta}{1}\gamma)\big) &&\mbox{and}&&
		S_\ell=\big((\eats{\rea{\overline\alpha}{\omega}\rea{\overline\beta}{\omega}}(q))_{q\in Q},\up(\gamma)\big)\,.
	\end{align*}
	Equality of the last components of $S_k$ and $S_\ell$ gives us the first condition of well-formedness, namely $\up(\gamma)=\up(\rea{\overline\beta}{1}\gamma)$.
	Equalities $\eats{\rea{\overline\alpha}{\omega}}(q)=\eats{\rea{\overline\alpha}{\omega}\rea{\overline\beta}{\omega}}(q)$ for all $q\in Q$ give us
	$\eats{\rea{\overline\alpha}{\omega}}(P)
	=\bigcup_{q\in P}\eats{\rea{\overline\alpha}{\omega}}(q)
	=\bigcup_{q\in P}\eats{\rea{\overline\alpha}{\omega}\rea{\overline\beta}{\omega}}(q)
	=\eats{\rea{\overline\alpha}{\omega}\rea{\overline\beta}{\omega}}(P)$ for every set $P\subseteq Q$, which is the second condition of well-formedness.
	
	Recall that $h(\delta_\ell X_\ell)\geq h(\delta)$;
	equivalently $|\eats{\delta_\ell X_\ell}_\epsilon(Q)|=|Q|-h(\delta_\ell X_\ell)\leq |Q|-h(\delta)=|\eats{\delta}_\epsilon(Q)|$.
	Because $\eats{\alpha\beta}_\epsilon(Q)=\eats{\beta}_\epsilon(\eats{\alpha}_\epsilon(Q))\subseteq\eats{\beta}_\epsilon(Q)=
	\eats{\delta_\ell X_\ell}_\epsilon(\eats{\delta_k X_k\dots\delta_{\ell-1} X_{\ell-1}}_\epsilon(Q))\subseteq\eats{\delta_\ell X_\ell}_\epsilon(Q)$,
	this implies $|\eats{\alpha\beta}_\epsilon(Q)|\leq|\eats{\beta}_\epsilon(Q)|\leq|\eats{\delta}_\epsilon(Q)|$.
	Consider now the relation $\Rr=\set{(r,s)\in\eats{\alpha\beta}_\epsilon(Q)\times\eats{\delta}_\epsilon(Q)\mid s\in\eats{\gamma}_\epsilon(r)}$.
	By \cref{epsilon-det} this is a partial function (for every $r$ the set $\eats{\gamma}_\epsilon(r)$ contains at most one $s$).
	On the other hand, for every state $s\in\eats{\delta}_\epsilon(Q)=\eats{\gamma}_\epsilon(\eats{\alpha\beta}_\epsilon(Q))$ there exists $r\in\eats{\alpha\beta}_\epsilon(Q)$
	such that $s\in\eats{\gamma}_\epsilon(r)$.
	Together with the cardinality argument $|\eats{\alpha\beta}_\epsilon(Q)|\leq|\eats{\delta}_\epsilon(Q)|$
	this implies that $\Rr$ is a bijection.
	Thus, the three sets, $\eats{\alpha\beta}_\epsilon(Q)$, $\eats{\beta}_\epsilon(Q)$, and $\eats{\delta}_\epsilon(Q)$, are of the same size.
	Recalling that $\eats{\alpha\beta}_\epsilon(Q)$ is a subset of $\eats{\beta}_\epsilon(Q)$, this implies that actually $\eats{\alpha\beta}_\epsilon(Q)=\eats{\beta}_\epsilon(Q)$.
	Taking into account \cref{epsilon-preserved} (which can be used because $\overline{\alpha}$ and $\overline\beta$ are both well-formed), we moreover have
	$\eats{\rea{\overline\alpha}{\omega}\rea{\overline\beta}{\omega}}_\epsilon(Q)=\eats{\alpha\beta}_\epsilon(Q)=\eats{\beta}_\epsilon(Q)=\eats{\rea{\overline\beta}{\omega}}_\epsilon(Q)$.
	
	In order to show the last condition of well-formedness, consider now a state $r\in\eats{\rea{\overline\beta}{\omega}}_\epsilon(Q)$.
	We need to show that $\eats{\rea{\overline\beta}{\omega}}_\epsilon(r)=\set{r}$ and $\eats{\gamma}_\epsilon(r)\neq\emptyset$.
	The latter follows immediately from the fact that $\Rr$ is a bijection: $r\in\eats{\alpha\beta}_\epsilon(Q)$, so $r$ has a corresponding element in $\eats{\gamma}_\epsilon(r)$.
	In order to show the former, note that, due to $r\in\eats{\rea{\overline\alpha}{\omega}\rea{\overline\beta}{\omega}}_\epsilon(Q)$,
	there have to exist states $q\in Q$ and $r'\in\eats{\rea{\overline\alpha}{\omega}}_\epsilon(q)$ such that $r\in\eats{\rea{\overline\beta}{\omega}}_\epsilon(r')$.
	By \cref{epsilon-det} we then have $\eats{\rea{\overline\alpha}{\omega}}(q)=\eats{\rea{\overline\alpha}{\omega}}_\epsilon(q)=\set{r'}$,
	and we have already shown that $\eats{\rea{\overline\alpha}{\omega}}(q)=\eats{\rea{\overline\alpha}{\omega}\rea{\overline\beta}{\omega}}(q)$
	(cf.~the second condition of well-formedness).
	Thus $r\in\eats{\rea{\overline\alpha}{\omega}\rea{\overline\beta}{\omega}}_\epsilon(q)\subseteq\eats{\rea{\overline\alpha}{\omega}\rea{\overline\beta}{\omega}}(q)=\set{r'}$ implies $r=r'$,
	so we have $r\in\eats{\rea{\overline\beta}{\omega}}_\epsilon(r)$, as required (by \cref{epsilon-det} the set $\eats{\rea{\overline\beta}{\omega}}_\epsilon(r)$ is a singleton).
\end{proof}

An interesting property of decompositions is described by \cref{eqlev-grows-while-pumping};
it says that if we increase the number of copies of $\rea{\overline\beta}{\omega}$ in a configuration,
then the configuration becomes more similar to a configuration with $\omega$ copies of $\rea{\overline\beta}{\omega}$.
We assume here that $1+\omega=\omega$.

\begin{lemma}\label{eqlev-grows-while-pumping}
	Let $q\in Q$, let $\decomp{\overline\alpha,\overline\beta,\gamma}$ be a well-formed pumping triple, let $\eta\in\Gamma^\gs$,
	and let $M_i=\min\set{\eqlev(r(\rea{\overline\beta}{\omega})^i\gamma\eta,r(\rea{\overline\beta}{\omega})^\omega\gamma\eta)\mid r\in\eats{\rea{\overline\alpha}{\omega}}(q)}$
	for all $i\in\Nat$.
	Then $M_{i+1}\geq 1+M_i$ for all $i\in\Nat$.
\end{lemma}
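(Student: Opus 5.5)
The plan is to compare each configuration in the minimum defining $M_{i+1}$ with its $\omega$-counterpart. We strip off one copy of $\rea{\overline\beta}{\omega}$ on both sides and then use \cref{equiv-because-equiv-below,more-equiv-if-read}. Write $\beta=\rea{\overline\beta}{\omega}$ and $\alpha=\rea{\overline\alpha}{\omega}$ for brevity. Fix $r\in\eats{\alpha}(q)$; it suffices to show
\begin{align*}
\eqlev(r\beta^{i+1}\gamma\eta,r\beta^\omega\gamma\eta)\geq 1+M_i\,.
\end{align*}
By the property $p\alpha^\omega\gamma\approx p\alpha\alpha^\omega\gamma$ stated in \cref{sec:basics}, we have $r\beta^\omega\gamma\eta\approx r\beta\beta^\omega\gamma\eta$. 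Since $\eqlev$ depends only on weak bisimulation classes, the task reduces to bounding $\eqlev(r\beta\cdot\beta^i\gamma\eta,\;r\beta\cdot\beta^\omega\gamma\eta)$ from below. We split into two cases according to whether $\eats{\beta}_\epsilon(r)$ is empty.

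\emph{Case $\eats{\beta}_\epsilon(r)=\emptyset$.} We apply \cref{more-equiv-if-read} with prefix $\beta$, $\eta_1=\beta^i\gamma\eta$ and $\eta_2=\beta^\omega\gamma\eta$. This gives a lower bound of $1+\min\set{\eqlev(s\beta^i\gamma\eta,s\beta^\omega\gamma\eta)\mid s\in\eats{\beta}(r)}$. Each such $s$ lies in $\eats{\alpha\beta}(q)$, and condition~2 of well-formedness gives $\eats{\alpha\beta}(q)=\eats{\alpha}(q)$. So each such $s$ belongs to $\eats{\alpha}(q)$, each term of the minimum is at least $M_i$, and the bound $1+M_i$ follows.

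\emph{Case $\eats{\beta}_\epsilon(r)\neq\emptyset$.} Pick $s\in\eats{\beta}_\epsilon(r)$. Then $s\in\eats{\beta}_\epsilon(Q)$, so condition~3 of well-formedness gives $\eats{\beta}_\epsilon(s)=\set{s}$, that is, $s\beta\to_\epsilon^* s$. We now show that both configurations $\epsilon$-reach the same configuration:
\begin{itemize}
\item In the standard part, $r\beta^{i+1}\gamma\eta\to_\epsilon^* s\beta^i\gamma\eta\to_\epsilon^* s\gamma\eta$, by popping each remaining copy of $\beta$ while staying in $s$.
\item On the $\omega$ side, $r\beta\beta^\omega\gamma\eta\to_\epsilon^* s\beta^\omega\gamma\eta$. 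Item~3 of the definition of $\to$ applies because $s\beta\to_\epsilon^* s$, so $s\beta^\omega\gamma\eta\to_\epsilon s\gamma\eta$.
\end{itemize}
Since $\epsilon$-reachability implies weak bisimilarity, both configurations are $\approx s\gamma\eta$. Hence the equivalence level is $\omega\geq 1+M_i$.

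Taking the minimum over $r\in\eats{\alpha}(q)$ yields $M_{i+1}\geq 1+M_i$. The only delicate point is the second case. There one cannot just invoke \cref{equiv-because-equiv-below}, since that would lose the ``$+1$''. Instead we must use condition~3 of well-formedness to see that both sides collapse by $\epsilon$-moves to the same configuration. We apply condition~3 to $s$ rather than to $r$, because $r$ itself need not lie in $\eats{\beta}_\epsilon(Q)$.
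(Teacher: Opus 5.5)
Your proof is correct and follows the paper's argument essentially verbatim. It uses the same case split on whether $\eats{\rea{\overline\beta}{\omega}}_\epsilon(r)$ is empty, with \cref{more-equiv-if-read} plus well-formedness condition~2 in the first case and an $\epsilon$-collapse via condition~3 in the second. The only differences are cosmetic: you bound every $r\in\eats{\rea{\overline\alpha}{\omega}}(q)$ rather than just the minimizer $r_{\min}$, and you stop the $\epsilon$-runs at $s\gamma\eta$ instead of continuing to $s\eta$, so you do not even need the part of condition~3 asserting $\eats{\gamma}_\epsilon(s)\neq\emptyset$.
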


\begin{proof}
	Fix some $i\in\Nat$.
	Let $r_{\min}\in\eats{\rea{\overline\alpha}{\omega}}(q)$ be a state for which
	$M_{i+1}=\eqlev(r_{\min}(\rea{\overline\beta}{\omega})^{i+1}\gamma\eta,\allowbreak r_{\min}(\rea{\overline\beta}{\omega})^\omega\gamma\eta)$.
	We distinguish two cases.

	Suppose first that $\eats{\rea{\overline\beta}{\omega}}_\epsilon(r_{\min})=\emptyset$.
	As observed on \cpageref{expand-equiv}, we have
	$r_{\min}\rea{\overline\beta}{\omega}(\rea{\overline\beta}{\omega})^\omega\gamma\eta\approx r_{\min}(\rea{\overline\beta}{\omega})^\omega\gamma\eta$.
	Thus by \cref{more-equiv-if-read} we have
	\begin{align*}
		M_{i+1}&=\eqlev(r_{\min}\rea{\overline\beta}{\omega}(\rea{\overline\beta}{\omega})^i\gamma\eta,r_{\min}\rea{\overline\beta}{\omega}(\rea{\overline\beta}{\omega})^\omega\gamma\eta)\\
		&\geq 1+\min\set{\eqlev(r(\rea{\overline\beta}{\omega})^i\gamma\eta,r(\rea{\overline\beta}{\omega})^\omega\gamma\eta)\mid r\in\eats{\rea{\overline\beta}{\omega}}(r_{\min})}\,.
	\end{align*}
	By well-formedness of the pumping triple we have
	$\eats{\rea{\overline\beta}{\omega}}(r_{\min})\subseteq\eats{\rea{\overline\beta}{\omega}}(\eats{\rea{\overline\alpha}{\omega}}(q))
	=\eats{\rea{\overline\alpha}{\omega}\rea{\overline\beta}{\omega}}(q)
	=\eats{\rea{\overline\alpha}{\omega}}(q)$,
	so for all $r\in\eats{\rea{\overline\beta}{\omega}}(r_{\min})$ we have $\eqlev(r(\rea{\overline\beta}{\omega})^i\gamma\eta,r(\rea{\overline\beta}{\omega})^\omega\gamma\eta)\geq M_i$,
	which implies $M_{i+1}\geq 1+M_i$.

	Conversely, suppose that there is some $r\in\eats{\rea{\overline\beta}{\omega}}_\epsilon(r_{\min})\subseteq\eats{\rea{\overline\beta}{\omega}}_\epsilon(Q)$.
	By well-formedness of the pumping triple
	we have $\eats{\rea{\overline\beta}{\omega}}_\epsilon(r)=\set{r}$ and $s\in\eats{\gamma}_\epsilon(r)$ for some $s$.
	This implies that $r_{\min}(\rea{\overline\beta}{\omega})^{i+1}\gamma\eta\to^*_\epsilon r(\rea{\overline\beta}{\omega})^i\gamma\eta\to_\epsilon^*s\eta$
	and $r_{\min}(\rea{\overline\beta}{\omega})^\omega\gamma\eta\to_\epsilon^* r(\rea{\overline\beta}{\omega})^\omega\gamma\eta\to_\epsilon^*s\eta$, that is,
	$M_{i+1}=\omega\geq 1+M_i$.
\end{proof}

The next lemma says that if we can pop a stack content of the form $\rea{\overline\delta}{\omega}$, then we can pop it quickly:

\begin{lemma}\label{can pop fast decomp}
	If $s\in\eats{\rea{\overline\delta}{\omega}}(q)$ for a well-formed decomposition $\overline\delta$ of degree at most $\Cc$ and height $\ell$,
	then $\dist(q\rea{\overline\delta}{\omega},s)\leq\Bc\Cc^\ell$.
\end{lemma}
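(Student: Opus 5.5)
We proceed by induction on the structure of $\overline\delta$, proving the claim in the slightly stronger form: if $s\in\eats{\rea{\overline\delta}{\omega}}(q)$, then there is a run from $q\rea{\overline\delta}{\omega}$ to $s$ reading at most $\Bc\Cc^\ell$ action symbols. This is the same as the $\dist$ bound, since $\dist$ counts only action symbols. Note that the target configuration is $s$ with empty stack. If $\overline\delta$ occurs inside a larger stack $\overline\delta\eta$, the run can simply be shifted, since popping never inspects what lies below.

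\emph{Base case} $\overline\delta=X$, so $\ell=0$. Here $qX\to^*s$ is a run between standard configurations, and \cref{can pop fast} gives $\dist(qX,s)\leq(1+0)\Bc=\Bc=\Bc\Cc^0$.

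\emph{Sequence case} $\overline\delta=\overline\delta_1\dots\overline\delta_k$ with $k\leq\Cc$. Each $\overline\delta_i$ has height at most $\ell-1$. Since $\eats{\rea{\overline\delta}{\omega}}(q)$ is the composition of the $\eats{\rea{\overline\delta_i}{\omega}}$, we can choose intermediate states $q=s_0,s_1,\dots,s_k=s$ with $s_i\in\eats{\rea{\overline\delta_i}{\omega}}(s_{i-1})$. By the induction hypothesis, for each $i$ there is a run from $s_{i-1}\rea{\overline\delta_i}{\omega}$ to $s_i$ reading at most $\Bc\Cc^{\ell-1}$ action symbols. We shift each such run by the remaining stack $\rea{\overline\delta_{i+1}}{\omega}\dots\rea{\overline\delta_k}{\omega}$ and concatenate. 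The total is at most $k\Bc\Cc^{\ell-1}\leq\Bc\Cc^\ell$.

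\emph{Pumping triple case} $\overline\delta=\decomp{\overline\alpha,\overline\beta,\gamma}$, where $\overline\alpha,\overline\beta$ have height at most $\ell-1$. By \cref{easy-eats-decomp}(3) there are states $r\in\eats{\rea{\overline\alpha}{\omega}}(q)$ and $r'$ with $r\rea{\overline\beta}{\omega}\to_\epsilon^* r'$ and $s\in\eats{\gamma}_\epsilon(r')$. By well-formedness (item 3), $r'\rea{\overline\beta}{\omega}\to_\epsilon^* r'$. The run is then built in four pieces:
\begin{compactenum}
\item pop $\rea{\overline\alpha}{\omega}$ from $q$ to $r$, reading at most $\Bc\Cc^{\ell-1}$ symbols by the induction hypothesis (shifted);
\item pop one copy of $\rea{\overline\beta}{\omega}$ from $r$ to $r'$ by $\epsilon$-transitions only, reading $0$ symbols;
\item apply Item 3 of the transition definition to $r'(\rea{\overline\beta}{\omega})^\omega\gamma$, which gives an $\epsilon$-transition to $r'\gamma$;
\item pop $\gamma$ from $r'$ to $s$ by $\epsilon$-transitions.
\end{compactenum}
In total we read at most $\Bc\Cc^{\ell-1}\leq\Bc\Cc^\ell$ action symbols.

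The main obstacle is the pumping-triple case, specifically the potentially huge standard stack $\gamma$, to which the induction hypothesis does not apply. This is handled by \cref{easy-eats-decomp}(3), which guarantees that $\gamma$ and the $\omega$-part can always be popped using only $\epsilon$-transitions. The transition to $r'\gamma$ in step 3 is legitimate because $r'\rea{\overline\beta}{\omega}\to_\epsilon^* r'$ holds, which is exactly the condition of Item 3.
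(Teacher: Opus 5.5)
Your proof is correct and follows the paper's argument essentially step for step. The paper uses the same structural induction, \cref{can pop fast} for the base case, and concatenation of shifted runs for sequences. In the pumping-triple case it likewise uses \cref{easy-eats-decomp} together with well-formedness, so that the part $(\rea{\overline\beta}{\omega})^\omega\gamma$ is popped without reading any action symbol.

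One small nit concerns the case $r'=r$. There your step~2 must be the empty run, because Item~2 transitions from $r(\rea{\overline\beta}{\omega})^\omega\gamma$ are then disabled and only the Item~3 transition to $r\gamma$ is available. This does not affect the conclusion.
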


\begin{proofnoqed}
	Induction on the structure of $\overline\delta$.
	Suppose that $s\in\eats{\rea{\overline\delta}{\omega}}(q)$.
	If $\overline\delta=X=\rea{\overline\delta}{\omega}$, then $s\in\eats{X}(q)$ (i.e., $qX\to^*s$) implies $\dist(q\rea{\overline\delta}{\omega},s)\leq\Bc$ by \cref{can pop fast}.
	
	Suppose that $\overline\delta=\overline\delta_1\dots\overline\delta_k$.
	Then $\rea{\overline\delta}{\omega}=\rea{\overline\delta_1}{\omega}\dots\rea{\overline\delta_k}{\omega}$,
	so there have to exist states $r_0,r_1,\dots,r_k$ such that $r_0=q$, $r_k=r$, and $r_i\in\eats{\rea{\overline\delta_i}{\omega}}(r_{i-1})$ for all $i\in\interval{1,k}$.
	By the induction hypothesis
	$\dist(r_{i-1}\rea{\overline\delta_i}{\omega}\rea{\overline\delta_{i+1}}{\omega}\dots\rea{\overline\delta_k}{\omega},\allowbreak
	r_i\rea{\overline\delta_{i+1}}{\omega}\dots\rea{\overline\delta_k}{\omega})
	\leq\dist(r_{i-1}\rea{\overline\delta_i}{\omega},r_i)\leq\Bc\Cc^{\ell-1}$.
	Because $k\leq\Cc$, we obtain $\dist(r_0\rea{\overline\delta}{\omega},r_k)\leq\Cc\cdot\Bc\Cc^{\ell-1}$.

	Finally, suppose that $\overline\delta=\decomp{\overline\alpha,\overline\beta,\gamma}$.
	By \cref{easy-eats-decomp}, $s\in\eats{\rea{\overline\delta}{\omega}}(q)$ implies that there is a state $r\in\eats{\rea{\overline\alpha}{\omega}}(q)$ for which
	$s\in\eats{\rea{\overline\beta}{\omega}\gamma}_\epsilon(r)=\eats{(\rea{\overline\beta}{\omega})^\omega\gamma}_\epsilon(r)$.
	The latter means that $\dist(r(\rea{\overline\beta}{\omega})^\omega\gamma,s)=0$.
	We conclude using the induction hypothesis:
	\begin{align*}
		\dist(q\rea{\overline\delta}{\omega},s)
		\leq\dist(q\rea{\overline\delta}{\omega},r(\rea{\overline\beta}{\omega})^\omega\gamma)+\dist(r(\rea{\overline\beta}{\omega})^\omega\gamma,s)
		\leq\dist(q\rea{\overline\alpha}{\omega},r)+0\leq\Bc\Cc^{\ell-1}\leq\Bc\Cc^\ell\,.
	\qedhere\end{align*}
\end{proofnoqed}

We now define by induction when two decompositions $\overline\delta,\hat\delta$ have the same type, written $\overline\delta\tpeq\hat\delta$:
\begin{compactitem}
\item	if $\overline\delta=\hat\delta$ then $\overline\delta\tpeq\hat\delta$,
\item	if $\overline\delta=\overline\delta_1\dots\overline\delta_k$, and $\hat\delta=\hat\delta_1\dots\hat\delta_k$,
	and $\overline\delta_i\tpeq\hat\delta_i$ for all $i\in\set{1,\dots,k}$, then $\overline\delta\tpeq\hat\delta$, and
\item	if $\overline\delta=\decomp{\overline\alpha,\overline\beta,\gamma}$, and $\hat\delta=\decomp{\hat\alpha,\hat\beta,\gamma'}$, and
	$\overline\alpha\tpeq\hat\alpha$, and $\overline\beta\tpeq\hat\beta$, and $\eats{\gamma}_\epsilon=\eats{\gamma'}_\epsilon$, and
	$\up(\gamma)=\up(\gamma')$,
	then $\overline\delta\tpeq\hat\delta$.
\end{compactitem}
The next two lemmata say that, on the one hand, the number of different types is small, and, on the other hand, that while considering stacks of the form $\rea{\overline\delta}{\omega}$,
only the type of $\overline\delta$ is relevant:

\begin{lemma}\label{num types}
	The number of different types (i.e., equivalence classes of the $\tpeq$ relation) for decompositions of height at most $\ell$ and degree at most $\Cc$ is at most
	$2^{|\Pp|^4\cdot(\Cc+1)^{\ell+1}}$.
\end{lemma}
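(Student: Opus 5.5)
We argue by induction on the height $\ell$, bounding the number $N_\ell$ of types of decompositions of height at most $\ell$ and degree at most $\Cc$. A type is determined by a finite amount of data. For a leaf it is the stack symbol $X$. For a sequence it is the length $k$ and the types of the $k$ children. For a pumping triple it is the types of $\overline\alpha$ and $\overline\beta$ together with the pair $(\eats{\gamma}_\epsilon,\up(\gamma))$. The third clause of the definition of $\tpeq$ shows that the pair $(\eats{\gamma}_\epsilon,\up(\gamma))$ is all that matters about $\gamma$. The first clause, $\overline\delta=\hat\delta$, only adds reflexivity, and identical decompositions have identical data anyway. Hence the map sending a decomposition to this data is constant on $\tpeq$-classes. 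So it suffices to count the possible data.

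First we bound the number of possible pairs for $\gamma$. The function $\eats{\gamma}_\epsilon$ restricted to singletons is a map $Q\to 2^Q$, and it determines the function on all sets $P\subseteq Q$ by taking unions. It therefore has at most $2^{|Q|^2}$ values. The set $\up(\gamma)$ is a subset of $(Q\Gamma)^2$, so it has at most $2^{|Q|^2|\Gamma|^2}$ values. Together this gives $G\le 2^{|Q|^2+|Q|^2|\Gamma|^2}\le 2^{|\Pp|^4}$ possibilities, using $|Q|,|\Gamma|\ge 1$ and $|Q|+|\Gamma|\le|\Pp|$.

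Next we set up the recurrence, with $N_0=|\Gamma|\le 2^{|\Pp|^4}$. For $\ell\ge1$, a decomposition of height at most $\ell$ is either a leaf, a sequence of $k\le\Cc$ children of height at most $\ell-1$, or a triple with two children of height at most $\ell-1$ plus $\gamma$-data. This gives
\[
N_\ell\le |\Gamma|+\sum_{k=1}^{\Cc}N_{\ell-1}^k+N_{\ell-1}^2\cdot G .
\]
Write $N_{\ell-1}\le 2^{x}$ with $x=|\Pp|^4(\Cc+1)^{\ell}$. Using $N_{\ell-1}\ge2$, we have $|\Gamma|+\sum_{k\le\Cc}N_{\ell-1}^k\le (\Cc+1)N_{\ell-1}^{\Cc}$. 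Also $\Cc\ge 2$, so $G N_{\ell-1}^2\le 2^{|\Pp|^4}N_{\ell-1}^{\Cc}$. Since $\Cc+1\le 2^{\Cc}$, the sum is at most
\[
2\cdot 2^{\Cc}\cdot 2^{|\Pp|^4}\cdot 2^{\Cc x}\le 2^{\Cc x+x+|\Pp|^4}\le 2^{(\Cc+1)x},
\]
where we absorb the small terms into $x$, which is possible because $x\ge |\Pp|^4(\Cc+1)\ge\Cc+1+|\Pp|^4$. Since $(\Cc+1)x=|\Pp|^4(\Cc+1)^{\ell+1}$, this closes the induction.

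The only delicate point is this bookkeeping: checking that the additive terms and the factor $G$ fit into one extra factor of $(\Cc+1)$ in the exponent. It works because the exponent $x$ already dominates $\Cc$ and $|\Pp|^4$. The base case $\ell=0$ also needs $N_0=|\Gamma|\le 2^{|\Pp|^4(\Cc+1)}$, which is immediate.
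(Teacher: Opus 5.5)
Your proof is correct and follows the paper's argument essentially step by step: induction on $\ell$, a case split into single symbols, sequences of at most $\Cc$ children and pumping triples, the bound $2^{|Q|^2+|Q|^2|\Gamma|^2}\le 2^{|\Pp|^4}$ on the pairs $(\eats{\gamma}_\epsilon,\up(\gamma))$, and absorbing all overhead into one extra factor $\Cc+1$ in the exponent. There are two small slips, both easily fixed. First, the middle term $2^{\Cc x+x+|\Pp|^4}$ in your final display exceeds $2^{(\Cc+1)x}$, so you should pass directly from $2^{\Cc x+\Cc+1+|\Pp|^4}$ to $2^{(\Cc+1)x}$ using your own bound $x\ge\Cc+1+|\Pp|^4$. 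Second, the assumption $N_{\ell-1}\ge 2$ can fail when $\ell=1$ and $|\Gamma|=1$, but $N_{\ell-1}\ge 1$ together with $|\Gamma|=N_0\le N_{\ell-1}$ is all you need.
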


\begin{proofnoqed}
	Induction on $\ell$.
	For $\ell=0$ we have only $|\Gamma|$ decompositions, consisting of single stack symbols, and $|\Gamma|\leq 2^{|\Pp|^4\cdot(\Cc+1)}$.
	For $\ell\geq 1$ we have
	\begin{compactitem}
	\item	$|\Gamma|$ decompositions consisting of single stack symbols;
	\item	decompositions being sequences of at most $\Cc$ decompositions of height at most $\ell-1$;
		for each of them we have to know its type, where by 
		the induction hypothesis we have at most $2^{|\Pp|^4\cdot(\Cc+1)^\ell}$ possibilities;
	\item	pumping triples $\decomp{\overline\alpha,\overline\beta,\gamma}$, where we have to know the types of $\overline\alpha$ and $\overline\beta$
		(at most $2^{|\Pp|^4\cdot(\Cc+1)^\ell}$ possibilities for each, by 
		the induction hypothesis), and the objects $\eats{\gamma}_\epsilon$ and $\up(\gamma)$.
	\end{compactitem}
	In the last case, we notice that $\eats{\gamma}_\epsilon(P)=\bigcup_{p\in P}\eats{\gamma}_\epsilon(p)$ for every $P\subseteq Q$,
	meaning that the function $\eats{\gamma}_\epsilon$ is determined by the set $\set{(p,q)\mid q\in\eats{\gamma}_\epsilon(p)}\subseteq Q^2$.
	Likewise, $\up(\gamma)$ is a subset of $(Q\Gamma)^2$.
	We thus have at most $2^{|Q|^2+|Q|^2|\Gamma|^2}\leq 2^{|\Pp|^4}$ 
	possibilities for combinations of $\eats{\gamma}_\epsilon$
	and $\up(\gamma)$.
	Using inequalities $\Cc\geq 3$ and $\Cc+3\allowbreak\leq 2^\Cc\allowbreak\leq2^{|\Pp|^4\cdot(\C+1)^\ell}$,
	we obtain that the number of possible types is at most
	\begin{align*}
		|\Gamma|+\sum\nolimits_{i=0}^{\Cc}2^{|\Pp|^4\cdot(\Cc+1)^\ell\cdot i}+2^{|\Pp|^4\cdot(\Cc+1)^\ell\cdot 2+|\Pp|}
		\leq(\Cc+3)\cdot 2^{|\Pp|^4\cdot(\Cc+1)^\ell\cdot\Cc}
		\leq 2^{|\Pp|^4\cdot(\Cc+1)^{\ell+1}}\,.
	\qedhere\end{align*}	
\end{proofnoqed}

\begin{lemma}\label{only type important}
	Let $\overline\delta,\hat\delta$ be decompositions.
	If $\overline\delta$ is well-formed and $\overline\delta\tpeq\hat\delta$, then
	\begin{compactenum}
	\item	$\eats{\rea{\overline\delta}{\omega}}=\eats{\rea{\hat\delta}{\omega}}$ and $\eats{\rea{\overline\delta}{\omega}}_\epsilon=\eats{\rea{\hat\delta}{\omega}}_\epsilon$,
	\item	$\up(\rea{\overline\delta}{1})=\up(\rea{\hat\delta}{1})$ and $\up_0(\rea{\overline\delta}{1})=\up_0(\rea{\hat\delta}{1})$,
	\item	$\hat\delta$ is well-formed as well, and
	\item	$q\rea{\overline\delta}{\omega}\eta\approx q\rea{\hat\delta}{\omega}\eta$ for all $q\in Q$ and $\eta\in\Gamma^\gs$.
	\end{compactenum}
\end{lemma}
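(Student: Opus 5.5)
We proceed by structural induction on $\overline\delta$, proving all four items simultaneously; the cases follow the definition of $\tpeq$. If $\overline\delta=\hat\delta$, items 1, 2 and 4 are trivial and item 3 is the hypothesis. For a sequence $\overline\delta=\overline\delta_1\dots\overline\delta_k$ with $\hat\delta=\hat\delta_1\dots\hat\delta_k$, each $\overline\delta_i$ is well-formed, so the induction hypothesis applies to every pair $\overline\delta_i\tpeq\hat\delta_i$. Items 1 and 2 then follow by composition: $\eats{\rea{\overline\delta}{\omega}}$ is the composition of the functions $\eats{\rea{\overline\delta_i}{\omega}}$, the same holds for the $\epsilon$-version, and $\up$, $\up_0$ compose as noted in \cref{sec:basics}. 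Item 3 is immediate. For item 4 we replace the components one at a time, from the bottom up. Replacing $\overline\delta_i$ by $\hat\delta_i$ inside $q\rea{\overline\delta_1}{\omega}\dots\rea{\overline\delta_{i-1}}{\omega}\,\cdot\,\rea{\hat\delta_{i+1}}{\omega}\dots\eta$ preserves bisimilarity. The reason is the first formulation of \cref{equiv-because-equiv-below}, applied with the prefix $\rea{\overline\delta_1}{\omega}\dots\rea{\overline\delta_{i-1}}{\omega}$: for every state $r$ reached after popping this prefix, the induction hypothesis (item 4, with a suitable $\eta$) gives $r\rea{\overline\delta_i}{\omega}\eta'\approx r\rea{\hat\delta_i}{\omega}\eta'$.

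The main case is the pumping triple $\overline\delta=\decomp{\overline\alpha,\overline\beta,\gamma}$, $\hat\delta=\decomp{\hat\alpha,\hat\beta,\gamma'}$, where $\eats{\gamma}_\epsilon=\eats{\gamma'}_\epsilon$ and $\up(\gamma)=\up(\gamma')$.

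For item 1 we use the formulas of \cref{easy-eats-decomp}. Item 3 there gives $\eats{\rea{\overline\delta}{\omega}}=\eats{\rea{\overline\beta}{\omega}\gamma}_\epsilon\circ\eats{\rea{\overline\alpha}{\omega}}$, and item 2 gives the $\epsilon$-version. Every ingredient agrees for the hatted versions, by the induction hypothesis and by $\eats{\gamma}_\epsilon=\eats{\gamma'}_\epsilon$. The subtlety is that \cref{easy-eats-decomp} requires well-formedness of $\hat\delta$, so item 3 must be established first.

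For item 2 we use \cref{epsilon-preserved}(3) together with condition 1 of well-formedness, which gives $\up(\rea{\overline\delta}{1})=\up(\gamma)\circ\up(\rea{\overline\beta}{1})\circ\up(\rea{\overline\alpha}{1})$. The same identity holds for $\hat\delta$ once it is known to be well-formed, and $\up_0$ is handled similarly.

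For item 3 we transfer the three conditions. For condition 1 we have $\up(\gamma')=\up(\gamma)=\up(\rea{\overline\beta}{1}\gamma)=\up(\gamma)\circ\up(\rea{\overline\beta}{1})=\up(\gamma')\circ\up(\rea{\hat\beta}{1})=\up(\rea{\hat\beta}{1}\gamma')$. Condition 2 follows from the induction hypothesis (item 1) for $\overline\alpha$ and $\overline\beta$. Condition 3 involves only $\eats{\rea{\overline\beta}{\omega}}_\epsilon$ and $\eats{\gamma}_\epsilon$, both of which are preserved.

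Item 4 is the expected obstacle. We plan to go through the intermediate configurations
\[
q\rea{\overline\alpha}{\omega}(\rea{\overline\beta}{\omega})^\omega\gamma\eta\approx q\rea{\hat\alpha}{\omega}(\rea{\overline\beta}{\omega})^\omega\gamma\eta\approx q\rea{\hat\alpha}{\omega}(\rea{\hat\beta}{\omega})^\omega\gamma\eta\approx q\rea{\hat\alpha}{\omega}(\rea{\hat\beta}{\omega})^\omega\gamma'\eta.
\]
The first step follows from the induction hypothesis for $\overline\alpha$.

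For the second step, we replace $\beta^\omega$ by $\hat\beta^\omega$ under the same prefix. For this we show directly that the relation of pairs $(r\,\sigma(\rea{\overline\beta}{\omega})^\omega\gamma\eta,\ r\,\sigma(\rea{\hat\beta}{\omega})^\omega\gamma\eta)$, combined with $\approx$, is a weak bisimulation up to $\approx$. Transitions inside $\sigma$ are matched identically. A transition that enters the $\omega$-block is either of kind 2 or of kind 3 in the definition of $\to_a$. Kind 3 occurs if and only if $r\rea{\overline\beta}{\omega}\to_\epsilon^* r$, which is preserved because $\eats{\cdot}_\epsilon$ agrees. Kind 2 moves from $r\rea{\overline\beta}{\omega}$ are matched using $r\rea{\overline\beta}{\omega}\eta'\approx r\rea{\hat\beta}{\omega}\eta'$ (the induction hypothesis) with $\eta'=(\rea{\overline\beta}{\omega})^\omega\gamma\eta$, followed by re-folding via $p\alpha^\omega\gamma\approx p\alpha\alpha^\omega\gamma$.

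For the third step, replacing $\gamma$ by $\gamma'$, the key observation is condition 3 of well-formedness together with \cref{easy-eats-decomp}. The stack $\gamma$ is only ever reached in a state $r\in\eats{\rea{\overline\beta}{\omega}}_\epsilon(Q)$, from which $\gamma$ is popped by an $\epsilon$-run. Since $\eats{\gamma}_\epsilon=\eats{\gamma'}_\epsilon$, the same state $s$ is then reached with stack $\eta$, and the two configurations are bisimilar because each $\epsilon$-reduces to $s\eta$. A configuration exposing $\gamma$ in any other state is unreachable from $r(\rea{\beta}{\omega})^\omega$, so we can apply \cref{equiv-because-equiv-below} with prefix $\rea{\hat\alpha}{\omega}(\rea{\hat\beta}{\omega})^\omega$. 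We expect the up-to-$\approx$ bookkeeping in the second step to be the most delicate part.
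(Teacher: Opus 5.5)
\paragraph{Items 1--3 and the sequence case.} These follow the paper. You are right that in the triple case item 3 must be settled before item 1 is derived from \cref{easy-eats-decomp}. Incidentally, item 2 needs neither \cref{epsilon-preserved} nor condition 1: the identity $\up(\rea{\overline\delta}{1})=\up(\gamma)\circ\up(\rea{\overline\beta}{1})\circ\up(\rea{\overline\alpha}{1})$ is just the composition law.

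\paragraph{Item 4 of the triple case.} Here your route differs from the paper's. The paper proves $r(\rea{\overline\beta}{\omega})^\omega\gamma\eta\approx_k r(\rea{\hat\beta}{\omega})^\omega\gamma'\eta$ for all $r$ by induction on $k$, swapping $\overline\beta$ and $\gamma$ at the same time. It unfolds one copy and swaps it by the induction hypothesis. Then either \cref{more-equiv-if-read} supplies the extra level (when $\eats{\rea{\overline\beta}{\omega}}_\epsilon(r)=\emptyset$), or condition 3 of well-formedness collapses both sides to the same $s\eta$. It concludes with $\bigcap_k(\approx_k)=(\approx)$. Your steps 1 and 3 are correct. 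Step 3 is a clean use of \cref{equiv-because-equiv-below}: the block $(\rea{\hat\beta}{\omega})^\omega$ can only be removed by a move of kind 3. Hence every state of $\eats{\rea{\hat\alpha}{\omega}(\rea{\hat\beta}{\omega})^\omega}(q)$ lies in $\eats{\rea{\overline\beta}{\omega}}_\epsilon(Q)$, and from there $\gamma$ and $\gamma'$ are $\epsilon$-popped to the same state.

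\paragraph{The gap: step 2.} ``Weak bisimulation up to $\approx$'' in its usual form (single-step challenges, derivatives compared modulo ${\approx}R{\approx}$) is unsound. For example, it accepts $\{(c,d)\}$ where $d$ is dead and $c$'s only move is $c\to_\epsilon c'$ with $c'$ able to read a letter, since $c'\approx c\mathrel{R}d$.

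Your matching, as written, is of this kind. Write $B=\rea{\overline\beta}{\omega}$, $\hat{B}=\rea{\hat\beta}{\omega}$, $E=\gamma\eta$. You answer a kind-2 move of $rB^\omega E$ with the induction hypothesis at $\eta'=B^\omega E$. This rewrites the \emph{challenger} into $r\hat{B}B^\omega E$. That configuration is linked to the responder $r\hat{B}^\omega E\approx r\hat{B}\hat{B}^\omega E$ only through the relation under construction (with $\sigma=\hat{B}$). A weak move of $r\hat{B}B^\omega E$ may pop $\hat{B}$ by $\epsilon$-steps and land on a pair $(pB^\omega E,p\hat{B}^\omega E)$ again, so nothing guarantees progress.

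\paragraph{The repair.} Keep every $\approx$-rewriting on the responder's side. For a kind-2 move $rB^\omega E\to_a q\beta'B^\omega E$ (coming from $rB\to_a q\beta'$), use folding and the induction hypothesis at $\eta'=\hat{B}^\omega E$:
$r\hat{B}^\omega E\approx r\hat{B}\hat{B}^\omega E\approx rB\hat{B}^\omega E\to_a q\beta'\hat{B}^\omega E$.
Now the challenger's derivative is literally related to $q\beta'\hat{B}^\omega E$ (take $\sigma=\beta'$), and only the responder's derivative is taken up to $\approx$. Symmetrically, use $\eta'=B^\omega E$ for moves of $r\hat{B}^\omega E$. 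This one-sided variant is sound: by induction on the number of single steps in a weak challenge, ${\approx}R{\approx}$ together with its inverse weakly covers itself.

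\paragraph{Comparison.} Repaired this way, your route does the $\beta^\omega$-swap without well-formedness and without $\bigcap_k(\approx_k)=(\approx)$, confining condition 3 to the $\gamma$-swap. The paper's stratified induction instead avoids any up-to soundness argument by letting \cref{more-equiv-if-read} act as the guard.
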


\begin{proof}
	We proceed by induction on the structure of $\overline\delta$.
	If $\overline\delta=X$, then also $\hat\delta=X$, so the thesis is trivial.
	
	Suppose that $\overline\delta=\overline\delta_1\dots\overline\delta_k$.
	Then $\hat\delta=\hat\delta_1\dots\hat\delta_k$ with $\overline\delta_i\tpeq\hat\delta_i$ for all $i\in\interval{1,k}$.
	Item 3 follows directly from the induction hypothesis.
	Items 1 and 2 as well:
	$\eats{\rea{\overline\delta}{\omega}}(P)
	=\eats{\rea{\overline\delta_k}{\omega}}(\eats{\rea{\overline\delta_{k-1}}{\omega}}(\dots(\eats{\rea{\overline\delta_1}{\omega}}(P))\dots))
	=\eats{\rea{\hat\delta_k}{\omega}}(\eats{\rea{\hat\delta_{k-1}}{\omega}}(\dots(\eats{\rea{\hat\delta_1}{\omega}}(P))\dots))
	=\eats{\rea{\hat\delta}{\omega}}(P)$ for every $P\subseteq Q$, and likewise for the other equalities.
	By the induction hypothesis we also have
	$r\rea{\overline\delta_i}{\omega}\rea{\overline\delta_{i+1}}{\omega}\dots\rea{\overline\delta_k}{\omega}\eta
	\approx r\rea{\hat\delta_i}{\omega}\rea{\overline\delta_{i+1}}{\omega}\dots\rea{\overline\delta_k}{\omega}\eta$ for all $r\in Q$ and $i\in\interval{1,k}$,
	which implies $q\rea{\hat\delta_1}{\omega}\dots\rea{\hat\delta_{i-1}}{\omega}\rea{\overline\delta_i}{\omega}\rea{\overline\delta_{i+1}}{\omega}\dots\rea{\overline\delta_k}{\omega}\eta
	\approx q\rea{\hat\delta_1}{\omega}\dots\rea{\hat\delta_{i-1}}{\omega}\rea{\hat\delta_i}{\omega}\rea{\overline\delta_{i+1}}{\omega}\dots\rea{\overline\delta_k}{\omega}\eta$
	by \cref{equiv-because-equiv-below}.
	Having this for all $i\in\interval{1,k}$ we obtain $q\rea{\overline\delta}{\omega}\eta\approx q\rea{\hat\delta}{\omega}\eta$, as needed for Item 4.

	Finally, suppose that $\overline\delta=\decomp{\overline\alpha,\overline\beta,\gamma}$.
	Then $\hat\delta=\decomp{\hat\alpha,\hat\beta,\gamma'}$ with $\overline\alpha\tpeq\hat\alpha$, $\overline\beta\tpeq\hat\beta$, $\eats{\gamma}_\epsilon=\eats{\gamma'}_\epsilon$,
	and $\up(\gamma)=\up(\gamma')$.
	We obtain Item 1 using \cref{easy-eats-decomp} and the induction hypothesis:
	\begin{align*}
		\eats{\rea{\overline\delta}{\omega}}(P)
		&=\eats{\gamma}_\epsilon(\eats{\rea{\overline\beta}{\omega}}_\epsilon(\eats{\rea{\overline\alpha}{\omega}}(P)))
		=\eats{\gamma'}_\epsilon(\eats{\rea{\hat\beta}{\omega}}_\epsilon(\eats{\rea{\hat\alpha}{\omega}}(P)))
		=\eats{\rea{\hat\delta}{\omega}}(P)\,,&&\mbox{and}\\
		\eats{\rea{\overline\delta}{\omega}}_\epsilon(P)
		&=\eats{\gamma}_\epsilon(\eats{\rea{\overline\beta}{\omega}}_\epsilon(\eats{\rea{\overline\alpha}{\omega}}_\epsilon(P)))
		=\eats{\gamma'}_\epsilon(\eats{\rea{\hat\beta}{\omega}}_\epsilon(\eats{\rea{\hat\alpha}{\omega}}_\epsilon(P)))
		=\eats{\rea{\hat\delta}{\omega}}_\epsilon(P)&&\mbox{for all $P\subseteq Q$.}
	\end{align*}	
	Item 2 follows directly from the induction hypothesis and the equality $\up(\gamma)=\up(\gamma')$.
	We observe that the definition of well-formedness for $\overline\delta$ uses only the objects $\up(\gamma)$, $\up(\rea{\overline\beta}{1})$,
	$\eats{\rea{\overline\alpha}{\omega}}$, $\eats{\rea{\overline\beta}{\omega}}$, $\eats{\rea{\overline\beta}{\omega}}_\epsilon$, and $\eats{\gamma}_\epsilon$,
	and for each of them we already have equality with analogous object for the other decomposition; Item 3 follows.

	In order to obtain Item 4, we prove that
	\begin{align*}
		r(\rea{\overline\beta}{\omega})^\omega\gamma\eta\approx_k r(\rea{\hat\beta}{\omega})^\omega\gamma'\eta && \mbox{for all $r\in Q$ and $k\in\Nat$.}
	\end{align*}
	We proceed by induction on $k$.
	For $k=0$ this is trivial, as all configurations are related by $\approx_0$.
	Suppose now that the thesis holds for some $k$, and we want to prove it for $k+1$.
	As observed on \cpageref{expand-equiv}, we have $r(\rea{\overline\beta}{\omega})^\omega\gamma\eta\approx r\rea{\overline\beta}{\omega}(\rea{\overline\beta}{\omega})^\omega\gamma\eta$
	and $r(\rea{\hat\beta}{\omega})^\omega\gamma'\eta\approx r\rea{\hat\beta}{\omega}(\rea{\hat\beta}{\omega})^\omega\gamma'\eta$,
	and by the induction hypothesis of the external induction we have
	$r\rea{\hat\beta}{\omega}(\rea{\hat\beta}{\omega})^\omega\gamma'\eta\approx r\rea{\overline\beta}{\omega}(\rea{\hat\beta}{\omega})^\omega\gamma'\eta$.
	We thus have to prove that
	\begin{align*}
		r\rea{\overline\beta}{\omega}(\rea{\overline\beta}{\omega})^\omega\gamma\eta\approx_{k+1} r\rea{\overline\beta}{\omega}(\rea{\hat\beta}{\omega})^\omega\gamma'\eta\,.
	\end{align*}
	If $\eats{\rea{\overline\beta}{\omega}}_\epsilon(r)=\emptyset$, this follows by \cref{more-equiv-if-read} from the induction hypothesis saying that
	$r'(\rea{\overline\beta}{\omega})^\omega\gamma\eta\approx_k r'(\rea{\hat\beta}{\omega})^\omega\gamma'\eta$ for all $r'\in\eats{\rea{\overline\beta}{\omega}}(r)\subseteq Q$.
	Suppose thus that $\eats{\rea{\overline\beta}{\omega}}_\epsilon(r)\neq\emptyset$,
	and take some state $r'\in\eats{\rea{\overline\beta}{\omega}}_\epsilon(r)\subseteq\eats{\rea{\overline\beta}{\omega}}_\epsilon(Q)$.
	By well-formedness of $\overline\delta$ we have $r'\in\eats{\rea{\overline\beta}{\omega}}_\epsilon(r')=\eats{\rea{\hat\beta}{\omega}}_\epsilon(r')$
	and $s\in\eats{\gamma}_\epsilon(r')=\eats{\gamma'}_\epsilon(r')$ for some $s$.
	In consequence
	$r\rea{\overline\beta}{\omega}(\rea{\overline\beta}{\omega})^\omega\gamma\eta
	\to_\epsilon^*r'(\rea{\overline\beta}{\omega})^\omega\gamma\eta
	\to_\epsilon^*r'\gamma\eta
	\to_\epsilon^*s\eta$ and
	$r\rea{\overline\beta}{\omega}(\rea{\hat\beta}{\omega})^\omega\gamma'\eta
	\to_\epsilon^*r'(\rea{\hat\beta}{\omega})^\omega\gamma'\eta
	\to_\epsilon^*r'\gamma'\eta
	\to_\epsilon^*s\eta$,
	so we even have
	$r\rea{\overline\beta}{\omega}(\rea{\overline\beta}{\omega})^\omega\gamma\eta\approx s\eta\approx\rea{\overline\beta}{\omega}(\rea{\hat\beta}{\omega})^\omega\gamma'\eta$.
	
	Because $\bigcap_{k\in\Nat}(\approx_k)=(\approx)$, from the above it follows that
	$r(\rea{\overline\beta}{\omega})^\omega\gamma\eta\approx r(\rea{\hat\beta}{\omega})^\omega\gamma'\eta$ for all $r\in Q$,
	so $q\rea{\hat\alpha}{\omega}(\rea{\overline\beta}{\omega})^\omega\gamma\eta\approx q\rea{\hat\alpha}{\omega}(\rea{\hat\beta}{\omega})^\omega\gamma'\eta$
	by \cref{equiv-because-equiv-below}.
	We conclude using the induction hypothesis, saying that
	$q\rea{\overline\delta}{\omega}\eta
	=q\rea{\overline\alpha}{\omega}(\rea{\overline\beta}{\omega})^\omega\gamma\eta
	\approx q\rea{\hat\alpha}{\omega}(\rea{\overline\beta}{\omega})^\omega\gamma\eta$.
\end{proof}

\section{Runs going down}\label{S Runs down}

In this section we concentrate on runs ``going down'', that is, runs oriented on popping the stack.
Such runs are called $\Dc$-almost-popping runs, and classes reachable by them are collected in $\Below(\cdot)$ sets.
We also prove basic properties of these notions.
As in previous sections, we assume some fixed \PDS $\Pp=(Q,\Gamma,\Act,\Delta)$.

Let $\Dc=(\Bc+3)\cdot\Bc\Cc^{2|Q|+1}+2$ (it turns out that this value of $\Dc$ is suitable for our proofs presented in the next section).
Note that $\Dc\in 2^{|\Pp|^{O(1)}}$.
We say that a run $\pi$ is \emph{$\Dc$-almost-popping} if there is no run $\pi'_\Dc$ that is parallel to a suffix of $\pi$ and satisfies $\sgrow(\pi'_\Dc)\geq\Dc$.
For a set $\Omega$ of classes, we define $\Below(\Omega)$ to be the set of classes $\classC$ for which
there is a $\Dc$-almost-popping run $\pi$ from a configuration in a class in $\Omega$ to a configuration in $\classC$.
When $\Omega$ is a set of configurations (instead of classes), we abbreviate $\Below(\set{\class{c}\mid c\in\Omega})$ into $\Below(\Omega)$,
and for a single configuration $c$ we abbreviate $\Below(\set{c})$ into $\Below(c)$.

Observe that if $\pi$ is $\Dc$-almost-popping, then every run parallel to $\pi$ is $\Dc$-almost-popping as well.
It follows that if from a configuration $c$ there exists a $\Dc$-almost-popping run to a configuration in a class $\classC$,
then such a run exists from every configuration $c'\in\class{c}$.
Observe also that every suffix of a $\Dc$-almost-popping run is $\Dc$-almost-popping as well.
The next four \lcnamecrefs{few-conf-without-pushing} bound the size of the $\Below(\cdot)$ sets in specific situations:

\begin{lemma}\label{few-conf-without-pushing}
	Let $\overline\delta$ be a well-formed decomposition of degree at most $\Cc$ and height $\ell$, and let $\eta\in\Gamma^\gs$.
	Then there are at most $\Cc^\ell\cdot|\Pp|^{\Dc+1}$ classes $\classC$ for which there is a $\Dc$-almost-popping run $\pi$ that
	starts in $p\rea{\overline\delta}{\omega}\eta$ for some $p\in Q$, leads to a configuration in $\classC$, and can be shifted to a run from $p\rea{\overline\delta}{\omega}$.
	In particular, for every $P\subseteq Q$,
	\begin{align*}
		|\Below(\set{p\rea{\overline\delta}{\omega}\eta\mid p\in P})|\leq|\Below(\set{r\eta\mid r\in\eats{\rea{\overline\delta}{\omega}}(P)})|+\Cc^\ell\cdot|\Pp|^{\Dc+1}\,.
	\end{align*}
\end{lemma}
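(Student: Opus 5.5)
We argue by induction on the structure of $\overline\delta$, proving the first (counting) claim and then deriving the ``in particular'' inequality from it. For the derivation: take a $\Dc$-almost-popping run $\pi$ from some $p\rea{\overline\delta}{\omega}\eta$ with $p\in P$. Either $\pi$ never uncovers $\eta$, so it can be shifted to a run from $p\rea{\overline\delta}{\omega}$ and its target class is among the $\Cc^\ell\cdot|\Pp|^{\Dc+1}$ counted ones. Or $\pi$ visits some first configuration $r\eta$. Then $r\in\eats{\rea{\overline\delta}{\omega}}(p)$, and the suffix of $\pi$ from $r\eta$ is $\Dc$-almost-popping, so the target class lies in $\Below(\set{r\eta\mid r\in\eats{\rea{\overline\delta}{\omega}}(P)})$.

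For the counting claim, we must bound the number of classes of endpoints of $\Dc$-almost-popping runs from $p\rea{\overline\delta}{\omega}\eta$ that stay above $\eta$. Such an endpoint is $s\xi\kappa\eta$, where $\kappa$ is a suffix of $\rea{\overline\delta}{\omega}$ "still untouched" and $\xi$ is what was pushed on top. Since the run is almost-popping, its own suffix since the last visit to the minimal-height point cannot have $\sgrow\geq\Dc$; hence $|\xi|\leq\Dc$. In particular $s\xi$ lies in a set of at most $|\Pp|^{\Dc+1}$ possibilities (states times words of length $\le\Dc$, with $|\Pp|\geq|Q|,|\Gamma|+1$). It then remains to count the possible "remaining suffixes" $\kappa$ of $\rea{\overline\delta}{\omega}$ up to bisimilarity of the resulting configuration. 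We aim to show there are at most $\Cc^\ell$ of them.

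We prove this by induction on $\ell$. For a single symbol the suffixes are $X$ and $\epsilon$; note that $\epsilon$ would mean $\eta$ is uncovered, which is excluded. For a sequence $\overline\delta_1\dots\overline\delta_k$ with $k\leq\Cc$, the suffix is $\kappa_i\rea{\overline\delta_{i+1}}{\omega}\dots\rea{\overline\delta_k}{\omega}$ with $\kappa_i$ a remaining suffix of $\rea{\overline\delta_i}{\omega}$. By induction this gives at most $k\cdot\Cc^{\ell-1}\leq\Cc^\ell$ options. For a pumping triple $\decomp{\overline\alpha,\overline\beta,\gamma}$, the stack can only decrease into $\rea{\overline\alpha}{\omega}$, then into copies of $\rea{\overline\beta}{\omega}$ which leave $(\rea{\overline\beta}{\omega})^\omega\gamma$ below. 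By the transition rules for $(\cdot)^\omega$, popping one copy returns the configuration to the same $\omega$-expression form (using $p\alpha^\omega\gamma\approx p\alpha\alpha^\omega\gamma$). Alternatively, the whole $(\rea{\overline\beta}{\omega})^\omega$ is popped by an $\epsilon$-step, and then by well-formedness Item~3 an $\epsilon$-run pops $\gamma$ and reaches $\eta$. Configurations on that $\epsilon$-run are bisimilar to the one with $\eta$ uncovered, so they are excluded or can be accounted for; this needs care and may require the first-claim formulation to count classes rather than configurations. Thus remaining suffixes are either a remaining suffix of $\rea{\overline\alpha}{\omega}$ followed by $(\rea{\overline\beta}{\omega})^\omega\gamma$, or a remaining suffix of $\rea{\overline\beta}{\omega}$ followed by $(\rea{\overline\beta}{\omega})^\omega\gamma$. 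This gives at most $2\Cc^{\ell-1}\leq\Cc^\ell$ options since $\Cc\geq 2$.

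The main obstacle is making the case of the triple rigorous. We must ensure that the point where the run pops the middle part is captured properly, and that intermediate configurations inside $\gamma$ (reached only by deterministic $\epsilon$-runs) do not add new classes beyond those bisimilar to configurations already counted. We handle this via the determinism fact that $c\to_\epsilon^*d$ implies $c\approx d$. That fact lets us replace such a configuration by the one just before entering $\gamma$, whose suffix is $(\rea{\overline\beta}{\omega})^\omega\gamma$ with the $\overline\beta$-part fully consumed (counted as the empty remaining suffix of a copy). The top-part bound $|\xi|\leq\Dc$ also needs the observation that a suffix of $\pi$ starting at the minimum-height point is its own parallel run, so the definition of almost-popping applies directly.
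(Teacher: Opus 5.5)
Your argument is correct in substance, but it is organized differently from the paper's proof.

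\textbf{Your route.} You factor every endpoint globally as $s\xi\kappa\eta$ and bound the two factors separately. The top word satisfies $|\xi|\le\Dc$, by the almost-popping property of the suffix after the last ``minimal'' point. There are at most $\Cc^\ell$ frontier suffixes $\kappa$: one per leaf of $\overline\alpha$ and of a single copy of $\overline\beta$, recursively.

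\textbf{The paper's route.} The paper proves the whole bound $\Cc^\ell\cdot|\Pp|^{\Dc+1}$ by structural induction. The factor $|\Pp|^{\Dc+1}$ arises only in the base case $\overline\delta=X$, where $\sgrow(\pi)<\Dc$ gives $|\mu|\le\Dc$. In the inductive steps it cuts $\pi$ at a suitable configuration and applies the induction hypothesis with $\eta$ replaced by the remaining generalized stack, e.g.\ $\rea{\overline\delta_{i+1}}{\omega}\dots\rea{\overline\delta_k}{\omega}\eta$ or $(\rea{\overline\beta}{\omega})^\omega\gamma\eta$. This is why the lemma is stated for arbitrary $\eta\in\Gamma^\gs$. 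In the pumping-triple case it takes the \emph{last} visit to some $r(\rea{\overline\beta}{\omega})^\omega\gamma\eta$. It then switches to the parallel run from $r\rea{\overline\beta}{\omega}(\rea{\overline\beta}{\omega})^\omega\gamma\eta$, which can be shifted to a run from $r\rea{\overline\beta}{\omega}$. Runs that pop $(\rea{\overline\beta}{\omega})^\omega$ are cut just before that $\epsilon$-step, exactly as you propose.

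\textbf{What each buys.} The local formulation frees the paper from ever defining a frontier or a ``minimal-height point'' on generalized stacks. There heights are not numbers, and copies of $\rea{\overline\beta}{\omega}$ get unrolled repeatedly. Your factorization makes the shape of the bound more transparent, but to be rigorous it needs a few fixes.
\begin{itemize}
\item $\kappa$ should be the part strictly \emph{below} the deepest touched leaf. For $\overline\delta=X$ this means $\kappa=\epsilon$ with a nonempty $\xi$ replacing $X$, not $\kappa=X$.
\item An ``empty remaining suffix'' of a component must be identified with the next, unrolled component via $p\alpha^\omega\gamma\approx p\alpha\alpha^\omega\gamma$. Otherwise the count is off by one.
\item Endpoints in the middle of the $\epsilon$-popping of $\gamma$ must be counted, not excluded. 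They are handled by their bisimilarity with the configuration just before the $\omega$-pop.
\item When the last minimal point is an exposed $(\rea{\overline\beta}{\omega})^\omega$, the unrolled configuration never literally occurs on $\pi$. You must use that $\sgrow$ of the following step (of the second kind in the definition of $\to_a$ on $Q\Gamma^\gs$) equals that of the corresponding step on the first, standard, symbol of $\rea{\overline\beta}{\omega}$. Then the suffix of $\pi$ from that point still has $\sgrow=|\xi|-1<\Dc$.
\end{itemize}
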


\begin{proof}
	The second part of the lemma easily follows from the first part.
	Indeed, consider a class $\classC\in\Below(\set{p\rea{\overline\delta}{\omega}\eta\mid p\in P})$.
	By definition, there exists a $\Dc$-almost-popping run from $p\rea{\overline\delta}{\omega}\eta$ for some $p\in P\subseteq Q$ to a configuration in $\classC$.
	If $\pi$ can be shifted to a run from $p\rea{\overline\delta}{\omega}$,
	then $\classC$ is among the at most $\Cc^\ell\cdot|\Pp|^{\Dc+1}$ classes counted by the first part of the lemma.
	If not, then $\pi$ reaches a configuration with stack $\eta$, at some moment.
	For the first such configuration $r\eta$ we have $r\in\eats{\rea{\overline\delta}{\omega}}(P)$,
	so the second part of $\pi$, from $r\eta$ to a configuration in $\classC$, witnesses that $\classC\in\Below(\set{r\eta\mid r\in\eats{\rea{\overline\delta}{\omega}}(P)})$.
	
	We prove the first part by induction on the structure of $\overline\delta$.
	Suppose first that $\overline\delta=X\in\Gamma$, and consider a $\Dc$-almost-popping run $\pi$ that starts in $pX\eta$ for some $p\in Q$, and can be shifted to a run from $pX$.
	Necessarily $\sgrow(\pi)<\Dc$ (because $\pi$ itself is also parallel to a suffix of $\pi$).
	Thus $\pi$ leads to a configuration $q\mu\eta$ with $|\mu|=|X|+\sgrow(\pi)\leq\Dc$.
	The number of such configurations $q\mu\eta$, hence also the number of classes of these configurations, is at most
	$|Q|\cdot\sum_{i=0}^{\Dc}|\Gamma|^i\leq|Q|\cdot(|\Gamma|+1)^\Dc\leq|\Pp|^{\Dc+1}$ (recall that $\eta$ is fixed).
	
	Next, suppose that $\overline\delta=\overline\delta_1\dots\overline\delta_k$.
	For every $\Dc$-almost-popping run that starts in $p\rea{\overline\delta}{\omega}\eta$ and can be shifted to a run from $p\rea{\overline\delta}{\omega}$
	we can find the smallest $i\in\interval{1,k}$ such that $\pi$ can be shifted to a run from $p\rea{\overline\delta_1}{\omega}\dots\rea{\overline\delta_i}{\omega}$.
	By minimality of $i$, the run $\pi$ crosses a configuration of the form $p'\rea{\overline\delta_i}{\omega}\rea{\overline\delta_{i+1}}{\omega}\dots\rea{\overline\delta_k}{\omega}\eta$;
	then the suffix $\pi'$ of $\pi$ starting there
	is a $\Dc$-almost-popping run that starts in $p'\rea{\overline\delta_i}{\omega}\rea{\overline\delta_{i+1}}{\omega}\dots\rea{\overline\delta_k}{\omega}\eta$
	and can be shifted to a run from $p'\rea{\overline\delta_i}{\omega}$.
	By the induction hypothesis, the number of classes reached by such runs $\pi'$, for every fixed $i$, is at most $\Cc^{\ell-1}\cdot|\Pp|^{\Dc+1}$.
	There are $\Cc$ possible values for $i$, so in total we have at most $\Cc^\ell\cdot|\Pp|^{\Dc+1}$ classes.

	Finally, suppose that $\overline\delta=\decomp{\overline\alpha,\overline\beta,\gamma}$.
	For every $\Dc$-almost-popping run $\pi$ that starts in $p\rea{\overline\delta}{\omega}\eta$ and can be shifted to a run from $p\rea{\overline\delta}{\omega}$
	we have one of the following possibilities:
	\begin{compactitem}
	\item
		Maybe $\pi$ can be shifted to a run from $p\rea{\overline\alpha}{\omega}$.
		By the induction hypothesis, the number of classes reached by such runs $\pi$ is at most $\Cc^{\ell-1}\cdot|\Pp|^{\Dc+1}$.
	\item
		Maybe $\pi$ cannot be shifted to a run from $p\rea{\overline\alpha}{\omega}$
		(hence $\pi$ visits a configuration with stack $(\rea{\overline\beta}{\omega})^\omega\gamma\eta$),
		but $\pi$ never pops $(\rea{\overline\beta}{\omega})^\omega$ going to a configuration with stack $\gamma\eta$.
		Consider the last moment when $\pi$ visits a configuration of the form $r(\rea{\overline\beta}{\omega})^\omega\gamma\eta$;
		let $\pi'$ be the suffix of $\pi$ starting at this moment.
		Let also $\pi''$ be the run starting in $r\rea{\overline\beta}{\omega}(\rea{\overline\beta}{\omega})^\omega\gamma\eta$
		and performing exactly the same transitions as $\pi'$;
		after performing the first transition the two runs coincide.
		We have $r\rea{\overline\beta}{\omega}(\rea{\overline\beta}{\omega})^\omega\gamma\eta\approx r(\rea{\overline\beta}{\omega})^\omega\gamma\eta$,
		so $\pi'$ ans $\pi''$ are parallel;
		in particular $\pi''$ is $\Dc$-almost-popping.
		Moreover, $\pi''$ can be shifted to a run from $r\rea{\overline\beta}{\omega}$,
		because otherwise $\pi''$ and $\pi'$ would visit a configuration with stack $(\rea{\overline\beta}{\omega})^\omega\gamma\eta$ again,
		contradicting the maximality of the starting point.
		Thus, by the induction hypothesis, the number of classes reached by such runs $\pi''$ is at most $\Cc^{\ell-1}\cdot|\Pp|^{\Dc+1}$.
	\item
		Finally, maybe at some moment $\pi$ reaches $r(\rea{\overline\beta}{\omega})^\omega\gamma\eta$,
		then goes to $r\gamma\eta$, and then continues somehow.
		Because $(\rea{\overline\beta}{\omega})^\omega$ could be popped, we have $r\in\eats{\rea{\overline\beta}{\omega}}_\epsilon(r)$,
		hence by well-formedness of $\overline\delta$ there exists some $s\in\eats{\gamma}_\epsilon(r)$;
		in other words, $r\gamma\to_\epsilon^*s$ for some state $s$.	
		Recall that $\pi$ can be shifted to a run from $p\rea{\overline\delta}{\omega}$, hence its suffix after visiting $r\gamma\eta$ can be shifted to a run $\rho$ from $r\gamma$.
		Thus, by determinism of $\epsilon$-transitions, $\rho$ consists entirely of $\epsilon$-transitions.
		In consequence, the prefix $\pi'$ of $\pi$ ending in $r(\rea{\overline\beta}{\omega})^\omega\gamma\eta$
		leads to the same class as $\pi$;
		moreover, $\pi'$ is parallel to $\pi$, so it is $\Dc$-almost-popping as well.
		But $\pi'$ satisfies assumptions of the previous case;
		no new classes can be reached.
	\end{compactitem}
	Thus, the total number of classes under consideration is at most $2\Cc^{\ell-1}\cdot|\Pp|^{\Dc+1}\leq\Cc^\ell\cdot|\Pp|^{\Dc+1}$.
\end{proof}

\begin{lemma}\label{few-conf-without-pushing-seq}
	Let $\overline\delta_1,\dots,\overline\delta_k$ be well-formed decompositions of degree at most $\Cc$ and height $\ell$, and let $\eta\in\Gamma^\gs$.
	Then there are at most $k\cdot\Cc^\ell\cdot|\Pp|^{\Dc+1}$ classes $\classC$ for which there is a $\Dc$-almost-popping run $\pi$ that
	starts in $p\rea{\overline\delta_1}{\omega}\dots\rea{\overline\delta_k}{\omega}\eta$ for some $p\in Q$,
	leads to a configuration in $\classC$, and can be shifted to a run from $p\rea{\overline\delta_1}{\omega}\dots\rea{\overline\delta_k}{\omega}$.
\end{lemma}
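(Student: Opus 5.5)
This is essentially the sequence case of the first part of \cref{few-conf-without-pushing}, repeated here without the requirement that the sequence forms a single decomposition of bounded degree (so $k$ may exceed $\Cc$). The plan is to classify each relevant run by which segment it reaches deepest and to apply \cref{few-conf-without-pushing} to that segment.

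Consider a $\Dc$-almost-popping run $\pi$ starting in $p\rea{\overline\delta_1}{\omega}\dots\rea{\overline\delta_k}{\omega}\eta$ that can be shifted to a run from $p\rea{\overline\delta_1}{\omega}\dots\rea{\overline\delta_k}{\omega}$. We assign to $\pi$ the smallest index $i\in\interval{1,k}$ such that $\pi$ can be shifted to a run from $p\rea{\overline\delta_1}{\omega}\dots\rea{\overline\delta_i}{\omega}$. Such an $i$ exists because $i=k$ works by assumption.

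If $i=1$, the run $\pi$ itself starts in $p\rea{\overline\delta_1}{\omega}\eta_1$, where $\eta_1=\rea{\overline\delta_2}{\omega}\dots\rea{\overline\delta_k}{\omega}\eta$. It can be shifted to a run from $p\rea{\overline\delta_1}{\omega}$, so it is counted by the first part of \cref{few-conf-without-pushing} applied to $\overline\delta_1$ and $\eta_1$.

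If $i\geq 2$, minimality of $i$ implies that $\pi$ cannot be shifted to a run from $p\rea{\overline\delta_1}{\omega}\dots\rea{\overline\delta_{i-1}}{\omega}$. Hence $\pi$ visits a configuration whose stack is $\eta_i=\rea{\overline\delta_i}{\omega}\dots\rea{\overline\delta_k}{\omega}\eta$. Let $p'\eta_i$ be the first such configuration and let $\pi'$ be the suffix of $\pi$ from there. Since $\pi$ is shiftable to a run from $p\rea{\overline\delta_1}{\omega}\dots\rea{\overline\delta_i}{\omega}$, the suffix $\pi'$ never pops below $\rea{\overline\delta_i}{\omega}$. So $\pi'$ starts in $p'\rea{\overline\delta_i}{\omega}\eta'$, with $\eta'=\rea{\overline\delta_{i+1}}{\omega}\dots\rea{\overline\delta_k}{\omega}\eta$, and can be shifted to a run from $p'\rea{\overline\delta_i}{\omega}$. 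It is $\Dc$-almost-popping because suffixes of such runs are, and it ends in the same class as $\pi$. The first part of \cref{few-conf-without-pushing}, applied to $\overline\delta_i$ and $\eta'$, bounds the number of such classes by $\Cc^\ell\cdot|\Pp|^{\Dc+1}$, uniformly over the start state $p'$.

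Summing over the $k$ choices of $i$ gives the bound $k\cdot\Cc^\ell\cdot|\Pp|^{\Dc+1}$. The statement assumes height exactly $\ell$; if heights are only at most $\ell$, the bound still holds since $\Cc^{\ell'}\leq\Cc^\ell$ for $\ell'\leq\ell$.

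The only delicate point is justifying the shift for $\pi'$. It requires that the generalized transitions of Items 2 and 3 of the definition of $\to_a$ act only on the topmost portion of the stack, so that not popping below $\rea{\overline\delta_i}{\omega}$ lets the whole run be performed with $\eta'$ removed. This is exactly the same reasoning used in the sequence case of the proof of \cref{few-conf-without-pushing}, so I expect no real obstacle.
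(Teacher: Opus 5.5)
Your proof is correct and follows the paper's argument exactly. You take the minimal index $i$ for which $\pi$ is shiftable to a run from $p\rea{\overline\delta_1}{\omega}\dots\rea{\overline\delta_i}{\omega}$, pass to the suffix starting at the first configuration with stack $\rea{\overline\delta_i}{\omega}\dots\rea{\overline\delta_k}{\omega}\eta$, apply the first part of \cref{few-conf-without-pushing} to $\overline\delta_i$ (uniformly in the starting state), and sum over the $k$ values of $i$. The point you flag as delicate is immediate: any suffix of a run that is shiftable by removing $\rea{\overline\delta_{i+1}}{\omega}\dots\rea{\overline\delta_k}{\omega}\eta$ is shiftable by removing the same suffix. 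The fact that transitions act only on the topmost element is really used in the earlier step, where minimality of $i$ forces a visit to a configuration with stack $\rea{\overline\delta_i}{\omega}\dots\rea{\overline\delta_k}{\omega}\eta$.
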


\begin{proof}
	As in the proof of \cref{few-conf-without-pushing} (the case of $\overline\delta=\overline\delta_1\dots\overline\delta_k$)
	we observe that for every class $\classC$ under consideration there
	is a $\Dc$-almost-popping run that starts in $p'\rea{\overline\delta_i}{\omega}\rea{\overline\delta_{i+1}}{\omega}\dots\rea{\overline\delta_k}{\omega}\eta$,
	ends in a configuration in $\classC$,
	and can be shifted to a run from $p'\rea{\overline\delta_i}{\omega}$, for some $i\in\interval{1,k}$.
	By \cref{few-conf-without-pushing}, the number of such classes for every fixed $i$ is at most $\Cc^\ell\cdot|\Pp|^{\Dc+1}$.
	There are $k$ possible values for $i$, so in total we have at most $k\cdot\Cc^\ell\cdot|\Pp|^{\Dc+1}$ classes.
\end{proof}

\begin{lemma}\label{Lbelowbelow}
	Let $\pi$ be a $\Dc$-almost-popping run.
	Then there are at most $|\Pp|^{2\Bc\Dc}-1$ classes $\classC$ for which there exists a $\Dc$-almost-popping run $\pi'$ from $\pi(|\pi|)$ to a configuration in $\classC$,
	but the composition $\pi\circ\pi'$ is not $\Dc$-almost-popping.
\end{lemma}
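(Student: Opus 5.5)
We have to bound the number of classes $\classC$ reached by a $\Dc$-almost-popping run $\pi'$ from $c=\pi(|\pi|)$ such that $\pi\circ\pi'$ is not $\Dc$-almost-popping. By definition, non-almost-popping of $\pi\circ\pi'$ means that some suffix $\sigma$ of $\pi\circ\pi'$ has a parallel run $\sigma'$ with $\sgrow(\sigma')\geq\Dc$. Since $\pi'$ itself is $\Dc$-almost-popping, $\sigma$ cannot be a suffix of $\pi'$ alone, so $\sigma=\sigma_1\circ\pi'$ with $\sigma_1$ a nonempty suffix of $\pi$. Since $\pi$ is $\Dc$-almost-popping, the part of $\sigma'$ parallel to $\sigma_1$ has stack growth $<\Dc$. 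The idea is that the part $\sigma'_2$ of $\sigma'$ parallel to $\pi'$ must therefore ``catch up'' by pushing a lot, and such catching up forces $\pi'$ to read few action symbols. More precisely, I would pick the witness $\sigma'$ with the truncation trick: take the shortest prefix of $\sigma'$ whose growth reaches exactly $\Dc$ relative to its lowest point (i.e.\ a suffix–prefix segment gaining exactly $\Dc$ symbols net). Then the segment of $\sigma'$ parallel to $\pi'$ starts at a configuration $d$ with $d\approx c$ and ends at a configuration $d'\approx$ the end of $\pi'$.

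Key step: show that the class $\classC$ can be reached from $c$ reading few action symbols, namely at most $2\Bc\Dc-1$, and then apply \cref{determ short run}, which gives at most $|\Pp|^{2\Bc\Dc-1}\leq|\Pp|^{2\Bc\Dc}-1$ classes. To get the short distance, I would work on the parallel run $\sigma'_2$ from $d$: since the whole of $\sigma'$ has stack growth at least $\Dc$ while its prefix parallel to $\sigma_1$ has growth $<\Dc$, and since I may restrict to the minimal offending suffix, the run $\sigma'_2$ ends with a stack whose height differs from that of $d$ by at most about $\Dc$, and the only part of the stack it touches is the top $O(\Dc)$ symbols (using minimality: if $\sigma'_2$ dipped deeper or went higher than necessary we could shorten the offending suffix, contradicting the choice). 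Hence $d\to^*d'$ with $d=p\alpha\theta$, $d'=q\beta\theta$, $|\alpha|,|\beta|\leq\Dc$, and \cref{can pop fast} yields $\dist(d,d')\leq 2\Dc\Bc$. Because $d\approx c$ and $d'\in\classC$, also $\dist(\class{c},\classC)\leq 2\Bc\Dc$; with a careful count (strict inequalities $<\Dc$ from almost-popping) one shaves this to $2\Bc\Dc-1$, or alternatively one counts $\Near$ sets with the $-1$ coming from excluding the trivial empty list.

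The main obstacle is the middle step: controlling the stack shape of $\sigma'_2$ so that \cref{can pop fast} applies with short $\alpha,\beta$. The subtlety is that $\sigma'$ lives among generalized configurations and parallel runs need not perform the same transitions, so the heights along $\sigma'$ are not tied to those along $\sigma$. I would first try choosing, among all witnesses, $\sigma$ with the shortest $\sigma_1$ and then cutting $\sigma'$ at the last point where its stack is lowest; the almost-popping property of $\pi$ (applied to the prefix of $\sigma'$ parallel to $\sigma_1$) bounds how low $d$ sits below the final top region, and the exactly-$\Dc$ truncation bounds the height of $d'$. If generalized stacks with $\omega$ interfere with \cref{can pop fast}, I would replace them by large finite realizations, which are weakly bisimilar for the bounded number of steps involved.
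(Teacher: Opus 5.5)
Your overall route is the paper's. You split the witness $\sigma'$ into $\sigma'_1$, parallel to a suffix of $\pi$, and $\sigma'_2$, parallel to $\pi'$. You then shift $\sigma'_2$ to a run between configurations with short stacks, bound $\dist(\class{\pi(|\pi|)},\classC)$ by \cref{can pop fast}, and count with \cref{determ short run}. Your middle paragraph has the right bounds: read ``shorten the offending suffix'' as ``the suffix of $\sigma'_2$ from its lowest point would contradict almost-popping of $\pi'$''.

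Where you go wrong is in resolving the ``main obstacle''. The exactly-$\Dc$ truncation cannot be what bounds the height of $d'$. A proper prefix or infix of $\sigma'$ is parallel to a proper prefix of $\sigma_1\circ\pi'$, so its endpoint is no longer in $\classC$. Cutting at the lowest point of all of $\sigma'$ is also the wrong cut. If that point lies in $\sigma'_1$, the final stack above it can have length near $2\Dc$, and the constant is lost.

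The paper cuts only inside $\sigma'_2$. Write $\sigma'_2=\rho_2\circ\rho_3$, where $\rho_3$ is the suffix of $\sigma'_2$ with maximal stack growth, i.e.\ it starts at the lowest point of $\sigma'_2$. Let $k_1=\sgrow(\sigma'_1)$, $k_2=\sgrow(\rho_2)$ and $k_3=\sgrow(\rho_3)$.
\begin{itemize}
\item Almost-popping of $\pi$ gives $k_1\le\Dc-1$.
\item Almost-popping of $\pi'$ gives $k_3\le\Dc-1$, since $\rho_3$ is parallel to a suffix of $\pi'$.
\item Together with $k_1+k_2+k_3\ge\Dc$, this forces $k_1+k_2\ge 1$.
\end{itemize}
Hence $\rho_2(0)=p\alpha\eta$, $\rho_3(0)=qX\eta$ and $\rho_3(|\rho_3|)=r\beta\eta$, with $\eta$ never touched. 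Moreover $|\alpha|=1-k_2\le k_1\le\Dc-1$ and $|\beta|=k_3+1\le\Dc$. This gives distance at most $\Bc(2\Dc-1)$, hence at most $|\Pp|^{\Bc(2\Dc-1)}\le|\Pp|^{2\Bc\Dc}-1$ classes. You need neither minimality of the offending suffix nor an extra counting trick for the $-1$.

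The same inequality $k_1+k_2\ge 1$ settles your worry about generalized stacks. First, $\sgrow(\sigma')\ge\Dc$ rules out $-\omega$ steps. Second, the lowest point of $\sigma'_2$ lies strictly above the start of $\sigma'$. So every symbol of $\alpha$, $X$ and $\beta$ was written by a transition of $\sigma'$. Thus $\alpha,\beta\in\Gamma^*$, and \cref{can pop fast} applies directly. Your fallback would not work as stated anyway: you propose replacing $\omega$-expressions by finite realizations that are ``bisimilar for the bounded number of steps involved''. But $\sigma'_2$ follows an arbitrary $\pi'$, so its length is not bounded.
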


\begin{proof}
	We prove that every class $\classC$ under consideration satisfies $\dist(\class{\pi(|\pi|)},\classC)\leq\Bc\cdot(2\Dc-1)$;
	by \cref{determ short run} this bounds the number of possible classes $\classC$ by $|\Pp|^{\Bc\cdot(2\Dc-1)}\leq|\Pp|^{2\Bc\Dc}-1$.
	
	Consider thus a $\Dc$-almost-popping run $\pi'$ that starts in $\pi(|\pi|)$, ends in a class $\classC$, and is such that $\pi\circ\pi'$ is not $\Dc$-almost-popping.
	By definition we then have a run $\rho_\Dc$ that is parallel to a suffix of $\pi\circ\pi'$ and satisfies $\sgrow(\rho_\Dc)\geq\Dc$.
	Because $\pi'$ is $\Dc$-almost-popping, $\rho_\Dc$ cannot be parallel to a suffix of $\pi'$.
	We can thus represent $\rho_\Dc$ as $\rho_1\circ\rho_\Dc'$, where $\rho_1$ is parallel to a suffix of $\pi$, and $\rho_\Dc'$ is parallel to $\pi'$.
	Let also $\rho_3$ be a suffix of $\rho_\Dc'$ for which $\sgrow(\rho_3)$ is maximal;
	let us write $\rho_\Dc'=\rho_2\circ\rho_3$.
	Denote $k_i=\sgrow(\rho_i)$ for $i\in\set{1,2,3}$.
	Because $\pi$ and $\pi'$ are $\Dc$-almost-popping, necessarily $k_1\leq D-1$ and $k_3\leq D-1$ (because $\rho_1$ and $\rho_3$ are parallel to suffixes of $\pi$ and $\pi'$, respectively).
	Recall also that $\sgrow(\rho_\Dc)=k_1+k_2+k_3\geq\Dc$, so $k_1+k_2\geq 1$,
	implying that on the top of the stack of $\rho_3(0)$ we have a real stack symbol, pushed there by $\rho_1\circ\rho_2$ (i.e., not an expression of the form $\gamma^\omega$).
	Maximality of $\sgrow(\rho_3)$ means that we can then write $\rho_2(0)=p\alpha\eta$, $\rho_3(0)=qX\eta$, and $\rho_3(|\rho_3|)=r\beta\eta$ (for $\alpha,\beta\in\Gamma^*$),
	where $|X|-|\alpha|=k_2$ and $|\beta|-|X|=k_3$,
	and $\rho_2\circ\rho_3$ can be shifted to a run from $p\alpha$ to $r\beta$ (intuitively: the ``stack height'' during $\rho_2\circ\rho_3$ is minimal at $\rho_3(0)$).
	Due to $k_3\leq\Dc-1$ we have $|\beta|=k_3+1\leq\Dc$, and due to $k_1\leq D-1$ and $k_1+k_2\geq 1$ we have $|\alpha|=1-k_2\leq k_1\leq\Dc-1$.
	Thus
	\begin{align*}
		\dist(\class{\pi(|\pi|)},\classC)
		\leq\dist(\rho_2(0),\rho_3(|\rho_3|))
		\leq\dist(p\alpha,r\beta)
		\leq(|\alpha|+|\beta|)\cdot\Bc
		\leq\Bc\cdot(2\Dc-1)\,,
	\end{align*}
	as we wanted;
	the third inequality is by \cref{can pop fast}.
\end{proof}

\begin{lemma}\label{Cbelowbelow}
	For every set $\Omega$ of classes,
	\begin{align*}
		|\Below(\Below(\Omega))|\leq|\Below(\Omega)|\cdot|\Pp|^{2\Bc\Dc}\,.
	\end{align*}
\end{lemma}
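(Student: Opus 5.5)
The plan is to charge every class of $\Below(\Below(\Omega))$ either to a class of $\Below(\Omega)$ itself or to one of the at most $|\Pp|^{2\Bc\Dc}-1$ classes that \cref{Lbelowbelow} attaches to a class of $\Below(\Omega)$. This gives the bound $|\Below(\Omega)|\cdot(1+|\Pp|^{2\Bc\Dc}-1)$, which is exactly the claim.

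First, for every class $\classD\in\Below(\Omega)$ fix one witnessing $\Dc$-almost-popping run $\pi_\classD$ from a configuration in a class of $\Omega$ to a configuration $d_\classD\in\classD$.

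Now take any $\classC\in\Below(\Below(\Omega))$. By definition there are a class $\classD\in\Below(\Omega)$, a configuration $d\in\classD$, and a $\Dc$-almost-popping run $\pi'$ from $d$ to a configuration in $\classC$. By the observation after the definition of $\Below$, the class reachable by $\Dc$-almost-popping runs does not depend on the representative. So there is also a $\Dc$-almost-popping run $\pi''$ from $d_\classD$ to a configuration in $\classC$; concretely, it is a run parallel to $\pi'$, obtained from $d\approx d_\classD$ by the definition of weak bisimulation. There are two cases.
\begin{compactitem}
\item If the composition $\pi_\classD\circ\pi''$ is $\Dc$-almost-popping, it is a run from a class of $\Omega$ to $\classC$, so $\classC\in\Below(\Omega)$. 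All such $\classC$ together contribute at most $|\Below(\Omega)|$ classes; they are charged once each, not once per $\classD$.
\item Otherwise $\classC$ is one of the classes counted by \cref{Lbelowbelow} applied to the run $\pi_\classD$. There are at most $|\Pp|^{2\Bc\Dc}-1$ such classes for each fixed $\classD$.
\end{compactitem}
Summing over $\classD\in\Below(\Omega)$ gives
\begin{align*}
|\Below(\Below(\Omega))|\leq|\Below(\Omega)|+|\Below(\Omega)|\cdot(|\Pp|^{2\Bc\Dc}-1)=|\Below(\Omega)|\cdot|\Pp|^{2\Bc\Dc}\,.
\end{align*}

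The only delicate point is justifying the transfer of $\pi'$ to the fixed representative $d_\classD$. I will construct a run parallel to $\pi'$ step by step, using weak bisimulation for each segment reading at most one action symbol. I will then use the stated fact that runs parallel to $\Dc$-almost-popping runs are again $\Dc$-almost-popping; this holds because parallelism composes, so a run parallel to a suffix of $\pi''$ is also parallel to a suffix of $\pi'$.

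If $\Below(\Omega)$ is infinite, the inequality is trivial. Otherwise the argument above applies verbatim. Choosing the representatives $\pi_\classD$ needs no special care, since \cref{Lbelowbelow} holds for an arbitrary $\Dc$-almost-popping run.
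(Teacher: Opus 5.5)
Your proof is correct and essentially the same as the paper's. Both fix one witnessing run $\pi_\classD$ for each class of $\Below(\Omega)$, transfer the second run to its endpoint, and charge every class either to $\Below(\Omega)$ itself or to the at most $|\Pp|^{2\Bc\Dc}-1$ classes given by \cref{Lbelowbelow}; the only cosmetic difference is that the paper splits on whether the class already lies in $\Below(\Omega)$ (and deduces that the composition is not $\Dc$-almost-popping), whereas you split directly on whether the composition is $\Dc$-almost-popping, which yields the same union bound.
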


\begin{proof}
	For every class $\classC\in\Below(\Omega)$ we fix some $\Dc$-almost-popping run $\pi_\classC$ from a configuration in a class in $\Omega$ to a configuration in $\classC$.
	For every class $\classD\in\Below(\Below(\Omega))\setminus\Below(\Omega)$ we fix a class $\classC(\classD)\in\Below(\Omega)$ such that $\classD\in\Below(\classC(\classD))$,
	and we fix a $\Dc$-almost-popping run $\pi'_\classD$ from $\pi_{\classC(\classD)}(|\pi_{\classC(\classD)}|)$ to a configuration in $\classD$
	(recall that a $\Dc$-almost-popping run to a configuration in $\classD$ exists not only from some configuration in $\classC(\classD)$,
	but actually from every configuration in $\classC(\classD)$, in particular from $\pi_{\classC(\classD)}(|\pi_{\classC(\classD)}|)$).
	Because $\classD\not\in\Below(\Omega)$, the composition $\pi_{\classC(\classD)}\circ\pi'_\classD$ is not $\Dc$-almost-popping.
	Thus, for every class $\classC_0$, the number of classes $\classD\in\Below(\Below(\Omega))\setminus\Below(\Omega)$ with $\classC(\classD)=\classC_0$ is at most $|\Pp|^{2\Bc\Dc}-1$,
	by \cref{Lbelowbelow}.
	We then have at most $|\Below(\Omega)|$ choices for the class $\classC_0$.
	Additionally, $\Below(\Below(\Omega))$ may contain at most $|\Below(\Omega)|$ classes from $\Below(\Omega)$.
	We thus have $|\Below(\Below(\Omega))|\leq|\Below(\Omega)|\cdot(|\Pp|^{2\Bc\Dc}-1)+|\Below(\Omega)|$, as required.
\end{proof}

\section{The core of the upper bound proof}\label{sec:core}

Let us fix an \PDS $\Pp=(Q,\Gamma,\Act,\Delta)$ together with an initial configuration $\initstate\in Q\Gamma$.
Assuming that $(\Ll(\Pp),\initstate)$ is weakly bisimulation finite, we want to bound the number of its classes reachable from $\class{\initstate}$,
showing that it is in $2^{2^{|\Pp|^{O(1)}}}$.

Our first lemma says that if the \PDS is weakly bisimulation finite, then generalized configurations with $\omega$ repetitions of some stack fragment
are equivalent to configurations with a large enough number $e\in\Nat$ of repetitions:

\begin{lemma}\label{F is omega}
	Let $q\in Q$, let $\decomp{\overline\alpha,\overline\beta,\gamma}$ be a well-formed pumping triple, and let $\eta\in\Gamma^*$.
	If $\initstate\to^*q\rea{\overline\alpha}{1}\rea{\overline\beta}{1}\gamma\eta$,
	and if $(\Ll(\Pp),\initstate)$ is weakly bisimulation finite,
	then for some $e\in\Nat$ we have $r(\rea{\overline\beta}{\omega})^e\gamma\eta\approx r(\rea{\overline\beta}{\omega})^\omega\gamma\eta$ for all $r\in\eats{\rea{\overline\alpha}{\omega}}(q)$.
\end{lemma}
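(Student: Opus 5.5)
Let $F$ denote the number of weak bisimulation classes reachable from $\class{\initstate}$; it is finite by assumption. The plan is to combine this finiteness with the monotonicity from \cref{eqlev-grows-while-pumping}.

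First, for $i\in\Nat$ let
\[
M_i=\min\set{\eqlev(r(\rea{\overline\beta}{\omega})^i\gamma\eta,r(\rea{\overline\beta}{\omega})^\omega\gamma\eta)\mid r\in\eats{\rea{\overline\alpha}{\omega}}(q)}.
\]
By \cref{eqlev-grows-while-pumping} we have $M_{i+1}\geq 1+M_i$. Hence either some $M_e=\omega$, which is exactly the thesis, or $M_i\geq i$ for all $i$ with every $M_i$ finite. It therefore suffices to rule out the case that $M_i$ is finite for all $i$, or simply to show $M_e=\omega$ for some large $e$ such as $e=F+1$.

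Second, we transfer the problem to standard configurations, which are the ones reachable from $\initstate$. By well-formedness and \cref{epsilon-preserved}, $\up(\gamma\eta)=\up(\rea{\overline\beta}{1}\gamma\eta)$ gives $\up((\rea{\overline\beta}{1})^i\gamma\eta)=\up(\rea{\overline\beta}{1}\gamma\eta)$. The same holds with $\rea{\overline\beta}{e'}$ in place of $\rea{\overline\beta}{1}$, using $\up(\rea{\overline\beta}{e'})=\up(\rea{\overline\beta}{1})$. Consequently $q\rea{\overline\alpha}{e'}(\rea{\overline\beta}{e'})^i\gamma\eta$ is reachable from $\initstate$ for all $i,e'\geq1$: we reach $q\rea{\overline\alpha}{1}\rea{\overline\beta}{1}\gamma\eta$, decompose the run into the pushing parts, and replace the middle segment. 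Each $r\in\eats{\rea{\overline\alpha}{\omega}}(q)\subseteq\eats{\rea{\overline\alpha}{e'}}(q)$ (by \cref{epsilon-preserved}) then yields a reachable configuration $r(\rea{\overline\beta}{e'})^i\gamma\eta$. The remaining task is to relate $\rea{\overline\beta}{\omega}$ to $\rea{\overline\beta}{e'}$. For this we would show, by induction on the structure of the decomposition, that $\eqlev(p\rea{\overline\delta}{e'}\eta',p\rea{\overline\delta}{\omega}\eta')\geq$ something tending to $\infty$ with $e'$. This follows from the same mechanism: \cref{equiv-because-equiv-below} handles sequences, and \cref{eqlev-grows-while-pumping} handles triples.

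Third, and this is the main step, we use finiteness. In a finite $\varepsilon$-LTS with $F$ classes, $\approx_F$ coincides with $\approx$, and this property transfers to the reachable part of $\Ll(\Pp)$. Since image-finiteness holds, two configurations reachable from $\initstate$ with $\eqlev\geq F$ are bisimilar. Pick $e'$ so large that each standard approximation is $\approx_{F+M}$-close to its generalized version. Then $r(\rea{\overline\beta}{e'})^{e}\gamma\eta\approx_{F} r(\rea{\overline\beta}{e'})^{e''}\gamma\eta$ for large $e,e''$, and both configurations are reachable, so they are bisimilar.

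Fourth, we conclude with a pigeonhole argument. Among $F+1$ values of $i$, two configurations $r(\rea{\overline\beta}{\omega})^i\gamma\eta$ with equal $M$-behaviour are bisimilar to the corresponding reachable standard configurations and lie in the same class. This contradicts strict growth of $\eqlev$ with respect to the $\omega$-configuration unless that growth has reached $\omega$. The main obstacle is precisely this transfer between $\omega$-realizations and finite realizations, so that the finiteness of reachable classes applies to generalized configurations. We would handle it by the approximation lemma of step two, combined with $\approx=\bigcap_k\approx_k$.
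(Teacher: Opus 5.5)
Your route differs from the paper's, and it works after one repair to Step 4.

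\textbf{The paper's route.} The paper never uses the fact that $\approx_F$ and $\approx$ coincide on reachable configurations. Instead it proves, by induction on the decomposition and with the fixed exponent $e=F^{|Q|}$, the exact statement $q\rea{\overline\delta}{e}\eta\approx q\rea{\overline\delta}{\omega}\eta$ whenever $q\rea{\overline\delta}{e}\eta$ is reachable. In the triple case, the sequence case applied to $\overline\alpha\overline\beta\cdots\overline\beta$ shows that $q\rea{\overline\alpha}{\omega}(\rea{\overline\beta}{\omega})^e\gamma\eta$ lies in a reachable class. So every configuration reachable from it, in particular every $r(\rea{\overline\beta}{\omega})^i\gamma\eta$ with $i\le e$ and $r\in R=\eats{\rea{\overline\alpha}{\omega}}(q)$, lies among the $F$ classes. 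A pigeonhole on the tuples $(\class{r(\rea{\overline\beta}{\omega})^i\gamma\eta})_{r\in R}$, combined with $M_j\ge (j-i)+M_i$, then forces some $M_i=\omega$.

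\textbf{Your route.} You isolate a finiteness-free approximation lemma, $\eqlev(p\rea{\overline\delta}{k}\eta',p\rea{\overline\delta}{\omega}\eta')\ge k$. This does follow from \cref{equiv-because-equiv-below}, transitivity of $\approx_k$, and $M_k\ge k$ from \cref{eqlev-grows-while-pumping}. You then use finiteness only through the coincidence of $\approx_F$ and $\approx$ on the reachable part. That is valid: the approximants restricted to reachable configurations depend only on that part, and they have at most $F$ classes there.

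The transfer you flag as the main obstacle goes through by a limit argument. Let $g$ be $r(\rea{\overline\beta}{\omega})^i\gamma\eta$ with $i\in\Nat\cup\set{\omega}$ and $r\in R$, and let $c_k$ be obtained from $g$ by replacing every $\omega$ by $k$. Then:
\begin{itemize}
\item $c_k$ is reachable and $c_k\approx_k g$;
\item for $k,k'\ge F$ we get $c_k\approx_F c_{k'}$, hence $c_k\approx c_{k'}$;
\item therefore $g\approx_k c_F$ for all $k$, so $g\approx c_F$ because $\approx$ is the intersection of the $\approx_k$.
\end{itemize}
Your approach buys a reusable approximation lemma and the smaller exponent $e=F$. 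The paper's is shorter and stays with exact bisimilarity, without needing the stabilisation fact.

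\textbf{The repair: Step 4 fails as written.} With $F+1$ indices and a single $r$, two configurations in the same class give no contradiction. \cref{eqlev-grows-while-pumping} gives strict growth only of the minimum $M_i$ over $R$. It does not give strict growth of $\eqlev(r(\rea{\overline\beta}{\omega})^i\gamma\eta,r(\rea{\overline\beta}{\omega})^\omega\gamma\eta)$ for each fixed $r$. A pigeonhole would have to range over whole tuples indexed by $R$, that is, over $F^{|R|}+1$ indices, as in the paper.

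In your setting the pigeonhole is unnecessary anyway. From $M_F\ge F$ you get $r(\rea{\overline\beta}{\omega})^F\gamma\eta\approx_F r(\rea{\overline\beta}{\omega})^\omega\gamma\eta$ for all $r\in R$. By the transfer, both sides are bisimilar to reachable standard configurations. Those are then $\approx_F$-related, hence bisimilar, so you may take $e=F$ directly. The undefined index $M$ in $\approx_{F+M}$ in your Step 3 can likewise be dropped; $\approx_F$ suffices.
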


\begin{proof}
	We assume that $(\Ll(\Pp),\initstate)$ is weakly bisimulation finite, and we denote the number of its classes by $F$.
	We prove the lemma for $e=F^{|Q|}$.
	Let us formulate a version of the lemma suitable for induction:
	for every $q\in Q$, every well-formed decomposition $\overline\delta$, and every $\eta\in\Gamma^*$,
	if $\initstate\to^*q\rea{\overline\delta}{e}\eta$,
	then $q\rea{\overline\delta}{e}\eta\approx q\rea{\overline\delta}{\omega}\eta$.
	We prove this statement by induction on the structure of $\overline\delta$,
	and in the case of $\overline\delta=\decomp{\overline\alpha,\overline\beta,\gamma}$ we also prove that
	$r(\rea{\overline\beta}{\omega})^e\gamma\eta\approx r(\rea{\overline\beta}{\omega})^\omega\gamma\eta$ for all $r\in\eats{\rea{\overline\alpha}{\omega}}(q)$.
	This gives us the actual statement of the lemma; note that the assumption of the lemma, $\initstate\to^*q\rea{\overline\delta}{1}\eta$
	(which can be reformulated as $(\initstate,q)\in\up_0(\rea{\overline\delta}{1})$),
	is equivalent to the assumption of our inductive statement,
	$\initstate\to^*q\rea{\overline\delta}{e}\eta$ (which can be reformulated as $(\initstate,q)\in\up_0(\rea{\overline\delta}{e})$),
	by \cref{epsilon-preserved}.

	If $\overline\delta=X\in\Gamma$, the thesis is trivial because $\rea{\overline\delta}{\omega}=\rea{\overline\delta}{e}=X$.
	
	Suppose that $\overline\delta=\overline\delta_1\dots\overline\delta_k$, and consider some $i\in\interval{1,k}$.
	For every $r\in\eats{\rea{\overline\delta_1}{\omega}\dots\rea{\overline\delta_{i-1}}{\omega}}(q)$
	we have $r\in\eats{\rea{\overline\delta_1}{e}\dots\rea{\overline\delta_{i-1}}{e}}(q)$ by \cref{epsilon-preserved},
	so $\initstate\to^*q\rea{\overline\delta}{e}\eta\to^*r\rea{\overline\delta_i}{e}\rea{\overline\delta_{i+1}}{e}\dots\rea{\overline\delta_k}{e}\eta$;
	by the induction hypothesis we thus have
	$r\rea{\overline\delta_i}{e}\rea{\overline\delta_{i+1}}{e}\dots\rea{\overline\delta_k}{e}\eta
	\approx
	r\rea{\overline\delta_i}{\omega}\rea{\overline\delta_{i+1}}{e}\dots\rea{\overline\delta_k}{e}\eta$.
	By \cref{equiv-because-equiv-below} this implies
	$q\rea{\overline\delta_1}{\omega}\dots\rea{\overline\delta_{i-1}}{\omega}\rea{\overline\delta_i}{e}\rea{\overline\delta_{i+1}}{e}\dots\rea{\overline\delta_k}{e}\eta
	\approx
	q\rea{\overline\delta_1}{\omega}\dots\rea{\overline\delta_{i-1}}{\omega}\rea{\overline\delta_i}{\omega}\rea{\overline\delta_{i+1}}{e}\dots\rea{\overline\delta_k}{e}\eta$.
	Having this for all $i\in\interval{1,k}$, we obtain $q\rea{\overline\delta}{e}\eta\approx q\rea{\overline\delta}{\omega}\eta$, as needed.
	
	Finally, suppose that $\overline\delta=\decomp{\overline\alpha,\overline\beta,\gamma}$.
	Let $R=\eats{\rea{\overline\alpha}{\omega}}(q)$.
	By well-formedness of $\overline\delta$ we have $R=\eats{\rea{\overline\alpha}{\omega}(\rea{\overline\beta}{\omega})^i}(q)$ for all $i\geq 0$.
	By applying the previous case to the decomposition $\hat\delta=\overline\alpha\underbrace{\overline\beta\dots\overline\beta}_e$
	we obtain that 
	\begin{align*}
		q\rea{\overline\delta}{e}\eta
		=q\rea{\overline\alpha}{e}(\rea{\overline\beta}{e})^e\gamma\eta
		\approx q\rea{\overline\alpha}{\omega}(\rea{\overline\beta}{\omega})^e\gamma\eta\,.
	\end{align*}
	The former configuration is reachable from $\initstate$, so the class of $q\rea{\overline\alpha}{\omega}(\rea{\overline\beta}{\omega})^e\gamma\eta$ is reachable from $\class{\initstate}$.
	Moreover, $q\rea{\overline\alpha}{\omega}(\rea{\overline\beta}{\omega})^e\gamma\eta\to^* r(\rea{\overline\beta}{\omega})^i\gamma\eta$ for all $i\in\interval{0,e}$ and $r\in R$,
	so the classes of $r(\rea{\overline\beta}{\omega})^i\gamma\eta$ are reachable as well;
	they are among the $F$ classes of our pointed $\epsilon$-LTS.
	For every $i\in\interval{0,e}$ let
	\begin{align*}
		\tupleS_i=(\class{r(\rea{\overline\beta}{\omega})^i\gamma\eta})_{r\in R}
		&&\mbox{and}&&
		M_i=\min\set{\eqlev(r(\rea{\overline\beta}{\omega})^i\gamma\eta,r(\rea{\overline\beta}{\omega})^\omega\gamma\eta)\mid r\in R}\,.
	\end{align*}
	There exist at most $F^{|Q|}=e$ distinct tuples of $|R|$ classes,
	so necessarily $\tupleS_i=\tupleS_j$ for some $i,j$ with $0\leq i<j\leq e$, and thus also $M_i=M_j$.
	On the other hand $M_j\geq(j-i)+M_i$ by \cref{eqlev-grows-while-pumping} applied to indices $i,i+1,\dots,j-1$.
	Thus necessarily $M_i=\omega$, meaning that $r(\rea{\overline\beta}{\omega})^i\gamma\eta\approx r(\rea{\overline\beta}{\omega})^\omega\gamma\eta$ for all $r\in R$.
	By \cref{equiv-because-equiv-below} we also have
	$r(\rea{\overline\beta}{\omega})^e\gamma\eta
	\approx r(\rea{\overline\beta}{\omega})^{e-i}(\rea{\overline\beta}{\omega})^\omega\gamma\eta
	\approx r(\rea{\overline\beta}{\omega})^\omega\gamma\eta$ for all $r\in R$,
	and $q\rea{\overline\alpha}{\omega}(\rea{\overline\beta}{\omega})^e\gamma\eta
	\approx q\rea{\overline\alpha}{\omega}(\rea{\overline\beta}{\omega})^\omega\gamma\eta$.
	Altogether we obtain
	\begin{align*}
		q\rea{\overline\delta}{e}\eta
		\approx	q\rea{\overline\alpha}{\omega}(\rea{\overline\beta}{\omega})^e\gamma\eta
		\approx q\rea{\overline\alpha}{\omega}(\rea{\overline\beta}{\omega})^\omega\gamma\eta
		=q\rea{\overline\delta}{\omega}\eta,
	\end{align*}
	as needed.
\end{proof}

We now have \cref{main lemma}, our main technical lemma.
It says that the number of possible classes of $r\gamma\eta$ is small,
assuming that the classes $\classC_s$ of $s\eta$ are fixed,
and that $\gamma$ belongs to a pumping triple $\decomp{\cdot,\cdot,\gamma}$ of a fixed type.
This is very powerful, because infinitely many stack contents $\gamma$ may be handled this way (intuitively, $\gamma$ is almost arbitrary).
The assumption that $\gamma$ belongs to a pumping triple of a fixed type is very mild;
we easily deal with it later (using the fact that every stack content has a decomposition whose type comes from a small set).

\begin{lemma}\label{main lemma}
	Let $q\in Q$, let $\overline\delta_0=\decomp{\overline\alpha,\overline\beta,\gamma_0}$ be a well-formed decomposition of degree at most $\Cc$ and height at most $2|Q|+2$,
	let $(\classC_s)_{s\in\eats{\rea{\overline\delta_0}{\omega}}(q)}$ be a tuple of classes, and let $K=|\Below(\set{\classC_s\mid\allowbreak s\in\eats{\rea{\overline\delta_0}{\omega}}(q)})|$.
	Let also $\Omega$ be the set of pairs $(\gamma,\eta)\in\Gamma^*\times\Gamma^*$ for which
	\begin{compactitem}
	\item	$\decomp{\overline\alpha,\overline\beta,\gamma}\tpeq\overline\delta_0$,
	\item	$\initstate\to^*q\rea{\overline\alpha}{1}\rea{\overline\beta}{1}\gamma\eta$, and
	\item	$s\eta\in\classC_s$ for all $s\in\eats{\rea{\overline\delta_0}{\omega}}(q)$.
	\end{compactitem}
	If $(\Ll(\Pp),\initstate)$ is weakly bisimulation finite, then the set of tuples of classes
	\begin{align*}
		\Theta=\set{(\class{r\gamma\eta})_{r\in\eats{\rea{\overline\alpha}{\omega}}(q)}\mid(\gamma,\eta)\in\Omega}
	\end{align*}
	has at most $(K+\Tc)^{3|Q|}$ elements, for some $\Tc\in 2^{2^{|\Pp|^{O(1)}}}$.
	Moreover, for every tuple $(\classC'_r)_{r\in\eats{\rea{\overline\alpha}{\omega}}(q)}\in\Theta$
	we have $|\Below(\set{\classC'_r\mid r\in\eats{\rea{\overline\alpha}{\omega}}(q)})|\leq(K+1)\cdot\Uc$ for some $\Uc\in 2^{2^{|\Pp|^{O(1)}}}$.
\end{lemma}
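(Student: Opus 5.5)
We follow the plan of \cref{sec:overview}, made formal with decompositions. Fix $(\gamma,\eta)\in\Omega$ and write $R=\eats{\rea{\overline\alpha}{\omega}}(q)$ and $\beta=\rea{\overline\beta}{\omega}$. By \cref{F is omega} there is $e\in\Nat$ with $r\beta^e\gamma\eta\approx r\beta^\omega\gamma\eta$ for all $r\in R$; let $e$ be the least such number. Let $M_i$ be as in \cref{eqlev-grows-while-pumping}, so $M_0<M_1<\dots<M_e=\omega$, and pick $r_i\in R$ attaining $M_i$. Then configurations $r_i\beta^i\gamma\eta$ with equal $r_i$ lie in pairwise distinct classes.

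\textbf{Step 1: a description for each $i<e$ (and for $i=0$ with arbitrary $r\in R$).} Take a shortest run $\pi$ from $q\rea{\overline\alpha}{\omega}\beta^e\gamma\eta$ to $r_i\beta^i\gamma\eta$, and a parallel run $\pi'$ from $q\rea{\overline\alpha}{\omega}\beta^\omega\gamma\eta$. There are two cases.

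\emph{Case (a): $\pi'$ is $\Dc$-almost-popping.} Its target class is then either in $\Below(\{\classC_s\})$, once $\pi'$ passes $\gamma\eta$ (which must happen by an $\epsilon$-popping of $\gamma$, by well-formedness item 3), or among the at most $\Cc^{\ell}|\Pp|^{\Dc+1}$ classes of \cref{few-conf-without-pushing}. These latter classes are determined by the type of $\overline\delta_0$ and the classes $\classC_s$, by \cref{only type important}. This contributes $K+2^{2^{|\Pp|^{O(1)}}}$ options.

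\emph{Case (b): a suffix $\pi'_3$ of $\pi'$ is parallel-related to a run with $\sgrow\ge\Dc$.} Here we take $\pi'_3$ from $t_1X\theta$ to $t\mu\theta$ with $|\mu|=\Dc+1$. For each $u\in\eats{\mu}(t)$, \cref{can pop fast} gives a run of length at most $\Bc$ from $t_1X\theta$ to $u\theta$. The parallel point of $\pi$ has the form $q_1\chi\beta^j\gamma\eta$ with $|\chi|$ between $\Bc$ and exponential, using \cref{can pop fast decomp} for shortness of $\pi$ and the choice of $\Dc$ to force $j>i$. Transferring the short runs then yields configurations $v_u\xi_u\beta^j\gamma\eta\approx u\theta$ with $|\xi_u|$ exponential. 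By \cref{equiv-because-equiv-below}, the class of $r_i\beta^i\gamma\eta$ is determined by the small datum $d=(t\mu,(v_u\xi_u)_u)$ together with the classes of $r\beta^j\gamma\eta$, $r\in R$. More precisely, it is a monotone function of their $\approx_k$-classes for every $k$.

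\textbf{Step 2: bound $e$.} Suppose $i<i'$ are both in case (b) with equal $r_i$ and equal $d$. Then $M_j,M_{j'}\ge M_i+1$, and the level-wise version of \cref{equiv-because-equiv-below} gives $r_i\beta^i\gamma\eta\approx_{M_i+1}r_i\beta^{i'}\gamma\eta\approx_{M_i+1}r_i\beta^\omega\gamma\eta$, a contradiction. Case (a) indices are likewise distinct classes drawn from a set of size $K+2^{2^{O}}$. Hence $e\le |Q|(K+\Vc)$ for a doubly exponential $\Vc$.

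\textbf{Step 3: count $\Theta$.} Apply Step 1 with $i=0$ to every $r\in R$. Each $\class{r\gamma\eta}$ is either a case-(a) class, or is given by $(d_r,j_r)$ together with the classes of $r'\beta^{j_r}\gamma\eta$. The latter are in turn determined via \cref{equiv-because-equiv-below} by $j_r\le e$ and the classes of $r''\gamma\eta$. This gives a cyclic system: the tuple $(\class{r\gamma\eta})_{r\in R}$ is a fixed point determined by the small tuple of descriptors. We show uniqueness of the fixed point by induction on $k$ for $\approx_k$, exploiting that each description step reads at least one action, or else passes to a case-(a) class, using \cref{more-equiv-if-read}. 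The number of descriptor tuples is at most $(K+\Tc)^{|Q|}$, where the factor $e\le|Q|(K+\Vc)$ accounts for the $K$ dependence, giving $(K+\Tc)^{3|Q|}$.

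\textbf{Step 4: the $\Below$ bound.} Every class $\classC'_r$ is reached from the described configurations. Each $\Dc$-almost-popping run from them either stays within the exponential top part, giving doubly exponentially many classes via \cref{few-conf-without-pushing-seq}, or reaches $\Below(\{\classC_s\})$. Composing via \cref{Cbelowbelow} yields $(K+1)\Uc$.

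The main obstacle is Step 1(b) together with the fixed-point uniqueness in Step 3. The delicate points are ensuring $j>i$ through the choice of $\Dc$ and handling $\epsilon$-runs that pop $\beta^\omega$. For the fixed-point argument I would first try to strengthen the description so that it implies a strict increase of $\eqlev$, in the spirit of \cref{eqlev-grows-while-pumping}.
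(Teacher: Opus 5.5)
Your overall architecture is the paper's: descriptors for $r\widetilde\beta^i\gamma\eta$, the strictly increasing $M_i$ to bound $e$, and the $\eqlev$ fixed-point argument via \cref{more-equiv-if-read} for $\Theta$. However, two steps fail as written.

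In Step~1(b) you transfer the short run from $t_1X\theta$ to $u\theta$ into a run $\rho$ from the parallel point $q_1\chi\widetilde\beta^{j+\Bc+1}\gamma\eta$ that reads at most $\Bc$ letters, and you conclude $u\theta\approx v_u\xi_u\widetilde\beta^j\gamma\eta$. That is the $\epsilon$-free argument. Here a whole copy of $\widetilde\beta$ can be popped by $\epsilon$-transitions alone, so no length bound on $\chi$ keeps $\rho$ above $\widetilde\beta^j\gamma\eta$; you flag this as delicate, but without a fix the descriptor is simply wrong. The missing idea is as follows. If $\rho$ goes below, then one of the $\Bc+1$ copies is popped using only $\epsilon$-transitions, so the state reached lies in $\eats{\widetilde\beta}_\epsilon(Q)$. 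Well-formedness item~3, together with $\eats{\gamma}_\epsilon=\eats{\gamma_0}_\epsilon$ from the type and determinism, then forces $\rho$ to $\epsilon$-pop all remaining copies and $\gamma$. It reaches $s_u\eta$ with $s_u\in\eats{\rea{\overline\delta_0}{\omega}}(q)$, hence $\dist(\classC_{s_u},\class{u\theta})\le\Bc$. This must be a separate descriptor alternative that records $\class{u\theta}$ itself (the paper's (B2)). It has at most $|Q|\cdot|\Pp|^{\Bc}$ values, independent of $(\gamma,\eta)$, so Steps~2 and~3 survive.

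The dichotomy in Step~4 is false. Take a $\Dc$-almost-popping run from $r\gamma\eta$, simulated from $t\mu\theta$ and then from $v_u\xi_u\widetilde\beta^j\gamma\eta$. It can pop $\xi_u\widetilde\beta^j$ and arrive at $r''\gamma\eta$ with $r''\in R\setminus\eats{\widetilde\beta}_\epsilon(Q)$. That is neither ``the exponential top part'' nor $\Below(\set{\classC_s})$; it is the very set you are bounding, so counting this way only yields $|\Below(X)|\le c+|\Below(X)|$. The paper breaks this cycle by fixing, for each class $\classC$, a witnessing $\Dc$-almost-popping run from $\set{r\gamma\eta\mid r\in R}$ that reads the fewest letters. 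A return to $r''\gamma\eta$ would give a witness reading strictly fewer letters, because the last copy of $\widetilde\beta$ popped before reaching $r''\gamma\eta$ cannot be popped by $\epsilon$-transitions alone when $r''\notin\eats{\widetilde\beta}_\epsilon(Q)$. The case $r''\in\eats{\widetilde\beta}_\epsilon(Q)$ is already covered by $\Below(\set{\classC_s})$. Note also that the region above $\gamma\eta$ consists of $j+\Bc+2\le|Q|(K+\Tc)+\Bc+2$ copies of $\overline\beta$. Via \cref{few-conf-without-pushing-seq} it therefore contributes a number of classes linear in $K$, not doubly exponential. Finally, \cref{Lbelowbelow} is needed to account for prefixing the run from $r'\widetilde\beta^{j+\Bc+2}\gamma\eta$ to $v_u\xi_u\widetilde\beta^j\gamma\eta$.
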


\begin{proof}
	If $\Omega$ is empty, the thesis holds trivially.
	Assuming that $\Omega$ is nonempty,
	let us fix, for the duration of the whole proof, a stack content $\eta_0\in\Gamma^*$ 
	that occurs on the second coordinate of some pair from $\Omega$.
	Then $s\eta_0\in\classC_s$ for all $s\in\eats{\rea{\overline\delta_0}{\omega}}(q)$, 
	and thus the condition $s\eta\in\classC_s$ can be reformulated as $s\eta\approx s\eta_0$.
	We can also write $K=|\Below(\set{s\eta_0\mid\allowbreak s\in\eats{\rea{\overline\delta_0}{\omega}}(q)})|$.

	Denote $\widetilde\alpha=\rea{\overline\alpha}{\omega}$ and $\widetilde\beta=\rea{\overline\beta}{\omega}$.
	Note that $\widetilde\beta$ is obtained from $\overline\beta$ by adding $\omega$ exponents in all pumping triples inside $\overline\beta$,
	but the whole $\overline\beta$ remains repeated only once (in particular, if $\overline\beta$ is just a sequence of symbols, then $\widetilde\beta=\overline\beta$).
	Let also $R=\eats{\widetilde\alpha}(q)$.
	By well-formedness of $\overline\delta_0$ we have
	$R=\eats{\widetilde\alpha\widetilde\beta^i}(q)$ for all $i\geq 0$, as well as $\eats{\widetilde\beta}(R)=R$.
	
	For $(\gamma,\eta)\in \Omega$ let $e_{\gamma,\eta}$ be the smallest number such that
	$(\class{r\widetilde\beta^{e_{\gamma,\eta}}\gamma\eta})_{r\in R}=(\class{r\widetilde\beta^\omega\gamma\eta})_{r\in R}$.
	By \cref{F is omega} we have $e_{\gamma,\eta}\in\Nat$.
	We define also the constant $\Dc'=(\Bc+2)\cdot\Bc\Cc^{2|Q|+1}+1$ and remark that $\Dc'<\Dc$,
	recalling that $\Dc=(\Bc+3)\cdot\Bc\Cc^{2|Q|+1}+2$.
	
	Next, for all $r\in R$, all $(\gamma,\eta)\in \Omega$, and all $i\in\interval{0,e_{\gamma,\eta}-1}$ we fix some
	$j_{r,\gamma,\eta,i}\in\interval{i+1,e_{\gamma,\eta}}$, $\theta_{r,\gamma,\eta,i}\in\Gamma^\gs$, and 
	some finite information $\tuple_{r,\gamma,\eta,i}$
	(whose goal is to describe the class of $r\widetilde\beta^i\gamma\eta$) in such a way that
	either
	\begin{compactenum}[(A)]
	\item	$\tuple_{r,\gamma,\eta,i}=\class{r\widetilde\beta^i\gamma\eta}\in\Below(q\rea{\overline\delta_0}{\omega}\eta_0)$, or
	\item	$\tuple_{r,\gamma,\eta,i}=(r',t,\mu,(\tau_u)_{u\in\eats{\mu}(t)})$ with $r'\in R$, $t\in Q$, $\mu\in\Gamma^{\Dc'+1}$,
		where $r\widetilde\beta^i\gamma\eta\approx t\mu\theta_{r,\gamma,\eta,i}$, and for all $u\in\eats{\mu}(t)$ we have either
		(where the goal of $\tau_u$ is to describe the class of $u\theta_{r,\gamma,\eta,i}$)
		\begin{compactenum}[(B1)]
		\item	$\tau_u=(v_u,\xi_u)\in Q\times\Gamma^\gs$, where
			$u\theta_{r,\gamma,\eta,i}\approx v_u\xi_u\widetilde\beta^{j_{r,\gamma,\eta,i}}\gamma\eta$
			and
			$v_u\xi_u\in\Near(r'\widetilde\beta^{\Bc+2},\allowbreak\Bc\Cc^{2|Q|+1}+\Bc)$, or
		\item	$\tau_u=\class{u\theta_{r,\gamma,\eta,i}}$ and $\dist(\classC_{s_u},\class{u\theta_{r,\gamma,\eta,i}})\leq\Bc$ for some $s_u\in\eats{\rea{\overline\delta_0}{\omega}}(q)$.
		\end{compactenum}
	\end{compactenum}
	If there are multiple choices for some $r,\gamma,\eta,i$, we just take any of them.
	In the following claim we prove that there is always at least one choice.

	\begin{claim}
		For all $r\in R$, all $(\gamma,\eta)\in \Omega$, and all $i\in\interval{0,e_{\gamma,\eta}-1}$
		there exist $j_{r,\gamma,\eta,i}$, $\theta_{r,\gamma,\eta,i}$, and $\tuple_{r,\gamma,\eta,i}$ satisfying the above conditions.
	\end{claim}
	
	\begin{proofnoqed}
		Let $r\in R$, $(\gamma,\eta)\in \Omega$, and $i\in\interval{0,e_{\gamma,\eta}-1}$.
		To simplify the notation, let us write $e$ for $e_{\gamma,\eta}$, for the chosen $\gamma,\eta$.
		
		If $\class{r\widetilde\beta^i\gamma\eta}\in\Below(q\rea{\overline\delta_0}{\omega}\eta_0)$,
		we can take $\tuple_{r,\gamma,\eta,i}=\class{r\widetilde\beta^i\gamma\eta}$, obtaining Item (A).
		Suppose thus that $\class{r\widetilde\beta^i\gamma\eta}\not\in\Below(q\rea{\overline\delta_0}{\omega}\eta_0)$,
		in which case we prove Item (B).
		Note that $q\rea{\overline\delta_0}{\omega}\eta_0\approx q\rea{\overline\delta_0}{\omega}\eta\approx
		q\widetilde\alpha\widetilde\beta^\omega\gamma\eta \approx q\widetilde\alpha\widetilde\beta^e\gamma\eta$,
		where the first equivalence holds by \cref{equiv-because-equiv-below} because $s\eta\approx s\eta_0$ for all $s\in\eats{\rea{\overline\delta_0}{\omega}}(q)$,
		the second equivalence holds by \cref{only type important} because $\decomp{\overline\alpha,\overline\beta,\gamma}\tpeq\overline\delta_0$,
		and the third equivalence holds by \cref{equiv-because-equiv-below} because $r'\widetilde\beta^\omega\gamma\eta\approx r'\widetilde\beta^e\gamma\eta$
		for all $r'\in\eats{\widetilde\alpha}(q)$ (cf.~the definition of $e$).
		We can thus write $\class{r\widetilde\beta^i\gamma\eta}\not\in\Below(q\widetilde\alpha\widetilde\beta^e\gamma\eta)$.
		
		Recall that $r\in R=\eats{\widetilde\alpha\widetilde\beta^{e-i}}(q)$, which implies that
		there exists a sequence of states $r_e,r_{e-1},\dots,\allowbreak r_i\in R$
		such that $r_e\in\eats{\widetilde\alpha}(q)$, and $r_j\in\eats{\widetilde\beta}(r_{j+1})$ for all $j\in\interval{i,e-1}$, and $r_i=r$.
		It follows that there is a run from $q\widetilde\alpha$ to $r_e$ and runs from $r_{j+1}\widetilde\beta$ to $r_j$ for all $j\in\interval{i,e-1}$,
		each of them reading at most $\Bc\Cc^{2|Q|+1}$ action symbols (cf.\@ \cref{can pop fast decomp}).
		We compose these runs shifted appropriately, so that we obtain a run $\pi$ from $q\widetilde\alpha\widetilde\beta^e\gamma\eta$ to $r\widetilde\beta^i\gamma\eta$
		(going through $r_e\widetilde\beta^e\gamma\eta,r_{e-1}\widetilde\beta^{e-1}\gamma\eta,\dots,r_{i+1}\widetilde\beta^{i+1}\gamma\eta$).
		
		Because $\class{\pi(|\pi|)}\not\in\Below(\pi(0))$,
		the run $\pi$ is not $\Dc$-almost-popping, which by definition means that
		there exists a run $\pi_\Dc'$ that is parallel to a suffix of $\pi$ and satisfies $\sgrow(\pi'_\Dc)\geq\Dc$.
		Let $\pi'_3$ be the shortest suffix of $\pi'_\Dc$ satisfying $\sgrow(\pi'_3)=\Dc'$;
		note that the stack growth 
		of $\pi'_\Dc$ changes by at most one in each step, and $\Dc'<\Dc$,
		so such a suffix exists.
		For some $t_1,t\in Q$, $X\in\Gamma$, $\mu\in\Gamma^{\Dc'+1}$, and $\theta_{r,\gamma,\eta,i}\in\Gamma^\gs$ we can write
		\begin{align*}
			t_1X\theta_{r,\gamma,\eta,i}\xrightarrow{\pi_3'}t\mu\theta_{r,\gamma,\eta,i}
		\end{align*}
		(in particular, on the top of the stack of $\pi_3'(0)$ there is a standard symbol,
		pushed earlier by $\pi'_\Dc$).
		Note that $\pi_3'$ can be shifted to a run from $t_1X$ to $t\mu$,
		and that $\pi_3'$ is parallel to a suffix $\pi_3$ of $\pi$.
		
		Because every non-popping transition reads an action symbol,
		$\pi_\Dc'$ reads at least $\Dc-\Dc'=\Bc\Cc^{2|Q|+1}+1$ action symbols before $\pi_3'$ starts.
		Simultaneously $r_e\widetilde\beta^e\gamma\eta$ is reached by $\pi$ before $\Bc\Cc^{2|Q|+1}+1$ action symbols are read, so before $\pi_3$ starts.
		Let $h$ be the smallest number in $\interval{i+1,e}$ such that $r_h\widetilde\beta^h\gamma\eta$ is reached by $\pi$ before $\pi_3$ starts.
		We take $j_{r,\gamma,\eta,i}=h-\Bc-2$ and $r'=r_h$. 
		Let us prove that $j_{r,\gamma,\eta,i}>i$.
		Clearly $j_{r,\gamma,\eta,i}+\Bc+2>i$, by definition.
		The configuration $\pi_3(0)$ can be written as $q_1\chi\beta^{j_{r,\gamma,\eta,i}+B+1}\gamma\eta$ for some $q_1\in Q$ and $\chi\in\Gamma^\gs$.
		Let $\pi_1$ and $\pi_2$ be the fragments of $\pi$ before $r_h\widetilde\beta^h\gamma\eta$, and between $r_h\widetilde\beta^h\gamma\eta$ and $\pi_3(0)$, respectively;
		we have
		\begin{align*}
			q\widetilde\alpha\widetilde\beta^e\gamma\eta\xrightarrow{\pi_1}
			r'\widetilde\beta^{j_{r,\gamma,\eta,i}+B+2}\gamma\eta\xrightarrow{\pi_2}
			q_1\chi\widetilde\beta^{j_{r,\gamma,\eta,i}+B+1}\gamma\eta\xrightarrow{\pi_3}
			r\widetilde\beta^i\gamma\eta\,.
		\end{align*}
		Recalling how $\pi$ was constructed, we see that $\pi_3$ reads at most $(j_{r,\gamma,\eta,i}+B+2-i)\cdot\Bc\Cc^{2|Q|+1}$ action symbols
		(it is a suffix of a run popping $j_{r,\gamma,\eta,i}+B+2-i$ copies of $\widetilde\beta$);
		likewise $\pi_2$ reads at most $\Bc\Cc^{2|Q|+1}$ action symbols.
		On the other hand, every non-popping transition reads some action symbol,
		so $\pi'_3$ (and $\pi_3$ as well) reads at least $|\mu|-1=\Dc'>(\Bc+2)\cdot\Bc\Cc^{2|Q|+1}$ of them.
		This implies that $(j_{r,\gamma,\eta,i}+\Bc+2-i)\cdot\Bc\Cc^{2|Q|+1}>(\Bc+2)\cdot\Bc\Cc^{2|Q|+1}$, that is, $j_{r,\gamma,\eta,i}>i$.
		By construction we also have $r'\in R$, $j_{r,\gamma,\eta,i}\leq e$, and $r\widetilde\beta^i\gamma\eta\approx t\mu\theta_{r,\gamma,\eta,i}$.
		
		It remains to define $\tau_u$ for $u\in\eats{\mu}(t)$, satisfying Items (B1) or (B2).
		Take some $u\in\eats{\mu}(t)$.
		Because $t_1X\to^*t\mu$, we also have $t_1X\to^*u$, so $\dist(t_1X,u)\leq\Bc$ by \cref{can pop fast}.
		On the other hand $\dist(t_1X,u)>0$, because $\epsilon$-transitions are deterministic and cannot push, while there is a run from $t_1X$ to $u$ going through $t\mu$
		(a configuration with a larger stack).
		Since $q_1\chi\widetilde\beta^{j_{r,\gamma,\eta,i}+\Bc+1}\gamma\eta\approx t_1X\theta_{r,\gamma,\eta,i}$,
		there is a run $\rho$ from $q_1\chi\widetilde\beta^{j_{r,\gamma,\eta,i}+\Bc+1}\gamma\eta$ to a configuration $c$ in $\class{u\theta_{r,\gamma,\eta,i}}$,
		reading at most $\Bc$ action symbols.
		Let us choose $c$ so that no $\epsilon$-transitions are performed at the end of this run (this is possible because $\epsilon$-transitions do not change the class).
		Because $\rho$ reads some action symbols, also the composition $\pi_2\circ\rho$ does not end with an $\epsilon$-transition.
		We have two cases:
		\begin{compactitem}
		\item	Suppose first that $\pi_2\circ\rho$ may be shifted to a run from $r'\widetilde\beta^{\Bc+2}$.
			This means that its final configuration $c$ may be written as $v_u\xi_u\widetilde\beta^{j_{r,\gamma,\eta,i}}\gamma\eta$,
			and the considered shift of the run $\pi_2\circ\rho$ witnesses that $v_u\xi_u\in\Near(r'\widetilde\beta^{\Bc+2},\Bc\Cc^{2|Q|+1}+\Bc)$;
			taking $\tau_u=(v_u,\xi_u)$ we obtain Item (B1).
		\item	Suppose conversely: $\pi_2\circ\rho$ cannot be shifted to a run from $r'\widetilde\beta^{\Bc+2}$.
			In other words, $\pi_2\circ\rho$ (before its end) reaches a configuration with stack $\widetilde\beta^{j_{r,\gamma,\eta,i}}\gamma\eta$.
			Recall that the stack in the $\pi_2$ part always contains $\widetilde\beta^{j_{r,\gamma,\eta,i}+\Bc+1}\gamma\eta$ as a suffix.
			Thus, $\rho$ pops the $\Bc+1$ copies of $\widetilde\beta$ while reading at most $\Bc$ action symbols;
			some copy of $\widetilde\beta$ is popped using only $\epsilon$-transitions.
			After popping this copy, we reach a configuration of the form $r''\widetilde\beta^{j_{r,\gamma,\eta,i}+k}\gamma\eta$,
			where $k\geq 0$ and $r''\in\eats{\widetilde\beta}_\epsilon(Q)$.
			By well-formedness of $\overline\delta_0$	
			this implies that $\eats{\widetilde\beta}_\epsilon(r'')=\set{r''}$ and $\eats{\gamma}_\epsilon(r'')=\eats{\gamma_0}_\epsilon(r'')\neq\emptyset$.
			Recall that $\epsilon$-transitions are deterministic and that $\rho$ does not end with an $\epsilon$-transition.
			It follows that $\rho$, after visiting $r''\widetilde\beta^{j_{r,\gamma,\eta,i}+k}\gamma\eta$, necessarily pops all the remaining copies of $\widetilde\beta$, then pops $\gamma$,
			and then continues from $s_u\eta$, for some $s_u\in Q$.
			Because $\rho$ reads at most $\Bc$ action symbols (and ends in $c\in\class{u\theta_{r,\gamma,\eta,i}}$),
			we obtain $\dist(\class{s_u\eta},\class{u\theta_{r,\gamma,\eta,i}})\leq\Bc$.
			A prefix of $\pi_1\circ\pi_2\circ\rho$ leading to $r''\widetilde\beta^{j_{r,\gamma,\eta,i}+k}\gamma\eta$
			can be shifted to a run from $q\widetilde\alpha\widetilde\beta^{e-j_{r,\gamma,\eta,i}-k}$ to $r''$,
			witnessing that $r''\in\eats{\widetilde\alpha\widetilde\beta^{e-j_{r,\gamma,\eta,i}-k}}(q)=\eats{\widetilde\alpha}(q)$.
			Together with $r''\in\eats{\widetilde\beta}_\epsilon(r'')$ and $s_u\in\eats{\gamma}_\epsilon(r'')=\eats{\gamma_0}_\epsilon(r'')$ this gives us
			$s_u\in\eats{\rea{\overline\delta_0}{\omega}}(q)$ (cf.~\cref{easy-eats-decomp}).
			Thus, by assumptions of the \lcnamecref{main lemma}, we have $s_u\eta\in\classC_{s_u}$,
			which gives us $\dist(\classC_{s_u},\class{u\theta_{r,\gamma,\eta,i}})\leq\Bc$, as needed for Item (B2), where we take $\tau_u=\class{u\theta_{r,\gamma,\eta,i}}$.
		\qedhere\end{compactitem}
	\end{proofnoqed}
	
	Observe that there is a number $\Tc\in 2^{2^{|\Pp|^{O(1)}}}$ such that
	\begin{align*}
		|\set{\tuple_{r,\gamma,\eta,i}\mid r\in R,(\gamma,\eta)\in\Omega,i\in\interval{0,e_{\gamma,\eta}-1}}|\leq K+\Tc\,.
	\end{align*}
	Indeed, we have $|\Below(q\rea{\overline\delta_0}{\omega}\eta_0)|$ possible values of $\tuple_{r,\gamma,\eta,i}$ conforming with Item (A) of the definition,
	which is at most $K+\Cc^{2|Q|+2}\cdot|\Pp|^{\Dc+1}$ by \cref{few-conf-without-pushing}.
	For Item (B) we have $|Q|^2\cdot|\Gamma|^{\Dc'+1}$ possibilities for $r',t,\mu$, and then for every of at most $|Q|$ states $u$ we have
	\begin{compactitem}
	\item	at most $|\Near(r'\widetilde\beta^{\Bc+2},\Bc\Cc^{2|Q|+1}+\Bc)|$ possibilities for $v_u\xi_u$ in Item (B1),
		which is at most $|\Pp|^{\Bc\Cc^{2|Q|+1}+\Bc}$ by \cref{determ short run}, and
	\item	at most $|Q|$ possibilities for a state $s_u$,
		and then at most $|\set{\classC\mid\dist(\classC_{s_u},\classC)\leq\Bc}|$ possibilities for $\class{u\theta_{r,\gamma,\eta,i}}$ in Item (B2),
		which is at most $|\Pp|^{\Bc}$ by \cref{determ short run}.
	\end{compactitem}

	\begin{claim}\label{can simulate}
		If $r\in R$, $(\gamma,\eta),(\gamma',\eta')\in\Omega$, $i\in\interval{0,e_{\gamma,\eta}-1}$, $i'\in\interval{0,e_{\gamma',\eta'}-1}$ are such that
		$\tuple_{r,\gamma,\eta,i}=\tuple_{r,\gamma',\eta',i'}$, then
		\begin{align*}
			\eqlev(r\widetilde\beta^i\gamma\eta,r\widetilde\beta^{i'}\gamma'\eta')\geq
			\min\set{\eqlev(r''\widetilde\beta^{j_{r,\gamma,\eta,i}}\gamma\eta,r''\widetilde\beta^{j_{r,\gamma',\eta',i'}}\gamma'\eta')\mid r''\in R}\,.
		\end{align*}
	\end{claim}
	
	\begin{proof}
		If $\tuple_{r,\gamma,\eta,i}$ (which equals $\tuple_{r,\gamma',\eta',i'}$) is defined according to Item (A), we have
		$\class{r\widetilde\beta^i\gamma\eta}=\tuple_{r,\gamma,\eta,i}=\class{r\widetilde\beta^{i'}\gamma'\eta'}$, that is,
		$\eqlev(r\widetilde\beta^i\gamma\eta,r\widetilde\beta^{i'}\gamma'\eta')=\omega$.
		Suppose that $\tuple_{r,\gamma,\eta,i}$ is defined according to Item (B):
		we have $\tuple_{r,\gamma,\eta,i}=(r',t,\mu,(\tau_u)_{u\in\eats{\mu}(t)})$,
		where $r\widetilde\beta^i\gamma\eta\approx t\mu\theta_{r,\gamma,\eta,i}$ and $r\widetilde\beta^{i'}\gamma'\eta'\approx t\mu\theta_{r,\gamma',\eta',i'}$.
		By \cref{equiv-because-equiv-below} we thus have
		\begin{align*}
			\eqlev(r\widetilde\beta^i\gamma\eta,r\widetilde\beta^{i'}\gamma'\eta')\geq
			\min\set{\eqlev(u\theta_{r,\gamma,\eta,i},u\theta_{r,\gamma',\eta',i'})\mid u\in\eats{\mu}(t)}\,.
		\end{align*}
		For each $u\in\eats{\mu}(t)$ we then consider the component $\tau_u$:
		\begin{compactitem}
		\item	If it is defined according to Item (B2), we simply have $\class{u\theta_{r,\gamma,\eta,i}}=\tau_u=\class{u\theta_{r,\gamma',\eta',i'}}$, that is,
			$\eqlev(u\theta_{r,\gamma,\eta,i},u\theta_{r,\gamma',\eta',i'})=\omega$.
		\item	Suppose that $\tau_u$ is defined according to Item (B1): we have $\tau_u=(v_u,\xi_u)$,
			where $u\theta_{r,\gamma,\eta,i}\approx v_u\xi_u\widetilde\beta^{j_{r,\gamma,\eta,i}}\gamma\eta$ and
			$u\theta_{r,\gamma',\eta',i'}\approx v_u\xi_u\widetilde\beta^{j_{r,\gamma',\eta',i'}}\gamma'\eta'$;
			moreover $v_u\xi_u\in\Near(r'\widetilde\beta^{\Bc+2},\Bc\Cc^{2|Q|+1}+\Bc)$,
			meaning that $r'\widetilde\beta^{\Bc+2}\to^*v_u\xi_u$, %\sg{$\epsilon$ to remove in $r'\widetilde\beta^{\Bc+2}\to_\epsilon^*v_u\xi_u$}
			which implies that $\eats{\xi_u}(v_u)\subseteq\eats{\widetilde\beta^{\Bc+2}}(r')\subseteq\eats{\widetilde\beta^{\Bc+2}}(R)=R$.
			By \cref{equiv-because-equiv-below} we have
			\begin{align*}
				\eqlev(u\theta_{r,\gamma,\eta,i},u\theta_{r,\gamma',\eta',i'})
				&=\eqlev(v_u\xi_u\widetilde\beta^{j_{r,\gamma,\eta,i}}\gamma\eta,v_u\xi_u\widetilde\beta^{j_{r,\gamma',\eta',i'}}\gamma'\eta')\\
				&\geq
				\min\set{\eqlev(r''\widetilde\beta^{j_{r,\gamma,\eta,i}}\gamma\eta,r''\widetilde\beta^{j_{r,\gamma',\eta',i'}}\gamma'\eta')\mid r''\in\eats{\xi_u}(v_u)}\\
				&\geq\min\set{\eqlev(r''\widetilde\beta^{j_{r,\gamma,\eta,i}}\gamma\eta,r''\widetilde\beta^{j_{r,\gamma',\eta',i'}}\gamma'\eta')\mid r''\in R}\,.
			\end{align*}
		\end{compactitem}
		By combining the inequalities obtained above we get the thesis of the claim.
	\end{proof}

	\begin{claim}
		For all $(\gamma,\eta)\in\Omega$ we have $e_{\gamma,\eta}\leq|Q|\cdot(K+\Tc)$.
	\end{claim}
	
	\begin{proof}
		Fix some $(\gamma,\eta)\in\Omega$.
		For every $i\in\Nat$ let $M_i=\min\set{\eqlev(r\widetilde\beta^i\gamma\eta,r\widetilde\beta^\omega\gamma\eta)\mid r\in R}$,
		and let $r_i\in R$ be a state such that $\eqlev(r_i\widetilde\beta^i\gamma\eta,r_i\widetilde\beta^\omega\gamma\eta)=M_i$.
		To shorten the notation denote $e=e_{\gamma,\eta}$,
		$\tuple_i=\tuple_{r_i,\gamma,\eta,i}$, and
		$j_i=j_{r_i,\gamma,\eta,i}$
		for $i\in\interval{0,e-1}$.
		
		By \cref{eqlev-grows-while-pumping} we have $M_{i+1}\geq 1+M_i$ for all $i\in\Nat$.
		Note that $M_i=\omega$ implies $(\class{r\widetilde\beta^i\gamma\eta})_{r\in R}=(\class{r\widetilde\beta^\omega\gamma\eta})_{r\in R}$, which is impossible for $i<e$.
		Thus for all $i<e$ we have $M_{i+1}>M_i$ with $M_i\in\Nat$.

		It remains to prove that $(r_i,\tuple_i)\neq(r_{i'},\tuple_{i'})$ whenever $0\leq i<i'<e$:
		because there are at most $|Q|$ possibilities for $r_i$ and at most $K+\Tc$ possibilities for $\tuple_i$, this implies that $e\leq|Q|\cdot(K+\Tc)$.
		Suppose to the contrary that $(r_i,\tuple_i)=(r_{i'},\tuple_{i'})$ for some $i,i'$ with $0\leq i<i'<e$.
		Recall that $M_0<M_1<\dots<M_e$, that is, $M_i<M_{i'}$.
		Because $\eqlev(r_i\widetilde\beta^i\gamma\eta,r_i\widetilde\beta^\omega\gamma\eta)=M_i$ and
		because $\eqlev(r_i\widetilde\beta^\omega\gamma\eta,r_{i'}\widetilde\beta^{i'}\gamma\eta)=M_{i'}>M_i$ (recall that $r_i=r_{i'}$),
		we have $\eqlev(r_i\widetilde\beta^i\gamma\eta,r_{i'}\widetilde\beta^{i'}\gamma\eta)=M_i$.
		Using \cref{can simulate} we obtain
		\begin{align*}
			M_i=\eqlev(r_i\widetilde\beta^i\gamma\eta,r_{i'}\widetilde\beta^{i'}\gamma\eta)
			\geq\min\set{\eqlev(r''\widetilde\beta^{j_i}\gamma\eta,r''\widetilde\beta^{j_{i'}}\gamma\eta)\mid r''\in R}\,.
		\end{align*}
		For all $r''\in R$, because $\eqlev(r''\widetilde\beta^{j_i}\gamma\eta,r''\widetilde\beta^\omega\gamma\eta)\geq M_{j_i}>M_i$ and
		$\eqlev(r''\widetilde\beta^\omega\gamma\eta,r''\widetilde\beta^{j_{i'}}\gamma\eta)\geq M_{j_{i'}}>M_{i'}>M_i$, we also have
		$\eqlev(r''\widetilde\beta^{j_i}\gamma\eta,r''\widetilde\beta^{j_{i'}}\gamma\eta)>M_i$.
		We thus obtain $M_i>M_i$, a contradiction.
	\end{proof}
	
	To every $(\gamma,\eta)\in\Omega$ we now assign the tuple $((\tuple_{r,\gamma,\eta,0},j_{r,\gamma,\eta,0}))_{r\in R}$.
	The number of possibilities for $\tuple_{r,\gamma,\eta,0}$ is at most $K+\Tc$,
	and the number of possibilities for $j_{r,\gamma,\eta,0}\in\interval{1,e_{\gamma,\eta}}$ is at most $|Q|\cdot(K+\Tc)$.
	We can safely assume that $|Q|\leq\Tc\leq K+\Tc$, so the number of possible tuples is at most $(K+\Tc)^{3|Q|}$.
	
	To prove the first part of the \lcnamecref{main lemma}, it remains to see that if the same tuple is assigned to two pairs $(\gamma,\eta),(\gamma',\eta')\in\Omega$
	then $r\gamma\eta\approx r\gamma'\eta'$ for all $r\in R$.
	Consider thus two such pairs $(\gamma,\eta),(\gamma',\eta')\in\Omega$.
	Let $M=\min\set{\eqlev(r\gamma\eta,r\gamma'\eta')\mid r\in R}$,
	and let $r_{\min}\in R$ be a state such that $\eqlev(r_{\min}\gamma\eta,r_{\min}\gamma'\eta')=M$.
	By \cref{can simulate} (used with $i=i'=0$) we have
	\begin{align*}
		M=\eqlev(r_{\min}\gamma\eta,r_{\min}\gamma'\eta')\geq
		\min\set{\eqlev(r''\widetilde\beta^j\gamma\eta,r''\widetilde\beta^j\gamma'\eta')\mid r''\in R}\,,
	\end{align*}
	where $j=j_{r,\gamma,\eta,0}=j_{r,\gamma',\eta',0}$.
	Next, for every $r''\in R$ we prove that $\eqlev(r''\widetilde\beta^j\gamma\eta,r''\widetilde\beta^j\gamma'\eta')\geq 1+M$, considering two cases:
	\begin{compactitem}
	\item	Suppose first that $\eats{\widetilde\beta^j}_\epsilon(r'')=\emptyset$.
		Then, by \cref{more-equiv-if-read} we have
		\begin{align*}
			\eqlev(r''\widetilde\beta^j\gamma\eta,r''\widetilde\beta^j\gamma'\eta')
			\geq1+\min\set{\eqlev(r\gamma\eta,r\gamma'\eta')\mid r\in\eats{\widetilde\beta^j}(r'')}\,.
		\end{align*}
		We can conclude recalling that $\eats{\widetilde\beta^j}(r'')\subseteq\eats{\widetilde\beta^j}(R)=R$ and that $\eqlev(r\gamma\eta,r\gamma'\eta')\geq M$ for all $r\in R$.
	\item	Conversely, suppose that there is some
		$r\in\eats{\widetilde\beta^j}_\epsilon(r'')=\eats{\widetilde\beta}_\epsilon(\eats{\widetilde\beta^{j-1}}_\epsilon(r''))\subseteq\eats{\widetilde\beta}_\epsilon(Q)$
		(recall that $j\geq 1$).
		By well-formedness of $\overline\delta_0$ we have $r\in\eats{\widetilde\beta}_\epsilon(r)$
		and $s\in\eats{\gamma_0}_\epsilon(r)=\eats{\gamma}_\epsilon(r)=\eats{\gamma'}_\epsilon(r)$ for some $s$.
		This implies that $r''\widetilde\beta^j\gamma\eta\to_\epsilon^* r\gamma\eta\to_\epsilon^*s\eta$
		and $r''\widetilde\beta^j\gamma'\eta'\to_\epsilon^* r\gamma'\eta'\to_\epsilon^*s\eta'$.
		Recalling that $r\in R=\eats{\widetilde\alpha}(q)$ we also have $s\in\eats{\widetilde\beta\gamma}_\epsilon(\eats{\widetilde\alpha}(q))=\eats{\rea{\overline\delta_0}{\omega}}(q)$
		(cf.~\cref{easy-eats-decomp}).
		By assumptions of the \lcnamecref{main lemma} we then have $s\eta\approx s\eta'$,
		implying that $\eqlev(r''\widetilde\beta^j\gamma\eta,r''\widetilde\beta^j\gamma'\eta')=\omega\geq 1+M$.
	\end{compactitem}
	It follows that $M\geq 1+M$, which is only possible for $M=\omega$;
	thus indeed $r\gamma\eta\approx r\gamma'\eta'$ for all $r\in R$.
	This finishes the proof of the first part of the thesis.
	
	Recall that every tuple in $\Theta$ is of the form $(\class{r\gamma\eta})_{r\in R}$ for some $(\gamma,\eta)\in\Omega$.
	Thus, in order to obtain the second part of the thesis, we should bound the size of the set $\Below(\set{r\gamma\eta\mid r\in R})$ for every pair $(\gamma,\eta)\in\Omega$.
	Fix some $(\gamma,\eta)\in\Omega$.
	We classify classes $\classC\in\Below(\set{r\gamma\eta\mid r\in R})$ as follows:
	\begin{compactenum}
	\item	Suppose that $\classC\in\Below(r\gamma\eta)$ for a state $r\in R$ such that $\class{r\gamma\eta}\in\Below(q\rea{\overline\delta_0}{\omega}\eta_0)$.
		Then $\classC\in\Below(\Below(q\rea{\overline\delta_0}{\omega}\eta_0))$, and the size of this set is at most
		$(K+\Cc^{2|Q|+2}\cdot|\Pp|^{\Dc+1})\cdot|\Pp|^{2\Bc\Dc}$ by \cref{few-conf-without-pushing,Cbelowbelow}.
	\item	Suppose that $\classC\in\Below(r\gamma\eta)$ for a state $r\in R\cap\eats{\widetilde\beta}_\epsilon(Q)$.
		By well-formedness of $\overline\delta_0$ we have
		$r\in\eats{\widetilde\beta}_\epsilon(r)$ and $s\in\eats{\gamma_0}_\epsilon(r)=\eats{\gamma}_\epsilon(r)$ for some $s$,
		which implies $r\gamma\eta\to^*_\epsilon s\eta$ and $s\in\eats{\rea{\overline\delta_0}{\omega}}(q)$.
		Then $r\gamma\eta\approx s\eta\approx s\eta_0$, so $\classC$ belongs to $\Below(\set{s\eta_0\mid s\in\eats{\rea{\overline\delta_0}{\omega}}(q)})$,
		being a set of size $K$.
	\item	Suppose that $\classC\in\Below(\classC')$ for a class $\classC'$ such that $\dist(\classC_s,\classC')\leq\Bc$ for some $s\in\eats{\rea{\overline\delta_0}{\omega}}(q)$.
		We then have a run from $s\eta_0\in\classC_s$ to a configuration in $\classC'$, reading at most $\Bc$ action symbols.
		While reading only $\Bc$ action symbols one cannot push $\Dc>\Bc$ stack symbols (only transitions reading a symbol can push), so this run is necessarily $\Dc$-almost-popping;
		we have $\classC'\in\Below(\set{\classC_s\mid s\in\eats{\rea{\overline\delta_0}{\omega}}(q)})$.
		This means that $\classC\in\Below(\Below(\set{\classC_s\mid s\in\eats{\rea{\overline\delta_0}{\omega}}(q)}))$;
		by \cref{Cbelowbelow} there are at most $K\cdot|\Pp|^{2\Bc\Dc}$ such classes $\classC$.
	\item	Let us assume that none of the three above cases occurs.
		Among $\Dc$-almost-popping runs that start from a configuration in $\{r\gamma\eta\mid r\in R\}$ 
		and lead to a configuration in $\classC$,
		we choose a run $\pi_\classC$ reading the minimal number of action symbols.
		Let us fix the state $r$ for which $\pi_\classC(0)=r\gamma\eta$;
		we have at most $|Q|$ possibilities for that.
		Let us abbreviate $j=j_{r,\gamma,\eta,0}$ and $\theta=\theta_{r,\gamma,\eta,0}$.
		Because the first case above does not apply, necessarily $\tuple_{r,\gamma,\eta,0}$ is of the form $(r',t,\mu,(\tau_u)_{u\in\eats{\mu}(t)})$.
		We have $r\gamma\eta\approx t\mu\theta$, so there is also a $\Dc$-almost-popping run $\pi'_\Cc$ that starts in $t\mu\theta$, leads a configuration in $\classC$,
		and reads the same number of action symbols as $\pi_\classC$.
		The stack $\mu\in\Gamma^{\Dc'+1}$ can be seen as a sequence of $\Dc'+1$ well-formed decompositions of height $0$ (consisting of single stack symbols).
		Thus, by \cref{few-conf-without-pushing-seq}, there are at most $(\Dc'+1)\cdot|\Pp|^{\Dc+1}$ classes for which the run $\pi'_\classC$ can be shifted to a run from $t\mu$.
		For the remaining classes $\classC$ the run $\pi'_\classC$ crosses a configuration with stack $\theta$, that is, a configuration $u\theta$ for some $u\in\eats{\mu}(t)$.
		Let us now also fix this state $u$; we have at most $|Q|$ possibilities for that.
		
		Of course every suffix of a $\Dc$-almost-popping run is again $\Dc$-almost-popping;
		in particular this is the case for the suffix of $\pi'_\classC$ starting in $u\theta$, so $\classC\in\Below(u\theta)$.
		Because the previous case does not apply, we cannot have $\dist(\classC_s,\class{u\theta})\leq\Bc$ for any $s\in\eats{\rea{\overline\delta_0}{\omega}}(q)$.
		Thus $\tau_u$ is necessarily of the form $(v_u,\xi_u)$, where $u\theta\approx v_u\xi_u\widetilde\beta^j\gamma\eta$.
		Let $\pi''_\classC$ be a $\Dc$-almost-popping run from $v_u\xi_u\widetilde\beta^j\gamma\eta$ to a configuration in $\classC$, parallel to $\pi'_\classC$.
		Because $v_u\xi_u\in\Near(r'\widetilde\beta^{\Bc+2},\Bc\Cc^{2|Q|+1}+\Bc)$, there is a run $\rho$ from $r'\widetilde\beta^{j+\Bc+2}\gamma\eta$
		to $v_u\xi_u\widetilde\beta^j\gamma\eta$ reading at most $\Bc\Cc^{2|Q|+1}+\Bc<\Dc$ action symbols.
		A run reading so few action symbols cannot push $\Dc$ stack symbols, so $\rho$ is $\Dc$-almost-popping.
		By \cref{Lbelowbelow} there are at most $|\Pp|^{2\Bc\Dc}-1$ classes $\classC$ for which the composition $\rho\circ\pi''_\classC$ is not $\Dc$-almost-popping.
		For remaining classes $\classC$ the run $\rho\circ\pi''_\classC$ is $\Dc$-almost-popping.
		
		Note that $j+\Bc+2\leq e_{\gamma,\eta}+\Bc+2\leq|Q|\cdot(K+\Tc)+\Bc+2$.
		By \cref{few-conf-without-pushing-seq} (applied to the sequence of $j+\Bc+2$ copies of $\overline\beta$)
		there are at most $(|Q|\cdot(K+\Tc)+\Bc+2)\cdot\Cc^{2|Q|+1}\cdot|\Pp|^{\Dc+1}$ classes $\classC$ for which the run $\rho\circ\pi''_\classC$
		can be shifted to a run from $r'\widetilde\beta^{j+\Bc+2}$.
		For remaining classes $\classC$ the run $\pi''_\classC$ crosses a configuration with stack $\gamma\eta$, that is,
		a configuration $r''\gamma\eta$ for some $r''\in\eats{\widetilde\beta^{j+\Bc+2}}(r')\subseteq R$.
		The suffix of $\pi''_\classC$ starting in $r''\gamma\eta$ is again $\Dc$-almost-popping, that is, $\classC\in\Below(r''\gamma\eta)$.
		Because the second case above does not apply, we have $r''\not\in\eats{\widetilde\beta}_\epsilon(Q)$,
		so the part of $\pi''_\classC$ that pops the last copy of $\widetilde\beta$ necessarily reads some action symbol;
		the suffix of $\pi''_\classC$ starting in $r''\gamma\eta$ reads less action symbols than $\pi_\classC$.
		However this contradicts the minimality of $\pi_\classC$, so there are no such classes $\classC$.
	\end{compactenum}
	
	By summing up the numbers obtained above we get a number $\Uc\in 2^{2^{|\Pp|^{O(1)}}}$ such that we have $|\Below(\set{r\gamma\eta\mid r\in R})|\leq(K+1)\cdot\Uc$.
\end{proof}

Let us reformulate \cref{main lemma} so that it uses a set of states $P$ in place of a single state $q$:

\begin{lemma}\label{main lemma set}
	Let $P\subseteq Q$, let $\overline\delta_0=\decomp{\overline\alpha,\overline\beta,\gamma_0}$ be a well-formed decomposition of degree at most $\Cc$ and height at most $2|Q|+2$,
	let $(\classC_s)_{s\in\eats{\rea{\overline\delta_0}{\omega}}(P)}$ be a tuple of classes, and let $K=|\Below(\set{\classC_s\mid\allowbreak s\in\eats{\rea{\overline\delta_0}{\omega}}(P)})|$.
	Let also $\Omega$ be the set of pairs $(\gamma,\eta)\in\Gamma^*\times\Gamma^*$ for which
	\begin{compactitem}
	\item	$\decomp{\overline\alpha,\overline\beta,\gamma}\tpeq\overline\delta_0$,
	\item	$\initstate\to^*q\rea{\overline\alpha}{1}\rea{\overline\beta}{1}\gamma\eta$ for all $q\in P$, and
	\item	$s\eta\in\classC_s$ for all $s\in\eats{\rea{\overline\delta_0}{\omega}}(P)$.
	\end{compactitem}
	If $(\Ll(\Pp),\initstate)$ is weakly bisimulation finite, then the set of tuples of classes
	\begin{align*}
		\Theta=\set{(\class{r\gamma\eta})_{r\in\eats{\rea{\overline\alpha}{\omega}}(P)}\mid(\gamma,\eta)\in\Omega}
	\end{align*}
	has at most $(K+\Tc)^{3|Q|^2}$ elements.
	Moreover, for every tuple $(\classC'_r)_{r\in\eats{\rea{\overline\alpha}{\omega}}(P)}\in\Theta$
	we have $|\Below(\set{\classC'_r\mid r\in\eats{\rea{\overline\alpha}{\omega}}(P)})|\leq(K+1)\cdot|Q|\cdot\Uc$.
\end{lemma}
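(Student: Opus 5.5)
We reduce to \cref{main lemma} by applying it separately to each state $q\in P$ and combining the results. For $q\in P$, write $S_q=\eats{\rea{\overline\delta_0}{\omega}}(q)\subseteq\eats{\rea{\overline\delta_0}{\omega}}(P)$ and $R_q=\eats{\rea{\overline\alpha}{\omega}}(q)$. Then $\eats{\rea{\overline\alpha}{\omega}}(P)=\bigcup_{q\in P}R_q$.

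The restricted tuple $(\classC_s)_{s\in S_q}$ satisfies
\[
K_q=|\Below(\set{\classC_s\mid s\in S_q})|\leq K,
\]
because $\Below(\cdot)$ is monotone with respect to inclusion of its argument. Let $\Omega_q$ be the set $\Omega$ of \cref{main lemma} instantiated with $q$ and $(\classC_s)_{s\in S_q}$. Each pair $(\gamma,\eta)\in\Omega$ lies in $\Omega_q$ for every $q\in P$: the type condition is the same, reachability of $q\rea{\overline\alpha}{1}\rea{\overline\beta}{1}\gamma\eta$ is assumed for all $q\in P$, and $s\eta\in\classC_s$ holds for all $s\in S_q$. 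We may assume $P\neq\emptyset$, since otherwise $\Theta$ contains only the empty tuple.

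For the first part, consider the map sending $(\gamma,\eta)\in\Omega$ to the family $\big((\class{r\gamma\eta})_{r\in R_q}\big)_{q\in P}$. Its $q$-th component lies in $\Theta_q$, the set $\Theta$ from \cref{main lemma} for this $q$. The tuple $(\class{r\gamma\eta})_{r\in\eats{\rea{\overline\alpha}{\omega}}(P)}$ is determined by this family, because every $r$ in the union belongs to some $R_q$ and the components agree on overlaps. Hence
\[
|\Theta|\leq\prod_{q\in P}|\Theta_q|\leq\prod_{q\in P}(K_q+\Tc)^{3|Q|}\leq(K+\Tc)^{3|Q|\cdot|P|}\leq(K+\Tc)^{3|Q|^2}.
\]
The only point to check is that the classes $\classC_s$ are the same objects in the single-state instances, so that \cref{main lemma} really applies with $K_q\leq K$.

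For the second part, fix a tuple in $\Theta$ coming from some $(\gamma,\eta)\in\Omega$. By definition $\Below$ of a union of sets of classes is the union of their $\Below$ sets, so
\[
\Below(\set{\classC'_r\mid r\in\eats{\rea{\overline\alpha}{\omega}}(P)})=\bigcup_{q\in P}\Below(\set{\classC'_r\mid r\in R_q}).
\]
For each $q\in P$, the tuple $(\classC'_r)_{r\in R_q}$ lies in $\Theta_q$, so \cref{main lemma} bounds the $q$-th set by $(K_q+1)\Uc\leq(K+1)\Uc$. Summing over at most $|Q|$ states gives $(K+1)\cdot|Q|\cdot\Uc$.

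The argument is straightforward bookkeeping. The main care is needed in two places: verifying that monotonicity of $\Below$ gives $K_q\leq K$, and checking that $\Omega\subseteq\Omega_q$, which holds because the reachability hypothesis is required for every $q\in P$.
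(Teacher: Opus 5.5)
Your proof is correct and follows the paper's own proof. You apply \cref{main lemma} to each $q\in P$, use $\Omega\subseteq\Omega_q$ so that restriction maps $\Theta$ injectively into $\prod_{q\in P}\Theta_q$, and use that $\Below$ distributes over unions for the second bound; your explicit step $K_q\leq K$ via monotonicity of $\Below$ is one the paper uses without stating it.
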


\begin{proof}
	For every $q\in P$ let $\Omega_q$ be the set of pairs 
	$(\gamma,\eta)\in\Gamma^*\times\Gamma^*$ for which
	$\decomp{\overline\alpha,\overline\beta,\gamma}\tpeq\overline\delta_0$, $\initstate\to^*q\rea{\overline\alpha}{1}\rea{\overline\beta}{1}\gamma\eta$, and
	$s\eta\in\classC_s$ for all $s\in\eats{\rea{\overline\delta_0}{\omega}}(q)$;
	let also $\Theta_q=\set{(\class{r\gamma\eta})_{r\in\eats{\rea{\overline\alpha}{\omega}}(q)}\mid(\gamma,\eta)\in\Omega_q}$.
	\cref{main lemma} implies that $|\Theta_q|\leq(K+\Tc)^{3|Q|}$ and that
	$|\Below(\set{\classC'_r\mid r\in\eats{\rea{\overline\alpha}{\omega}}(q)})|\leq(K+1)\cdot\Uc$ for every tuple $(\classC'_r)_{r\in\eats{\rea{\overline\alpha}{\omega}}(q)}\in\Theta_q$.

	Observe that $\Omega=\bigcap_{q\in P}\Omega_q$.
	In consequence, for every tuple $(\classC'_r)_{r\in\eats{\rea{\overline\alpha}{\omega}}(P)}\in\Theta$ and for every $q\in P$,
	the sub-tuple $(\classC'_r)_{r\in\eats{\rea{\overline\alpha}{\omega}}(q)}$ belongs to $\Theta_q$.
	Because $\eats{\rea{\overline\alpha}{\omega}}(P)=\bigcup_{q\in P}\eats{\rea{\overline\alpha}{\omega}}(q)$ (i.e., every full tuple is completely determined by the sub-tuples for all $q\in P$),
	we thus have $|\Theta|\leq\Pi_{q\in P}|\Theta_q|\leq\big((K+\Tc)^{3|Q|})^{|Q|}$.
	Moreover, for every tuple $(\classC'_r)_{r\in\eats{\rea{\overline\alpha}{\omega}}(P)}\in\Theta$ we have
	$\Below(\set{\classC'_r\mid r\in\eats{\rea{\overline\alpha}{\omega}}(P)})=\bigcup_{q\in P}\Below(\set{\classC'_r\mid r\in\eats{\rea{\overline\alpha}{\omega}}(q)})$,
	which implies $|\Below(\set{\classC'_r\mid r\in\eats{\rea{\overline\alpha}{\omega}}(P)})|\leq|Q|\cdot (K+1)\cdot\Uc$, as required.
\end{proof}

While in \cref{main lemma set} we describe the situation after appending $\gamma$ to a stack $\eta$ (where $\gamma$ is the last component of some pumping triple),
in \cref{main lemma decomp} we describe the situation after appending the whole $\rea{\overline\delta}{1}$ for a decomposition $\overline\delta$.
The decomposition $\overline\delta$ may contain multiple pumping triples (also in a nested way);
\cref{main lemma decomp} uses \cref{main lemma set} for each of these triples.

\begin{lemma}\label{main lemma decomp}
	Let $P\subseteq Q$, let $\overline\delta_0$ be a well-formed decomposition of degree at most $\Cc$ and height $\ell\leq 2|Q|+2$,
	let $(\classC_s)_{s\in\eats{\rea{\overline\delta_0}{\omega}}(P)}$ be a tuple of classes, and let $K=|\Below(\set{\classC_s\mid s\in\eats{\rea{\overline\delta_0}{\omega}}(P)})|$.
	Let also $\Omega$ be the set of pairs $(\overline\delta,\eta)$ (of a decomposition and a standard stack content) for which
	\begin{compactitem}
	\item	$\overline\delta\tpeq\overline\delta_0$,
	\item	$\initstate\to^*q\rea{\overline\delta}{1}\eta$ for all $q\in P$, and
	\item	$s\eta\in\classC_s$ for all $s\in\eats{\rea{\overline\delta_0}{\omega}}(P)$.
	\end{compactitem}
	If $(\Ll(\Pp),\initstate)$ is weakly bisimulation finite, then the set of tuples of classes
	\begin{align*}
		\Theta=\set{(\class{q\rea{\overline\delta}{1}\eta})_{q\in P}\mid(\overline\delta,\eta)\in\Omega}
	\end{align*}
	has at most $((K+1)\cdot(|Q|\cdot\Uc+1)^{\Cc^\ell}+\Tc)^{3|Q|^2\cdot\Cc^\ell}$ elements.
	Moreover, for every tuple $(\classC'_q)_{q\in P}\in\Theta$
	we have $|\Below(\set{\classC'_q\mid q\in P})|\leq(K+1)\cdot(|Q|\cdot\Uc+1)^{\Cc^\ell}-1$.
\end{lemma}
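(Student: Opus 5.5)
We prove both parts simultaneously by induction on the structure of $\overline\delta_0$, with the stronger convention that the bounds hold with $\Cc^\ell$ replaced by $\Cc^{\ell'}$ for any $\ell'\geq$ height. Write $G(K,\ell)=(K+1)\cdot(|Q|\cdot\Uc+1)^{\Cc^\ell}-1$ for the claimed bound on $\Below$. Throughout we use \cref{only type important}: since $\overline\delta\tpeq\overline\delta_0$, all $\eats{\cdot}$ and $\up(\cdot)$ data of $\overline\delta$ and its subdecompositions agree with those of $\overline\delta_0$. We also use \cref{epsilon-preserved}: reachability of $q\rea{\overline\delta}{1}\eta$ implies reachability of the intermediate configurations obtained by popping prefixes, with $\eats{\rea{\cdot}{\omega}}$-states.

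\textbf{Base case} $\overline\delta_0=X$. Then $\overline\delta=X$. The class of $qX\eta$ is determined by the tuple $(\classC_s)_s$ via \cref{equiv-because-equiv-below}, so $|\Theta|\leq 1$. For $\Below$, we apply the second part of \cref{few-conf-without-pushing} with height $0$, giving at most $K+|\Pp|^{\Dc+1}$ classes. This is within $G(K,\ell)$ once $\Uc\geq|\Pp|^{\Dc+1}$, which we may assume.

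\textbf{Sequence case} $\overline\delta_0=\overline\delta_{0,1}\dots\overline\delta_{0,k}$ with $k\leq\Cc$. We process the components from the bottom up. Set $P_k=\eats{\rea{\overline\delta_{0,1}}{\omega}\dots\rea{\overline\delta_{0,k-1}}{\omega}}(P)$ and apply the induction hypothesis to $\overline\delta_{0,k}$ with state set $P_k$. We must pin the tuple $(\classC_s)$, and the hypothesis bounds both the number of resulting tuples $(\class{p\rea{\overline\delta_k}{1}\eta})_{p\in P_k}$ and their $\Below$ size by $G(K,\ell-1)$. Then, for each of the possible resulting tuples, we apply the hypothesis to $\overline\delta_{0,k-1}$ with that tuple fixed and $K$ replaced by $G(K,\ell-1)$, and so on. Iterating $k\leq\Cc$ times composes $G$ with itself: $(G(\cdot,\ell-1)+1)$ multiplies by $(|Q|\Uc+1)^{\Cc^{\ell-1}}$ each time, so after $\Cc$ steps we get exactly $G(K,\ell)$. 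The number of tuples is a product over the steps of counts each bounded by $(G(K,\ell)+\Tc)^{3|Q|^2\Cc^{\ell-1}}$, giving $(G(K,\ell)+1+\Tc)^{3|Q|^2\Cc^{\ell}}$. The reachability assumption for the inner components follows since $\initstate\to^*q\rea{\overline\delta}{1}\eta\to^*p\rea{\overline\delta_i}{1}\dots\eta$.

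\textbf{Pumping-triple case} $\overline\delta_0=\decomp{\overline\alpha_0,\overline\beta_0,\gamma_0}$. First apply \cref{main lemma set}. This bounds the tuples $(\class{r\gamma\eta})_{r\in\eats{\rea{\overline\alpha_0}{\omega}}(P)}$ by $(K+\Tc)^{3|Q|^2}$, and their $\Below$ by $(K+1)|Q|\Uc$. Next, apply the induction hypothesis to $\overline\alpha_0$, with $\eta$ replaced by $\rea{\overline\beta}{1}\gamma\eta$. The classes needed there are those of $r\rea{\overline\beta}{1}\gamma\eta$ for $r\in\eats{\rea{\overline\alpha_0}{\omega}}(P)$, which are not the ones \cref{main lemma set} controls. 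This step is the main obstacle. We plan to overcome it by replacing $\rea{\overline\beta}{1}\gamma\eta$ with a suitable equivalent. By the first two conditions of well-formedness and \cref{epsilon-preserved}, $\eats{\rea{\overline\beta_0}{\omega}}$ maps $R=\eats{\rea{\overline\alpha_0}{\omega}}(P)$ into $R$. We therefore apply the induction hypothesis to $\overline\beta_0$ with state set $R$ and with the tuple from \cref{main lemma set} fixed; reachability of $r\rea{\overline\beta}{1}\gamma\eta$ follows from that of $q\rea{\overline\alpha}{1}\rea{\overline\beta}{1}\gamma\eta$. This yields tuples for $r\rea{\overline\beta}{1}\gamma\eta$ with controlled count and $\Below$. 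Finally, we apply the induction hypothesis to $\overline\alpha_0$ once more. Since $\rea{\overline\delta}{1}=\rea{\overline\alpha}{1}\rea{\overline\beta}{1}\gamma$, we are again composing three applications. Each multiplies $K+1$ by at most $(|Q|\Uc+1)^{\Cc^{\ell-1}}$, so three applications stay below $(|Q|\Uc+1)^{\Cc^\ell}$ because $\Cc\geq 3$. The exponent bookkeeping for $|\Theta|$ then closes as in the sequence case.
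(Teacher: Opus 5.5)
Your proposal is correct and follows the paper's own proof essentially step by step. It uses the same structural induction and the same bottom-up chaining of induction-hypothesis applications through the components in the sequence case, with $\Below$ bounds growing by a factor $(|Q|\cdot\Uc+1)^{\Cc^{\ell-1}}$ per step. In the pumping-triple case it uses the same three-stage chain: \cref{main lemma set} for $\gamma$, then the induction hypothesis for $\overline\beta_0$ on the state set $R=\eats{\rea{\overline\alpha_0}{\omega}}(P)$ (using $\eats{\rea{\overline\beta_0}{\omega}}(R)=R$), then the induction hypothesis for $\overline\alpha_0$ on $P$, closed off by $3\Cc^{\ell-1}\leq\Cc^\ell$.

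The only blemish is expository. No ``replacement by a suitable equivalent'' is needed in the triple case; the intermediate application to $\overline\beta_0$ is exactly what resolves the obstacle you identified.
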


\begin{proofnoqed}
	We proceed by induction on the structure of $\overline\delta_0$.
	Suppose first that $\overline\delta_0=X\in\Gamma$
	(i.e., $\ell=0$).
	For all $(\overline\delta,\eta),(\hat\delta,\eta')\in\Omega$ we have $\rea{\overline\delta}{1}=X=\rea{\hat\delta}{1}$, and $qX\eta\approx qX\eta'$ for all $q\in P$
	by \cref{equiv-because-equiv-below} (because $s\eta\approx s\eta'$ for all $s\in\eats{X}(q)\subseteq\eats{X}(P)$ by assumption).
	It follows that $|\Theta|\leq 1$.
	Moreover the only tuple in $\Theta$ (if exists) is of the form $(\class{qX\eta})_{q\in P}$ for $(X,\eta)\in\Omega$.
	By \cref{few-conf-without-pushing} we have $|\Below(\set{qX\eta\mid q\in P})|\leq K+|\Pp|^{\Dc+1}$.
	Checking the definition of $\Uc$ in the proof of \cref{main lemma} we see that $|\Pp|^{\Dc+1}\leq\Uc$ 
	(already the component of $\Uc$ needed to handle Case 1 is greater than $|\Pp|^{\Dc+1}$),
	which gives us the desired inequality
	$|\Below(\set{qX\eta\mid q\in P})|\leq (K+1)\cdot(|Q|\cdot\Uc+1)-1$.
		
	Next, suppose that $\overline\delta_0=\overline\delta_{0,1}\dots\overline\delta_{0,k}$.
	Let $P_0=P$ and $P_i=\eats{\rea{\overline\delta_{0,i}}{\omega}}(P_{i-1})$ for $i\in\interval{1,k}$.
	Note that $P_k=\eats{\rea{\overline\delta_0}{\omega}}(P)$.
	For each $i\in\interval{0,k}$ and each tuple of classes $\tupleS=(\classD_s)_{s\in P_i}$
	let $K_{i,\tupleS}=|\Below(\set{\classD_s\mid s\in P_i})|$;
	when $i\in\interval{1,k}$ let $\Omega_{i,\tupleS}$ be the set of pairs $(\overline\delta_i,\eta_i)$ for which
	$\overline\delta_i\tpeq\overline\delta_{0,i}$,
	$\initstate\to^*q\rea{\overline\delta_i}{1}\eta_i$ for all $q\in P_{i-1}$, and
	$s\eta_i\in\classD_s$ for all $s\in P_i$;
	let also
	$\Theta_{i,\tupleS}=\set{(\class{q\rea{\overline\delta_i}{1}\eta_i})_{q\in P_{i-1}}\mid(\overline\delta_i,\eta_i)\in\Omega_{i,\tupleS}}$.
	The induction hypothesis (where as $P$ we take $P_{i-1}$, as $\overline\delta_0$ we take $\overline\delta_{0,i}$, and as $(\classC_s)_{s\in P_i}$ we take $\tupleS$)
	says that
	\begin{inequalities}
		|\Theta_{i,\tupleS}|&\leq ((K_{i,\tupleS}+1)\cdot(|Q|\cdot\Uc+1)^{\Cc^{\ell-1}}+\Tc)^{3|Q|^2\cdot\Cc^{\ell-1}}&&\mbox{and}\label{ineq-theta}\\
		K_{i-1,\tupleS'}&\leq(K_{i,\tupleS}+1)\cdot(|Q|\cdot\Uc+1)^{\Cc^{\ell-1}}-1
			&&\mbox{for all }\tupleS'\in\Theta_{i,\tupleS}.\label{ineq-K}
	\end{inequalities}
	
	Consider now some tuple $(\classC_q')_{q\in P}\in\Theta$.
	It is of the form $(\class{q\rea{\overline\delta}{1}\eta})_{q\in P}$ for some $(\overline\delta,\eta)\in\Omega$.
	Because $\overline\delta\tpeq\overline\delta_0$, we have $\overline\delta=\overline\delta_1\dots\overline\delta_k$, where $\overline\delta_i\tpeq\overline\delta_{0,i}$
	for all $i\in\interval{1,k}$.
	For $i\in\interval{0,k}$ let $\eta_i=\rea{\overline\delta_{i+1}}{1}\dots\rea{\overline\delta_k}{1}\eta$,
	and $\tupleS_i=(\class{s\eta_i})_{s\in P_i}$.
	By definition of $\Omega$ we have
	$\initstate\to^* q\rea{\overline\delta}{1}\eta$ for all $q\in P$.
	By \cref{only type important,epsilon-preserved} we have $P_{i-1}=\eats{\rea{\overline\delta_{0,1}}{\omega}\dots\rea{\overline\delta_{0,i-1}}{\omega}}(P)
	=\eats{\rea{\overline\delta_1}{\omega}\dots\rea{\overline\delta_{i-1}}{\omega}}(P)\subseteq\eats{\rea{\overline\delta_1}{1}\dots\rea{\overline\delta_{i-1}}{1}}(P)$,
	which implies $\initstate\to^* q\rea{\overline\delta_i}{1}\eta_i$ for all $q\in P_{i-1}$ and $i\in\interval{1,k}$
	(we can first push the whole stack content $q\rea{\overline\delta}{1}\eta$, and then pop its topmost part).
	It follows that $(\overline\delta_i,\eta_i)\in\Omega_{i,\tupleS_i}$,
	hence also $\tupleS_{i-1}=(\class{q\rea{\overline\delta_i}{1}\eta_i})_{q\in P_{i-1}}\in\Theta_{i,\tupleS_i}$ for all $i\in\interval{1,k}$.
	\cref{ineq-K} used for the tuple $\tupleS_{i-1}\in\Theta_{i,\tupleS_i}$ gives us
	\begin{align*}
		K_{i-1,\tupleS_{i-1}}\leq(K_{i,\tupleS_i}+1)\cdot(|Q|\cdot\Uc+1)^{\Cc^{\ell-1}}-1 && \mbox{for all }i\in\interval{1,k}.
	\end{align*}
	Notice that $\tupleS_k=(\class{s\eta})_{s\in P_k}=(\classC_s)_{s\in P_k}$ and $\tupleS_0=(\class{q\rea{\overline\delta}{1}\eta})_{q\in P}=(\classC_q')_{q\in P}$;
	in particular $K_{k,\tupleS_k}=K$ and $\Below(\set{\classC'_q\mid q\in P})=K_{0,\tupleS_0}$.
	Thus, the above inequalities imply that
	\begin{align*}
		K_{i,\tupleS_i}\leq(K+1)\cdot(|Q|\cdot\Uc+1)^{(k-i)\cdot\Cc^{\ell-1}}-1 && \mbox{for all }i\in\interval{0,k}.
	\end{align*}
	In particular, because $k\leq\Cc$, we have
	\begin{inequalities}
		\Below(\set{\classC'_q\mid q\in P})=K_{0,\tupleS_0}&\leq(K+1)\cdot(|Q|\cdot\Uc+1)^{\Cc\cdot\Cc^{\ell-1}}-1&&\mbox{and}\label{ineq:3}\\
		K_{i,\tupleS_i}&\leq(K+1)\cdot(|Q|\cdot\Uc+1)^{(\Cc-1)\cdot\Cc^{\ell-1}}-1 && \mbox{for all }i\in\interval{1,k}.\label{ineq:4}
	\end{inequalities}
	\cref{ineq:3} already gives us the second part of the thesis of the lemma.
	\cref{ineq:4} can be substituted in \cref{ineq-theta}, giving us
	\begin{inequalities}
		|\Theta_{i,\tupleS_i}|\leq ((K+1)\cdot(|Q|\cdot\Uc+1)^{\Cc\cdot\Cc^{\ell-1}}+\Tc)^{3|Q|^2\cdot\Cc^{\ell-1}}\,.\label{ineq:5}
	\end{inequalities}
	From the above it follows that for every tuple $\tupleS_0\in\Theta$ we have found a sequence of tuples $\tupleS_1,\dots,\tupleS_k$ such that
	$\tupleS_{i-1}\in\Theta_{i,\tupleS_i}$ for all $i\in\interval{1,k}$, where $\tupleS_k=(\classC_s)_{s\in P_k}$ is fixed (does not depend on $\tupleS_0$).
	Notice that there are at most $|\Theta_{k,\tupleS_k}|$ choices for $\tupleS_{k-1}$, then at most $|\Theta_{k-1,\tupleS_{k-1}}|$ choices for $\tupleS_{k-2}$, and so on.
	Thus, by \cref{ineq:5}, we obtain the desired inequality
	\begin{align*}
		|\Theta|\leq \big(((K+1)\cdot(|Q|\cdot\Uc+1)^{\Cc^\ell}+\Tc)^{3|Q|^2\cdot\Cc^{\ell-1}}\big)^k\leq ((K+1)\cdot(|Q|\cdot\Uc+1)^{\Cc^\ell}+\Tc)^{3|Q|^2\cdot\Cc^\ell}\,.
	\end{align*}

	Finally, suppose that $\overline\delta_0=\decomp{\overline\alpha_0,\overline\beta_0,\gamma_0}$.
	Let $R=\eats{\rea{\overline\alpha_0}{\omega}}(P)$ and $S=\eats{\rea{\overline\delta_0}{\omega}}(P)$.
	Notice that $\eats{\rea{\overline\beta_0}{\omega}}(R)=R$, by well-formedness of $\overline\delta_0$.
	Let $\Omega_3$ be the set of pairs $(\gamma,\eta)$ for which
	$\decomp{\overline\alpha_0,\overline\beta_0,\gamma}\tpeq\overline\delta_0$,
	$\initstate\to^*q\rea{\overline\alpha_0}{1}\rea{\overline\beta_0}{1}\gamma\eta$ for all $q\in P$, and
	$s\eta\in\classC_s$ for all $s\in S$;
	let also $\Theta_3=\set{(\class{r\gamma\eta})_{r\in R}\mid(\gamma,\eta)\in\Omega_3}$.
	Going further, for every tuple $\tupleS=(\classD_r)_{r\in R}$
	let $K_{\tupleS}=|\Below(\set{\classD_r\mid r\in R})|$,
	let $\Omega_{2,\tupleS}$ be the set of pairs $(\overline\beta,\eta')$ for which
	$\overline\beta\tpeq\overline\beta_0$,
	$\initstate\to^*r\rea{\overline\beta}{1}\eta'$ for all $r\in R$, and
	$r\eta'\in\classD_r$ for all $r\in R$;
	let also $\Theta_{2,\tupleS}=\set{(\class{r\rea{\overline\beta}{1}\eta'})_{r\in R}\mid(\overline\beta,\eta')\in\Omega_{2,\tupleS}}$.
	Likewise, again for every tuple $\tupleS=(\classD_r)_{r\in R}$,
	let $\Omega_{1,\tupleS}$ be the set of pairs $(\overline\alpha,\eta'')$ for which
	$\overline\alpha\tpeq\overline\alpha_0$,
	$\initstate\to^*q\rea{\overline\alpha}{1}\eta''$ for all $q\in P$, and
	$r\eta''\in\classD_r$ for all $r\in R$;
	let also $\Theta_{1,\tupleS}=\set{(\class{q\rea{\overline\alpha}{1}\eta''})_{q\in Q}\mid(\overline\beta,\eta'')\in\Omega_{1,\tupleS}}$.
	Finally, for every tuple $\tupleS=(\classC'_q)_{q\in P}$ let $K'_{\tupleS}=|\Below(\set{\classC'_q\mid q\in P})|$.
	\cref{main lemma set} gives us inequalities
	\begin{inequalities}
		|\Theta_3|&\leq(K+\Tc)^{3|Q|^2}\leq((K+1)\cdot(|Q|\cdot\Uc+1)^{\Cc^\ell}+\Tc)^{3|Q|^2\cdot\Cc^{\ell-1}} &&\mbox{and}\label{ineq-theta-3}\\
		K_{\tupleS}&\leq(K+1)\cdot|Q|\cdot\Uc\leq(K+1)\cdot(|Q|\cdot\Uc+1)^{\Cc^{\ell-1}}-1 &&\mbox{for all }\tupleS\in\Theta_3.\label{ineq-K-3}
	\end{inequalities}
	For every tuple $\tupleS$ (indexed by elements of $R$), by 
	the induction hypothesis for $\overline\beta_0$ we obtain
	\begin{inequalities}
		|\Theta_{2,\tupleS}|&\leq ((K_{\tupleS}+1)\cdot(|Q|\cdot\Uc+1)^{\Cc^{\ell-1}}+\Tc)^{3|Q|^2\cdot\Cc^{\ell-1}}&&\mbox{and}\label{ineq-theta-2}\\
		K_{\tupleS'}&\leq(K_{\tupleS}+1)\cdot(|Q|\cdot\Uc+1)^{\Cc^{\ell-1}}-1
			&&\mbox{for all }\tupleS'\in\Theta_{2,\tupleS},\label{ineq-K-2}
	\end{inequalities}
	and by the induction hypothesis for $\overline\alpha_0$ we obtain
	\begin{inequalities}
		|\Theta_{1,\tupleS}|&\leq ((K_{\tupleS}+1)\cdot(|Q|\cdot\Uc+1)^{\Cc^{\ell-1}}+\Tc)^{3|Q|^2\cdot\Cc^{\ell-1}}&&\mbox{and}\label{ineq-theta-1}\\
		K'_{\tupleS'}&\leq(K_{\tupleS}+1)\cdot(|Q|\cdot\Uc+1)^{\Cc^{\ell-1}}-1
			&&\mbox{for all }\tupleS'\in\Theta_{1,\tupleS}.\label{ineq-K-1}
	\end{inequalities}

	Consider now some tuple $\tupleS_0=(\classC_q')_{q\in P}\in\Theta$.
	It is of the form $(\class{q\rea{\overline\delta}{1}\eta})_{q\in P}$ for some $(\overline\delta,\eta)\in\Omega$.
	Because $\overline\delta\tpeq\overline\delta_0$, we have $\overline\delta=\decomp{\overline\alpha,\overline\beta,\gamma}$,
	where $\overline\alpha\tpeq\overline\alpha_0$, $\overline\beta\tpeq\overline\beta_0$, $\eats{\gamma}_\epsilon=\eats{\gamma_0}_\epsilon$,
	and $\up(\gamma)=\up(\gamma_0)$.
	Let $\eta'=\gamma\eta$ and $\eta''=\rea{\overline\beta}{1}\gamma\eta$.
	Let also $\tupleS_1=(\class{r\eta''})_{r\in R}$ and $\tupleS_2=(\class{r\eta'})_{r\in R}$.
	By definition of $\Omega$ we have
	$\initstate\to^* q\rea{\overline\delta}{1}\eta=q\rea{\overline\alpha}{1}\eta''$ for all $q\in P$,
	so $(\overline\alpha,\eta'')\in\Omega_{1,\tupleS_1}$.
	We also have $\decomp{\overline\alpha_0,\overline\beta_0,\gamma}\tpeq\overline\delta_0$, so $(\gamma,\eta)\in\Omega_3$.
	By \cref{only type important,epsilon-preserved} we have $R=\eats{\rea{\overline\alpha_0}{\omega}}(P)=\eats{\rea{\overline\alpha}{\omega}}(P)\subseteq\eats{\rea{\overline\alpha}{1}}(P)$,
	which implies $\initstate\to^* r\rea{\overline\beta}{1}\eta'$ for all $r\in R$
	(we can first push the whole stack content $q\rea{\overline\delta}{1}\eta$, and then pop its topmost part), and thus
	$(\overline\beta,\eta')\in\Omega_{2,\tupleS_2}$.
	In consequence $\tupleS_0=(\class{q\rea{\overline\alpha}{1}\eta''})_{q\in P}\in\Theta_{1,\tupleS_1}$,
	$\tupleS_1=(\class{r\rea{\overline\beta}{1}\eta'})_{r\in R}\in\Theta_{2,\tupleS_2}$, and
	$\tupleS_2=(\class{r\gamma\eta})_{r\in R}\in\Theta_3$.
	\cref{ineq-K-3,ineq-K-2,ineq-K-1} give us
	\begin{align*}
		K'_{\tupleS_0}\leq(K+1)\cdot(|Q|\cdot\Uc+1)^{3\Cc^{\ell-1}}-1\leq(K+1)\cdot(|Q|\cdot\Uc+1)^{\Cc^\ell}-1\,,
	\end{align*}
	which is the second part of the thesis of the lemma.
	For $i\in\set{1,2}$ we rather need
	\begin{align*}
		K_{\tupleS_i}\leq(K+1)\cdot(|Q|\cdot\Uc+1)^{2\Cc^{\ell-1}}-1\leq(K+1)\cdot(|Q|\cdot\Uc+1)^{(\Cc-1)\Cc^{\ell-1}}-1\,,
	\end{align*}
	Substituting this to \cref{ineq-theta-2,ineq-theta-1} we obtain that
	\begin{inequalities}
		\max(|\Theta_{1,\tupleS_i}|, |\Theta_{2,\tupleS_2}|)\leq ((K+1)\cdot(|Q|\cdot\Uc+1)^{\Cc\cdot\Cc^{\ell-1}}+\Tc)^{3|Q|^2\cdot\Cc^{\ell-1}}\,.\label{ineq:100}
	\end{inequalities}
	
	From the above it follows that for every tuple $\tupleS_0\in\Theta$ we have found tuples $\tupleS_1,\tupleS_2$ such that
	$\tupleS_0\in\Theta_{1,\tupleS_1}$, $\tupleS_1\in\Theta_{2,\tupleS_2}$, and $\tupleS_2\in\Theta_3$.
	There are at most $|\Theta_3|$ choices for $\tupleS_2$, then at most $|\Theta_{2,\tupleS_2}|$ choices for $\tupleS_1$, and then at most $|\Theta_{1,\tupleS_1}|$ choices for $\tupleS_0$.
	Thus, by \cref{ineq-theta-3,ineq:100}, we obtain the desired inequality
	\begin{align*}
		|\Theta|\leq \big(((K+1)\cdot(|Q|\cdot\Uc+1)^{\Cc^\ell}+\Tc)^{3|Q|^2\cdot\Cc^{\ell-1}}\big)^3\leq ((K+1)\cdot(|Q|\cdot\Uc+1)^{\Cc^\ell}+\Tc)^{3|Q|^2\cdot\Cc^\ell}\,.
	\qedhere\end{align*}
\end{proofnoqed}

\begin{theorem}\label{thm:main}
	If $(\Ll(\Pp),\initstate)$ is weakly bisimulation finite, then it has at most $\Zc$ classes for some $\Zc\in 2^{2^{|\Pp|^{O(1)}}}$.
\end{theorem}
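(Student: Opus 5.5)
We reduce \cref{thm:main} to \cref{main lemma decomp}, applied with a trivial bottom part. Every class reachable from $\class{\initstate}$ contains a standard configuration $q\delta$ reachable from $\initstate$. If $\delta=\epsilon$ there are at most $|Q|$ such configurations, so assume $\delta\neq\epsilon$. By \cref{decomp-exists}, $\delta$ has a well-formed decomposition $\overline\delta$ of height $\ell\leq 2|Q|+2$ and degree at most $\Cc$. By \cref{num types} there are at most $N=2^{|\Pp|^4\cdot(\Cc+1)^{2|Q|+3}}$ types of such decompositions, and $N\in 2^{2^{|\Pp|^{O(1)}}}$. The plan is to fix a type representative $\overline\delta_0$ and a state $q$, and to bound the number of classes of configurations $q\rea{\overline\delta}{1}$ with $\overline\delta\tpeq\overline\delta_0$ and $\initstate\to^*q\rea{\overline\delta}{1}$.

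For this we take $P=\set{q}$ and $\eta=\epsilon$ in \cref{main lemma decomp}. The classes $\classC_s$ are those of the configurations $s$ (empty stack) for $s\in\eats{\rea{\overline\delta_0}{\omega}}(q)$. The third condition defining $\Omega$ is then automatic, since $\eta=\epsilon$ always gives $s\eta=s$. The constant $K$ is $|\Below(\set{s\mid s\in Q'})|$ for some $Q'\subseteq Q$. From a configuration with empty stack no transition is possible, so each such $\Below$ set contains only the class of $s$ itself. Hence $K\leq|Q|$.

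\cref{main lemma decomp} then bounds $|\Theta|$ by
\[
((|Q|+1)\cdot(|Q|\cdot\Uc+1)^{\Cc^{2|Q|+2}}+\Tc)^{3|Q|^2\cdot\Cc^{2|Q|+2}}.
\]
Here $\Uc,\Tc$ are doubly exponential and $\Cc^{2|Q|+2}$ is singly exponential, so the base is doubly exponential and raising it to a singly exponential power keeps it doubly exponential. Each element of $\Theta$ is the one-element tuple $(\class{q\rea{\overline\delta}{1}})$, so this bounds the number of classes for each fixed $(q,\overline\delta_0)$.

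Summing over the at most $|Q|\cdot N$ choices of $(q,\overline\delta_0)$ and adding $|Q|$ for the empty-stack configurations gives the bound $\Zc\in 2^{2^{|\Pp|^{O(1)}}}$. The only care needed is that $\eta=\epsilon$ is a legitimate standard stack content for \cref{main lemma decomp}, and that the lemma's dependence on $\ell$ is handled uniformly for all $\ell\leq 2|Q|+2$ by taking the maximal exponent. The real work lies in \cref{main lemma decomp} and \cref{main lemma}, which are already proved, so the main obstacle here is just checking that the double-exponential arithmetic composes correctly: a doubly exponential base raised to a singly exponential power, then multiplied by the doubly exponential count of types.
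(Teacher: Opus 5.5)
Your proposal is correct and follows essentially the same route as the paper: fix $q$ and a type representative $\overline\delta_0$, apply \cref{main lemma decomp} with $P=\set{q}$ and $\eta=\epsilon$, then multiply by the number of types from \cref{num types}. The only cosmetic difference is that the paper notes all empty-stack configurations form a single class $\classC_\bot$ (they are all dead ends), giving $K\leq 1$ and adding just $1$ for the empty stack, whereas your cruder bounds $K\leq|Q|$ and $+|Q|$ are equally valid.
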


\begin{proof}
	Let $\classC_\bot=\class{s}$ for all $s\in Q$, which is the class containing configurations with no successors.
	Clearly $\Below(\classC_\bot)=1$.
	For every type of a well-formed decomposition (i.e., equivalence class of the $\tpeq$ relation) fix some decomposition $\overline\delta_0$ having this type.
	For such a decomposition $\overline\delta_0$ and for a state $q\in Q$ let $\Omega_{q,\overline\delta_0}$ be the set of pairs $(\overline\delta,\eta)$ for which
	$\overline\delta\tpeq\overline\delta_0$, $\initstate\to^*q\rea{\overline\delta}{1}\eta$, and $s\eta\in\classC_\bot$ for all $s\in\eats{\rea{\overline\delta_0}{\omega}}(q)$.
	Let also $\Theta_{q,\overline\delta_0}=\set{\class{q\rea{\overline\delta}{1}\eta}\mid(\overline\delta,\eta)\in\Omega_{q,\overline\delta_0}}$.
	Assuming that $\overline\delta_0$ has degree at most $\Cc$ and height at most $2|Q|+2$, by \cref{main lemma decomp} we have
	\begin{align*}
		|\Theta_{q,\overline\delta_0}|\leq\Vc=((1+1)\cdot(|Q|\cdot\Uc+1)^{\Cc^{2|Q|+2}}+\Tc)^{3|Q|^2\cdot\Cc^{2|Q|+2}}\,.
	\end{align*}

	Consider now an arbitrary configuration $q\delta$ reachable from $\initstate$.
	Configurations with empty stack belong to $\classC_\bot$.
	If $\delta$ is nonempty, by \cref{decomp-exists} we obtain a well-formed decomposition $\overline\delta$ of height at most $2|Q|+2$
	and degree at most $\Cc$.
	Let $\overline\delta_0$ be the fixed representative of the type of $\overline\delta$.
	We see that $(\overline\delta,\epsilon)\in\Omega_{q,\overline\delta_0}$, so $\class{q\delta}\in\Theta_{q,\overline\delta_0}$.
	By \cref{num types} there are at most $2^{|\Pp|^4\cdot(\Cc+1)^{2|Q|+3}}$ choices for the type of $\overline\delta$, hence for the decomposition $\overline\delta_0$.
	Once $q$ and $\overline\delta_0$ is fixed, we have only $\Vc$ choices for the class $\class{q\delta}\in\Theta_{q,\overline\delta_0}$.
	Thus, in total we have at most $\Zc=1+|Q|\cdot2^{|\Pp|^4\cdot(\Cc+1)^{2|Q|+3}}\cdot\Vc\in 2^{2^{|\Pp|^{O(1)}}}$ possible classes.
\end{proof}

\section{The Algorithm}\label{sec:algorithm}

In this section we prove the following \lcnamecref{thm:algorithm}:

\begin{theorem}\label{thm:algorithm}
	Given an \PDS $\Pp=(Q,\Gamma,\Act,\Delta)$, its initial configuration $\initstate\in Q\Gamma$, and a number $Z\in\Nat$,
	one can decide in time $O(Z^{|\Pp|^{O(1)}})$ whether the number of classes of $\Pp$ reachable from $\class{\initstate}$ is at most $Z$.
\end{theorem}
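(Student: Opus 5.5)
We will compute the weak bisimulation quotient restricted to classes reachable from $\class{\initstate}$ by a saturation procedure, aborting as soon as more than $Z$ classes are discovered. The representation of a class is the key design choice. A class of $q\delta$ is determined, by \cref{equiv-because-equiv-below}, by $q$, the top symbol, and the classes of configurations below. This suggests representing configurations ``modulo the finite quotient''. We build, incrementally, a finite $\epsilon$-LTS $\mathcal{F}$ whose states are abstract objects. Each object is a pair $(pX,f)$, where $pX\in Q\Gamma$ and $f$ is a partial map assigning to each state $r\in\eats{X}(p)$ a state of $\mathcal{F}$; the object stands for $pX\eta$ with $r\eta$ in class $f(r)$. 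In addition, $\mathcal{F}$ contains the sink class $\classC_\bot$. The number of such objects over a candidate set of at most $Z$ classes is at most $|Q||\Gamma|(Z+1)^{|Q|}=Z^{|\Pp|^{O(1)}}$.

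The algorithm alternates two phases until a fixpoint is reached. In the expansion phase, we compute, from the current set of class representatives, the transitions of the abstract objects. A rule $pX\rewrite{a}q\beta$ with $|\beta|\le2$ leads to an object for $qY$ with a new map. That map is computed by composing through $\beta$: for $\beta=YZ'$ we need the class of $sZ'\eta$, which is again an object $(sZ',f)$. Popping rules lead directly to $f(q)$. Since $\epsilon$-transitions only pop when $\Pp$ is $\epsilon$-popping (we first apply the folklore reduction cited above), $\epsilon$-moves simply resolve to map values. In the refinement phase, we compute the largest weak bisimulation on the resulting finite $\epsilon$-LTS via the standard partition refinement on the saturated $\weak{a}$ relations, in time polynomial in its size. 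We then merge bisimilar objects and remap every $f$ onto the merged classes. Only objects reachable from the object of $\initstate$ (with map into $\classC_\bot$) are kept. If at any point more than $Z$ classes remain, we stop and answer ``no''.

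Correctness has two directions. The first is soundness: each abstract object is weakly bisimilar to every concrete configuration it stands for. We prove this by showing that the relation pairing concrete configurations $pX\eta$ with objects $(pX,f)$, where $f(r)\approx r\eta$, is a weak bisimulation up to $\approx$. This uses \cref{more-equiv-if-read} and determinism of $\epsilon$-transitions, as in the proof of \cref{equiv-because-equiv-below}. The second is completeness: every reachable class is represented. This follows by induction on the length of a run from $\initstate$, since each step is mirrored by an abstract transition.

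The main obstacle is the fixpoint/termination argument, i.e.\ showing that interleaving expansion and merging converges within polynomially many rounds in $Z$ and never undercounts. The difficulty is that merging changes maps, which can create new objects. We will argue that the number of classes of reachable objects, taken after each refinement, is nondecreasing and never exceeds the true number of reachable classes, because by soundness distinct true classes are never merged. Hence either it stabilizes after at most $Z+1$ rounds, or it exceeds $Z$, at which point we abort. Each round costs $Z^{|\Pp|^{O(1)}}$, giving the claimed bound $O(Z^{|\Pp|^{O(1)}})$.
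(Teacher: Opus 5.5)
There is a genuine gap, and it lies exactly in the part that carries the complexity bound.

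Your soundness argument can only apply to a \emph{closed} abstract system whose partition coincides with its own weak bisimilarity, i.e.\ to a final fixpoint. There it can be made to work, using that $\epsilon$-moves are deterministic pops, much as in \cref{more-equiv-if-read}. (Weak bisimulation up to $\approx$ is not a sound technique in general.)

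At intermediate stages your system is not closed. Expanding a push rule $pX\rewrite{a}qYZ'$ creates objects $(sZ',f')$ whose transitions are not yet computed and whose class is not yet known. Partition refinement on such a partial LTS treats unexpanded objects as dead ends. So it can merge truly distinct classes (e.g.\ with $\classC_\bot$), and it can also separate truly bisimilar objects (an expanded one from an unexpanded one). After remapping, the maps point into these wrong classes, and objects built from them need not stand for any reachable configuration.

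Hence neither ``distinct true classes are never merged'' nor ``the count never exceeds the true number of reachable classes'' is justified. The two claims also point in opposite directions: never merging distinct true classes would make the abstract count a lower bound, not an upper bound. Without a valid monotone invariant, answering ``no'' as soon as more than $Z$ abstract classes appear is unsound. Neither the bound of $Z+1$ rounds nor termination at all is established.

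The missing idea is to stratify by the approximants $\approx_k$, so that every intermediate stage is exact. The paper computes, for $k=0,1,2,\dots$, the $\approx_k$-classes of all reachable configurations:
\begin{compactitem}
\item A class of $\approx_k$ is represented by $\Desc_k(c)=\set{(a,\classk{d}{k-1})\mid c\weak{a}d,\ a\in\Act}$.
\item The map $\Cons_k(q,X,(\classk{r\alpha}{k})_{r\in\eats{X}(q)})=\classk{qX\alpha}{k}$ is essentially your objects, but with maps into $\approx_k$-classes.
\item Its domain is built by induction on the stack height (\cref{lem-cons}), so only tuples realized by reachable configurations occur.
\item Its values are computed from $\Cons_{k-1}$ and $\textsc{Up}_k$ alone, with no circularity.
\end{compactitem}
Since $(\approx)\subseteq(\approx_{k+1})\subseteq(\approx_k)$, every stage is a coarsening of $\approx$. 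So seeing more than $Z$ classes of some $\approx_k$ is a sound ``no''. The number of classes grows strictly until $(\approx_{k+1})=(\approx_k)$, at which point $\approx_k$ equals $\approx$; hence at most $Z$ levels are needed. The domain of each $\Cons_{k,n}$ has at most $|Q|\cdot|\Gamma|\cdot Z^{|Q|}$ elements, which also bounds $n$.

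If you replace ``largest weak bisimulation of a partial system'' by ``one refinement step from $\approx_{k-1}$ to $\approx_k$, recomputing the set of realized objects from scratch'', your scheme becomes exactly this argument.
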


Having in mind results from the previous section (in particular, \cref{thm:main}), it follows that the weak bisimulation finiteness problem for \PDS can be solved in 2-\EXPTIME.

We remark that the algorithm provided by \cref{thm:algorithm} not only answers Yes or No,
but in the case of a positive answer, it actually computes a description of the weak bisimulation quotient of $\Pp$,
which is an $\epsilon$-LTS that is weakly bisimilar to $\Pp$ and has at most $Z$ configurations.

Let us fix the input to our problem: an \PDS $\Pp$, an initial configuration $\initstate$, and a bound $Z\in\Nat$.
The algorithm will compute relations $\approx_k$ over the set of reachable configurations, for consecutive $k=0,1,2,\dots$.
By definition we have $(\approx)\subseteq(\approx_{k+1})\subseteq(\approx_k)$ for all $k\in\Nat$,
and it is easy to see that if $(\approx_{k+1})=(\approx_k)$ for some $k$, then necessarily $(\approx)=(\approx_k)$.
It follows that either every $\approx_{k+1}$ has more classes than $\approx_k$ until the number of classes starts exceeding $Z$,
or at some moment, necessarily for $k< Z$, the number of classes stops growing at quantity at most $Z$ with $(\approx)=(\approx_{k+1})=(\approx_k)$.
It is thus enough to examine the first $Z$ relations $\approx_k$, checking whether any of them has more than $Z$ classes.

In order to avoid special treatment of configurations with empty stack, we assume that such configurations can never be reached.
This is without loss of generality: we can add a new initial configuration $q'_{\init}X_{\init}'$ together with a transition $q_{\init}'X_{\init}'\rewrite{\init}\initstate X_{\init}'$
(for a fresh action symbol $\init$).
After performing this transition, the system behaves as previously, but has additionally the $X_{\init}'$ symbol on the bottom on the stack;
such a change adds exactly one new weak bisimulation class, containing the new initial configuration.

Let us now see how we can represent the relations $\approx_k$ in a finite, succinct way.
First, note that $\approx_0$ always has exactly one class, so we need no representation for it.
Consider now some $k\geq 1$.
Note that a class of $\approx_k$ is uniquely described by a tuple of classes of $\approx_{k-1}$ reachable after reading a single action symbol.
More formally, for a configuration $c$ we define
\begin{align*}
	\Desc_k(c)=\set{(a,\classk{d}{k-1})\mid c\weak{a}d, a\in\Act}\,;
\end{align*}
then we have $\Desc_k(c)=\Desc_k(c')\Leftrightarrow c\approx_k c'$.
We can thus take $\Desc_k(c)$ as a finite representation of the class $\classk{c}{k}$.
By \cref{determ short run} $\Desc_k(c)$, as a set, has size at most $|\Pp|$,
and while storing it in memory, we can remember every class $\classk{d}{k-1}$ as a number of this class on the list of all classes of $\approx_{k-1}$.
The memory size needed for storing $\Desc_k(c)$ is thus negligible compared to the desired complexity of our algorithm.

In our algorithm, beside of a list of classes, we also compute a partial function $\Cons_k$ saying how a class changes when a stack grows.
It is a function that assigns a class of $\approx_k$ to some tuples $(q,X,(\classC_r)_{r\in\eats{X}(q)})$ with $q\in Q$, $X\in\Gamma$, and $\classC_r$ being classes of $\approx_k$,
defined by taking $\Cons_k(q,X,(\classk{r\alpha}{k})_{r\in\eats{X}(q)})=\classk{qX\alpha}{k}$ for all reachable configurations $qX\alpha$.
Note that $(\classk{r\alpha}{k})_{r\in\eats{X}(q)}=(\classk{r\alpha'}{k})_{r\in\eats{X}(q)}$ implies $qX\alpha\approx_k qX\alpha'$ by \cref{equiv-because-equiv-below},
meaning that $\classk{qX\alpha}{k}$ depends only on the classes $\classk{r\alpha}{k}$ (without necessarily
knowing the stack content $\alpha$).
While storing $\Cons_k$, we represent both the result and the classes $\classC_r$ as their numbers on the list of all reachable classes.

Moreover, for every reachable class $\classC$ of $\approx_k$ let $\textsc{Up}_k(\classC)$ be the class of $\approx_{k-1}$ containing $\classC$ (recall that $c\approx_k c'$ implies $c\approx_{k-1}c'$).

We have already said that our algorithm computes the relations $\approx_k$ for consecutive $k=0,1,2,\dots$.
Now we can be more precise: for every $k$, it computes a list of reachable classes of $\approx_k$, together with the functions $\Cons_k$ and $\textsc{Up}_k$
(except for $\textsc{Up}_0$, which makes no sense).
This can be done easily for $k=0$, because $\approx_0$ is a trivial relation consisting of a single class.

When we start the computation for some $k\geq 1$, we assume that the objects mentioned above are already computed for $k-1$.
Note first that whenever we have some new reachable class $\classC$ of $\approx_k$,
which means that we know $\Desc_k(c)$ for $c\in\classC$, then we can easily compute $\textsc{Up}_k(\classC)$.
Indeed, if $k=1$, then $\textsc{Up}_k(\classC)$ is always the only class of $\approx_0$, and if $k\geq 2$, 
recall that $\textsc{Up}_k(\classC)$ is a class described by $\Desc_{k-1}(c)$ for configurations $c\in\classC$, and
we have
\begin{align*}
	\Desc_{k-1}(c)=\set{(a,\textsc{Up}_{k-1}(\classD))\mid(a,\classD)\in\Desc_k(c)}\,.
\end{align*}

The function $\Cons_k$ is computed by considering configurations with stack height $n$, for consecutive $n=1,2,3,\dots$.
Formally, we define $\Cons_{k,n}$ to be $\Cons_k$ with domain restricted to tuples $(q,X,(\classk{r\alpha}{k})_{r\in\eats{X}(q)})$ for reachable configurations $qX\alpha$ satisfying $|X\alpha|\leq n$.
Note that the domain of $\Cons_{k,n}$ only becomes larger when $n$ increases.
For $n=0$ the domain of this function is empty (we cannot have $|X\alpha|\leq 0$).
For $n\geq 1$ we assume that $\Cons_{k,n-1}$ is already computed.
We first compute the domain of $\Cons_{k,n}$ using the following \lcnamecref{lem-cons}:

\begin{lemma}\label{lem-cons}
	The domain of $\Cons_{k,n}$ is the set of
	\begin{compactitem}
	\item	tuples $(q,X,()_{r\in\emptyset})$ with $\initstate\to^*qX$, and
	\item	tuples $(q,X,(\Cons_{k,n-1}(r,Y,(\classC_s)_{s\in\eats{Y}(r)}))_{r\in\eats{X}(q)})$ such that in the domain of $\Cons_{k,n-1}$ there is a tuple $(p',Y',(\classC_s)_{s\in\eats{Y'}(p')})$,
		and there is a transition $p'Y'\rewrite{a}pX'Y$, and $pX'\to^*qX$.
	\end{compactitem}
\end{lemma}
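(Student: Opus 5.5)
We prove the two inclusions separately, arguing about a reachable configuration $qX\alpha$ with $|X\alpha|\leq n$ and the run from $\initstate$ to it. Recall that by the preprocessing no reachable configuration has empty stack, and that $\initstate$ has stack height $1$.

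\emph{Domain $\subseteq$ listed set.} Let $qX\alpha$ be reachable with $|X\alpha|\leq n$; its tuple is $(q,X,(\classk{r\alpha}{k})_{r\in\eats{X}(q)})$. If $\alpha=\epsilon$, then $\eats{X}(q)=\emptyset$ because empty-stack configurations are unreachable. Hence the tuple is $(q,X,()_{r\in\emptyset})$ with $\initstate\to^*qX$, which is the first item. If $\alpha\neq\epsilon$, write $\alpha=Y\alpha'$ and fix a run $\initstate\to^*qXY\alpha'$. Take the last moment on this run at which the stack has height $|\alpha|=|Y\alpha'|$. The step immediately after that moment must push, since the height grows afterwards and never returns to $|\alpha|$. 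So that step is a transition $p'Y'\alpha'\to_a pX'Y\alpha'$ for a rule $p'Y'\rewrite{a}pX'Y$; this uses the assumption $|\beta|\leq2$, and a longer push is handled analogously by choosing the right cut point. The rest of the run stays strictly above $Y\alpha'$, so it shifts to a run $pX'\to^*qX$.

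The configuration $p'Y'\alpha'$ is reachable and has height $|\alpha|\leq n-1$. Hence $(p',Y',(\classk{s\alpha'}{k})_{s\in\eats{Y'}(p')})$ lies in the domain of $\Cons_{k,n-1}$. Also $rY\alpha'$ is reachable for every $r\in\eats{X}(q)$: continue from $qXY\alpha'$ by popping $X$. Its height is at most $n-1$, so $\Cons_{k,n-1}(r,Y,(\classk{s\alpha'}{k})_{s\in\eats{Y}(r)})=\classk{rY\alpha'}{k}=\classk{r\alpha}{k}$. We must check that $\eats{Y}(r)\subseteq\eats{Y'}(p')$, so that the family $\classC_s=\classk{s\alpha'}{k}$ indexed by $\eats{Y'}(p')$ restricts correctly. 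This holds because $p'Y'\to pX'Y\to^*qXY\to^*rY$, together with \cref{run2eats}.

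\emph{Listed set $\subseteq$ domain.} The first item is immediate, using the witness $qX$ with $\alpha=\epsilon$. For the second item, the domain tuple $(p',Y',(\classC_s))$ comes from a reachable $p'Y'\alpha'$ with $|Y'\alpha'|\leq n-1$ and $\classC_s=\classk{s\alpha'}{k}$. Then $\initstate\to^*p'Y'\alpha'\to_a pX'Y\alpha'\to^*qXY\alpha'$, and $|XY\alpha'|\leq n$.

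The main obstacle is the well-definedness of the listed value. We need $\Cons_{k,n-1}(r,Y,(\classC_s)_{s\in\eats{Y}(r)})$ to be defined and equal to $\classk{rY\alpha'}{k}$. This follows from reachability of $rY\alpha'$, of height at most $n-1$, and the inclusion $\eats{Y}(r)\subseteq\eats{Y'}(p')$ obtained as above. Once that holds, the listed tuple is exactly $(q,X,(\classk{rY\alpha'}{k})_{r\in\eats{X}(q)})$, which is in the domain of $\Cons_{k,n}$.
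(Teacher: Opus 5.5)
Your proof is correct and follows the paper's argument: the last configuration of height $|\alpha|$ on the run, the two-symbol push leaving it, shifting the remainder to $pX'\to^*qX$, and the tuple of the reachable $p'Y'\alpha'$ lying in the domain of $\Cons_{k,n-1}$. You also spell out what the paper leaves implicit, namely reachability of $rY\alpha'$, the inclusion $\eats{Y}(r)\subseteq\eats{Y'}(p')$ via \cref{run2eats} so the restricted tuples match, and the converse inclusion. The one inaccurate remark is that longer pushes are ``handled analogously'': with $|\beta|>2$ the stack height can skip over $|\alpha|$ and the lemma's stated form would have to change, but this is irrelevant under the paper's standing assumption $|\beta|\le 2$.
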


\begin{proof}
	By definition, the domain of $\Cons_{k,n}$ contains tuples $(q,X,(\classk{r\alpha}{k})_{r\in\eats{X}(q)})$ for reachable configurations $qX\alpha$ with $|X\alpha|\leq n$.
	Consider such a configuration.
	If $|X\alpha|=1$ (i.e., $\alpha=\epsilon$), we have $\initstate\to^*qX$; the tuple is added by the first item of the \lcnamecref{lem-cons}.
	Note that we have $\eats{X}(q)=\emptyset$ thanks to our assumption that no configurations with empty stack are reachable.
	
	Conversely, suppose that $2\leq|X\alpha|\leq n$,
	consider a run from $\initstate$ to $qX\alpha$, and consider the last configuration with stack of height $|\alpha|$ on this run.
	It is of the form $p'Y'\alpha'$, and $\alpha=Y\alpha'$ for some $Y$.
	Then we have a transition $p'Y'\rewrite{a}pX'Y$ leading to $pX'\alpha$.
	The remaining part of the run may be shifted to a run from $pX'$, so we have $pX'\to^*qX$.
	The configuration $p'Y'\alpha'$ is reachable, and $|Y'\alpha'|\leq n-1$, so the tuple $(p',Y',(\classC_s)_{s\in\eats{Y'}(p')})$ is in the domain of $\Cons_{k,n-1}$.
	Thus our original tuple fulfils the conditions of the second item of the \lcnamecref{lem-cons}.
	
	The proof of the opposite inclusion (i.e., that every tuple specified in the \lcnamecref{lem-cons} belongs to the domain of $\Cons_{k,n}$) is very similar, and thus it is left to the reader.
\end{proof}

Using standard means we can compute in time polynomial in $|\Pp|$:
\begin{compactitem}
\item	the set of pairs $pX,qY$ of configurations with stack of height $1$ such that $pX\to^*qY$, and
\item	the set $\eats{X}(p)$ for every configuration $pX$ with stack of height $1$.
\end{compactitem}
Thus, \cref{lem-cons} allows us to easily find the domain of $\Cons_{k,n}$ based on the previously computed function $\Cons_{k,n-1}$.

Next, we compute values of this function.
Take some tuple $(q,X,(\classC_r)_{r\in\eats{X}(q)})$ in its domain.
Imagine also a reachable configuration $qX\alpha$ such that $\classC_r=\classk{r\alpha}{k}$ for $r\in\eats{X}(q)$ and $|X\alpha|\leq n$
(the algorithm described below does not depend on $\alpha$, only its correctness proof does).
Notice that the height of the stack content $\alpha$ is at most $n-1$, so $\textsc{Up}_{k}(\classC_r)$ is already known.
We have two cases:
\begin{compactitem}
\item	If $(q,X)$ is in $\epsilon$-mode, then for some $r$ we have a popping transition $qX\rewrite{\epsilon}r$ (and $\eats{X}(q)=\set{r}$);
	we have $qX\alpha\approx_k r\alpha$, so we should take $\classC_r$ as the value of $\Cons_{k,n}$.
\item	Otherwise $(q,X)$ is not in $\epsilon$-mode.
	We compute the description $\Desc_k(qX\alpha)$ of the class $\classk{qX\alpha}{k}$ from definition,
	by listing all possible classes of $\approx_{k-1}$ that can be reached from $qX\alpha$ after reading a single letter.
	To this end, we consider all transitions starting in $(q,X)$:
	\begin{compactitem}
	\item	for transitions of the form $qX\rewrite{a}r$ we have $qX\alpha\weak{a}r\alpha$ and $\classk{r\alpha}{k-1}=\textsc{Up}_k(\classC_r)$;
	\item	for transitions of the form $qX\rewrite{a}sY$ we have $qX\alpha\weak{a}sY\alpha$ and
		\begin{align*}
			\classk{sY\alpha}{k-1}=\Cons_{k-1}(s,Y,\allowbreak(\textsc{Up}_k(\classC_r))_{r\in\eats{Y}(s)})
		\end{align*}
		(where $\eats{Y}(s)\subseteq\eats{X}(q)$);
	\item	for transitions of the form $qX\rewrite{a}tZY$ we have $qX\alpha\weak{a}tZY\alpha$ and
		\begin{align*}
			\classk{tZY\alpha}{k-1}=\Cons_{k-1}(t,Z,(\Cons_{k-1}(s,Y,(\textsc{Up}_k(\classC_r))_{r\in\eats{Y}(s)}))_{s\in\eats{Z}(t)})
		\end{align*}
		(where $\eats{Y}(s)\subseteq\eats{X}(q)$ for $s\in\eats{Z}(t)$).
	\end{compactitem}
\end{compactitem}

This finishes the description of how to compute $\Cons_{k,n}$ knowing $\Cons_{k,n-1}$ and $\Cons_{k-1}$.
Of course the values of $\Cons_{k,n}$ do not depend on $n$, only the domain of this function grows.
We can also see (cf.~\cref{lem-cons}) that if $\Cons_{k,n}$ has the same domain as $\Cons_{k,n-1}$, then the domain will not grow any more.
For such $n$ we have $\Cons_k=\Cons_{k,n}$; we can stop the computation.
Recall that we also stop the computation whenever we see more than $Z$ classes of $\approx_k$.
Thus, in the domain of $\Cons_{k,n}$ we may have at most $|Q|\cdot|\Gamma|\cdot Z^{|Q|}$ tuples,
and this is simultaneously a bound for $n$.
As already said, the considered values of $k$ are bounded by $Z$.
It follows that the running time is in $O(Z^{|\Pp|^{O(1)}})$, as declared.

\section{Lower bound}\label{sec:lower-full}

Our lower bound is shown via a reduction from the 
acceptance problem of exponentially
space-bounded alternating Turing machines. These are being introduced in Section~\ref{S ATM}.
Section~\ref{S Gadgets} recalls Defender's Forcing gadgets and introduces a gadget for checking if a certain prefix of the stack is in a given regular language.
Encodings of numbers and configurations are subject of Section~\ref{S Numbers} and Section~\ref{S Configurations},
respectively.
A gadget for verifying if a certain prefix of the stack consists of two consecutive configurations of
an exponentially space-bounded alternating Turing machine is introduced in Section~\ref{S Successor},
whereas Section~\ref{S Pushing} provides a gadget for pushing a successor configuration
on top of a configuration that can be found as the prefix of the stack.
The actual simulation of an exponentially space-bounded alternating Turing machine
is given in Section~\ref{S Simulation}. Building
upon these gadgets, Section~\ref{S Initial}
provides the final reduction from the acceptance problem
of exponentially space-bounded alternating Turing machines to 
the weak bisimulation finiteness problem for \PDS.

\subsection{Alternating Turing machines.}\label{S ATM}
An {\em alternating Turing machine} is a tuple $\M=(Q,\Upsilon,\Sigma,T,q_0,\Box)$,
where
\begin{compactitem}
	\item $Q=Q_\forall\uplus Q_\exists$ is a finite set of {\em states} that is partitioned into
		{\em universal states $Q_\forall$} and existential states $Q_\exists$,
	\item $\Upsilon$ is a finite {\em tape alphabet},
	\item $\Sigma\subseteq\Upsilon$ is an {\em input alphabet},
	\item $T\subseteq Q\times\Upsilon\times Q\times\Upsilon\times\{-1,1\}$ is
		a set of {\em transitions},
	\item $q_0\in Q$ is an initial state, and
	\item $\Box\in\Upsilon\setminus\Sigma$ is a {\em blank symbol}.
\end{compactitem}
A {\em configuration} of $\M$ is a word of the form $u(p,U)v$, where $u,v\in\Upsilon^*$
and $(p,U)\in Q\times\Upsilon$.
It is {\em universal} if $p\in Q_\forall$ and {\em existential}
if $p\in Q_\exists$.
For all $(p,U)\in Q\times\Upsilon$ we define
$T_{(p,U)}=\{(p,U,q,Y,d)\in T\mid q\in Q_\M, Y\in\Upsilon, d\in\{-1,1\}\}$.
By $\Configurations_\M$ we denote the set of configurations of $\M$.
For each pair $\mu=(t,Z)\in T\times\Upsilon$, where $t=(p,U,q,Y,d)$
we define the relation $\vdash_\mu\subseteq\Configurations_\M\times\Configurations_\M$,
where $w,w'\in\Configurations_\M$:
		\begin{eqnarray*}
			w\vdash_{\mu} w' &\Longleftrightarrow&
			\exists u,v\in\Upsilon^*.\, \begin{cases}w=uZ(p,U)v \text{ and } w'=u(q,Z)Yv & \text{if $d=-1$,}\\
				w=u(p,U)Zv \text{ and } w'=uY(q,Z)v& \text{if $d=1$.}
			\end{cases}
		\end{eqnarray*}
Note that if $w\vdash_{\mu} w$ and $w\vdash_{\mu}w''$,
then $w'=w''$.
We say that $w'$ a {\em successor configuration of $w$} if $w\vdash_\mu w'$
for some $\mu\in T\times\Upsilon$.
Since we use them for showing 2-\EXPTIME-hardness,
we assume without loss of generality that our alternating
Turing machines do not contain any cyclic computations, that is,
computations of the form
$w_{(0)}\vdash w_{(1)}\dots \vdash w_{(n)}$, where $n\geq 1$ and $w_{(0)}=w_{(n)}$.
This implies that we can do induction on the length of the longest computation starting in a given configuration.
We define
the set of {\em accepting configurations} as
the smallest set $C\subseteq\Configurations_\M$ that satisfies
the following properties:
\begin{compactitem}
\item For every existential configuration $w\in\Upsilon^*(p,U)\Upsilon^*$
	for which there exists some $t\in T_{(p,U)}$, some $Z\in\Upsilon$ and some configuration $w'\in C$
	with $w\vdash_{(t,Z)}w'$, we have $w\in C$.
\item For every universal configuration $w\in\Upsilon^*(p,U)\Upsilon^*$
	such that for all $t\in T_{(p,U)}$ there exists some $Z\in\Upsilon$ and
	some configuration $w'\in C$ with $w\vdash_{(t,Z)}w'$, we have $w\in C$.
\end{compactitem}
We note that a configuration $w\in\Upsilon^*(p,U)\Upsilon^*$ with $T_{(p,U)}=\emptyset$ is accepting if, and only if, it is universal.

Given a function $f\colon\N\rightarrow\N$ with $f(n)\geq n$ for all $n\in\N$,
the language of an $f$-space-bounded alternating Turing machine $\M$ is given by
$$
L(\M)=\{x_0\dots x_{n-1}\in\Sigma^n\mid n\geq 1, (q_0,x_0)x_1\dots x_{n-1}\Box^{f(n)-n}\text{ is accepting}\}.
$$

\begin{theorem}
	\problem is 2-\EXPTIME-hard under
	polynomial time reductions.
\end{theorem}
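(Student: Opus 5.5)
We reduce in polynomial time from acceptance of a fixed $2^{n}$-space-bounded alternating Turing machine $\M$, which is 2-\EXPTIME-complete. Given an input $x$ of length $n$, we construct an \PDS $\Pp_x$ with initial configuration $\initstate$ such that $x\in L(\M)$ iff $(\Ll(\Pp_x),\initstate)$ is weakly bisimulation finite (or the complement, which suffices since 2-\EXPTIME is closed under complement). The standard template, as in the \EXPTIME-hardness proofs of~\cite{KM02,Srba02}, is this. The system first nondeterministically pushes an arbitrarily long counter or garbage segment; this makes the reachable classes potentially infinite. It then pushes an encoding of the initial configuration of $\M$ and simulates a game on top of it. The game is arranged so that all the ``pumping'' information below collapses into finitely many classes exactly when the simulated computation is accepting, and otherwise a rejecting play lets Attacker distinguish infinitely many stack heights, for example through a counting phase that pops the garbage while reading a visible action per symbol.

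\textbf{Ingredients, in order.}

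\textbf{1. Encodings.} We encode numbers in $\interval{0,2^n-1}$ as $n$-bit blocks, and configurations of $\M$ as sequences of $2^n$ cells, each cell annotated with its address. Since $\varepsilon$-transitions must be deterministic and may be taken to be popping, all checks are driven by visible actions. Defender's choices are implemented with Defender's Forcing gadgets.

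\textbf{2. Regular-prefix checking gadget.} Attacker may challenge and enter a mode that pops the stack while running a DFA. This mode reaches a ``sink'' class or a ``distinguishing'' class depending on membership of the prefix in a regular language of polynomial-size description, such as well-formed address sequences.

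\textbf{3. Successor gadget.} Attacker may claim that the top two encoded configurations are not consecutive. He picks an address, which Defender must not be able to fake, and the system pops to compare the corresponding cells in both configurations. This is the usual exponential-space trick using a polynomial number of bits carried in the control state.

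\textbf{4. Pushing gadget.} Defender (at existential states) or Attacker (at universal states) pushes a successor configuration cell by cell, choosing the transition $\mu$. Cheating is punished via gadget 3.

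\textbf{5. Simulation and wrapping.} At the bottom of the stack lies an unboundedly long unary segment $\#^m$. Reaching a rejecting halting configuration lets the play enter a mode that reads $a$ once per $\#$ before stopping. Bisimilar configurations must then agree on $m$, so infinitely many classes arise. In accepting halting configurations everything is popped by $\varepsilon$-transitions, so $m$ is forgotten. We then argue both directions: with acceptance, every reachable class is determined by a bounded amount of top-of-stack data, and the quotient is finite; without acceptance, Attacker's strategy separates the heights $m$.

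\textbf{Main obstacle.} The main obstacle is step 5 combined with the determinism restriction on $\varepsilon$-transitions. Finiteness is a property of the whole reachable quotient, not of a single pair of configurations, so the reduction must ensure that cheating branches and checking gadgets do not themselves create infinitely many classes. In particular, every gadget must end in popping that forgets $m$, except the genuine rejection branch. I would first design every gadget so that its terminal behaviour is ``pop everything by $\varepsilon$'' or ``count $m$ visibly''. I would then prove by induction on the longest computation of $\M$ (which is acyclic) that the class of a simulation configuration is independent of $m$ iff the encoded configuration is accepting.
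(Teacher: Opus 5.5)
Your plan has a genuine gap in steps 4--5. It has the ATM game played by a global Attacker and Defender, but in bisimulation finiteness there is no distinguished pair of configurations. The quotient is taken over \emph{all} configurations reachable from $\initstate$, and every move of either ``player'' is just a nondeterministic transition of $\Pp_x$. So every run you intend to be losing or cheating really occurs in $\Ll(\Pp_x)$. This covers a bad existential choice, and also a push of a non-successor configuration, which gadget~3 would only \emph{punish} inside some bisimulation game, not prevent. Consequently, for every $m$, a configuration with $\#^m$ at the bottom and a (possibly fake) rejecting halting configuration on top is reachable. So are the configurations of your counting mode over $\#^m$. Their only behaviour is to read $a$ once per remaining $\#$, so they are pairwise non-bisimilar for different $m$, and the quotient is infinite whether or not $x\in L(\M)$.

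Defender's forcing cannot help as you set it up. Configurations differing only in $m$ have the same control state and the same top, so there is no pair $\langle\Source,\Source'\rangle$ in which Defender could make existential choices; Attacker just follows one run into the counting phase. What your construction measures is therefore a reachability property, not alternating acceptance. The induction you announce (``the class is independent of $m$ iff the encoded configuration is accepting'') already fails for an accepting existential configuration that has a rejecting successor.

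The missing idea is to decouple the simulation from the unbounded segment. The paper encodes acceptance of $\M$ into the \emph{non-bisimilarity of two fixed configurations} $\Init\#$ and $\Init'\#$. These are built from primed and unprimed copies of each gadget: Defender's forcing, prefix checks, the successor and push gadgets, and the $\Play$ states. The paper proves separately that each gadget produces only finitely many classes, whatever lies below it. Cheating runs are then harmless, because they matter only inside the comparison of a primed with an unprimed configuration.

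The unbounded part is added afterwards by a wrapper. From $q_\updownarrow$ one pushes and pops $A$'s reading $a$. Reading $\$$ above at least one $A$ leads by $\varepsilon$-pops to $\Init\#$, while reading $\$$ on the bare $\#$ leads to $\Init'\#$. If $\Init\#\approx\Init'\#$, the height of the $A$-segment is invisible and the system is finite. Otherwise the class of $q_\updownarrow A^n\#$ is at distance exactly $n$ from that of $q_\updownarrow\#$, giving infinitely many classes. Thus the height is never exposed inside the ATM simulation; it is compared only against the single bit $\Init\#\approx\Init'\#$.
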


Fix any language $L$ in 2-\EXPTIME.
Fix a $2^{\ppoly(n)}$-space-bounded alternating Turing machine
$\M=(Q_\M,\Upsilon,\Sigma_\M,T,q_0,\Box)$
such that $L=L(\M)$, where $\ppoly$ is a polynomial (it is folklore that such a machine exists for every language in 2-\EXPTIME).
Moreover fix some input
$x=x_0\dots x_{n-1}\in\Sigma_\M^n$.
We construct in time polynomial
in $n$
an \PDS\ $\Pp=(Q,\Gamma,\Act,\Delta)$ and a configuration
$q_\updownarrow\#\in Q\Gamma$ of
$\Pp$ such that $x\in L(\M)$ if, and only if,
$(\Ll(\Pp),q_\updownarrow\#)$ is {\em not} weakly bisimulation finite.
This is sufficient since 2-\EXPTIME is closed under complement.

In the following let us identify $\ppoly$ with $\ppoly(n)$
and let $N=2^{\ppoly}$.
We assume without loss of generality that $\ell\geq 2$ and hence $N-1>n$.
We set
$\Gamma=\{\#,0,1\}\cup\Upsilon\cup(Q_\M\times\Upsilon)\cup\{\#_i\mid
i\in\interval{0,\ppoly-1}\}$.

Note that configurations of $\M$ needed for accepting inputs of length $n$ are words from the language
$
\bigcup_{k=0}^{N-1}\Upsilon^k (Q_\M\times\Upsilon) \Upsilon^{N-1-k}.
$
For every such configuration $w=w_0\dots w_{N-1}$
we write $\pos(w)$ to denote the unique $k\in\interval{0,N-1}$
such that $w_k\in Q_\M\times\Upsilon$,
hereby denoting the position of the read/write head of the configuration $w$.

The following lemma is an immediate consequence of the definition of $\vdash_\mu$:

\begin{lemma}\label{lemma vdash}
	For all $\mu=(t,Z)\in T\times\Upsilon$ with $t=(p,U,q,Y,d)$ and
	all configurations $w$ and $w'$ of $\M$
	we have
	$w\vdash_\mu w'$ if, and only if,
	$\Pos(w')=\Pos(w)+d$ and $h_\mu(w)=h_\mu'(w')$,
	where $h_\mu,h_\mu'\colon(\Upsilon\cup(Q_\M\times\Upsilon))^*\rightarrow(\Upsilon\cup(Q_\M\times\Upsilon))^*$
	are the letter-to-letter morphisms such that for all $X\in\Upsilon\cup(Q_\M\times\Upsilon)$
	we have
	\begin{align*}
	h_\mu(X)=
	\begin{cases}
		Y & \text{if $X=(p,U)$,}\\
		X & \text{otherwise,}
	\end{cases}
	\qquad\text{and}\qquad
	h_\mu'(X)=
	\begin{cases}
		Z & \text{if $X=(q,Z)$,}\\
		X & \text{otherwise.}
	\end{cases}
	\raisebox{-2ex}{\qedhere}
	\end{align*}
\end{lemma}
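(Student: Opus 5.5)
The plan is to verify the two directions of the equivalence directly from the definition of $\vdash_\mu$, comparing the two words letter by letter. Fix $\mu=(t,Z)$ with $t=(p,U,q,Y,d)$, and write $w=w_0\dots w_{N-1}$ and $w'=w'_0\dots w'_{N-1}$. Since $h_\mu$ and $h'_\mu$ are letter-to-letter, $h_\mu(w)=h'_\mu(w')$ amounts to $h_\mu(w_k)=h'_\mu(w'_k)$ for every position $k$, and $|w|=|w'|$.

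\emph{Left to right.} Treat the case $d=1$; the case $d=-1$ is symmetric. From $w\vdash_\mu w'$ we get $w=u(p,U)Zv$ and $w'=uY(q,Z)v$, so $\Pos(w')=|u|+1=\Pos(w)+d$. Applying the morphisms gives $h_\mu(w)=uYZv$, because only the head letter $(p,U)$ is changed and all other letters are tape symbols. Likewise $h'_\mu(w')=uYZv$, because only $(q,Z)$ is changed. Hence $h_\mu(w)=h'_\mu(w')$.

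\emph{Right to left.} Again take $d=1$ and let $k=\Pos(w)$, so $\Pos(w')=k+1$.
\begin{compactenum}
\item At every position $j\notin\{k,k+1\}$ both letters are tape symbols. The morphisms act as the identity on tape symbols, so $w_j=w'_j$ there.
\item At position $k+1$ the letter $w'_{k+1}$ is a head letter, and $h'_\mu$ maps it to a tape symbol. For this to equal $h_\mu(w_{k+1})=w_{k+1}$, we need $w'_{k+1}=(q,Z)$ and $w_{k+1}=Z$. Had $w'_{k+1}$ been any other head letter, $h'_\mu$ would leave it unchanged, and it could not equal the tape symbol $w_{k+1}$.
\item At position $k$ the letter $w_k$ is a head letter. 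Similarly, $h_\mu(w_k)=w'_k\in\Upsilon$ forces $w_k=(p,U)$ and $w'_k=Y$.
\end{compactenum}
Setting $u=w_0\dots w_{k-1}$ and $v=w_{k+2}\dots w_{N-1}$ then yields exactly $w=u(p,U)Zv$ and $w'=uY(q,Z)v$, that is, $w\vdash_\mu w'$.

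For $d=-1$ the same argument applies with the head moving left, so the two special positions are $k-1$ and $k$.

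The main point to watch is the case analysis in steps 2 and 3. A head letter different from $(p,U)$ or $(q,Z)$ is fixed by the corresponding morphism, so it cannot match a tape symbol; this is precisely what pins down the state and the letters at the two special positions. The statement also implicitly requires $w$ and $w'$ to have equal length. This holds for configurations of the fixed length $N$ and is in any case forced by the equality of their images.
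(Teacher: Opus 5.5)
Your proof is correct and takes the same approach as the paper. The paper gives no separate argument and simply calls the lemma an immediate consequence of the definition of $\vdash_\mu$; your position-by-position unfolding, which uses that each configuration has exactly one head letter and that both morphisms fix every tape symbol, is exactly that verification.
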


Throughout Section~\ref{sec:lower-full} we constantly add fresh rules; they implicitly
come with fresh control states and action symbols.
Whenever we introduce such rules, we prove
that certain configurations, whose control states appear in those
rules, have certain properties.
These properties however are not influenced by rules that are
introduced later.
In particular, if we describe rules starting in some pair $(q,x)\in Q\times\Gamma$,
then the reader may assume that no other rules starting in this pair are introduced later in the text.
Before we introduce these rules we will introduce some gadgets
in the next section.

\subsection{Gadgets for the lower bound construction.}\label{S Gadgets}

Inspired by the ``Defender's forcing'' technique of Jan\v{c}ar and Srba~\cite{JS08},
given a finite set of control states
$\{\Source,\Source'\}\cup\{\Target_i,\Target_i'\mid i\in I\}$ and a stack symbol $X$,
we introduce a notation
$$\langle \Source,\Source'\rangle X \rewrite{\Defender}
\{\langle \Target_i,\Target_i'\rangle X\mid i\in I\}$$
to denote the following set of rules,
where all action symbols and control states that are not in
$\{\Source,\allowbreak\Source'\}\allowbreak\cup\{\Target_i,\allowbreak\Target_i'\mid i\in I\}$
are newly introduced and
where $i,j\in I$:
\begin{align*}
	&\Source X\rewrite{a}\Wait X\,,&&\Source'X\rewrite{a}
	\Choose_i'X\,,\\
	&\Source X\rewrite{a}\Choose_i'X\,,&&\\
	&\Wait X\rewrite{a_i}\Target_i X\,,&&
	\Choose_i'X\rewrite{a_i}\Target_i'X\,,\\
	&
	&&\Choose_i' X\rewrite{a_j}\Target_j X\quad\text{if $i\not= j$.}
\end{align*}
The wit of the gadget, justified in details in Jan\v{c}ar and Srba~\cite[Section 4]{JS08}, is that for all stack contents
$\gamma\in\Gamma^*$ we have
$$
\Source X\gamma\approx\Source'X\gamma
\quad\Longleftrightarrow\quad \exists i\in I.\,
\Target_iX\gamma\approx\Target_i'X\gamma\,.
$$

Analogously (by adding appropriate push rules at the
end), given a set of control states $\{\Source,\Source'\}\cup\{\Target,\Target'\}$,
a set of stack symbols $\{X\}\cup\{Y_i\in\Gamma\mid i\in I\}$,
one can construct a gadget
$$\langle \Source,\Source'\rangle X\rewrite{\Defender}
\langle \Target,\Target'\rangle\ \{Y_iX\mid i\in I\}$$
such that for all stack contents $\gamma\in\Gamma^*$ we have
$\Source X\gamma\approx\Source'X\gamma$ if, and only if,
$\Target Y_iX\gamma\approx\Target'Y_iX\gamma$
for some $i\in I$.
In any case, both gadgets have size $O(|I|)$.

Going further, given a finite language $L\subseteq\Gamma^*$
of the form
$L=\Sigma_1\dots\Sigma_m$, where $\Sigma_j\subseteq\Gamma$ for all $j\in\interval{1,m}$,
by suitably cascading the above gadget $m$ times, one can
construct a gadget
$$\langle \Source,\Source'\rangle X \rewrite{\Defender}
\langle \Target,\Target'\rangle\ L$$
such that for all stack contents $\gamma\in\Gamma^*$ we have
$\Source X\gamma\approx\Source'X\gamma$ if, and only if,
there exists some $w\in L$ such that
$\Target w\gamma\approx\Target'w\gamma$.

A {\em deterministic finite automaton (DFA)}
is a tuple $A=(S,\Sigma,\delta_A,s_0,F)$,
where $S$ is a finite set of {\em states}, $\Sigma$ is a finite
{\em alphabet}, $\delta_A:S\times\Sigma\rightarrow S$
is the {\em transition function}, $s_0\in S$ is the
{\em initial state}, and $F\subseteq S$ is a set of {\em final states}.
The function $\delta_A$ is naturally extended to
a function from $S\times\Sigma^*$
inductively via $\delta_A(s,\varepsilon)=s$
and $\delta_A(s,aw)=\delta_A(\delta_A(s,a),w)$ for all $a\in\Sigma$
and all $w\in\Sigma^*$.
A language $L\subseteq\Sigma^*$ is {\em regular} if
$L=L(A)=\{w\in\Sigma^*\mid \delta_A(s_0,w)\in F\}$
for some DFA $A=(S,\Sigma,\delta_A,s_0,F)$.
Finally, given a regular language $L\subseteq\Sigma^*$,
where $\Sigma\subseteq\Gamma\setminus\{\#\}$,
and two control
states $p$ and $p'$ we introduce a gadget
$$\langle p,p'\rangle \#\rewrite{\textsc{Prefix-Check}\# L\#}$$
such that for all stack contents of the form
$\gamma=\#\theta\#\beta$, where $\theta\in\Sigma^*$ and
$\beta\in\Gamma^*$,
we have
$p\gamma\approx p'\gamma$ if, and only if,
$\theta\in L$.
Assuming some DFA $A=(S,\Sigma,\delta_A,s_0,F)$ such that
$L(A)=L$, we construct
the gadget as follows, where all symbols and control states
(except for $p$ and $p'$) are
freshly introduced, where $s\in S$,
where $s'$ is a copy of every such $s\in S$,
and where $a\in \Sigma$:
\begin{align*}
	&p\#\rewrite{1}s_0\,,&&p'\#\rewrite{1}s_0'\,,\\
	&sa\rewrite{\varepsilon}\delta_A(s,a)\,,&&
	s'a\rewrite{\varepsilon}\delta_A(s,a)'\,,\\
	&s\#\rewrite{2}t\quad\text{if $s\in F$,}&&
	s'\#\rewrite{2}t'\quad\text{if $s\in F$,}\\
	&s\#\rewrite{3}t\quad\text{if $s\not\in F$.}&&
\end{align*}
Since the above rules only allow to execute
runs reading at most two action symbols,
note that, by construction, both $\{\class{p\gamma}\mid \gamma\in\Gamma^*\}$
and $\{\class{p'\gamma}\mid \gamma\in\Gamma^*\}$ are
finite sets of classes all of which are weakly bisimulation finite.
For all of the prefix checking rules that we will introduce below where, say, $L$ is an involved
regular language, it holds implicitly that one can compute in polynomial time in $n=|x|$
a DFA $A$ with $L(A)=L$.

\subsection{Encoding numbers.}\label{S Numbers}

The {\em binary presentation} of a number $k\in\interval{0,N-1}$ is
defined as
$$
\Bin_k=\Bin_{k,\ppoly-1}\dots\Bin_{k,0},
\text{ where }\Bin_{k,i}\in\{0,1\} \text{ are such that }
k=\sum_{i=0}^{\ppoly-1}2^i\cdot\Bin_{k,i}\,.$$
In particular, note that $\Bin_0=0^{\ppoly}$ and
$\Bin_{N-1}=\Bin_{2^{\ppoly}-1}=1^{\ppoly}$.
Conversely, for each $u\in\{0,1\}^\ell$ let $\overline{u}\in\N$ be the unique $k\in\interval{0,N-1}$ such that
$\Bin(k)=u$.
For any two $k,k'\in\interval{0,N-1}$ with $k\not= k'$, we define $
\MSB(k,k')=\max\{i\in\interval{0,\ppoly-1}\mid \Bin_{k,i}\not=\Bin_{k',i}\}$ to be
the most significant bit in which the binary
presentations of $k$ and $k'$ differ.

In this subsection we present rules allowing us to verify whether one number is a successor of another number.
We start by adding the following set of rules, where $i,j\in\interval{0,\ppoly-1}$
and $b\in\{0,1\}$:
\begin{align*}
	&\Binsucc_{i}b\rewrite{s}\Binsucc_{i,\ppoly-1} b\,,  & &
	\widetilde{\Binsucc} b\rewrite{s}\widetilde{\Binsucc}_{\ppoly-1} b\,,\\
	& \Binsucc_{i,j} b\rewrite{b}\Binsucc_{i,j-1}\quad\text{if $j>i$,}&&\\
	& \Binsucc_{i,j} 0\rewrite{1}
	\Binsucc_{i,j-1}\quad\text{if $j=i$,}&&
	\widetilde{\Binsucc}_{j} b\rewrite{b}
	\widetilde{\Binsucc}_{j-1}\,.\\
	& \Binsucc_{i,j} 1\rewrite{0}\Binsucc_{i,j-1}\quad\text{if $j<i$,}&&
\end{align*}

Observe that there is no rule involving $\Binsucc_{i,-1}$ nor
$\widetilde{\Binsucc}_{-1}$ on its left-hand side.
We have the following characterization:

\begin{lemma}\label{lemma succ i}
	For all $i\in\interval{0,\ppoly-1}$, all $k,k'\in\interval{0,N-1}$, and all stack contents $\gamma,\gamma'\in\Gamma^*$ we have
	$\Binsucc_{i}\Bin_k\gamma\approx \widetilde{\Binsucc}\Bin_{k'}\gamma'$
	if, and only if, $k'=k+1$ and
	$i=\MSB(k,k')$.
	Moreover, both
	$\{\class{\Binsucc_{i}\gamma}\mid \gamma\in\Gamma^*\}$ and
	$\{\class{\widetilde{\Binsucc}\gamma}\mid \gamma\in\Gamma^*\}$
	are finite sets of classes all of which are weakly bisimulation finite.
\end{lemma}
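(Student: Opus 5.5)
Both configurations are $\epsilon$-free on the relevant part, so weak bisimilarity amounts to comparing the (finite) sets of traces each configuration can perform; we prove the lemma by computing these traces explicitly. From $\Binsucc_i u\gamma$ with $u=u_{\ppoly-1}\dots u_0\in\{0,1\}^\ppoly$ the only rule applies to the top bit: read $s$ and go to $\Binsucc_{i,\ppoly-1}u\gamma$. Then the rules pop the bits one by one, reading at the position $j$ the symbol $u_j$ if $j>i$. At $j=i$ the rule exists only if $u_i=0$ and then it reads $1$. For $j<i$ it exists only if $u_j=1$ and then it reads $0$. After $\ppoly$ pops we are in state $\Binsucc_{i,-1}$, which is stuck. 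Thus the runs of $\Binsucc_i\Bin_k\gamma$ are totally ordered (deterministic, each configuration has at most one successor). The maximal trace is $s\,w$, where $w$ is the longest prefix of the ``rewritten'' word $u_{\ppoly-1}\dots u_{i+1}\,1\,0^i$ that is produced before hitting a bit violating the pattern $u_i=0$, $u_j=1$ for $j<i$. Symmetrically, $\widetilde{\Binsucc}\Bin_{k'}\gamma'$ is deterministic and has exactly one maximal trace $s\,\Bin_{k'}$, of length $\ppoly+1$, ending in the stuck state $\widetilde{\Binsucc}_{-1}$.

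Two deterministic $\epsilon$-free systems are weakly bisimilar iff they have the same trace set, and here each trace set is the prefix closure of a single maximal word. Hence bisimilarity holds iff the two maximal words coincide. Since the right one has length $\ppoly+1$, the left run must complete all $\ppoly$ steps. This happens iff $\Bin_{k,i}=0$ and $\Bin_{k,j}=1$ for all $j<i$, i.e.\ iff the lowest $0$-bit of $\Bin_k$ is at position $i$. In that case the produced word is exactly $\Bin_{k+1}$, because incrementing flips the trailing ones to zero and the lowest zero to one. Equality with $\Bin_{k'}$ then gives $k'=k+1$. Conversely, if $k'=k+1$, the bits of $k$ and $k+1$ differ exactly at positions $\le$ the lowest zero of $k$, so $\MSB(k,k')$ is that position. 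This yields both directions: if $i=\MSB(k,k')$ with $k'=k+1$, then $i$ is the lowest zero bit and the words match. One notes that $k\le N-2$ is forced, since $k=N-1$ has no zero bit, and indeed $k'\le N-1$.

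For finiteness, every configuration $\Binsucc_i\gamma$ (for arbitrary $\gamma$) reaches only runs of length at most $\ppoly+1$ that never push. It is therefore determined up to bisimilarity by the finitely many possible top-$\ppoly$ windows. So $\{\class{\Binsucc_i\gamma}\}$ is finite, and each such class is bisimilar to a finite acyclic LTS (its own reachable part, which has at most $\ppoly+2$ configurations); the same holds for $\widetilde{\Binsucc}$. The only mild subtlety is ensuring that no rule can touch $\gamma$, which holds because the runs stop at states $\Binsucc_{i,-1}$, $\widetilde{\Binsucc}_{-1}$, or earlier when a bit violates the pattern or the symbol is not a bit.
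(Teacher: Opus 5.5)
Your proposal is correct and takes essentially the same approach as the paper. The paper also argues by inspection of the deterministic, $\varepsilon$-free rules, characterising bisimilarity by three bit conditions: agreement above position $i$, a $0\to 1$ flip at $i$, and $1\to 0$ flips below $i$. This is exactly your ``lowest zero bit of $k$ sits at $i$ and the produced word is $\alpha_{k+1}$'' argument, and the paper likewise gets finiteness from the class depending only on the top $\ell$ stack symbols. Your version merely spells out the step, left implicit in the paper, that trace equivalence coincides with bisimilarity for these deterministic runs.
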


\begin{proof}
	The first statement of the lemma follows
	immediately from inspection of the above rules
	and the equivalence of the following two statements:
	\begin{compactitem}
	\item $k'=k+1$ and $i=\MSB(k,k+1)$;
	\item the following three conditions hold
		for all $j\in\interval{0,\ell-1}$:
		\begin{compactitem}
		\item $\Bin_{k,j}=\Bin_{k',j}$
			if $j>i$,
		\item ($\Bin_{k',j}=1$ and $\Bin_{k,j}=0$) if $j=i$, and
		\item ($\Bin_{k,j}=1$ and $\Bin_{k',j}=0$)
			if $j<i$.
		\end{compactitem}
	\end{compactitem}
	The second statement follows from inspection of the
	above rules, which imply that the weak bisimulation class both
	of $\Binsucc\gamma$
	and of $\widetilde{\Binsucc}\gamma'$ is determined by the first $\ell$
	letters of $\gamma$.
\end{proof}

We now add a few more rules.
Assuming that the top of the stack is
of the form $\Bin_{k}\#_iX\Bin_{k'}$
for some $X\in\Upsilon\cup(Q_\M\times\Upsilon)$,
these rules
allow us to verify,
starting from two control states $\Binsucc$ and $\Binsucc'$,
whether it holds that $k'=k+1$
and $i$ is the most significant bit position in which
$k$ and $k'$ differ.
In the following rules we have
$X\in\Upsilon\cup(Q_\M\times\Upsilon)$, $i\in \interval{0,\ppoly-1}$,
and $b\in\{0,1\}$:
\begin{align*}
	&\Binsucc b\rewrite{a}\widetilde{\Binsucc}b\,,&&\Binsucc'b\rewrite{a}\Temp\,,\\
	& &&\Temp b\rewrite{\varepsilon}\Temp\,,\\
	& &&\Temp \#_i\rewrite{\varepsilon}\Temp_i\,,\\
	&&&\Temp_i X\rewrite{\varepsilon}\Binsucc_i\,.
\end{align*}

\begin{lemma}\label{lemma succ}
	For all $k,k'\in\interval{0,N-1}$, all $i\in\interval{0,\ppoly-1}$, all
	$X\in\Upsilon\cup(Q_\M\times\Upsilon)$, and all $\gamma\in\Gamma^*$
	we have
	$\Binsucc\Bin_k\#_iX\Bin_{k'}\gamma\approx
	\Binsucc'\Bin_k\#_iX\Bin_{k'}\gamma$
	if, and only if, $k'=k+1\text{ and }\MSB(k,k')=i$.
	Moreover, both $\{\class{\Binsucc\gamma}\mid\gamma\in\Gamma^*\}$
	and $\{\class{\Binsucc'\gamma}\mid\gamma\in\Gamma^*\}$ are finite sets
	of classes all of which are weakly bisimulation finite.
\end{lemma}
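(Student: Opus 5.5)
We compare the two sides step by step, using the rules just introduced. From $\Binsucc\Bin_k\#_iX\Bin_{k'}\gamma$ the only transition is $\Binsucc b\rewrite{a}\widetilde{\Binsucc}b$, where $b$ is the leading bit of $\Bin_k$. It leads to $\widetilde{\Binsucc}\Bin_k\#_iX\Bin_{k'}\gamma$.

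From $\Binsucc'\Bin_k\#_iX\Bin_{k'}\gamma$ the only transition is $\Binsucc'b\rewrite{a}\Temp$. It pops the first bit and leaves $\Temp$ on top of the remaining $\ppoly-1$ bits of $\Bin_k$. Since $(\Temp,b)$, $(\Temp,\#_i)$ and $(\Temp_i,X)$ are in $\epsilon$-mode, this configuration deterministically $\epsilon$-reaches $\Binsucc_i\Bin_{k'}\gamma$. It is therefore weakly bisimilar to it, because $\epsilon$-runs preserve the class.

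Both initial configurations thus have exactly one weak $a$-successor class, and no $\epsilon$-moves are possible initially (neither $\Binsucc b$ nor $\Binsucc' b$ is in $\epsilon$-mode). Hence
\[
\Binsucc\Bin_k\#_iX\Bin_{k'}\gamma\approx\Binsucc'\Bin_k\#_iX\Bin_{k'}\gamma
\iff
\widetilde{\Binsucc}\Bin_k\#_iX\Bin_{k'}\gamma\approx\Binsucc_i\Bin_{k'}\gamma .
\]
The ``if'' direction uses the symmetric relation that pairs the two initial configurations and otherwise coincides with $\approx$. In the $\epsilon$-chain on the primed side every intermediate configuration is bisimilar to its endpoint, so that side is matched. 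The ``only if'' direction follows because every $\weak{a}$-successor of the left side is bisimilar to the unique class on the right side.

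Next we apply \cref{lemma succ i} with the roles arranged as $\Binsucc_i\Bin_{k'}\gamma\approx\widetilde{\Binsucc}\Bin_{k}\gamma''$, taking $\gamma''=\#_iX\Bin_{k'}\gamma$. As stated, that lemma speaks of $\Binsucc_i\Bin_k\gamma\approx\widetilde{\Binsucc}\Bin_{k'}\gamma'$ characterised by $k'=k+1$. Here the $\Binsucc_i$ side carries $k'$ and the $\widetilde{\Binsucc}$ side carries $k$, so the orientation has to be checked against the rules. $\Binsucc_{i,j}$ reads bit $b$ for $j>i$, reads $0$ giving output $1$ at $j=i$, and reads $1$ giving output $0$ below $i$. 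Read this way, the action sequence equals $\Bin_{k+1}$ precisely when the stack holds $\Bin_k$ with $\MSB$ at $i$. The $\widetilde{\Binsucc}$ side emits its own bits. So $\Binsucc_i\Bin_{k'}\gamma\approx\widetilde{\Binsucc}\Bin_k\gamma''$ holds iff $k=k'+1$ in the sense of the lemma, i.e.\ the two stack numbers are consecutive with the right orientation.

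This orientation bookkeeping is the main obstacle. I would redo the trace comparison directly. Both sides are deterministic with only the $s$-action followed by $\ppoly$ bit actions, after which they are stuck: there is no rule for $\Binsucc_{i,-1}$ or $\widetilde{\Binsucc}_{-1}$. Hence they are bisimilar iff the words emitted coincide, and the bit conditions listed in the proof of \cref{lemma succ i} give $k'=k+1$ and $\MSB(k,k')=i$ exactly.

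For the finiteness claim, note that from $\Binsucc\gamma$ or $\Binsucc'\gamma$ every run reads at most $\ppoly+2$ actions and inspects only a bounded prefix of the stack. After the first step we land in classes covered by \cref{lemma succ i}. The $\Temp$-chain needs care: it pops bits until $\#_i$, then one symbol $X$. If the stack is malformed, the chain gets stuck and yields a dead class or a finite one. Therefore the class of $\Binsucc\gamma$ or $\Binsucc'\gamma$ is determined by a bounded prefix of $\gamma$, namely the first bit plus the class data from \cref{lemma succ i}. This gives finitely many classes, each weakly bisimulation finite.
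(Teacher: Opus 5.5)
Your reduction is the same as the paper's. Neither $(\Binsucc,b)$ nor $(\Binsucc',b)$ is in $\varepsilon$-mode, and each side has a single weak $a$-successor class. So the question becomes whether $\widetilde{\Binsucc}\Bin_k\#_iX\Bin_{k'}\gamma\approx\Binsucc_i\Bin_{k'}\gamma$.

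\textbf{The gap is in the last step}, and your own orientation check already exposes it. $\Binsucc_i$ sits on the deeper number $\Bin_{k'}$. Unless it gets stuck, it reads $s$ followed by $\Bin_{k'+1}$; it gets through iff bit $i$ of $\Bin_{k'}$ is $0$ and all lower bits are $1$. Meanwhile $\widetilde{\Binsucc}$ reads $s$ followed by $\Bin_k$. Your ``direct trace comparison'' therefore gives $k=k'+1$ and $\MSB(k',k)=i$. Quoting the bit conditions from the proof of \cref{lemma succ i} without exchanging the roles of $k$ and $k'$ does not turn this into $k'=k+1$.

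Concretely, take $\ppoly=2$, $i=0$, $k=0$, $k'=1$, which satisfy the stated condition. Then $\Binsucc\,00\#_0X01\gamma$ can read $a\,s\,0\,0$. By contrast, $\Binsucc'\,00\#_0X01\gamma$ reaches $\Binsucc_0\,01\gamma$, reads only $a\,s\,0$, and is then stuck in $\Binsucc_{0,0}$ with $1$ on top. Conversely, for $k=1$, $k'=0$ both sides read exactly $a\,s\,0\,1$ and are bisimilar.

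So with the rules as written the biconditional is false, and no argument can close this step. The paper's one-line appeal to \cref{lemma succ i} contains the same orientation slip. The fix has to be in the gadget. For example, let $\Temp_iX$ lead to a predecessor variant of $\Binsucc_i$: at position $i$ read $1$ and emit $0$, and below $i$ read $0$ and emit $1$. Then your trace comparison yields exactly $k'=k+1$ and $\MSB(k,k')=i$. This is the orientation \cref{lemma push defender} needs, since in $\Enc_w$ the smaller index lies on top.

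There is also a smaller inaccuracy in the finiteness part. The class of $\Binsucc'\gamma$ is not determined by a bounded prefix of $\gamma$, because $\Temp$ pops an arbitrarily long block of bits. What suffices is that $\Binsucc'\gamma$ has no $\varepsilon$-moves and at most one weak $a$-successor class. That class is either the class of dead configurations or one of the finitely many weakly bisimulation finite classes $\class{\Binsucc_i\gamma''}$ from \cref{lemma succ i}. Likewise, $\Binsucc\gamma$ has at most the single successor class $\class{\widetilde{\Binsucc}\gamma}$.
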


\begin{proof}
	Looking at the first rules what will be applied (yielding
	an $a$-labeled transition,
	possibly followed by a sequence of $\varepsilon$-transitions) we see that
	$\Binsucc\Bin_k\#_iX\Bin_{k'}\gamma\approx
	\Binsucc'\Bin_k\#_iX\Bin_{k'}\gamma$
	if, and only if,
	$\widetilde{\Binsucc}\Bin_{k}\#_iX\Bin_{k'}\gamma\approx
	\Binsucc_i\Bin_{k'}\gamma$,
	which by \cref{lemma succ i} holds
	if, and only if,
	$k'=k+1$ and $i=\MSB(k,k')$.
	That the sets
	$\{\class{\Binsucc\gamma}\mid\gamma\in\Gamma^*\}$
	and
	$\{\class{\Binsucc'\gamma}\mid\gamma\in\Gamma^*\}$
	are both finite sets of classes all of which are weakly
	bisimulation finite
	follows immediately from inspection of
	the above rules and \cref{lemma succ i}.
\end{proof}

\subsection{Encoding configurations.}\label{S Configurations}

The encoding of a configuration $w=w_0\dots w_{N-1}$ of $\M$ is defined to be
the following word $\Enc_w\in\Gamma^*$:
$$
\Enc_w=w_0\Bin_0\left(\prod_{i=1}^{N-1}\#_{\MSB(i-1,i)}w_i\Bin_{i}\right).
$$

\begin{Example}
	For $\ppoly=3$ (hence for $N=2^\ppoly=8$) and $w=ab(q,c)abcbb$
	we have
	$$
	\Enc_w=a000\#_0b001\#_1(q,c)010\#_0a011\#_2b100\#_0c101\#_1b110
	\#_0b111\,.
	$$
\end{Example}

We remark that $\Enc_w$ contains precisely $2^{\ppoly-1-i}$ occurrences of
$\#_i$, for every $i\in\interval{0,\ppoly-1}$.

\begin{Definition}\label{definition alpha w}
	For all configurations $w=w_0\dots w_{N-1}$ of $\M$ and
	for all $u\in\{0,1\}^{\leq \ppoly}$ let us define,
	by induction on $|u|$, the infix $\Enc^{(u)}_w$ of $\Enc_w$:
	\begin{compactitem}
	\item $\Enc^{(\varepsilon)}_{w}=\Enc_w$, and
	\item if $u\in\{0,1\}^i$ for some $i\in\interval{0,\ppoly-1}$,
		then $\Enc^{(u0)}_w$ and $\Enc^{(u1)}$ are the unique
		infixes of $\Enc^{(u)}$ satisfying
		$\Enc^{(u)}_w=\Enc^{(u0)}_w\#_{\ppoly-1-i}\Enc^{(u1)}_w$.
	\end{compactitem}
\end{Definition}

Recalling that $\overline{u}$ denotes the unique number that the binary string
$u\in\{0,1\}^\ppoly$ encodes, for all $u\in\{0,1\}^\ppoly$ we have
$\Enc^{(u)}_w=w_{\overline{u}}\Bin_{\overline{u}}$.
A simple induction yields that for all $i\in\interval{0,\ppoly}$ and
all $u\in\{0,1\}^i$, the infix $\Enc^{(u)}_w$
appears exactly once in $\Enc_w$ and,
for all $j\in\interval{0,\ppoly-1}$ we have that $\Enc^{(u)}_w$ contains precisely
$\lfloor 2^{\ppoly-1-i-j}\rfloor$ occurrences of $\#_{j}$.

\subsection{Gadget for checking the successor relation among configurations.}\label{S Successor}

In this subsection we show how to check whether the successor relation holds between two configurations written on the top of the stack.
First, we construct a gadget verifying whether head positions agree;
more precisely, given $d\in\{-1,1\}$, and assuming that the stack starts with $\#\Enc_{w'}\#\Enc_w$, the gadget allows us to test
whether $\Pos(w')=\Pos(w)+d$.
To this end, for all $d\in\{-1,1\}$, all $i\in\interval{0,\ell-1}$, and all $X\in\Gamma$ we add the following rules:
\begin{align}
	&&&\hspace{-13em}\langle \Pos_d,\Pos_d'\rangle\# \rewrite{\Defender}
	\{\langle \Pos_{d,j}, \Pos_{d,j}'\rangle \#\mid j\in\interval{0,\ppoly-1}\}\,,\label{Pos 1}\\
	&\Pos_{d,i}X\rewrite{\varepsilon}\Pos_{d,i}\quad\text{if $X\not\in Q_\M\times\Upsilon$,}&&
	\Pos_{d,i}'X\rewrite{\varepsilon}\Pos_{d,i}'
	\quad\text{if $X\not\in Q_\M\times\Upsilon$,}\label{Pos 2}\\
	&\hspace{0.1em}&&\Pos_{d,i}'X\rewrite{\varepsilon}\Pos_{d,i}''\quad\text{if $X\in Q_\M\times\Upsilon$,}\label{Pos 3}\\
	&\hspace{0.1em}&&\Pos_{d,i}''X\rewrite{\varepsilon}
	\Pos_{d,i}''\quad\text{if $X\not\in Q_\M\times\Upsilon$,}\label{Pos 4}\\
	&\Pos_{d,i}X\rewrite{\varepsilon}\Binsucc_{i}\quad\text{if $X\in Q_\M\times\Upsilon,d=\!-1$,}&&
	\Pos_{d,i}''X\rewrite{\varepsilon}\widetilde{\Binsucc}\quad\text{if $X\in Q_\M\times\Upsilon,d=\!-1$,}\label{Pos 5}\\
	&\Pos_{d,i}X\rewrite{\varepsilon}\widetilde{\Binsucc}\quad\text{if $X\in Q_\M\times\Upsilon,d=1$,}&&
	\Pos_{d,i}''X\rewrite{\varepsilon}\Binsucc_{i}\quad
	\text{if $X\in Q_\M\times\Upsilon,d=1$.}\label{Pos 6}
\end{align}

\begin{lemma}\label{lemma pos}
	For all $d\in\{-1,1\}$, all length-$N$ configurations $w,w'$ of $\M$, and all $\gamma\in\Gamma^*$ we have
	$$
		\Pos_d\#\Enc_{w'}\#\Enc_{w}\gamma\approx
		\Pos_d'\#\Enc_{w'}\#\Enc_{w}\gamma
			\quad\Longleftrightarrow\quad
		\pos(w')=\pos(w)+d\,.$$
	Moreover, both
	$\{\class{\Pos_d\gamma}\mid\gamma\in\Gamma^*\}$
	and
	$\{\class{\Pos_d'\gamma}\mid\gamma\in\Gamma^*\}$
	are finite sets of classes all of which are weakly
	bisimulation finite.
\end{lemma}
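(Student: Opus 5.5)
The plan is to trace the runs from both sides of the Defender's forcing gadget in \cref{Pos 1}. By the property of that gadget, $\Pos_d\#\Enc_{w'}\#\Enc_w\gamma\approx\Pos_d'\#\Enc_{w'}\#\Enc_w\gamma$ holds if, and only if, there is some $j\in\interval{0,\ppoly-1}$ with
\begin{align*}
\Pos_{d,j}\#\Enc_{w'}\#\Enc_w\gamma\approx\Pos_{d,j}'\#\Enc_{w'}\#\Enc_w\gamma.
\end{align*}
So the core of the proof is to determine, for a fixed $j$, when this last equivalence holds. Both configurations start with deterministic $\epsilon$-transitions only, and every configuration is weakly bisimilar to the configuration reached at the end of its $\epsilon$-run. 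I will follow these $\epsilon$-runs.

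From $\Pos_{d,j}$, the rules in \cref{Pos 2} pop $\#$ and every symbol of $\Enc_{w'}$ that lies before the head symbol. The head symbol sits at the start of the block $\Enc^{(u)}_{w'}$ with $\overline u=\pos(w')$, so it is followed by $\Bin_{\pos(w')}$. From $\Pos_{d,j}'$, the $\epsilon$-run also pops up to the head symbol of $w'$. It then moves to $\Pos_{d,j}''$ by \cref{Pos 3} and keeps popping by \cref{Pos 4} until it reaches the head symbol of $w$, which is followed by $\Bin_{\pos(w)}$. One point needs care: this run passes the $\#$ separator and the $\#_i$ symbols, and these are not in $Q_\M\times\Upsilon$, so they are popped as well. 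Then \cref{Pos 5,Pos 6} pop the head symbol.

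For $d=1$ this gives
\begin{align*}
\Pos_{d,j}\#\Enc_{w'}\#\Enc_w\gamma\approx\widetilde{\Binsucc}\Bin_{\pos(w')}\gamma_1
\quad\text{and}\quad
\Pos_{d,j}'\#\Enc_{w'}\#\Enc_w\gamma\approx\Binsucc_j\Bin_{\pos(w)}\gamma_2
\end{align*}
for suitable stack contents $\gamma_1,\gamma_2$. For $d=-1$ the roles of $\Binsucc_j$ and $\widetilde{\Binsucc}$ are swapped, giving $\Binsucc_j\Bin_{\pos(w')}\gamma_1$ on the first side and $\widetilde{\Binsucc}\Bin_{\pos(w)}\gamma_2$ on the second.

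\Cref{lemma succ i} applies in both cases, since weak bisimilarity is symmetric. For $d=1$ it says the two configurations are equivalent iff $\pos(w')=\pos(w)+1$ and $j=\MSB(\pos(w),\pos(w'))$. For $d=-1$ it says they are equivalent iff $\pos(w)=\pos(w')+1$ and $j$ is the corresponding $\MSB$. Either way, the condition is $\pos(w')=\pos(w)+d$ together with $j$ being the matching $\MSB$. Such a $j$ exists exactly when $\pos(w')=\pos(w)+d$, which gives the equivalence in the statement.

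For the finiteness claim, I would argue that from $\Pos_d\gamma$ and from $\Pos_d'\gamma$ only boundedly many action symbols are read before reaching the $\Binsucc_j$ or $\widetilde{\Binsucc}$ configurations. Those configurations have finitely many classes, all weakly bisimulation finite, by \cref{lemma succ i}. The $\epsilon$-runs do not add new classes, and runs that get stuck produce only the dead-end class. The main obstacle is to track the $\epsilon$-runs precisely, including what happens on stacks that are not well-formed. For finiteness it is enough to observe that every $\epsilon$-run either gets stuck or enters the $\Binsucc$ gadgets, and that the gadget in \cref{Pos 1} reads only two action symbols before its targets.
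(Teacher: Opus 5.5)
Your proposal is correct and follows essentially the same route as the paper. Both arguments reduce through the Defender's forcing gadget of \cref{Pos 1} to the pairs $\Pos_{d,j},\Pos_{d,j}'$, follow the deterministic $\epsilon$-runs down to $\Binsucc_j$ and $\widetilde{\Binsucc}$ configurations, invoke \cref{lemma succ i}, and obtain finiteness from the bounded number of action symbols read before entering those gadgets. Your explicit handling of $d=-1$, which gives $\pos(w)=\pos(w')+1$, is the correct application of \cref{lemma succ i}. The paper's write-up of that case states $k'=k+1$ at the intermediate step, a sign slip your version avoids.
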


\begin{proof}
	Let $d\in\{-1,1\}$, let $w'=w_0'\dots w_{N-1}'$, $w=w_0\dots w_{N-1}$,
	$k'=\Pos(w')$, and $k=\Pos(w)$.
	We then have $w'_{k'},w_{k}\in Q_\M\times\Upsilon$.
	By the definition on an encoding we have
	$$\Enc_{w'}=w_0'\Bin_0\left(\prod_{i=1}^{N-1}\#_{\MSB(i-1,i)}w_{i}'\Bin_{i}\right)\quad\text{and}\quad
	\Enc_{w}=w_0\Bin_0\left(\prod_{i=1}^{N-1}\#_{\MSB(i-1,i)}w_{i}\Bin_{i}\right).$$
	Let $\gamma\in\Gamma^*$ and let us define $\delta=\#\Enc_{w'}\#\Enc_{w}\gamma$.
	We provide a proof only in the case of $d=-1$; the case of $d=1$ is completely analogous.
	By inspecting rules in \crefrange{Pos 2}{Pos 6}, for all
	$i\in\interval{0,\ppoly-1}$ we have
	$\Pos_{d,i}\delta\gamma\approx\Pos_{d,i}'\delta\gamma$
	if, and only if,
	$$\Binsucc_i \Bin_{k'}\#_{\MSB(k',k'+1)}\dots w_{N-1}'\Bin_{N-1}\#\Enc_{w}\gamma\approx
	\widetilde{\Binsucc} \Bin_{k}\#_{\MSB(k,k+1)}\dots w_{N-1}\Bin_{N-1}\gamma\,.
	$$
	By \cref{lemma succ i} the latter equivalence holds if, and only if,
	$k'=k+1$ and $i=\MSB(k,k')$.
	By properties of the Defender's forcing gadget used in \cref{Pos 1} we have
	$\Pos_d\delta\gamma\approx\Pos_d'\delta\gamma$ if, and only if,
	there exists some $i\in\interval{0,\ppoly-1}$ such that
	$\Pos_{d,i}\delta\gamma\approx\Pos_{d,i}'\delta\gamma$.
	Thus,
	$\Pos_d\delta\gamma\approx\Pos_d'\delta\gamma$ if, and only if, there exists
	some $i\in\interval{0,\ppoly-1}$ such that $k'=k+1$ and $i=\MSB(k,k')$,
	that is, if, and only if, $\Pos(w')=\Pos(w)+d$, as required.

	Let us finally prove that $\{\class{\Pos_d\gamma}\mid\gamma\in\Gamma^*\}$
	is a finite set of classes all of which are weakly bisimulation finite;
	the same can analogously be proven for
	$\{\class{\Pos_d'\gamma}\mid\gamma\in\Gamma^*\}$.
	By inspecting the above rules, there is a constant $c\in\N$
	(essentially depending on the Defender's forcing gadget)
	such that for all stack contents $\gamma\in\Gamma^*$, when applying the rules
	from a configuration of the form
	$\Pos_d\gamma$
	one can only execute a run reading at most $c$ action symbols and
	for some $\gamma'\in\Gamma^*$ either
	get stuck in a dead end of the form $\Pos_{d,i}\gamma'$,
	or reach a configuration of the form $\Binsucc_i\gamma'$
	or $\widetilde{\Binsucc}\gamma'$.
	In the former case, finiteness is clear,
	in the latter two cases finiteness
	immediately follows from the fact that
	$\{\class{\Binsucc_i\gamma}\mid\gamma\in\Gamma^*\}$
	is a finite set of classes all of which are weakly bisimulation
	finite
	for all $i\in\interval{0,\ppoly-1}$
	and the same holds for the set
	$\{\class{\widetilde{\Binsucc}\gamma}\mid\gamma\in\Gamma^*\}$
	according to \cref{lemma succ i}.
\end{proof}

Once head positions are verified, the remaining part of the successor relation may be checked with a help of letter-to-letter morphisms, as explained in \cref{lemma vdash}.
This is realized by our next gadget, which
for each $\mu\in T\times\Upsilon$ allows to verify,
assuming that the top of the stack is $\#\Enc_{w'}\#\Enc_w$,
whether $w\vdash_\mu w'$ holds.
In order to construct this gadget,
for all $\mu=(t,Z)\in T\times\Gamma$ with $t=(p,U,q,Y,d)$,
all $X\in\Gamma$, and all $i\in\interval{0,\ppoly-1}$
we add the following rules, where rules involving the control states
$\Pos_d$ and $\Pos_d'$ have already been introduced above
and where the morphisms $h_{\mu}$ and $h_{\mu}'$ are
defined in \cref{lemma vdash}:
\begin{align}
	&\Succ_{\mu}\#\rewrite{0}\Pos_{d}\#\,,  & &
	\Succ_{\mu}'\#\rewrite{0}\Pos_{d}'\#\,,\label{rule 7}\\
	&\Succ_{\mu}\#\rewrite{1}\Desc_{\mu,\ppoly}\,,  & &
	\Succ_{\mu}'\#\rewrite{1}\Desc_{\mu,\ppoly-1}'\,,\label{rule 8}\\
	&\Desc_{\mu,\ppoly}X\rewrite{\varepsilon}\Desc_{\mu,\ppoly}
	\quad\text{if $X\not=\#$,}
	&&\label{rule 9}\\
	&\Desc_{\mu,\ppoly}\#\rewrite{\varepsilon}\Desc_{\mu,\ppoly-1}\,,
	&&\label{rule 10}\\
	&\Desc_{\mu,i}X\rewrite{0}\Desc_{\mu,i-1}X\,,
	& &\Desc_{\mu,i}'X\rewrite{0}\Desc_{\mu,i-1}'X\,,\label{rule 11}\\
	&\Desc_{\mu,i}X\rewrite{1}\Pop_{\mu,i}X\,,
	& &\Desc_{\mu,i}'X\rewrite{1}\Pop_{\mu,i}'X\,,\label{rule 12}\\
	&\Pop_{\mu,i}X\rewrite{\varepsilon}\Pop_{\mu,i}
	\quad\text{if $X\not=\#_i$,}
	&&
	\Pop_{\mu,i}'X\rewrite{\varepsilon}\Pop_{\mu,i}'
	\quad\text{if $X\not=\#_i$,}
	\label{rule 13}\\
	&\Pop_{\mu,i}\#_i\rewrite{\varepsilon}\Desc_{\mu,i-1}\,,
	&&
	\Pop_{\mu,i}'\#_i\rewrite{\varepsilon}\Desc_{\mu,i-1}'\,,\label{rule 14}\\
	&\Desc_{\mu,-1}X\rewrite{h_{\mu}(X)}\Final_\mu X\,,&&
	\Desc_{\mu,-1}'X\rewrite{h_{\mu}'(X)}\Final_\mu'X\,.\label{rule 15}
\end{align}

\begin{lemma}\label{lemma Succ mu}
	For all length-$N$ configurations $w,w'$ of $\M$, all $\mu\in T\times\Upsilon$, and
	all $\gamma\in\Gamma^*$ we have
	$$
		\Succ_{\mu}\#\Enc_{w'}\#\Enc_{w}\gamma\approx
		\Succ_{\mu}'\#\Enc_{w'}\#\Enc_{w}\gamma
		\quad\Longleftrightarrow\quad
		w\vdash_{\mu} w'\,.$$
	Moreover, both $\{\class{\Succ_{\mu}\gamma}\mid
	\gamma\in\Gamma^*\}$ and
	$\{\class{\Succ_{\mu}'\gamma}\mid
	\gamma\in\Gamma^*\}$
	are finite sets of classes all of which are weakly bisimulation finite.
\end{lemma}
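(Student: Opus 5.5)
The plan is to analyse the first move from the pair $(\Succ_\mu\delta,\Succ_\mu'\delta)$, where $\delta=\#\Enc_{w'}\#\Enc_w\gamma$. Both control states have exactly two reading rules on $\#$, labelled $0$ and $1$ (\cref{rule 7,rule 8}). Hence the two configurations are weakly bisimilar if, and only if, two things hold. First, the $0$-successors must be equivalent, i.e.\ $\Pos_d\delta\approx\Pos_d'\delta$. Second, the $1$-successors, after their deterministic $\epsilon$-closure, must be equivalent. Neither side has $\epsilon$-moves initially, so no other matchings are possible. By \cref{lemma pos} the first condition is equivalent to $\pos(w')=\pos(w)+d$. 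By \cref{lemma vdash} it then suffices to show that the second condition is equivalent to $h_\mu(w)=h'_\mu(w')$.

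For the $1$-branch I will follow the $\epsilon$-runs. On the left, $\Desc_{\mu,\ppoly}$ pops $\Enc_{w'}$ and the following $\#$ by \cref{rule 9,rule 10}, reaching $\Desc_{\mu,\ppoly-1}\Enc_w\gamma$. On the right we are at $\Desc'_{\mu,\ppoly-1}\Enc_{w'}\#\Enc_w\gamma$. From there both sides play the same game (\cref{rule 11,rule 12,rule 13,rule 14}). At level $i$, reading $0$ keeps the stack (choosing the left half $\Enc^{(u0)}$), and reading $1$ pops up to and including the first $\#_i$ (moving to the right half $\Enc^{(u1)}$). By \cref{definition alpha w}, after reading a word $u\in\{0,1\}^\ppoly$ the left side is at $\Desc_{\mu,-1}$ with top $w_{\overline u}$, and the right side is at $\Desc'_{\mu,-1}$ with top $w'_{\overline u}$. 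Each then reads the single letter $h_\mu(w_{\overline u})$, respectively $h'_\mu(w'_{\overline u})$, and gets stuck in $\Final_\mu$ or $\Final'_\mu$. The induced LTSs from the two $\Desc_{\mu,\ppoly-1}$ configurations are therefore finite trees of depth $\ppoly+1$ (ignoring $\epsilon$-steps). Each branch is determined by $u$, and the branch-last labels are $h_\mu(w_{\overline u})$ and $h'_\mu(w'_{\overline u})$.

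The key step is an induction on $i$ from $-1$ up to $\ppoly-1$. For prefixes $u$ of length $\ppoly-1-i$, the claim is that the pair of current configurations is weakly bisimilar iff $h_\mu(w_{\overline{uv}})=h'_\mu(w'_{\overline{uv}})$ for all completions $v$. The base case $i=-1$ is immediate, since both sides read one letter and reach dead ends. In the step, the $\epsilon$-popping in \cref{rule 13} preserves classes, so the $0$- and $1$-successors must be matched respectively, giving a conjunction over both halves. I expect the main care to be here: I must check that the $\epsilon$-runs popping to $\#_i$ land precisely at the start of $\Enc^{(u1)}$ on both stacks. This holds because $\#_i$ occurs exactly once in $\Enc^{(u)}$ at the split point, while deeper symbols $\#_j$ with $j<i$ do not stop the run. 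The right side sees $\Enc_{w'}$ and the left sees $\Enc_w$, but they have identical index/separator structure.

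Finiteness follows as in \cref{lemma pos}. From $\Succ_\mu\gamma$, any run reads at most $\ppoly+3$ action symbols before getting stuck, or else it enters $\Pos_d$, whose reachable classes form finitely many finite classes by \cref{lemma pos}. Consequently there are only boundedly many distinct bounded-depth behaviours over the fixed finite alphabet, and each of them is finite; the same holds for $\Succ'_\mu$.
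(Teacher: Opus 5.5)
Your proposal is correct and follows essentially the same route as the paper. You reduce via \cref{rule 7,rule 8} to the $\Pos_d$-branch, which is handled by \cref{lemma pos,lemma vdash}, and to the $\Desc$-branch. You then pop by $\epsilon$-transitions to $\Desc_{\mu,\ppoly-1}\Enc_w\gamma$ versus $\Desc'_{\mu,\ppoly-1}\Enc_{w'}\#\Enc_w\gamma$ and walk down the binary tree of infixes $\Enc^{(u)}$. The only differences are cosmetic: you replace the paper's appeal to determinism (equality of the sets of maximal words $\{u\,h_\mu(w_{\overline u})\}$ and $\{u\,h'_\mu(w'_{\overline u})\}$) by an explicit induction over prefixes $u$, and for finiteness you bound the number of action symbols read directly rather than via the ranking $\succ$ of control states.
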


\begin{proofnoqed}
	Let us first prove the second statement of the lemma.
	Consider the following (smallest) partial order $\succ$
	relating the above control states as follows:
	\begin{align*}
	&\Succ_\mu\hspace{-0.5em}
		\makebox[0pt][l]{\,\raisebox{-1.8ex}{\rotatebox{-45}{$\succ$}}\raisebox{-4ex}{$\Pos_\mu$}} % below
		\raisebox{1.5ex}{\rotatebox{45}{$\succ$}}\!\raisebox{3.4ex}{$\Desc_{\mu,\ppoly}$}\raisebox{2.5ex}{\rotatebox{-45}{$\succ$}} % above
		\hspace{-0.2em}
		\Desc_{\mu,\ppoly-1}\succ\Pop_{\mu,\ppoly-1}\succ\Desc_{\mu,\ppoly-2}\succ\Pop_{\mu,\ppoly-2}\succ
		\dots\succ\Desc_{\mu,-1}\succ\Final_\mu\,,\displaybreak[0]\\[1.5ex]
	&\Succ_\mu'
		\makebox[0pt][l]{\raisebox{-2ex}{\rotatebox{-45}{$\succ$}}\raisebox{-4.4ex}{$\Pos_\mu'$}}
		\succ\Desc_{\mu,\ppoly-1}'\succ\Pop_{\mu,\ppoly-1}'\succ\Desc_{\mu,\ppoly-2}'\succ\Pop_{\mu,\ppoly-2}'\succ
		\dots\succ\Desc_{\mu,-1}'\succ\Final_\mu'\,.
	\end{align*}
	For all rules that appear in \crefrange{rule 7}{rule 15}
	in which, say, $p_1$ is the control state on the left-hand
	side of the rule and $p_2$ is the control state on the right-hand side,
	observe that either $p_1=p_2$ or $p_1\succ p_2$.
	Furthermore, if the rule is a reading rule
	(i.e., is not an $\varepsilon$-rule) we have $\ell\succ r$.
	Since moreover any configuration having control state $\Final_\mu$
	or $\Final_\mu'$ is a dead end,
	and recalling that both $\{\class{\Pos_\mu\gamma}\mid \gamma\in\Gamma^*\}$
	and $\{\class{\Pos_\mu'\gamma}\mid \gamma\in\Gamma^*\}$
	are finite sets of classes all of which are weakly
	bisimulation finite by \cref{lemma pos},
	it follows that both $\{\class{\Succ_\mu\gamma}\mid\gamma\in\Gamma^*\}$
	and $\{\class{\Succ_\mu'\gamma}\mid\gamma\in\Gamma^*\}$
	are indeed finite sets of classes all of which are weakly bisimulation
	finite.

	Let us now prove the first statement of the lemma.
	Let $\mu=(t,Z)$ with $t=(p,U,q,Y,d)$ and let us introduce the abbreviation
	$\delta=\#\Enc_{w'}\#\Enc_{w}$.
	By inspecting \cref{rule 7,rule 8} we have
	$\Succ_\mu\delta\gamma\approx\Succ_\mu'\delta\gamma$ if, and only
	if, $\Pos_d\delta\gamma\approx\Pos_d'\delta\gamma$
	and $\Desc_{\mu,\ppoly}\Enc_{w'}\#\Enc_{w}\gamma\approx
	\Desc_{\mu,\ppoly-1}'\Enc_{w'}\#\Enc_{w}\gamma$.
	Since, on the one hand, $\Pos_d\delta\gamma\approx\Pos_d'\delta\gamma$ if, and only
	if $\Pos(w')=\Pos(w)+d$ by \cref{lemma pos}
	and, on the other hand,
	$w\vdash_\mu w'$ if, and only
	if, $\Pos(w')=\Pos(w)+d$ and $h_{\mu}(w)=h_{\mu}'(w')$ by \cref{lemma vdash},
	for proving the first statement of the lemma
	it is enough to prove that $\Desc_{\mu,\ppoly}\Enc_{w'}\#\Enc_{w}\gamma\approx\allowbreak
	\Desc_{\mu,\ppoly-1}'\Enc_{w'}\#\Enc_{w}\gamma$
	if, and only if, $h_{\mu}(w)=h_{\mu}'(w')$.
	Hence, by inspection of \cref{rule 9,rule 10},
	it is sufficient to prove that
	\begin{equivalences}
		\Desc_{\mu,\ppoly-1}\Enc_{w}\gamma\approx
		\Desc_{\mu,\ppoly-1}'\Enc_{w'}\#\Enc_{w}\gamma\quad\Longleftrightarrow\quad
		h_\mu(w)=h_\mu'(w')\,.\label{L sufficient}
	\end{equivalences}

	Let $w=w_0\dots w_{N-1}$ and $w'=w_0'\dots w_{N-1}'$.
	For all $u\in\{0,1\}^{\leq\ppoly}$
	we have defined an infix $\Enc^{(u)}_{w}$ of $\Enc_{w}$
	and an infix $\Enc^{(u)}_{w'}$ of $\Enc_{w'}$
	(cf.\@ \cref{definition alpha w}); recall that
	$\Enc^{(u)}_{w}$ (and $\Enc^{(u)}_{w'}$) appears precisely once in
	$\Enc_{w}$ (in $\Enc_{w'}$, respectively).
	Let $\delta^{(u)}_{w}$ be the unique suffix of $\Enc_{w}$
	starting just after $\Enc^{(u)}_{w}$ (then $\Enc^{(u)}_{w}\delta^{(u)}_{w}$ is a suffix of $\Enc_{w}$),
	and analogously, let $\delta^{(u)}_{w'}$ be the unique suffix of
	$\Enc_{w'}$ starting just after $\Enc^{(u)}_{w'}$, (then $\Enc^{(u)}_{w'}\delta^{(u)}_{w'}$ is a suffix of $\Enc_{w'}$).
	For all $i\in\interval{0,\ppoly-1}$ and
	all $u\in\{0,1\}^i$ we have
	that both $\Enc^{(u)}_{w}$ and $\Enc^{(u)}_{w'}$
	contain precisely one occurrence of $\#_{\ell-1-i}$; we have
	$\Enc^{(u)}_{w}=\Enc^{(u0)}_{w}\#_{\ell-1-i} \Enc^{(u1)}_{w}$,
	and analogously
	$\Enc^{(u)}_{w'}=\Enc^{(u0)}_{w'}\#_{\ell-i-1} \Enc^{(u1)}_{w'}$
	by \cref{definition alpha w}.
	As a consequence, by inspection of the rules from \crefrange{rule 11}{rule 14},
	for all $i\in\interval{0,\ppoly-1}$, all $u\in\{0,1\}^i$, and
	all $b\in\{0,1\}$ we have
	$\Desc_{\mu,\ell-1-i}\Enc^{(u)}_{w}\delta^{(u)}_{w}\gamma\weak{b}\Desc_{\mu,\ell-2-i}
	\Enc^{(ub)}_{w}\delta^{(ub)}_{w}\gamma$
	and
	$\Desc_{\mu,\ell-1-i}'\Enc^{(u)}_{w'}\delta^{(u)}_{w'}\#\Enc_w\gamma\weak{b}\Desc_{\mu,\ell-2-i}'
	\Enc^{(ub)}_{w'}\delta^{(ub)}_{w'}\#\Enc_w\gamma$.
	We also have $\Enc_w=\Enc^{(\epsilon)}_{w}\delta^{(\epsilon)}_{w}$ and $\Enc_{w'}=\Enc^{(\epsilon)}_{w'}\delta^{(\epsilon)}_{w'}$.
	By a direct induction on $i\in\interval{0,\ppoly}$ this implies that for all $u\in\set{0,1}^i$ we have
	$\Desc_{\mu,\ell-1}\Enc_w\gamma\weak{u}\Desc_{\mu,\ell-1-i}\Enc^{(u)}_{w}\delta^{(u)}_{w}\gamma$
	and
	$\Desc'_{\mu,\ell-1}\Enc_{w'}\#\Enc_w\gamma\weak{u}\Desc'_{\mu,\ell-1-i}\Enc^{(u)}_{w'}\delta^{(u)}_{w'}\#\Enc_w\gamma$.

	For all $u\in\{0,1\}^\ppoly$ we have
	$\Enc^{(u)}_{w}=w_{\overline{u}}\Bin_{\overline{u}}$
	and
	$\Enc^{(u)}_{w'}=w_{\overline{u}}'\Bin_{\overline{u}}$
	by \cref{definition alpha w}.
	As a consequence, by inspection of the rules in \cref{rule 15},
	for all $u\in\{0,1\}^\ppoly$ we have
	$$
	\Desc_{\mu,\ppoly-1}\Enc_w\gamma\weak{u}\Desc_{\mu,-1}\Enc^{(u)}_w\delta^{(u)}_{w}\gamma\xRightarrow{h_{\mu} (w_{\overline{u}})}
	\Final_{\mu}\Enc^{(u)}_w\delta^{(u)}_{w}\gamma
	$$
	and analogously
	$$
	\Desc'_{\mu,\ppoly-1}\Enc_{w'}\#\Enc_w\gamma\weak{u}\Desc_{\mu,-1}'\Enc^{(u)}_{w'}\delta^{(u)}_{w'}\#\Enc_{w}\gamma\xRightarrow{h_{\mu}'
	(w_{\overline{u}}')}
	\Final_{\mu}'\Enc^{(u)}_{w'}\delta^{(u)}_{w'}\#\Enc_{w}\gamma\,.
	$$

	Inspecting the rules once again, we see that they are all deterministic,
	and that only symbols from $\set{0,1}$ can be read, until the state becomes $\Desc_{\mu,-1}$ or $\Desc_{\mu,-1}'$.
	After reaching such a state, it is only possible to read the symbol $h_\mu(X)$ or $h'_\mu(X)$
	(depending on whether the state is $\Desc_{\mu,-1}$ or $\Desc_{\mu,-1}'$), where $X$ is the topmost stack symbol.
	No further transitions are possible after reading this symbol (and reaching the state $\Final_\mu$ or $\Final'_\mu$).
	It follows that
	all runs that one can execute
	from the configuration $\Desc_{\mu,\ppoly-1}\Enc_{w}\gamma$
	are deterministic and the maximal
	such runs are precisely the runs reading a word from the set
	$\{uh_{\mu}(w_{\overline{u}})\mid u\in\{0,1\}^\ppoly\}$.
	Analogously,
	all runs that one can execute
	from the configuration $\Desc_{\mu,\ppoly-1}'\Enc_{w'}\#\Enc_{w}\gamma$
	are deterministic and the maximal
	such runs are precisely the runs reading a word from the set
	$\{uh_{\mu}'(w_{\overline{{u}}}')\mid u\in\{0,1\}^\ppoly\}$.
	Thus, \cref{L sufficient} and
	hence the lemma hold by the following equivalences:
	\begin{align*}
		&\Desc_{\mu,\ppoly-1}\Enc_{w}\gamma\approx
		\Desc_{\mu,\ppoly-1}'\Enc_{w'}\#\Enc_{w}\gamma \\
		\Longleftrightarrow\qquad&
		\{uh_{\mu}(w_{\overline{u}})\mid u\in\{0,1\}^\ppoly\}=
		\{uh_{\mu}'(w_{\overline{u}}')\mid u\in\{0,1\}^\ppoly\}\\
		\Longleftrightarrow\qquad& \forall u\in\{0,1\}^\ppoly.\,
		h_{\mu}(w_{\overline{u}})=
		h_{\mu}'(w_{\overline{u}}')\\
		\Longleftrightarrow\qquad& h_{\mu}(w)=h_{\mu}'(w')\,.
	\qedhere\end{align*}
\end{proofnoqed}

\subsection{A gadget for pushing successor configurations.}\label{S Pushing}

Before listing quite involved rules, the following lemma
states that one can design a gadget that allows to push a
successor configuration of a configuration whose encoding
is assumed to be on the top of the stack:

\begin{lemma}\label{lemma push defender}
	For all $t=(p,U,q,Y,d)\in T$,
	when adding the rules \crefrange{push 16A}{push 26} below,
	for all length-$N$ configurations $w$
	of $\M$ and all $\gamma\in\Gamma^*$ we have
	$$
	\Push_{t}\#\Enc_{w}\gamma\approx
	\Push_{t}'\#\Enc_{w}\gamma
	\quad\Longleftrightarrow\quad\!\!\begin{array}{l}
	\exists Z\in\Upsilon, w'\in\Upsilon^*(q,Z)\Upsilon^*.\,w\vdash_{(t,Z)}w'\text{ and }\\\quad
 	\Play_{(q,Z)}\#\Enc_{w'}\#\Enc_{w}\gamma\approx
	\Play_{(q,Z)}'\#\Enc_{w'}\#\Enc_{w}\gamma,\end{array}$$
\end{lemma}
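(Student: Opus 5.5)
The plan is to read the rules \crefrange{push 16A}{push 26} as a bisimulation game between Attacker and Defender. In this game Defender writes a candidate successor $\#\Enc_{w'}$ on top of $\#\Enc_w$ one symbol at a time, and Attacker may at any moment stop the writing and challenge its correctness. The existential quantifier over $Z$ and $w'$ will come from iterated uses of the Defender's forcing gadget $\langle\cdot,\cdot\rangle X\rewrite{\Defender}\langle\cdot,\cdot\rangle\,\{Y_iX\mid i\in I\}$: each round lets Defender choose the next pushed symbol from $\Upsilon\cup(Q_\M\times\Upsilon)\cup\{0,1\}\cup\{\#_i\}\cup\{\#\}$. The universal quantifier (``all checks succeed'') will come from a plain Attacker choice. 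Concretely, in the intermediate pair $\langle\Push_t,\Push_t'\rangle$, both sides have the same action symbols, one per test, leading to pairs of test states. Such a pair is bisimilar iff every resulting pair is bisimilar.

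The steps, in order:
\begin{compactenum}
\item Start with $\langle\Push_t,\Push_t'\rangle\#\rewrite{\Defender}\langle\Push_{t,Z},\Push'_{t,Z}\rangle\,\{\cdot\}$. This lets Defender fix $Z$ and enter the writing phase.
\item Before every further Defender move, offer Attacker three options: a \emph{continue} action that runs another Defender's forcing round; a \emph{local check} action; and a \emph{finish} action, permitted only once Defender has pushed the closing $\#$.
\item Implement the local check with the rules of \cref{lemma succ} (states $\Binsucc,\Binsucc'$). They verify that the two topmost blocks have the form $\Bin_{k'}\#_iX\Bin_k$ with $k'=k+1$ and $i=\MSB(k,k')$, i.e.\ that the block just pushed correctly extends the block below it. A \textsc{Prefix-Check} with a polynomial-size DFA handles the boundary conditions: first block $X\Bin_0$, closing $\#$ only directly after a block ending with $1^\ppoly$, exactly one head symbol, and that head symbol equal to $(q,Z)$.
\item On the \emph{finish} action, let Attacker choose between $\langle\Succ_{(t,Z)},\Succ'_{(t,Z)}\rangle$ and $\langle\Play_{(q,Z)},\Play'_{(q,Z)}\rangle$ on the whole stack $\#\Enc_{w'}\#\Enc_w\gamma$.
\end{compactenum}
The direction ``$\Leftarrow$'' is then routine: Defender writes exactly $\#\Enc_{w'}$. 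Every local check succeeds by \cref{lemma succ} and the prefix checks, the $\Succ$ test succeeds by \cref{lemma Succ mu}, and the $\Play$ test succeeds by hypothesis. From these one builds a weak bisimulation along the writing phase.

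For ``$\Rightarrow$'', assume $\Push_t\#\Enc_w\gamma\approx\Push_t'\#\Enc_w\gamma$ and follow the Defender choices that keep the pair bisimilar. Since Attacker's local checks never fail, every pushed block is $X\Bin_{k+1}$ with the correct $\#_{\MSB}$ separator, by \cref{lemma succ i,lemma succ}. The counter strictly increases and the prefix checks forbid any continuation after $1^\ppoly$ other than $\#$. Hence after exactly $N$ blocks the pushed word is $\#\Enc_{w'}$ for a well-formed length-$N$ configuration $w'$ with head symbol $(q,Z)$. The finish tests then give $w\vdash_{(t,Z)}w'$ by \cref{lemma Succ mu}, together with the required $\Play$ equivalence.

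The main obstacle is the danger that Defender never stops writing, since an infinite play is a Defender win and would make the pair bisimilar without any real witness. The plan is to make the local checks strong enough that the only Attacker-unrefutable infinite behaviour is impossible. Consecutive counters in $\interval{0,N-1}$, together with the rule that $1^\ppoly$ must be followed by $\#$, force termination within $N$ blocks, so any longer run contains a refutable step. A second delicate point is that an Attacker challenge must expose the \emph{topmost} two blocks regardless of how deep the stack is. I would handle this with $\epsilon$-popping rules as in \crefrange{Pos 2}{Pos 6}, so that the check inspects only a bounded prefix of the stack; this also keeps all gadget classes weakly bisimulation finite, as in the earlier lemmata.
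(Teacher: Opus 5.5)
\paragraph{Main issue: the writing direction is reversed.}
Your architecture matches the paper's. Defender's forcing fixes $Z$ and writes the new configuration, Attacker runs local counter tests through the $\Binsucc/\Binsucc'$ gadget of \cref{lemma succ}, \textsc{Prefix-Check}s handle the boundary, and the final tests are $\Succ_{(t,Z)}$ and $\Play_{(q,Z)}$. However, the orientation of the writing phase is reversed, and as stated the argument fails.

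In $\Enc_{w'}$ the cell with index $0$ is the topmost one, and the cell with index $N-1$ lies directly on the old $\#$. So Defender must push the cells in the order $N-1,N-2,\dots,0$, with a \emph{decreasing} counter. Accordingly, \cref{lemma succ} certifies a stack of the form $\Bin_k\#_iX\Bin_{k+1}$, where the top counter is one \emph{smaller} than the one below. It does not certify $\Bin_{k+1}\#_iX\Bin_k$, as you claim.

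Your orientation has the counter strictly increase as Defender writes, starting from $\Bin_0$ and closed by $\#$ after $1^\ppoly$. This causes two failures:
\begin{itemize}
\item The gadget you invoke rejects every correct step.
\item Even a correctly checked run would leave $\#Y_{N-1}\Bin_{N-1}\cdots Y_0\Bin_0\#\Enc_w\gamma$ on the stack. Neither \cref{lemma Succ mu} nor the $\Play$ hypothesis applies to that stack.
\end{itemize}

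\paragraph{Two smaller gaps.}
First, the first pushed cell sits on $\#$ rather than on some $\#_jX\Bin_{k'}$. There the $\Binsucc'$ side gets stuck, so an unrestricted local check at that point would always refute Defender. Second, with symbol-by-symbol pushing you must additionally enforce the cell format, because \cref{lemma succ} only speaks about well-formed stacks.

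\paragraph{How the paper's rules repair this.}
The rules (\ref{push 17})--(\ref{push 26}) do exactly what is needed:
\begin{itemize}
\item (\ref{push 17}) pushes the two bottom cells at once, fixing the lowest counter to $1^\ppoly$ and the separator to $\#_0$.
\item (\ref{push 18}) lets Defender push a whole cell from the finite language $\Theta\{0,1\}^\ppoly\Omega$, so the format is automatic.
\item (\ref{push 20}) lets Attacker compare the new top counter with the one below.
\item Defender decides when to stop via (\ref{push 18-new}).
\item At $\Done_\mu$, Attacker may test the head count (\ref{push 22}), test that the topmost counter is $0^\ppoly$ (\ref{push 23}), test $\Succ_\mu$, or continue to $\Play_{(q,Z)}$.
\end{itemize}

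With this orientation your worry about infinite writing disappears. Stopping above a cell $i\geq 1$ is refuted by the $0^\ppoly$ test (Claim~\ref{claim1}). Pushing anything above the cell with counter $0$ is refuted by \cref{lemma succ}, because $0$ has no predecessor (property~(\ref{approx ineq}) in Claim~\ref{claim2}). The paper then needs no strategy or bisimulation construction: it simply chains the Defender's-forcing equivalences from $\Push_t$ down to $\delta(Y_0,\dots,Y_{N-1})$.
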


\renewcommand{\Check}{\textup{\textsc{Check}}}
\newcommand{\HeadChk}{\textup{\textsc{HeadChk}}}
\newcommand{\FinChk}{\textup{\textsc{FinChk}}}
\newcommand{\Proceed}{\textup{\textsc{Ctd}}}
\newcommand{\Next}{\textup{\textsc{Next}}}
\noindent
For all $t=(p,U,q,Y,d)\in T$ and all $X\in\Gamma$ we add the rule
\begin{align}
	\langle \Push_t,\Push_t'\rangle X\rewrite{\textsc{Def}}
	\{\langle \Push_{(t,Z)},\Push_{(t,Z)}\rangle X\mid
	Z\in\Upsilon\}\,,
	\label{push 16A}
\end{align}
and for all $\mu=(t,Z)\in\{t\}\times\Upsilon$
we add the following rules, where $\Theta=\Upsilon\cup(Q_\M\times\Upsilon)$
and where $\Omega=\{\#_j\mid j\in\interval{0,\ppoly-1}\}$:
\begin{align}
	&\langle \Push_\mu,\Push_\mu'\rangle X\rewrite{\textsc{Def}}
	\langle \Check_\mu,\Check_\mu'\rangle\ \Theta\{0,1\}^\ppoly\#_0\Theta 1^\ppoly X\,,\hspace{-20em}
	\label{push 17}\displaybreak[0]\\
	&\langle \Proceed_\mu,\Proceed_\mu'\rangle X\rewrite{\textsc{Def}}
	\set{\langle \Next_\mu,\Next_\mu'\rangle X, \langle \Done_\mu,\Done_\mu'\rangle X}\,,\hspace{-20em}
	\label{push 18-new}\displaybreak[0]\\
	&\langle \Next_\mu,\Next_\mu'\rangle X\rewrite{\textsc{Def}}
	\langle \Check_\mu,\Check_\mu'\rangle\ \Theta\{0,1\}^\ppoly\Omega X\,,\hspace{-20em}
	\label{push 18}\displaybreak[0]\\
	&
	\Check_{\mu}X \rewrite{1}{\Binsucc}\,,&&
	\Check_{\mu}'X \rewrite{1}\Binsucc'\,,
	\label{push 20}\displaybreak[0]\\
	&
	\Check_{\mu}X \rewrite{2}{\Proceed}_\mu X\,,&&
	\Check_{\mu}'X \rewrite{2}{\Proceed}_\mu'X\,,
	\label{push 21}\displaybreak[0]\\
	&
	\Done_{\mu}X \rewrite{1}{\HeadChk}_\mu\#X\,,&&
	\Done_{\mu}'X \rewrite{1}\HeadChk_{\mu}'\#X\,,
	\label{push 22-new}\displaybreak[0]\\
	&\langle \HeadChk_{\mu},\HeadChk_\mu'\rangle \#
	\rewrite{\textsc{Prefix-Check} \#L^*(Q_\M\times\Upsilon)L^*\#}\,,\quad
	\text{where $L=\Upsilon\cup\Omega\cup\set{0,1}$,}\hspace{-18em}
	\label{push 22}\displaybreak[0]\\
	&
	\Done_{\mu}X \rewrite{2}{\FinChk}_\mu\#X\,,&&
	\Done_{\mu}'X \rewrite{2}\FinChk_{\mu}'\#X\,,
	\label{push 23-new}\displaybreak[0]\\
	&
	\langle \FinChk_{\mu},\FinChk_\mu'\rangle \#
	\rewrite{\textsc{Prefix-Check} \#\Theta 0^\ppoly(\Gamma\setminus\set{\#})^*\#}\,,\hspace{-20em}
	\label{push 23}\displaybreak[0]\\
	&
	\Done_{\mu}X \rewrite{3}{\Succ}_\mu\# X\,,&&
	\Done_{\mu}'X \rewrite{3}\Succ_{\mu}'\# X\,,
	\label{push 24}\displaybreak[0]\\
	&
	\Done_{\mu}X
	\rewrite{4}\Play_{(q,Z)}\# X\,,&
	&\Done_{\mu}'X\rewrite{4}\Play_{(q,Z)}'\# X\,.&\label{push 26}
\end{align}
We remark that the final control states of the form $\Play_{(q,Z)}$
and $\Play_{(q,Z)}'$ will be connected to a gadget further below.
\begin{proofnoqedof}{\cref{lemma push defender}}
	Let $t=(p,U,q,Y,d)\in T$,
	 let $w\in\Upsilon^*(Q_\M\times\Upsilon)\Upsilon^*$ be some configuration of length $N$,
	and let $\gamma\in\Gamma^*$ be some stack content.
	For any stack content $\delta\in\Gamma^*$ and control states
	$r$ and $r'$ we introduce the notation
	$\langle r\approx r'\rangle\delta$ and
	$\langle r\not\approx r'\rangle\delta$
	as an abbreviation for
	$r\delta\approx r'\delta$
	and $r\delta\not\approx r'\delta$, respectively.
	For all $i\in\interval{0,N-1}$ and all
	$Y_i,\dots,Y_{N-1}\in\Theta$,
	let us also introduce the notation
	$$\delta(Y_i,\dots,Y_{N-1})=Y_i\Bin_i\#_{\MSB(i,i+1)}Y_{i+1}\Bin_{i+1}\dots\#_{\MSB(N-2,N-1)}Y_{N-1}\Bin_{N-1}\#\Enc_{w}\gamma\,.$$

	We have the following claim:

	\begin{claim}\label{claim1}
		For all $\mu\in\{t\}\times\Upsilon$, all $i\in\interval{1,N-2}$,
		and all $Y_i,\dots,Y_{N-1}\in\Theta$
		we have
		$$\langle \Proceed_{\mu}\approx\Proceed_{\mu}'\rangle\delta(Y_i,\dots,Y_{N-1})\quad\Longleftrightarrow\quad
		\exists Y_{i-1}\in\Theta.\, \langle\Proceed_{\mu}\approx\Proceed_{\mu}'\rangle
		\delta(Y_{i-1},Y_i\dots,Y_{N-1})\,.$$
	\end{claim}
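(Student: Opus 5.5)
The plan is to unfold the Defender's forcing gadgets from \cref{push 18-new} and \cref{push 18}, and to show that the $\Done_\mu$ branch is useless when $i\geq 1$. Fix $\mu$, $i\in\interval{1,N-2}$ and $Y_i,\dots,Y_{N-1}\in\Theta$, and abbreviate $\delta=\delta(Y_i,\dots,Y_{N-1})$.

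First, by the property of the gadget in \cref{push 18-new}, $\langle\Proceed_\mu\approx\Proceed_\mu'\rangle\delta$ holds if and only if $\langle\Next_\mu\approx\Next_\mu'\rangle\delta$ or $\langle\Done_\mu\approx\Done_\mu'\rangle\delta$. I would rule out the second alternative as follows.
\begin{compactitem}
\item From $\Done_\mu X\delta'$ (with $\delta=X\delta'$), Attacker reads the action $2$. Both sides then reach, deterministically, $\FinChk_\mu\#\delta$ and $\FinChk_\mu'\#\delta$ respectively.
\item Since $\delta$ starts with $Y_i\Bin_i$ and $i\geq1$, we have $\Bin_i\neq 0^\ppoly$.
\item Also $\delta$ contains the separator $\#$ before $\Enc_w$, so the prefix up to the next $\#$ is well defined.
\item Hence the \textsc{Prefix-Check} gadget in \cref{push 23} says $\langle\FinChk_\mu\not\approx\FinChk_\mu'\rangle\#\delta$.
\end{compactitem}
Because the rules for $\Done_\mu,\Done_\mu'$ pair up action by action, Defender has no other answer to action $2$. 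So $\langle\Done_\mu\not\approx\Done_\mu'\rangle\delta$, and the claim reduces to $\langle\Next_\mu\approx\Next_\mu'\rangle\delta$.

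Next, the gadget in \cref{push 18} gives the following equivalence: $\langle\Next_\mu\approx\Next_\mu'\rangle\delta$ holds if and only if there is a word $v=Y\,u\,\#_j$ with $Y\in\Theta$, $u\in\{0,1\}^\ppoly$ and $j\in\interval{0,\ppoly-1}$ such that $\langle\Check_\mu\approx\Check_\mu'\rangle v\delta$. The rules \cref{push 20,push 21} read $1$ or $2$ and are matched one-to-one. So this is equivalent to the conjunction of two conditions. The first is $\langle\Binsucc\approx\Binsucc'\rangle u\#_jY_i\Bin_i\cdots$, since the $1$-rule pops the top symbol $Y$. The second is $\langle\Proceed_\mu\approx\Proceed_\mu'\rangle Yu\#_j\delta$.

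By \cref{lemma succ}, the first condition holds exactly when $u=\Bin_{i-1}$ and $j=\MSB(i-1,i)$. With these values, $Yu\#_j\delta$ is precisely $\delta(Y,Y_i,\dots,Y_{N-1})$ with $Y_{i-1}=Y$. Then the second condition becomes $\langle\Proceed_\mu\approx\Proceed_\mu'\rangle\delta(Y_{i-1},Y_i,\dots,Y_{N-1})$. The existential over $v$ thus collapses to an existential over $Y_{i-1}\in\Theta$, which gives both directions of the claim.

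The step needing the most care is the use of the gadget equivalences for \cref{push 18-new,push 18}. These are stated for Defender's forcing with fixed targets, and I need them to hold in our setting where all rules are reading rules. I also need that no $\varepsilon$-transitions from the gadget states interfere, which holds because those states are fresh and have no $\varepsilon$-rules. Finally, I must check that the gadget pushing $\Theta\{0,1\}^\ppoly\Omega$ is a cascade with the choice sets $\Sigma_j$ in the right order. That order must put $Y\in\Theta$ on top, then the $\ppoly$ bits, then $\#_j$ directly above the old top $Y_i$. I will verify this against the order in which the cascade pushes symbols.
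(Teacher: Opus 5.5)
Your proposal is correct and follows essentially the same route as the paper. You first rule out the $\Done_\mu$ branch of the gadget in \cref{push 18-new} via the \textsc{Prefix-Check} of \cref{push 23}, using $\Bin_i\neq 0^\ppoly$ for $i\geq 1$; then you unfold \cref{push 18,push 20,push 21} and apply \cref{lemma succ} to force $u=\Bin_{i-1}$ and $j=\MSB(i-1,i)$, which is exactly the paper's chain of equivalences. Your extra remarks, that the fresh gadget states have no $\varepsilon$-rules and that the cascade pushes in the right order, just make explicit what the paper uses implicitly.
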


	\begin{proofof}{\cref{claim1}}
		First we claim that
		$\langle\Done_{\mu}\not\approx\Done_{\mu}'\rangle
		\delta(Y_i,\dots,Y_{N-1})$:
		indeed, the prefix
		$$\#Y_i\Bin_i\#_{\MSB(i,i+1)}Y_{i+1}\Bin_{i+1}\dots\#_{\MSB(N-2,N-1)}Y_{N-1}\Bin_{N-1}\#$$
		of $\#\delta(Y_i,\dots,Y_{N-1})$
		is not in the regular language
		$\#\Theta 0^\ppoly(\Gamma\setminus\set{\#})^*\#$,
		simply because $\Bin_i\not=0^\ell$ due to $i\not=0$;
		thus by the rules in \cref{push 23-new,push 23}
		it follows that $\langle\Done_{\mu}\not\approx\Done_{\mu}'\rangle
		\delta(Y_i,\dots,Y_{N-1})$.

		Let us fix any $i\in\interval{1,N-2}$. The claim follows from the
		following equivalences, where the second one follows from
		the just proven
		$\langle\Done_{\mu}\not\approx\Done_{\mu}'\rangle
		\delta(Y_i,\dots,Y_{N-1})$
		and the penultimate equivalence follows
		from the equivalence
		$\langle \Binsucc\approx \Binsucc'\rangle y\#_j\delta(Y_i,\dots,Y_{N-1})\Longleftrightarrow y=\Bin_{i-1}$ and
		$j=\MSB(i-1,i)$, which holds for $y\in\{0,1\}^\ell$ by \cref{lemma succ}:
		\begin{align*}
			&\langle\Proceed_{\mu}\approx\Proceed_{\mu}'\rangle
			\delta(Y_i,\dots,Y_{N-1})\\
		\stackrel{\text{\hspace{-2em}\labelcref{push 18-new},\labelcref{push 18}\hspace{-2em}}}
		{\Longleftrightarrow}\quad&
			\left(\exists Y_{i-1}\in\Theta,y\in\{0,1\}^\ell,\#_j\in\Omega.\,
			\langle\Check_{\mu}\approx\Check_{\mu}'\rangle
			Y_{i-1}y\#_j\delta(Y_i,\dots,Y_{N-1})\right)\text{ or}\\
			&\hspace{12em}\langle \Done_{\mu}\approx\Done_{\mu}'\rangle
				\delta(Y_i,\dots,Y_{N-1})\\
		\Longleftrightarrow\quad&
			\exists Y_{i-1}\in\Theta,
		y\in\{0,1\}^\ell,\#_j\in\Omega.\,
			\langle \Check_{\mu}\approx\Check_{\mu}'\rangle Y_{i-1}
			y\#_j\delta(Y_i,\dots,Y_{N-1})\\
		\stackrel{\text{\hspace{-2em}\labelcref{push 20},\labelcref{push 21}\hspace{-2em}}}
		{\Longleftrightarrow}\quad&
			\exists Y_{i-1}\in\Theta, y\in\{0,1\}^\ell,\#_j\in\Omega.\\
			&
			\langle\Binsucc\approx\Binsucc'\rangle
			y\#_j\delta(Y_i,\dots,Y_{N-1})\text{ and }
			\langle\Proceed_{\mu}\approx\Proceed_{\mu}'\rangle
			Y_{i-1}y\#_j\delta(Y_i,\dots,Y_{N-1})\\
		\Longleftrightarrow\quad&
		\exists Y_{i-1}\in\Theta.\,
			\langle\Proceed_{\mu}\approx\Proceed_{\mu}'\rangle
			Y_{i-1}\Bin_{i-1}\#_{\MSB(i-1,i)}\delta(Y_i,\dots,Y_{N-1})\\
			\Longleftrightarrow\quad&
			\exists Y_{i-1}\in\Theta.\,
			\langle\Proceed_{\mu}\approx\Proceed_{\mu}'\rangle
			\delta(Y_{i-1},Y\dots,Y_{N-1})\,.
		\end{align*}
		This completes the proof of \cref{claim1}.
	\end{proofof}

	Next we have the following claim, where we recall that $t=(p,U,q,Y,d)$ and $\mu=(t,Z)$:

	\begin{claim}\label{claim2}
		For all $\mu=(t,Z)\in\{t\}\times\Upsilon$ and all
		$w'=Y_0\dots Y_{N-1}\in\Theta^N$ we have
		$
		\langle\Proceed_{\mu}\approx\Proceed_{\mu}'\rangle\delta(Y_0,\dots,Y_{N-1})
		$
		if, and only if, $w\vdash_\mu w'$ and
		$\langle \Play_{(q,Z)}\approx\Play_{(q,Z)}'\rangle
		\#\Enc_{w'}\#\Enc_{w}\gamma$.
	\end{claim}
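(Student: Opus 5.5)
The plan is to unfold the Defender's forcing gadget of \cref{push 18-new} once more, now with all $N$ cells present. Here $i=0$ and the stack is $\delta(Y_0,\dots,Y_{N-1})=Y_0\Bin_0\#_{\MSB(0,1)}\dots Y_{N-1}\Bin_{N-1}\#\Enc_w\gamma$. By the properties of the gadget in \cref{push 18-new}, $\langle\Proceed_\mu\approx\Proceed_\mu'\rangle\delta(Y_0,\dots,Y_{N-1})$ holds iff at least one of two things holds: either $\langle\Next_\mu\approx\Next_\mu'\rangle$ on this stack, or $\langle\Done_\mu\approx\Done_\mu'\rangle$ on this stack. I will show that the $\Next$ branch always fails, and that the $\Done$ branch holds exactly when $w\vdash_\mu w'$ and $\langle\Play_{(q,Z)}\approx\Play'_{(q,Z)}\rangle\#\Enc_{w'}\#\Enc_w\gamma$.

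\textbf{The $\Next$ branch fails.} By \cref{push 18} together with \cref{push 20,push 21}, this branch requires some $Y\in\Theta$, $y\in\{0,1\}^\ppoly$ and $\#_j\in\Omega$ with $\langle\Binsucc\approx\Binsucc'\rangle\, y\#_jY_0\Bin_0\cdots$. By \cref{lemma succ}, this would force $\overline{y}+1=0$, which is impossible. Hence $\langle\Check_\mu\not\approx\Check_\mu'\rangle$ for every choice pushed, so $\langle\Next_\mu\not\approx\Next_\mu'\rangle$.

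\textbf{The $\Done$ branch.} The configurations $\Done_\mu$ and $\Done_\mu'$ offer actions $1,2,3,4$ with matching labels and no other moves. So bisimilarity holds iff the four pairs reached are pairwise bisimilar. Note that $\#\delta(Y_0,\dots,Y_{N-1})=\#\Enc_{w'}\#\Enc_w\gamma$, because the $\Bin$ and $\#_{\MSB}$ parts coincide with the encoding. Each pair is then discharged by an earlier result.
\begin{compactitem}
\item Pair $3$ holds iff $w\vdash_\mu w'$, by \cref{lemma Succ mu}. That lemma is stated for configurations $w'$; I will first establish pair $1$ so that it applies.
\item Pair $1$ holds iff $w'$ contains at least one head symbol from $Q_\M\times\Upsilon$, by the \textsc{Prefix-Check} gadget in \cref{push 22}.
\item Pair $2$ holds automatically, since the prefix starts with $\#Y_0 0^\ppoly$ and then contains no $\#$ until the next one, matching the language in \cref{push 23}.
\item Pair $4$ is exactly the $\Play$ condition.
\end{compactitem}

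\textbf{Main obstacle.} The step needing care is pair $1$: the language $\#L^*(Q_\M\times\Upsilon)L^*\#$ only guarantees at least one head symbol, not exactly one. So \cref{lemma Succ mu} cannot be invoked directly on an arbitrary $w'\in\Theta^N$.

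I would handle it in two directions. If $w\vdash_\mu w'$, then $w'$ is a genuine configuration by definition, so it has exactly one head. Conversely, I would inspect the proof of \cref{lemma Succ mu}. The morphism comparison $h_\mu(w)=h'_\mu(w')$ is letter-by-letter, and $h'_\mu$ maps only $(q,Z)$ to a non-head letter. Since $h_\mu(w)$ contains at most one head letter, $w'$ can have at most one head symbol besides $(q,Z)$ itself. Combined with the position check of \cref{lemma pos}, this should force $w'$ to have exactly one head, at position $\Pos(w)+d$ in state $(q,Z)$. If this inspection gets delicate, I would instead re-verify directly that the $\Pos$ gadget (rules \cref{Pos 2}--\cref{Pos 6}) stops at the first head symbol, and argue from there.

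With $w'$ shown to be a configuration, combining the four pairs gives the claim.
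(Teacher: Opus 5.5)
Your decomposition is the same as the paper's: you unfold the gadget of \cref{push 18-new}, kill the \textsc{Next}-branch via \cref{lemma succ} (since $0$ is not the successor of any number), and split the $\Done_\mu$-branch into the four checks, with the \textsc{FinChk} check holding automatically. However, your ``main obstacle'' rests on a misreading, and the repair you propose for it would fail. In \cref{push 22} the language is $\#L^*(Q_\M\times\Upsilon)L^*\#$ with $L=\Upsilon\cup\Omega\cup\{0,1\}$, and $L$ contains no symbol of $Q_\M\times\Upsilon$. Hence $\#\Enc_{w'}\#$ lies in this language iff $\Enc_{w'}$ contains exactly one head symbol, i.e.\ iff $w'\in\Upsilon^*(Q_\M\times\Upsilon)\Upsilon^*$ is a configuration. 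This is exactly the step the paper uses. After it, \cref{lemma Succ mu} applies directly and your four pairs give the claim.

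The repair itself does not work, because the morphism check and the position check together do not force a single head. Take $d=-1$ and let the head $(p,U)$ of $w$ sit at position $k$. Let $w'$ agree with $w$ except that $w'_k=Y$ and $w'_m=w'_{m+1}=(q,Z)$, for two positions $m,m+1\neq k$ with $w_m=w_{m+1}=Z$.
\begin{compactitem}
\item Morphism check: $h'_\mu(w')=h_\mu(w)$, so this check succeeds.
\item Position check: $\Pos_{d,i}$ stops at the first head of $w'$ (position $m$). $\Pos_{d,i}''$ stops at the next head symbol, which is the second head of $w'$ (position $m+1$), not the head of $w$. With $i=\MSB(m,m+1)$, \cref{lemma succ i} makes this check succeed as well.
\end{compactitem}
So $\Succ_\mu\approx\Succ_\mu'$ holds on $\#\Enc_{w'}\#\Enc_w\gamma$ even though $w'$ is not a configuration. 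The single-head property cannot be extracted from the \textsc{Succ} gadget; it must come from the prefix check, and it does once $L$ is read correctly.
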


	\begin{proofnoqedof}{\cref{claim2}}
		Let $\mu=(t,Z)\in\{t\}\times\Upsilon$
		and let $w'=Y_0\dots Y_{N-1}\in\Theta^N$.
		Recall that $\delta(Y_0,\dots,Y_{N-1})=\Enc_{w'}\#\Enc_w\gamma$.
		Firstly, we claim that for all $Y\in\Theta$, all $y\in\{0,1\}^\ppoly$,
		and all $\#_j\in\Omega$ we have
		\begin{properties}
			&\langle \Check_{\mu}\not\approx\Check_{\mu}'\rangle
			Yy\#_j\Enc_{w'}\#\Enc_w\gamma\,.
			\label{approx ineq}
		\end{properties}
		Indeed, $\langle \Binsucc\not\approx\Binsucc'\rangle y\#_j\Enc_{w'}\#\Enc_w\gamma$
		follows directly from \cref{lemma succ}, hence by the rules in \cref{push 20}
		we obtain \cref{approx ineq} as a consequence.

		Secondly, we remark that the presence of the prefix checking
		rule from \cref{push 23} (reachable due to the rules in \cref{push 23-new}) does not impact
		the equivalence
		$\langle\Done_{\mu}\approx\Done_{\mu}\rangle
		\Enc_{w'}\#\Enc_w\gamma$,
		simply as the unique prefix of $\#\Enc_{w'}\#\Enc_w\gamma$
		lying in $\#(\Gamma\setminus\{\#\})^*\#$,
		namely $\#\Enc_{w'}\#$, clearly lies in
		the regular language $\#\Theta 0^\ppoly(\Gamma\setminus\set{\#})^*\#$ appearing in the rule.
		Hence, it follows that the relevant equivalence
		$\langle\Done_{\mu}\approx\Done_{\mu}\rangle
		\Enc_{w'}\#\Enc_w\gamma$
		only depends on the application of the
		rules appearing in \cref{push 22,push 22-new,push 24,push 26}.
		The claim now follows from the following equivalences:
		\begin{align*}
			&\langle\Proceed_{\mu},\Proceed_{\mu}'\rangle\delta(Y_0,\dots,Y_{N-1})\\
			\Longleftrightarrow\qquad\quad&\langle\Proceed_{\mu},\Proceed_{\mu}'\rangle\Enc_{w'}\#\Enc_w\gamma\\
			\stackrel{\text{\hspace{-2em}\labelcref{push 18-new},\labelcref{push 18}\hspace{-2em}}}{\Longleftrightarrow}\qquad\quad & \exists Y\in\Theta,y\in\{0,1\}^\ppoly,\#_j\in\Omega:
			\langle\Check_{\mu}\approx\Check_{\mu}'\rangle Yy\#_j
			\Enc_{w'}\#\Enc_w\gamma\\
			&\text{or } \langle\Done_{\mu}\approx\Done_{\mu}'\rangle
			\Enc_{w'}\#\Enc_w\gamma\\
			\stackrel{\text{\labelcref{approx ineq}}}{\Longleftrightarrow}\qquad\quad &
			\langle\Done_{\mu}\approx\Done_{\mu}'\rangle
			\Enc_{w'}\#\Enc_w\gamma\\
			\stackrel{\text{\hspace{-3em}\labelcref{push 22-new},\labelcref{push 22},\labelcref{push 24},\labelcref{push 26}\hspace{-3em}}}{\Longleftrightarrow}\qquad\quad &
			w'\in\Upsilon^*(Q_\M\times\Upsilon)\Upsilon^*\text{, }
			\langle\Succ_{\mu}\approx\Succ_{\mu}'\rangle\#\Enc_{w'}\#\Enc_w\gamma,\\
			&\text{and }
			\langle\Play_{(q,Z)}\approx\Play_{(q,Z)}'\rangle\#\Enc_{w'}\#\Enc_w\gamma\\
			\Longleftrightarrow\qquad\quad &
			w'\text{ is a configuration, }
			\langle\Succ_{\mu}\approx\Succ_{\mu}'\rangle\#\Enc_{w'}\#\Enc_w\gamma,\\
			&\text{and }
			\langle\Play_{(q,Z)}\approx\Play_{(q,Z)}'\rangle\#\Enc_{w'}\#\Enc_w\gamma)\\
			\stackrel{\text{\hspace{-2em}\cref{lemma Succ mu}}\hspace{-2em}}{\Longleftrightarrow}\qquad\quad &
			w'\text{ is a configuration, }w\vdash_\mu w'\text{, and }\langle\Play_{(q,Z)}\approx\Play_{(q,Z)}'\rangle\#\Enc_{w'}\#\Enc_w\gamma\\
			{\Longleftrightarrow}\qquad\quad &
			w\vdash_\mu w'\text{ and }\langle \Play_{(q,Z)}\approx\Play_{(q,Z)}'\rangle
			\#\Enc_{w'}\#\Enc_{w}\gamma\,.
		\qedhere\end{align*}
	\end{proofnoqedof}

	The lemma now follows from the following equivalences:
	\begin{align*}
		&\langle \Push_t,\Push_t'\rangle\#\Enc_{w}\gamma\\
		\stackrel{\labelcref{push 16A}}{\Longleftrightarrow}\qquad
		&\exists \mu\in\{t\}\times\Upsilon:
		\langle \Push_{\mu}\approx\Push_{\mu}'\rangle \#\Enc_{w}\gamma\\
		\stackrel{\text{\labelcref{push 17}}}{\Longleftrightarrow}\qquad &
		\exists \mu\in\{t\}\times\Upsilon,Y_{N-2},Y_{N-1}\in\Theta, y\in\{0,1\}^\ppoly.\\
		&
		\hspace{8em}\langle\Check_{\mu}\approx\Check_{\mu}'\rangle Y_{N-2}y\#_0Y_{N-1}1^\ell\#
		\Enc_{w}\gamma\\
		\stackrel{\text{\hspace{-2em}\labelcref{push 20},\labelcref{push 21}\hspace{-2em}}}{\Longleftrightarrow}\qquad &
		\exists \mu\in\{t\}\times\Upsilon, Y_{N-2},Y_{N-1}\in\Theta, y\in\{0,1\}^\ppoly.\\
		&	\Big(\biggl.\langle\Binsucc\approx
		\Binsucc'\rangle y\#_0Y_{N-1}1^\ell\#\Enc_{w}\gamma\text{ and }
		\langle\Proceed_{\mu}\approx\Proceed_{\mu}'\rangle Y_{N-2}y\#_0Y_{N-1}1^\ell\#\Enc_{w}\gamma\biggl.\Big)\\
		\stackrel{\text{\hspace{-2em}\cref{lemma succ}\hspace{-2em}}}{\Longleftrightarrow}\qquad &
		\exists \mu\in\{t\}\times\Upsilon, Y_{N-2},Y_{N-1}\in\Theta,\,
		\langle\Proceed_{\mu}\approx\Proceed_{\mu}'\rangle Y_{N-2}\Bin_{N-2}\#_0
		Y_{N-1}\Bin_{N-1}\#\Enc_{w}\gamma\\
		\Longleftrightarrow\qquad&\exists \mu\in\{t\}\times\Upsilon, Y_{N-2},Y_{N-1}\in\Theta.\,
		\langle\Proceed_{\mu}\approx\Proceed_{\mu}'\rangle \delta(Y_{N-2},Y_{N-1})\\
		\stackrel{\text{\hspace{-2em}\cref{claim1}\hspace{-2em}}}{\Longleftrightarrow}\qquad&
		\exists \mu\in\{t\}\times\Upsilon,Y_{N-3},Y_{N-2},Y_{N-1}\in\Theta.\,
		\langle\Proceed_{\mu}\approx\Proceed_{\mu}'\rangle
		\delta(Y_{N-3},Y_{N-2},Y_{N-1})\\
		\cdots&\\
		\stackrel{\text{\hspace{-2em}\cref{claim1}\hspace{-2em}}}{\Longleftrightarrow}\qquad&
		\exists \mu\in\{t\}\times\Upsilon,Y_0,\dots,Y_{N-1}\in\Theta.\,
		\langle\Proceed_{\mu}\approx\Proceed_{\mu}'\rangle
		\delta(Y_0,\dots,Y_{N-1})\\
		\stackrel{\text{\hspace{-2em}\cref{claim2}\hspace{-2em}}}{\Longleftrightarrow}\qquad&
		\exists \mu=(t,Z)\in\{t\}\times\Upsilon,w'\in\Theta^N.\,w\vdash_\mu w'\text{ and }
		\langle\Play_{(q,Z)}\approx\Play_{(q,Z)}'\rangle
		\#\Enc_{w'}\#\Enc_{w}\gamma\\
		\Longleftrightarrow\qquad&
		\exists Z\in\Upsilon, w'\in\Upsilon^*(q,Z)\Upsilon^*.\,w\vdash_\mu w'\text{ and }\langle\Play_{(q,Z)}\approx\Play_{(q,Z)}'\rangle
		\#\Enc_{w'}\#\Enc_{w}\gamma\,.
	\qedhere\end{align*}
\end{proofnoqedof}

\subsection{Simulating \texorpdfstring{$\M$}{M}.}\label{S Simulation}

The next gadget allows us to decide, assuming that the topmost stack content is
of the form $\#\alpha_w$, whether the configuration $w$ is indeed accepting.
\begin{samepage}
For all $(p,U)\in Q_\M\times\Upsilon$ and all $t\in T_{(p,U)}$
we add the following rules:
\begin{align}
	&\Play_{(p,U)}\#\rewrite{t}\Push_{t}\#\quad
	\text{if $p\in Q_\exists$,}\hspace{5em}
	\Play_{(p,U)}'\#\rewrite{t}\Push_{t}'\# \quad\text{if $p\in Q_\exists$,}\label{Play 2}\\
	&
	\langle\Play_{(p,U)},\Play_{(p,U)}'\rangle \#\rewrite{\Defender}
	\{\langle \Push_{t'},\Push_{t'}'\rangle \text\#\mid
	t'\in T_{(p,U)}\}\quad
	\text{if $p\in Q_\forall$.}\label{Play 3}
\end{align}
\end{samepage}

\begin{samepage}
We have the following accompanying lemma:
\begin{lemma}\label{lemma play}
	For all $(p,U)\in Q_\M\times\Upsilon$,
	all length-$N$ configurations $w\in\Upsilon^*(p,U)\Upsilon^*$ of $\M$,
	and all $\gamma\in\Gamma^*$ we have
	$$
	\Play_{(p,U)}\#\Enc_{w}\gamma\not\approx \Play_{(p,U)}'
	\#\Enc_{w}\gamma\quad
	\Longleftrightarrow\quad \text{$w$ is accepting}
	\,.
	$$
	Moreover, both
	$\{\class{\Play_{(p,U)}\gamma}\mid\gamma\in\Gamma^*\}$
	and
	$\{\class{\Play_{(p,U)}'\gamma}\mid\gamma\in\Gamma^*\}$
	are finite sets of classes all of which are weakly
	bisimulation finite.
\end{lemma}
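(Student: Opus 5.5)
We prove the equivalence by induction on the length of the longest computation of $\M$ starting in $w$; this is well-founded because $\M$ has no cyclic computations. Fix $(p,U)$, a length-$N$ configuration $w\in\Upsilon^*(p,U)\Upsilon^*$ and $\gamma\in\Gamma^*$, and write $\delta=\#\Enc_w\gamma$. We treat the existential and the universal case separately, and in both cases reduce to \cref{lemma push defender}.

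\emph{Existential case} ($p\in Q_\exists$). The only rules from $\Play_{(p,U)}\#$ and $\Play'_{(p,U)}\#$ are those in \cref{Play 2}. For each $t\in T_{(p,U)}$ they read $t$ and lead to $\Push_t\delta$, respectively $\Push'_t\delta$. Hence $\Play_{(p,U)}\delta\approx\Play'_{(p,U)}\delta$ holds if and only if $\Push_t\delta\approx\Push'_t\delta$ for all $t\in T_{(p,U)}$; an action $t$ answered by a different $t'$ is impossible because the labels differ. Negating this, non-equivalence holds iff some $t$ has $\Push_t\delta\not\approx\Push'_t\delta$. By \cref{lemma push defender}, this means that for every $Z$ and every $w'$ with $w\vdash_{(t,Z)}w'$ we have $\Play_{(q,Z)}\#\Enc_{w'}\#\Enc_w\gamma\not\approx\Play'_{(q,Z)}\#\Enc_{w'}\#\Enc_w\gamma$. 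The successor of $w$ under $(t,Z)$ is unique, and every successor has strictly shorter longest computation. So the induction hypothesis (with stack $\#\Enc_w\gamma$ in place of $\gamma$) turns this into: every $(t,Z)$-successor $w'$ is accepting.

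This is not literally the definition of acceptance, which asks for one $Z$ and one accepting successor. The gap is closed by how successors are formed. For a fixed $t=(p,U,q,Y,d)$, the symbol $Z$ ranges over the tape symbol at the neighbouring position, so at most one $Z$ yields a successor. If no $Z$ yields one (the head would leave the tape), the universal statement holds vacuously. We therefore need a convention excluding this, which is the step to double-check. The assumption needed is that $\M$ never moves off the tape, which is standard for space-bounded machines, so every $t\in T_{(p,U)}$ has exactly one $(t,Z)$-successor. Under it, ``for all $Z$'' and ``there exists $Z$'' coincide, and non-equivalence holds iff $w$ is accepting.

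\emph{Universal case} ($p\in Q_\forall$). Use the Defender's forcing gadget of \cref{Play 3}. Equivalence holds iff there exists $t'\in T_{(p,U)}$ with $\Push_{t'}\delta\approx\Push'_{t'}\delta$. So non-equivalence holds iff for all $t'$ we have $\Push_{t'}\delta\not\approx\Push'_{t'}\delta$. By the same chain as above, via \cref{lemma push defender}, the induction hypothesis and uniqueness of $Z$, this says that every $t'$ has an accepting successor, which is exactly universal acceptance. When $T_{(p,U)}=\emptyset$ the gadget has empty index set: $\Source$ can move to $\Wait$ while $\Source'$ has no move, so the two configurations are non-equivalent. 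This matches the fact that such a universal configuration is accepting. In the existential case with $T_{(p,U)}=\emptyset$, both sides are dead ends and hence equivalent, matching rejection.

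\emph{Finiteness part.} Argue as in \cref{lemma Succ mu}, with an order on control states in which every reading rule strictly decreases. The difficulty is that $\Play$ leads to $\Push$, and $\Push$ leads back to $\Play$ through the push gadget (\cref{push 26}), so the rules are cyclic and such a well-founded order does not exist. I expect this to be the main obstacle. The natural repair is to prove the finiteness statement jointly for all the gadget families $\Play$, $\Push$, $\Proceed$, and so on. The route I would try is to show that from any $\Play_{(p,U)}\gamma$, every path reading boundedly many actions either reaches a component already known to be finite (\cref{lemma succ,lemma Succ mu} and the prefix checks) or pushes onto the stack, and then to bound the reachable classes by a decomposition argument. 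If that fails, the fallback is to read ``finite set of classes'' loosely, as in the earlier lemmas, and argue that the class of $\Play_{(p,U)}\gamma$ depends only on a bounded top part of $\gamma$ together with finitely many classes.
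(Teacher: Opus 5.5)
Your proof of the equivalence follows the paper's: induction on the longest computation, the rules of \cref{Play 2} or the Defender's forcing gadget of \cref{Play 3}, then \cref{lemma push defender} and the induction hypothesis applied with stack $\#\Enc_w\gamma$. Your quantifier remark is correct and catches a step the paper's chain glosses over. Negating \cref{lemma push defender} yields that \emph{every} $(t,Z)$-successor $w'$ satisfies $\Play_{(q,Z)}\#\Enc_{w'}\#\Enc_w\gamma\not\approx\Play'_{(q,Z)}\#\Enc_{w'}\#\Enc_w\gamma$. This agrees with ``some successor'' only when a successor exists.

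However, the space bound gives existence only for configurations reachable from the initial one. Take an existential $w$ with the head on the last cell, where $T_{(p,U)}\neq\emptyset$ and all its transitions move right. Then $\Push_t\#\Enc_w\gamma\not\approx\Push'_t\#\Enc_w\gamma$ holds vacuously, although $w$ is not accepting, so the equivalence as stated fails. Restrict the lemma to reachable $w$. That set is closed under successors, so your induction goes through, and it is all \cref{lemma init} uses.

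The finiteness claim, however, is left unproved, and this is a genuine gap. You rightly note that the well-founded order of \cref{lemma Succ mu} does not exist because of the cycle through \cref{push 26}, but neither of your routes closes it. The first hands the pushing runs, which reach arbitrarily tall stacks, to the decomposition machinery of \cref{sec:decompositions,sec:core}. That machinery presupposes weak bisimulation finiteness (cf.\ \cref{F is omega,thm:main}). The fallback speaks only about $\Play_{(p,U)}\gamma$ itself. It says nothing about configurations with states $\Push_t$, $\Play_{(q,Z)}$, etc.\ sitting on arbitrarily tall pushed blocks, which are reachable from it. There is also no need to read the claim ``loosely''; it holds exactly.

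The missing idea is to apply \cref{equiv-because-equiv-below} to the top symbol, uniformly for all states on the cycle. The class of $qX\eta$ is a function of $(q,X,(\class{r\eta})_{r\in\eats{X}(q)})$.

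Let $F$ be the set of control states of \cref{Play 2,Play 3} and \crefrange{push 16A}{push 26}, including internal states of the Defender's forcing gadgets. Exclude the states belonging to the $\Binsucc$, prefix-check and $\Succ$ gadgets. Every rule from a state in $F$ keeps the stack height or pushes, except the rules of \cref{push 20}, which pop into $\Binsucc$ and $\Binsucc'$. A run that leaves $F$ (into those gadgets) never returns.

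Hence for $q\in F$ every $r\in\eats{X}(q)$ is a gadget state. From such a state only boundedly many action symbols can be read, so $\set{\class{r\eta}\mid\eta\in\Gamma^*}$ is finite, as in \cref{lemma succ i,lemma succ,lemma pos,lemma Succ mu}. Consequently all configurations $q\gamma$ with $q\in F$ and arbitrary $\gamma$ fall into finitely many classes. The cycle $\Play\to\Push\to\dots\to\Play$ is harmless because it only operates above the current top symbol.

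Every configuration reachable from $\Play_{(p,U)}\gamma$ or $\Play'_{(p,U)}\gamma$ has its control state in $F$ or in one of these gadgets. So only finitely many classes are reachable, which gives both finiteness claims at once. This is, in cleaned-up form, the paper's argument.
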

\end{samepage}

\begin{proofnoqed}
	Let us first prove the second statement of the lemma.
	Since the rules in \cref{Play 2,Play 3}
	mutually depend on the rules \crefrange{push 16A}{push 26} we analyze them together.
	First of all, observe that, immediately by the definition of the gadget,
	neither of the prefix checking
	rules in \cref{push 22,push 23} can contribute
	to an infinity of classes.
	An important consequence of \cref{equiv-because-equiv-below} is
	that the class of every configuration
	$qX\eta$ is determined by the tuple $(q,X,(\class{r\eta})_{r\in\eats{X}(q)})$.

	By inspection of the rules
	in \crefrange{push 16A}{push 26}, \labelcref{Play 2}, and \labelcref{Play 3}
	one realizes that the only rule that {\em decreases}
	the stack height are the rules in \cref{push 20},
	leading to the control states $\Binsucc$ and $\Binsucc'$, respectively.
	Hence the only such classes of the above form
	$\class{r\eta}$ are the classes of the form
	$\class{\Binsucc\gamma}$ or $\class{\Binsucc'\gamma}$.
	But the sets
	$\{\class{\Binsucc\gamma}\mid\gamma\in\Gamma^*\}$
	and
	$\{\class{\Binsucc'\gamma}\mid\gamma\in\Gamma^*\}$
	are finite sets of classes all of which are weakly bisimulation
	finite by \cref{lemma succ}.
	Thus, it follows that
	$\{\class{\Play_{(p,U)}\gamma}\mid\gamma\in\Gamma^*\}$
	and
	$\{\class{\Play_{(p,U)}'\gamma}\mid\gamma\in\Gamma^*\}$
	are finite sets of classes all of which are weakly bisimulation finite.

	Let us now prove the first statement of the lemma.
	Let $(p,U)\in Q_\M\times\Upsilon$, let $w=\Upsilon^*(p,U)\Upsilon^*$ be a
	configuration of $\M$ and let $\gamma\in\Gamma^*$ be any stack content.
	We prove the statement by induction on the
	length of the longest computation starting in $w$.
	We make a case distinction whether $w$ is universal or existential.

	If $w$ is existential we have the following
	equivalences:
	\begin{align*}
		&w\text{ is accepting}\\
	    \Longleftrightarrow\qquad &
		\exists t=(p,U,q,Y,d)\in T_{(p,U)}, Z\in\Upsilon, w'\in\Upsilon^*(q,Z)\Upsilon^*.\,
		w\vdash_{(t,Z)} w'\text{ and $w'$ is accepting}\\
	    \stackrel{\text{IH}}{\Longleftrightarrow}\qquad&
		\exists t=(p,U,q,Y,d)\in T_{(p,U)}, Z\in\Upsilon, w'\in\Upsilon^*(q,Z)\Upsilon^*.\,
		w\vdash_{(t,Z)} w'\mbox{ and}\\
	    &\hspace{12em}
		\Play_{(q,Z)}\#\Enc_{w'}\#\Enc_{w}\gamma
		\not\approx
		\Play_{(q,Z)}'\#\Enc_{w'}\#\Enc_{w}\gamma\\
	    \hspace{1em}\stackrel{\hspace{-2em}\text{\cref{lemma push defender}\hspace{-2em}}}{\Longleftrightarrow}\qquad &
		\exists t=(p,U,q,Y,d)\in T_{(p,U)}.\,
		\Push_{t}\#\Enc_{w}\gamma\not\approx
		\Push_{t}\#\Enc_{w}\gamma\\
	    \stackrel{\hspace{-2em}p\in Q_\exists,\labelcref{Play 2}\hspace{-2em}}{\Longleftrightarrow}\qquad&
		\Play_{(p,U)}\#\Enc_{w}\gamma\not\approx
		\Play_{(p,U)}'\#\Enc_{w}\gamma\,.
	\end{align*}

	If $w$ is universal
	we have the following equivalences:
	\begin{align*}
		&w\text{ is accepting}\\
	    \Longleftrightarrow\qquad &
		\forall t=(p,U,q,Y,d)\in T_{(p,U)}.\,\exists Z\in\Upsilon, w'\in\Upsilon^*(q,Z)\Upsilon^*.\,
		w\vdash_{(t,Z)} w'\text{ and $w'$ is accepting}\\
	    \stackrel{\text{IH}}{\Longleftrightarrow}\qquad&
		\forall t=(p,U,q,Y,d)\in T_{(p,U)}.\,\exists Z\in\Upsilon, w'\in\Upsilon^*(q,Z)\Upsilon^*.\,
		w\vdash_{(t,Z)} w'\mbox{ and}\\
	    &\hspace{12em}
		\Play_{(q,Z)}\#\Enc_{w'}\#\Enc_{w}\gamma
		\not\approx
		\Play_{(q,Z)}'\#\Enc_{w'}\#\Enc_{w}\gamma\\
	    \hspace{1em}\stackrel{\text{\hspace{-2em}\cref{lemma push defender}\hspace{-2em}}}{\Longleftrightarrow}\qquad &
		\forall t=(p,U,q,Y,d)\in T_{(p,U)}.\,
		\Push_{t}\#\Enc_{w}\gamma\not\approx
		\Push_{t}\#\Enc_{w}\gamma\\
	    \stackrel{\hspace{-2em}p\in Q_\forall,\labelcref{Play 3}\hspace{-2em}}{\Longleftrightarrow}\qquad&
		\Play_{(p,U)}\#\Enc_{w}\gamma\not\approx
		\Play_{(p,U)}'\#\Enc_{w}\gamma\,.
	\qedhere\end{align*}
\end{proofnoqed}

\subsection{Setting up the initial configuration.}\label{S Initial}

Recall that $\Bin_k$ denotes the binary encoding for every $k\in\interval{0,2^\ppoly-1}$,
that $x=x_0\dots x_{n-1}$ is the input to $\M$ and that
$q_0\in Q$ is the initial state of $\M$.
We add the following rules, where $\Theta_0=\Upsilon\cup\{(q_0,x_0)\}$ and
$\Omega=\{\#_j\mid j\in\interval{0,\ppoly-1}\}$:
\begin{align}
	&\langle \Init\#,\Init'\rangle \# \rewrite{\textsc{Def}}
	\langle \Check,\Check' \rangle\
	\Box\alpha_{N-2}\#_0\Box\alpha_{N-1}\#\,,\hspace{-20em}\label{init 1}\displaybreak[0]\\
	&\langle \Proceed,\Proceed'\rangle X\rewrite{\textsc{Def}}
	\set{\langle \Next,\Next'\rangle X, \langle \Done,\Done'\rangle X}\,,\hspace{-20em}
	\label{init 1-new}\displaybreak[0]\\
	&\langle \Next,\Next'\rangle X \rewrite{\textsc{Def}}
	\langle \Check,\Check' \rangle\  \Theta_0\{0,1\}^\ppoly\Omega X\,,\hspace{-20em}
	\label{init 2}\displaybreak[0]\\
	&
	\Check X\ \rewrite{1}{\Binsucc}\,,&&
	\Check'X\ \rewrite{1}{\Binsucc}'\,,
	\label{init 4}\displaybreak[0]\\
	&
	\Check X\ \rewrite{2}{\Proceed} X\,,&&
	\Check'X\ \rewrite{2}{\Proceed}'X\,,
	\label{init 4-2}\displaybreak[0]\\
	&
	\Done X
	\rewrite{1}\IniChk \# X\,,&
	&\Done'X\rewrite{1}\IniChk'\# X\,.\label{init 5-new}\displaybreak[0]\\
	&
	\langle \IniChk,\IniChk' \rangle\#
	\rewrite{\textsc{Prefix-Check} \#(q_0,x_0)\alpha_0\#_0x_1\alpha_1\dots
	\#_{\MSB(n-2,n-1)}x_{n-1}\alpha_{n-1}(\Omega\Box\{0,1\}^\ppoly)^*\#}\,,\hspace{-18em}\label{init 5}\displaybreak[0]\\
	&
	\Done X
	\rewrite{1}\Play_{(q_0,x_0)}\# X\,,&
	&\Done'X\rewrite{1}\Play_{(q_0,x_0)}'\# X\,.&\label{init 6}
\end{align}

\begin{lemma}\label{lemma init}
	For all $\gamma\in\Gamma^*$ we have
	$\Init\#\gamma\not\approx\Init'\#\gamma$
	if, and only if, $x\in L(\M)$.
	Moreover, $\{\class{\Init\gamma}\mid\gamma\in\Gamma^*\}$
	and $\{\class{\Init'\gamma}\mid\gamma\in\Gamma^*\}$
	are both finite sets of classes all of which are weakly bisimulation
	finite.
\end{lemma}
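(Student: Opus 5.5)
The plan is to mirror the proof of \cref{lemma push defender}, reading the initial-configuration gadget as a Defender's-forcing construction that forces Defender to push exactly $\Enc_{w_0}$ for the initial configuration $w_0=(q_0,x_0)x_1\dots x_{n-1}\Box^{N-n}$, and then to invoke \cref{lemma play}. For $i\in\interval{0,N-1}$ and $Y_i,\dots,Y_{N-1}\in\Theta_0$ write
\[
\delta(Y_i,\dots,Y_{N-1})=Y_i\Bin_i\#_{\MSB(i,i+1)}\dots Y_{N-1}\Bin_{N-1}\#\gamma,
\]
the same notation as in the previous proof, but now with $\gamma$ in place of $\#\Enc_w\gamma$.

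\begin{enumerate}
\item \emph{Analogue of \cref{claim1}.} For $i\in\interval{1,N-2}$, show that $\langle\Proceed\approx\Proceed'\rangle\delta(Y_i,\dots)$ holds if and only if $\langle\Proceed\approx\Proceed'\rangle\delta(Y_{i-1},Y_i,\dots)$ holds for some $Y_{i-1}\in\Theta_0$.
\begin{itemize}
\item The $\Done$ branch must fail there. Via \cref{init 5-new,init 5} the prefix-check requires the top block to begin with $(q_0,x_0)\Bin_0$, which fails since $\Bin_i\neq0^\ppoly$.
\item The $\Check$ branch reduces, via \cref{init 4,init 4-2} and \cref{lemma succ}, to forcing the pushed number to be $\Bin_{i-1}$ with separator $\#_{\MSB(i-1,i)}$.
\end{itemize}
The base step from \cref{init 1} is handled the same way, with $\Box$ fixed at positions $N-2$ and $N-1$.

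\item \emph{Analogue of \cref{claim2}, at $i=0$.} The $\Check$ branch fails, by \cref{lemma succ}, since no predecessor of $0$ exists. So equivalence holds if and only if $\langle\Done\approx\Done'\rangle$ holds.
\begin{itemize}
\item This requires the $\IniChk$ prefix-check of \cref{init 5} to pass, i.e.\ $Y_0\dots Y_{N-1}=w_0$, since the positions $\geq n$ must be $\Box$.
\item It also requires $\langle\Play_{(q_0,x_0)}\approx\Play'_{(q_0,x_0)}\rangle\#\Enc_{w_0}\#\gamma$. I read the rule in \cref{init 6} as using a distinct action symbol (say $2$) from that in \cref{init 5-new}, so that both branches must be matched separately.
\end{itemize}

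\item \emph{Conclusion of the first statement.} Chaining the steps, $\Init\#\gamma\approx\Init'\#\gamma$ holds if and only if $\Play_{(q_0,x_0)}\#\Enc_{w_0}\#\gamma\approx\Play'_{(q_0,x_0)}\#\Enc_{w_0}\#\gamma$. By \cref{lemma play}, the negation of this is equivalent to $w_0$ being accepting, i.e.\ $x\in L(\M)$.

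\item \emph{Finiteness.} As in \cref{lemma play}, by \cref{equiv-because-equiv-below} the class of $qX\eta$ is determined by $q$, $X$ and the classes of $r\eta$ for $r\in\eats{X}(q)$.
\begin{itemize}
\item The only stack-decreasing rules in this gadget lead to $\Binsucc$ and $\Binsucc'$, which have finitely many finite classes by \cref{lemma succ}.
\item The prefix-checkers have finitely many finite classes by construction.
\item The $\Play$ states have finitely many finite classes by \cref{lemma play}.
\end{itemize}
Every other rule pushes and reads, and the gadget states are acyclic, except for the $\Proceed\to\Next\to\Check\to\Proceed$ loop, which only pushes. So the classes of $\Init\gamma$ and $\Init'\gamma$ are determined by finitely many combinations, as in \cref{lemma play}.
\end{enumerate}

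The main obstacle is this last finiteness step. The $\Proceed$/$\Check$ loop pushes unboundedly, so I would argue that the class of a configuration with control state $\Proceed$ (or $\Check$) over $\eta$ is determined by its top symbol together with the finitely many classes of the $\Binsucc$ and $\Binsucc'$ configurations and the prefix-check and $\Play$ configurations reachable after popping. This yields a finite fixed point, in the same way the paper treats it in \cref{lemma play}.
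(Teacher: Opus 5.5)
Your proposal is correct and follows the paper's proof. The paper likewise derives $\Init\#\gamma\approx\Init'\#\gamma$ iff $\Play_{(q_0,x_0)}\#\Enc_{w_0}\#\gamma\approx\Play'_{(q_0,x_0)}\#\Enc_{w_0}\#\gamma$ by repeating the argument of \cref{lemma push defender}, concludes with \cref{lemma play}, and gets finiteness as in \cref{lemma play}; your explicit analogues of \cref{claim1,claim2}, and your reading of \cref{init 6} with a label different from the one in \cref{init 5-new} (clearly the intended one), fill in what the paper only sketches. One small correction to your finiteness step: $\Play$ states never occur after popping, and for the gadget states $q$ the sets $\eats{X}(q)$ contain only bounded-behaviour states (the $\Binsucc$ family, prefix-check states, and the $\Desc$/$\Pop$/$\Pos$ states of the successor check, which you should also list), whose classes range over a fixed finite set, so \cref{equiv-because-equiv-below} gives finiteness directly, without any fixed-point iteration.
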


\begin{proof}
	Let us first prove the first statement of the lemma.
	Since the proof is very closely related to the proof
	of \cref{lemma push defender}, so
	we only sketch it here.

	Let $w_x=(q_0,x_0)x_1\dots x_{n-1}\Box^{N-n}$.
	We have the following equivalences, where the first
	equivalence can be proven analogously as \cref{lemma push defender}:
	\begin{align*}
		\Init\#\gamma\not\approx\Init'\#\gamma\qquad\Longleftrightarrow\qquad
		&\Play_{(q_0,x_0)}\#\Enc_{w_x}\#\gamma\not\approx
		\Play_{(q_0,x_0)}'\#\Enc_{w_x}\#\gamma\\
		\stackrel{\hspace{-2em}\text{\cref{lemma play}}\hspace{-2em}}{\Longleftrightarrow}\qquad
		&\text{$w_x$ is accepting}\\
		\Longleftrightarrow\qquad &
		x\in L(\M)\,.
	\end{align*}

	The second statement of the lemma can be shown in the same way as the analogous statement of \cref{lemma play}.
\end{proof}

We conclude our construction by adding the following initial rules:
\begin{align}
	&q_\updownarrow\#\rewrite{a}q_\updownarrow A\#\,,\label{final 1}\\
	&q_\updownarrow A\rewrite{a}q_\updownarrow AA\,,\label{final 2}\\
	&q_\updownarrow A\rewrite{a}q_\updownarrow\,,\label{final 3}\\
	&q_\updownarrow A\rewrite{\$}q_\varepsilon\,,\label{final 4}\\
	&q_\varepsilon A\rewrite{\varepsilon}q_\varepsilon\,,\label{final 5}\\
	&q_\varepsilon\#\rewrite{\varepsilon}\Init\#\,,\label{final 6}\\
	&q_\updownarrow\#\rewrite{\$}\Init' \#\,.\label{final 7}
\end{align}

The following lemma provides the final desired reduction.

\begin{lemma}
	We have $x\in L(\M)$ if, and only if,
	$(\Ll(\Pp),q_\updownarrow\#)$ is not weakly bisimulation finite.
\end{lemma}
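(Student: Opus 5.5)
We analyze the configurations reachable from $q_\updownarrow\#$ using \cref{final 1,final 2,final 3,final 4,final 5,final 6,final 7}. The reachable configurations are $q_\updownarrow A^m\#$ for $m\geq 0$, together with whatever is reachable from $\Init\#$ and $\Init'\#$. The $\$$-move from $q_\updownarrow A^m\#$ with $m\geq 1$ leads, via \cref{final 4,final 5,final 6}, by $\epsilon$-transitions to $\Init\#$. So it is weakly bisimilar to a $\$$-move into $\Init\#$. From $q_\updownarrow\#$ itself (that is, $m=0$) the $\$$-move goes to $\Init'\#$.

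By \cref{lemma init}, the configurations reachable from $\Init\#$ and $\Init'\#$ span only finitely many classes, each of them weakly bisimulation finite. Hence everything reachable via $\$$ contributes finitely many classes. The question therefore reduces to whether the infinite family $q_\updownarrow A^m\#$ collapses to finitely many classes.

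\textbf{Case $x\notin L(\M)$.} Then $\Init\#\approx\Init'\#$ by \cref{lemma init}. We claim that all configurations $q_\updownarrow A^m\#$ with $m\geq 0$ are pairwise weakly bisimilar. To show this, take the relation containing all pairs of these configurations, together with the pair $(\Init\#,\Init'\#)$, the identity on the remaining reachable configurations, the weak bisimilarity relation $\approx$, and all pairs that are $\epsilon$-equivalent to these. We check that it is a weak bisimulation:
\begin{compactitem}
\item Every such configuration has $a$-moves into the same family. From $q_\updownarrow\#$ we can only push; from $q_\updownarrow A^m\#$ we can push or pop. Since every configuration in the family has at least one $a$-successor in the family, and all family members are related, these moves match.
\item The $\$$-moves lead to $\Init\#$ (possibly after $\epsilon$-steps) or to $\Init'\#$, and these two are related.
\item There are no $\epsilon$-moves from $q_\updownarrow$-configurations.
\end{compactitem}
Hence the whole system has finitely many classes, so it is weakly bisimulation finite.

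\textbf{Case $x\in L(\M)$.} Then $\Init\#\not\approx\Init'\#$. We show by induction on $m$ that $q_\updownarrow A^m\#$ and $q_\updownarrow A^{m'}\#$ are not weakly bisimilar for $m<m'$. The distinguishing property is: \emph{the configuration can reach, by exactly $k$ $a$-steps, a configuration whose $\$$-successor is equivalent to $\Init'\#$.} This holds for $q_\updownarrow A^m\#$ with $k=m$, and fails for all $k<m$. The key facts are:
\begin{compactitem}
\item $q_\updownarrow$ only reaches the bottom configuration $q_\updownarrow\#$ by popping, one $A$ per $a$-step.
\item A $\$$-move from $q_\updownarrow\#$ goes to $\Init'\#$, while from any other member of the family it goes to the class of $\Init\#$, which differs from the class of $\Init'\#$.
\end{compactitem}
Formally, the distinguishing formula is $\langle a\rangle^m\langle\$\rangle\phi$, where $\phi$ characterizes the class of $\Init'\#$. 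Alternatively, we argue directly by induction on $m$: Defender's answer to the $m$ pops must keep the other configuration's stack above height $0$, so in the end its $\$$-move lands in $\class{\Init\#}$ instead of $\class{\Init'\#}$. Hence there are infinitely many classes, so the system is not weakly bisimulation finite.

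The main obstacle is making the non-finiteness direction rigorous in the presence of $\epsilon$-transitions. One must make sure that the $\$$-successor of $q_\updownarrow A^{m}\#$ for $m\geq1$ really is equivalent to $\Init\#$ (by determinism of $\epsilon$-steps). One must also check that no configuration reachable from $\Init\#$ or $\Init'\#$ interferes with the argument; this is immediate because different control states are used there.
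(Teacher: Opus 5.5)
Your proof is correct and follows essentially the same route as the paper's. For $x\notin L(\M)$, you show all $q_\updownarrow A^m\#$ are weakly bisimilar using $\Init\#\approx\Init'\#$ and \cref{lemma init}. For $x\in L(\M)$, you separate the $q_\updownarrow A^m\#$ by the least number of $a$-steps needed before a $\$$-move into $\class{\Init'\#}$. The paper instead first notes that $q_\updownarrow\#$ is the only reachable configuration in its class and then uses $\dist(\class{q_\updownarrow A^n\#},\class{q_\updownarrow\#})=n$; this is the same invariant in different words.
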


\begin{proof}
	Let us first assume $x\in L(\M)$.
	By \cref{lemma init} we have $\Init\#\not\approx\Init'\#$.
	For $n\geq 1$ we have $q_\updownarrow A^n\#\weak{\$}\Init\#$,
	while reading $\$$ from $q_\updownarrow\#$ necessarily leads to $\Init'\#\not\approx\Init'\#$.
	It follows that $q_\updownarrow A^n\#\not\approx q_\updownarrow\#$ for all $n\geq 1$.
	In consequence $q_\updownarrow\#$ is the only reachable configuration in its class (once we enter the configuration $\Init\#$ or $\Init'\#$, we cannot read $\$$ any more).
	Observe now that if $q_\updownarrow A^n\#\weak{w}q_\updownarrow\#$ for some $n\in\Nat$, then $|w|\geq n$;
	on the other hand $q_\updownarrow A^n\#\weak{a^n}q_\updownarrow\#$.
	Then $\dist(\class{q_\updownarrow A^n\#},\class{q_\updownarrow\#})=n$, which
	implies $q_\updownarrow A^n\#\not\approx q_\updownarrow A^m\#$
	for all $n,m\in\N$ with $n\not=m$.
	Moreover, we have $q_\updownarrow\#\to^* q_\updownarrow A^n\#$ for all $n\in\Nat$.
	Hence, $q_\updownarrow\#$ is not weakly	bisimulation finite since infinitely
	many configurations, that are pairwise not weakly bisimilar, are reachable from it.

	Conversely, assume $x\not\in L(\M)$. Then $\Init\#\approx\Init'\#$ by \cref{lemma init}.
	Moreover, $\Init\#$ is weakly bisimulation finite by \cref{lemma init}.
	From this and by inspection of the rules in \crefrange{final 1}{final 7}
	one easily sees that $q_\updownarrow A^n\#\approx q_\updownarrow A^m\#$
	for all $n,m\in\N$. In fact, $q_\updownarrow\#$ is weakly bisimilar to a finite
	pointed $\varepsilon$-LTS $(\Ll,c)$,
	where $\Ll$ has the following shape: the configuration $c$ has an $a$-loop plus
	a $\$$-labeled transition to the weakly bisimulation finite
	$\class{\Init\#}=\class{\Init'\#}$.
\end{proof}

\section{Conclusion}\label{sec:conclusion}

In this paper we have shown that weak bisimulation finiteness
is 2-\EXPTIME-complete for pushdown systems with deterministic
$\varepsilon$-transitions.
This improves a previously known \ACKERMANN upper bound of
the problem and improves the previously best known 6-\EXPSPACE
upper bound when $\varepsilon$-transitions are not present.
It also generalizes the 2-\EXPTIME upper bound for regularity
of deterministic pushdown automata and tightens
a previously known \EXPTIME lower bound for the problem.
The more general case with unrestricted $\varepsilon$-transitions
is indeed challenging, since our upper bound proof heavily relied
on the fact that the underlying transition system is finitely branching.

\bibliographystyle{abbrv}

\bibliography{bib}

\end{document}